\begin{document}
\begin{titlepage}
\begin{spacing}{1}
\title{\textbf{\Large Reproducible Aggregation of Sample-Split Statistics$^*$}}
\author{
\begin{tabular}[t]{c@{\extracolsep{4em}}c} 
\large{David M. Ritzwoller} &  \large{Joseph P. Romano}\vspace{-0.7em}\\ \vspace{-1em}
\small{Stanford University} & \small{Stanford University} \\ \vspace{-0.7em}
\end{tabular}%
\\
}
\date{%
\today\\ $^*$Email: ritzwoll@stanford.edu, romano@stanford.edu. We thank Sourav Chatterjee, Jiafeng Chen, Han Hong, Guido Imbens, Lihua Lei, Evan Munro, Aaditya Ramdas, Brad Ross and audiences at the Econometric Society North American and European summer meetings for helpful comments. We thank Bhaskar Chakravorty for sharing his data and code. Ritzwoller gratefully acknowledges support from the National Science Foundation under the Graduate Research Fellowship. Computational support was provided by the Data, Analytics, and Research Computing (DARC) group at the Stanford Graduate School of Business (RRI:SCR\_022938).}
                      
\begin{abstract}
\smalltonormalsize{Statistical inference is often simplified by sample-splitting. This simplification comes at the cost of the introduction of randomness not native to the data. We propose a simple procedure for sequentially aggregating statistics constructed with multiple splits of the same sample. The user specifies a bound and a nominal error rate. If the procedure is implemented twice on the same data, the nominal error rate approximates the chance that the results differ by more than the bound. We illustrate the application of the procedure to several widely applied econometric methods.}
\\
\\
\textbf{Keywords:} Sample-splitting, Cross-Fitting, Cross-Validation, Reproducibility
\\
\textbf{JEL:} C01, C13, C52
\end{abstract}
\end{spacing}
\end{titlepage}
\maketitle
\thispagestyle{empty}
\setcounter{page}{1}
\clearpage
\begin{spacing}{1.3}

\section{Introduction\label{eq: introduction}}
Sample-splitting is ubiquitous in modern econometric theory. Routine statistical tasks---model selection, dimension reduction, nuisance parameter estimation---can be implemented on a randomly selected subsample of a data set, without contaminating the validity of a statistical inference produced on its complement. This principle underlies the widely applied practices of cross-validation for predictive risk estimation \citep{stone1974cross,arlot2010survey} and cross-fitting for adaptive estimation of semiparametric models \citep{bickel1982adaptive,schick1986asymptotically,chernozhukov2018double}, among a growing set of additional applications. 

For a fixed data set, statistics constructed with sample-splitting are not deterministic. Two researchers can compute the same statistic on the same data and obtain different values. Researchers are incentivized to report significant results. If there is scope to materially alter the statistics that they report through the choice of the split of their sample, should this choice be left to chance?

This paper makes two contributions. First, we show that many widely applied sample-split econometric methods exhibit significant residual randomness. We give examples from the applied economics literature where the randomness induced by sample-splitting determines the statistical significance of a treatment effect estimated with cross-fitting, the interpretation of a model selected with cross-validation, and the qualitative features of treatment targeting rules learned from cross-fit nuisance parameter estimates.

Second, and accordingly, we propose an efficient method for removing the residual randomness from sample-split statistics. The procedure takes as input a bound and an error rate. The statistic of interest is sequentially aggregated over randomly drawn splits of the sample. The procedure is stopped after an estimate of the residual variation of the aggregate statistic falls below a pre-determined threshold. If the procedure were run twice, we show that the chance that the outputs differ by more than the bound is well-approximated by the error rate. That is, by setting the bound and error rate to be sufficiently small, sample-split statistics aggregated with the procedure are reproducible. 

We begin, in \cref{sec: Reproducible}, by discussing several widely applied sample-split econometric methods and demonstrating that, for each, the randomness induced by sample-splitting can substantively effect results. In \cref{sec: Reproducible Aggregation}, we propose an efficient method for sequentially aggregating sample-split statistics that ensures that the residual randomness is small. We illustrate that, in each of our examples, the proposed method stabilizes results---ensuring reproducibility at a minimal computational expense. We establish that the procedure is valid, in a particular asymptotic sense, without imposing any restrictions on the data generating process or statistic of interest. Similarly, we show that, for a large class of applications, sample-split statistics aggregated with the procedure maintain (or, improve upon) unconditional statistical guarantees. That is, reproducibly aggregated sample-split statistics are still consistent and associated inferences are still valid.

To implement the procedure, a user must make several choices that may affect performance, including the specification of a suitable bound and error rate. Additionally, in most applications, the procedure is applied to stabilize a statistic that is itself constructed with cross-splitting. In these cases, users must also specify how many folds to use for cross-splitting. 

To shed light on these choices, in \cref{sec: Guarantees}, we give an analysis of the performance of the procedure under a set of simplifying conditions. We give two main results. First, the computation needed to achieve a given bound on residual randomness is very sensitive to the desired error tolerance, but is insensitive to the number of folds that are used for cross-splitting. Second, and on the other hand, the accuracy of the nominal error rate of the procedure deteriorates as the number of folds used for cross-splitting increases. We conclude, in \cref{sec: conclusion}, by synthesizing these results into a set of concrete recommendations for practice. We emphasize simple rules-of-thumb for choosing suitable error tolerances and approaches to cross-splitting.

The main theoretical challenge posed by this analysis is the accommodation of the dependence between statistics computed on cross-splits of a sample. We address this through an application of the method of exchangeable pairs \citep{stein1986approximate, ross2011fundamentals,chen2011normal}. To construct an appropriate exchangeable pair for our problem, we develop a novel application of a coupling argument due to \cite{chatterjee2005concentration}, that may be of independent interest.

\subsection{Related Literature\label{sec: related}} 
The procedure studied in this paper is applicable to a large variety of sample-split statistical methods. In \cref{sec: Reproducible}, we give a selective review of various sample-split methods that are frequently used in applied economics. There are many, additional, sample-split methods, proposed in the statistics literature, that have the potential to be useful in econometric applications. Generic sample-split methods for testing statistical hypotheses are studied in \cite{guo2017analysis}, \cite{diciccio2020exact}, and \cite{wasserman2020universal}. Additional applications include sample-split procedures for selective inference \citep{rinaldo2019bootstrapping}, inference on high-dimensional linear models \citep{meinshausen2010stability}, conformal and predictive inference  \citep{lei2018distribution}, and knockoff tests of conditional independence \citep{barber2015controlling}. 

\cite{chernozhukov2018double} and \cite{chernozhukov2018generic} advocate for the aggregation of estimators and $p$-values computed with sample splitting, over a pre-determined number of splits, in the context of applications related to semiparametric estimation and characterization of treatment effect heterogeneity, respectively. We second these recommendations and contribute a general purpose method that provides a statistical guarantee that residual randomness has been controlled up to a specified level of error, at a minimal computational cost.

Our setting is related to a large literature that studies methods for constructing confidence intervals for cross-validated estimates of generalization error. Several examples include \cite{dietterich1998approximate}, \cite{nadeau1999inference}, \cite{lei2020cross}, \cite{bayle2020cross}, \cite{austern2020asymptotics}, and \cite{bates2023cross}. By contrast, we are interested in the randomness conditional on the data.   Formally, our non-asymptotic results are most similar to the Berry-Esseen bounds given in \cite{austern2020asymptotics}, who study the unconditional normal approximation of statistics similar to the those considered in \cref{sec: Guarantees}. Our bounds apply under weaker conditions and are substantially simpler.

Some of our results build on a literature studying the role of algorithmic stability in the accuracy of cross-validation \citep{kale2011cross,kumar2013near}. These papers are related to a broader literature that derives generalization bounds for stable algorithms, originating with \cite{bousquet2002stability}. Some of the concentration inequalities that we derive can be compared to the results of \cite{cornec2010concentration} and \cite{abou2019exponential}. Again, the setting we study is different and our conditions, and resultant bounds, are substantially simpler.

Although our emphasis is on statistics constructed with sample-splitting, the algorithmic and formal methods studied in this paper are potentially applicable to randomized algorithms more generally \citep{motwani1995randomized}. See \cite{beran1987stochastic} for a classical analysis of the asymptotics of randomized tests and estimators.

\section{The Residual Randomness of Sample-Split Statistics\label{sec: Reproducible}}

Sample-splitting has proliferated as a useful subroutine for simplifying various tasks associated with modern statistical inference. The high-level situation is as follows. Consider a researcher who observes the independent data $D = (D_i)_{i=1}^n$ and wishes to report the statistic
\begin{equation}
 \Psi(D, \eta)~,\label{eq: phi infeasible}
\end{equation} 
where $\eta$ is some unknown nuisance parameter.

For example, each observation $D_i$ could contain a measurement of an outcome $Y_i$, a treatment $W_i$, and a vector collecting a large number of covariates $X_i$. The researcher could be interested in measuring the effect of $W_i$ on $Y_i$. To ensure that their estimate is not unnecessarily imprecise, they might like to include controls for only the subset of covariates that are correlated with the outcome. Here, formally, the nuisance parameter $\eta$ collects the indices of the subset of ``relevant'' controls and the statistic $\Psi(D, \eta)$ denotes a treatment effect estimate associated with, say, a regression of the outcome on the treatment with controls for covariates with indices in $\eta$.

In practice, the researcher cannot compute the statistic \eqref{eq: phi infeasible}, as the nuisance parameter $\eta$ is unknown. Often, however, an estimator $\hat{\eta}(D)$ is available and so it may be tempting to report the feasible statistic 
\begin{equation}
 \Psi(D, \hat{\eta}(D))~.\label{eq: phi feasible unsplit}
\end{equation} 
In the example, the estimator $\hat{\eta}(D)$ could collect the indices of the covariates whose absolute sample correlation with the outcome is greater than some pre-determined threshold. 

Statistical inferences that, counterfactually, would be appropriate if the infeasible statistic \eqref{eq: phi infeasible} were available will not necessarily be valid if they were instead based on the feasible statistic \eqref{eq: phi feasible unsplit}. In particular, the process of computing the nuisance parameter estimate $\hat{\eta}(D)$ might produce a confounding effect. In the example, screening control variables according to their correlation with the outcome produces a bias in the resultant treatment effect estimate.

Sample-splitting solves this problem.\footnote{Of course, under certain conditions involving the sparsity of the covariance between the treatment, outcome, and controls, ``double post selection'' of control variables with a Lasso regression \citep{tibshirani1996regression} can produce consistent treatment effect estimates \citep{belloni2014inference}. However, consistent estimates can be obtained under weaker conditions through a closely related approach based on sample-splitting \citep{chernozhukov2018double,belloni2012sparse}.} Let $\mathsf{s}$ denote a randomly drawn subset of the numbers $[n]=\{1,\ldots,n\}$ of size $b$ and let $\tilde{\mathsf{s}}$ denote its complement. Sample-split statistics take the form
\begin{equation}\label{eq: T def intro}
T(\mathsf{s}, D) = \Psi(D_{\mathsf{s}}, \hat{\eta}(D_{\tilde{\mathsf{s}}}))~,
\end{equation}
where the quantities $D_{\mathsf{s}} = (D_i)_{i\in\mathsf{s}}$ and $D_{\tilde{\mathsf{s}}} = (D_i)_{i\in\tilde{\mathsf{s}}}$ collect the data with indices in $\mathsf{s}$ and $\tilde{\mathsf{s}}$, respectively. Splitting the data $D$ into two independent subsamples $(D_{\mathsf{s}}, D_{\tilde{\mathsf{s}}})$ prevents the construction of nuisance parameter estimates from contaminating statistical inferences that the researcher may wish to make with the feasible statistic $\eqref{eq: T def intro}$.

There are two, well-known, practical issues with this approach. First, by splitting the data, statistical precision may be meaningfully reduced. Second, the statistic \eqref{eq: T def intro} is random through both the data $D$ and the choice of the random subset $\mathsf{s}$. That is, if the same sample-split statistic were computed on the same data by two different researchers, and the statistic was sensitive to the choice of the random subset $\mathsf{s}$, then the researchers could report meaningfully different results.

To address these concerns, researchers often aggregate several replications of sample-split statistics through cross-splitting \citep{stone1974cross,schick1986asymptotically}. In particular, researchers typically report aggregate statistics of the form
\begin{equation}\label{eq: cross-split def intro}
a(\mathsf{r}, D) = \frac{1}{k} \sum_{j=1}^k T(\mathsf{s}_j, D)~,
\end{equation}
where the quantity $\mathsf{r} = (\mathsf{s}_j)^k_{j=1}$ denotes a random $k$-fold partition of $[n]$, i.e., a collection of $k$ mutually exclusive sets whose union is equal to $[n]$. By reusing subsamples of the data for computation of both the statistic of interest and the estimation of nuisance parameters, cross-split statistics mitigate potential losses in statistical precision.

In this section, we demonstrate, in several real applications from applied economics, that the second concern---the residual randomness generated by sample-splitting---can substantively affect results. This residual variability is, often, not resolved by cross-splitting. Consequently, in \cref{sec: Reproducible Aggregation}, we propose an approach to aggregating sample-split statistics that controls the scope of residual randomness at a minimal computational cost. Revisiting the applications, we demonstrate that sample-split statistics aggregated through this procedure are reproducible. 

\subsection{Cross-Validation\label{sec: cross-validation}} 

The most prevalent instance of sample-splitting in applied economics is the use of cross-validation for model selection. Often, in this setting, the data $D_i$ consist of a measurement of an outcome $Y_i$ and a vector $X_i$ collecting measurements of $p$ covariates. Interest is in choosing a parsimonious subset of the covariates that, together, best predicts the outcome. 

Lasso regression is a standard approach to this problem \citep{tibshirani1996regression,hastie2015statistical}. The Lasso coefficient is the solution to the regularized least-squares regression
\begin{equation}
\hat{\eta}_{\lambda}(D) = \underset{\eta \in \mathbb{R}^p}{\arg\min} \left\{ \frac{1}{n} \sum_{i=1}^n (Y_i - \eta^\top X_i)^2 + \lambda \sum_{j=1}^p | \eta_j \vert \right\}~,\label{eq: lasso reg}
\end{equation}
where $\lambda$ is some tuning parameter chosen by the user. Penalization of the $\ell_1$-norm of the coefficient $\eta$ encourages sparsity. That is, often, many elements of $\hat{\eta}_{\lambda}(D)$ are exactly equal to zero. The set of covariates associated with non-zero coefficients are referred to as the model ``selected'' by the Lasso. The selected model is sensitive to the choice of the tuning parameter $\lambda$. If $\lambda$ is sufficiently large, no covariates are selected. If $\lambda$ is sufficiently small, all covariates are selected.\footnote{For sufficiently small $\lambda$, a covariate is selected so long as it is not collinear with other covariates and its sample correlation with the outcome is not exactly equal to zero.}

Cross-validation is a widely applied and, perhaps, uncontroversial approach for choosing tuning parameters. Here, cross-validation entails assigning $\lambda$ the value that minimizes the cross-split estimate of the out-of-sample mean-squared error
\begin{equation}
a_{\lambda}(\mathsf{r}, D) = \frac{1}{k} \sum_{j=1}^k T_{\lambda}(\mathsf{s}_j, D)~,
\quad
\text{where}
\quad
T_{\lambda}(\mathsf{s},D)=\frac{1}{\vert \mathsf{s} \vert} \sum_{i\in\mathsf{s}} (Y_i - \hat{\eta}_{\lambda}(D_{\tilde{\mathsf{s}}})^\top X_i)^2\label{eq: k fold cross val}
\end{equation}
and, as before, $\mathsf{r}=(\mathsf{s}_j)^k_{j=1}$ is a random $k$-fold partition of $[n]$. By measuring error out-of-sample, i.e., in the independent subsample $D_{\mathsf{s}}$, the cross-validated risk estimate \eqref{eq: k fold cross val} avoids any ``over-fitting'' bias induced by the estimation of the coefficient $\hat{\eta}_{\lambda}(D_{\tilde{\mathsf{s}}})$. This approach is widely applied throughout various subfields of applied economics.\footnote{See, for example, applications to Crime \citep{mastrobuoni2020crime, arnold2020measuring}, Development \citep{casey2021experiment,blattman2024gang,sadka2024information}, Economic Theory \citep{fudenberg2019predicting}, Education \citep{ellison2021efficiency}, Environment \citep{deryugina2019mortality,cicala2022imperfect}, Finance \citep{koijen2024investors},  Health \citep{abaluck2016determinants,cooper2020surprise}, Industrial Organization \citep{kelly2023mechanics,dube2023personalized}, Innovation \citep{chen2021notching,myers2022estimating}, Labor \citep{muendler2010margins,card2020referees,adermon2021dynastic,derenoncourt2022can}, Market Design \citep{agarwal2019market}, Macroeconomics \citep{hansen2018transparency}, and Political Economy \citep{gentzkow2019measuring,cantoni2022does}.\label{fn: cross val lit}}

The risk estimate \eqref{eq: k fold cross val} is random both through the data $D$ and through the choice of the $k$-fold partition $\mathsf{r}$. That is, the choice of $\lambda$, and thereby, the selected model, have the potential to change for different choices of the random collection $\mathsf{r}$. To evaluate the scope of this sensitivity, we consider data from \cite{casey2021experiment}. \cite{casey2021experiment} study a large-scale experiment, implemented in Sierra Leone, in which a randomly selected subset of parliamentary elections were preceded by direct vote, party-specific primaries. They select covariates to include in various, downstream, econometric analyses with the cross-validated Lasso. In their setting, the outcome $Y_i$ is the vote share, in a poll of party officials, for each of 390 candidates. The vector $X_i$ collects measurements of 48 characteristics for each candidate. 

\begin{figure}[t]
\begin{centering}
\caption{Cross-Validation}
\vspace{-20pt}
\label{fig: lasso demo}
\begin{tabular}{c}
\textit{Panel A: Mean-Squared Error Quantiles}\tabularnewline
\includegraphics[scale=0.4]{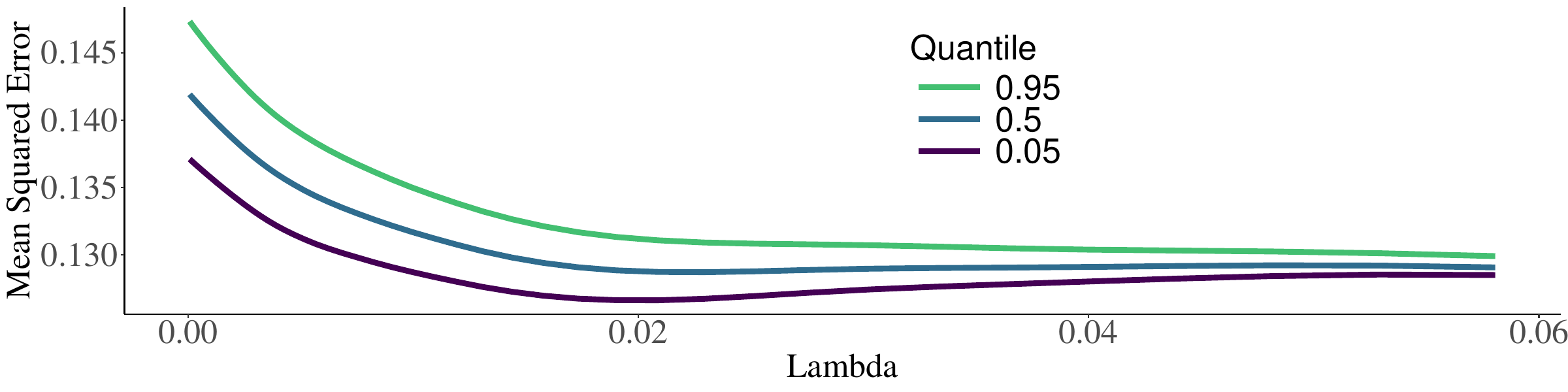}\tabularnewline
\textit{Panel B: Model Variability}\tabularnewline
\includegraphics[scale=0.4]{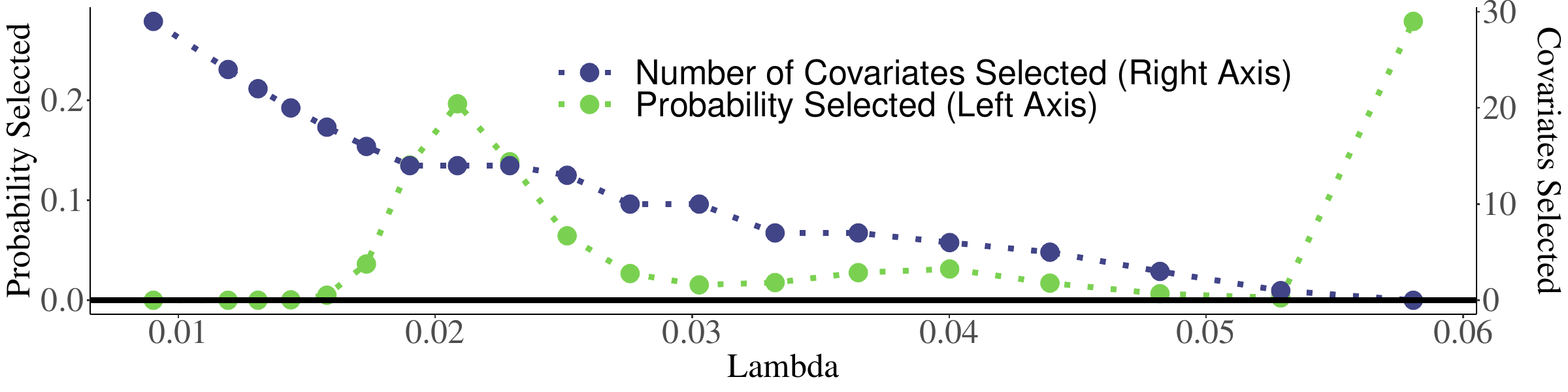}\tabularnewline
\end{tabular}
\par\end{centering}
\medskip{}
\justifying
\noindent{\footnotesize{}Notes: \cref{fig: lasso demo} measures the residual randomness of the cross-validated Lasso, implemented in data from \cite{casey2021experiment}. Panel A displays quantiles of the $10$-fold cross-validated estimate of the mean-squared error \eqref{eq: k fold cross val} of the lasso regression \eqref{eq: lasso reg} over a grid of values of $\lambda$. In Panel B, the probabilities that each value of $\lambda$ minimize the cross-validated risk estimate are displayed with light green dots, relative to the left $y$-axis. The number of covariates selected at each value of $\lambda$ are displayed with dark blue dots, relative to the right $y$-axis. See \cref{app: Casey app} for further details.}{\footnotesize\par}
\noindent\hrulefill
\end{figure}

Panel A of \cref{fig: lasso demo} displays quantiles, across random draws of the $10$-fold partition $\mathsf{r}$, of the cross-validated estimate of the mean-squared error \eqref{eq: k fold cross val} over a grid of values of $\lambda$. The curve associated with the 5th quantile has a minimum around $\lambda = 0.02$. By contrast, the curve associated with the 95th quantile is monotonically decreasing. This induces instability in the value of $\lambda$ chosen by cross-validation. Panel B of \cref{fig: lasso demo} displays, in light green, the probability---again, across random $10$-fold partitions $\mathsf{r}$---that each value of $\lambda$ is selected, i.e., minimizes the cross-validated estimate of the mean-squared error. The number of covariates in the model associated with each value of $\lambda$ is displayed in dark blue. The distribution of the selection probabilities has two maxima---both are associated with probabilities greater than 0.2. One entails selecting a model with 14 covariates. The other entails selecting a model with zero covariates. The random choice of the partition $\mathsf{r}$ substantively affects the size of the selected model.\footnote{\cite{casey2021experiment} take measures to stabilize their results. In particular, they select covariates that are selected in more than 200 of 400 randomly drawn 10-fold partitions $\mathsf{r}$. We use this setting as an example, in part, because a serious attempt was made to address residual randomness. This is not commonplace in the literature surveyed in \cref{fn: cross val lit} (a similar strategy is used in, e.g., \cite{hansen2018transparency}, however).}

\subsection{Cross-Fitting\label{sec: cross-fitting}}Meaningful residual randomness is not particular to cross-validated risk estimation. A second class of sample-split methods, seeing increasing use in applied economics, are characterized by the application of ``cross-fitting''  to accommodate---or characterize---treatment effect heterogeneity. Often, in this case, the data $D_i$ consist of an outcome $Y_i$, a binary treatment $W_i$, and a vector of covariates $X_i$. Let $Y_i(1)$ and $Y_i(0)$ denote the potential outcomes induced by the treatment $W_i$. As before, let $\mathsf{s}$ denote a random a subset of $[n]$, with complement $\tilde{\mathsf{s}}$.

The basic idea underlying this class of methods is to use the data from the units $i$ in $\tilde{\mathsf{s}}$ to predict the treatment effects $Y_i(1)-Y_i(0)$ for the units $i$ in $\mathsf{s}$. For example, a machine learning algorithm can be applied to the data $D_{\tilde{\mathsf{s}}}$ to construct an estimate of the conditional expectation
\begin{equation} \label{eq: conditional expectations}
\mu_w(x) = \mathbb{E}[Y_i \mid X_i = x, W_i = w]
\end{equation}
for each $w$ in $\{0,1\}$. Collect these estimates into $\hat{\eta}(D_{\tilde{\mathsf{s}}}) = (\hat{\mu}_1, \hat{\mu}_0)$. Predictions of the treatment effects for the units $i$ in $\mathsf{s}$ can be constructed through the sample-split statistic 
\begin{equation}\label{eq: cate prediction}
\psi(D_i, \hat{\eta}(D_{\tilde{\mathsf{s}}})) = \hat{\mu}_1(X_i) - \hat{\mu}_0(X_i)~.
\end{equation}
These predictions can then be used to estimate average treatment effects and characterize treatment effect heterogeneity, among other related problems. The point is, by splitting the data, these ``second-stage'' analyses are not confounded by the construction of the ``first-stage'' estimators.  

``Double Machine Learning'' (DML) estimates of average treatment effects are a leading example of a method that takes this structure \citep{chernozhukov2018double}. Here, the sample-split statistics \eqref{eq: cate prediction} are aggregated with cross-splitting, through
\begin{equation}\label{eq: dml definition}
a(\mathsf{r},D)= \frac{1}{k}\sum_{j=1}^k T(\mathsf{s}_j, D)~,
\quad\text{where}\quad 
T(\mathsf{s}, D) = \frac{1}{\vert \mathsf{s}\vert}\sum_{i\in\mathsf{s}} \psi(D_i, \hat{\eta}(D_{\tilde{\mathsf{s}}}))
\end{equation}
and, again, $\mathsf{r}$ is a random $k$-fold partition of $[n]$. An appropriate standard error for \eqref{eq: dml definition} 
is itself given by the cross-split statistic
 \begin{equation}
\label{eq: k fold dml se}
\mathsf{se}(\mathsf{r},D) =   \frac{1}{n} \sqrt{ \sum_{j=1}^k \sum_{i \in  \mathsf{s}_j} \left(\psi(D_i, \hat{\eta}(D_{\tilde{\mathsf{s}}_j})) - a(\mathsf{r},D) \right)^2 }~.
\end{equation}
\cite{chernozhukov2018double} show that an asymptotically efficient test of the null hypothesis that an average effect treatment effect is less than zero can be constructed by comparing \eqref{eq: dml definition} to the critical value $\mathsf{cv}_{\alpha}(\mathsf{r},D) = z_{1-\alpha} \cdot \mathsf{se}(\mathsf{r},D)$, where $z_{1-\alpha}$ is the $1-\alpha$ quantile of the standard normal distribution.\footnote{This result does not apply to estimates of the form \eqref{eq: cate prediction}, but rather to estimators constructed with a ``Neyman Orthogonal'' moment, which, in this case, additionally require a non-parametric estimate of the propensity score.} DML estimators have been increasingly used in applied economics, as they place weaker restrictions on treatment effect heterogeneity than, say, methods based on linear regression.\footnote{See, for example, applications in \cite{okunogbe2022technology}, \cite{beraja2023ai}, \cite{covert2023relinquishing}, \cite{delfino2024breaking}, \cite{farronato2024consumer}.\label{fn: dml list}}

The DML estimator \eqref{eq: dml definition} and standard error \eqref{eq: k fold dml se} are, again, random through both the data $D$ and the $k$-fold partition $\mathsf{r}$. To evaluate the extent of resultant variability, we consider data from \cite{chakravorty2024can}. \cite{chakravorty2024can} use DML to study the effect of a program, implemented in two Indian states, involving the provision of information concerning prospective jobs to vocational trainees, on employment outcomes. Here, the binary outcome $Y_i$ indicates employment five months after completing training for each of 890 trainees placed into jobs, $W_i$ denotes assignment to the program, and $X_i$ collects measurements of 77 pre-treatment covariates. 

Panel A of \cref{fig: cross-fitting demo} displays a heat map of the joint distribution of the $5$-fold cross-fit estimator \eqref{eq: dml definition} and the associated critical value $\mathsf{cv}_{\alpha}(\mathsf{r},D)$  over random draws of the partition $\mathsf{r}$.\footnote{We follow the replication package associated with \cite{chakravorty2024can}. Nuisance parameters are estimated with random forests using the ``Ranger'' R package \citep{wright2017ranger}. Estimates and standard errors are constructed using the ``DoubleML'' R package \citep{bach2024double}. See \cref{app: Chakravorty app} for further details.} A black line is placed at the threshold where the estimate is equal to the critical value. The residual variability in the estimate is large relative to estimates of the sampling variability, and is sufficient to determine the purported statistical significance. Concretely, the difference between the 5th and 95th quantiles of the distribution of the estimator (0.094 and 0.131, respectively) is equal to 68\% of the median of the distribution of the standard error (0.054).\footnote{Some papers take measures to address analogous residual randomness (see e.g., \cite{covert2023relinquishing}). In fact, \cite{chernozhukov2018double} suggest taking the median over several partitions $\mathsf{r}$. Most implementations of DML in standard statistical software allow the user to aggregate estimates, although this aggregation does not occur by default (e.g., \citealp{bach2024double} in R or \citealp{ahrens2024ddml} in STATA).}

\begin{figure}[p]
\begin{centering}
\caption{Cross-Fitting}
\label{fig: cross-fitting demo}
\vspace{-20pt}
\begin{tabular}{cc}
\multicolumn{2}{c}{\textit{Panel A: Treatment Effect Estimation}}\tabularnewline
\multicolumn{2}{c}{\includegraphics[scale=0.4]{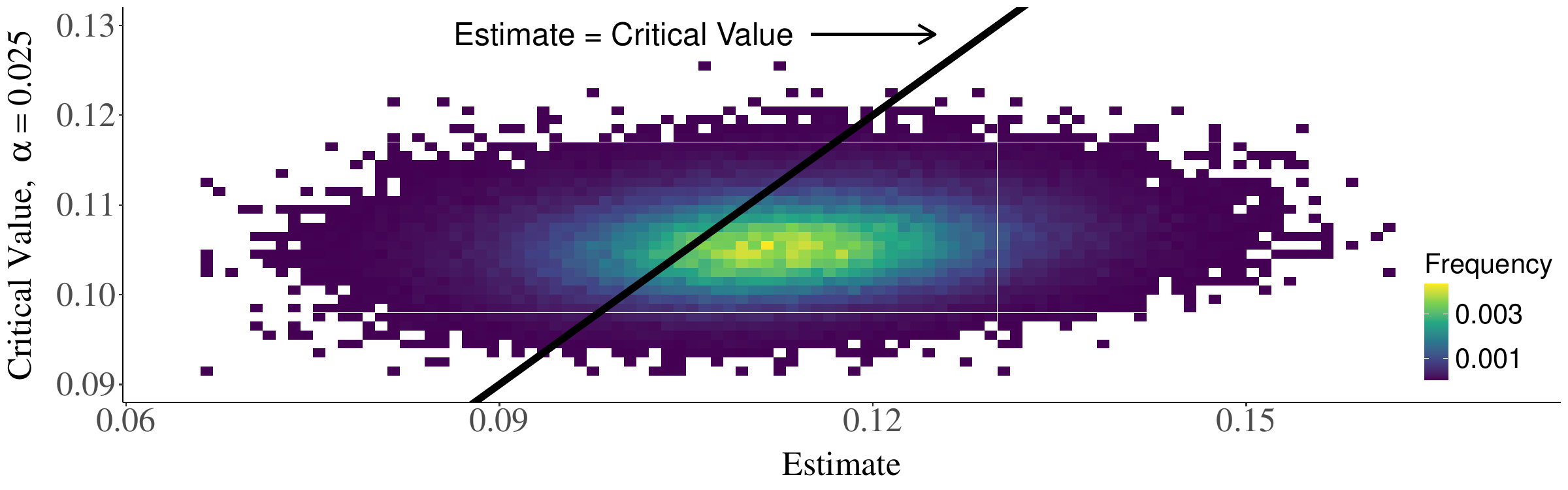}}\tabularnewline
\multicolumn{2}{c}{\textit{Panel B: Policy Evaluation}}\tabularnewline
\multicolumn{2}{c}{\includegraphics[scale=0.4]{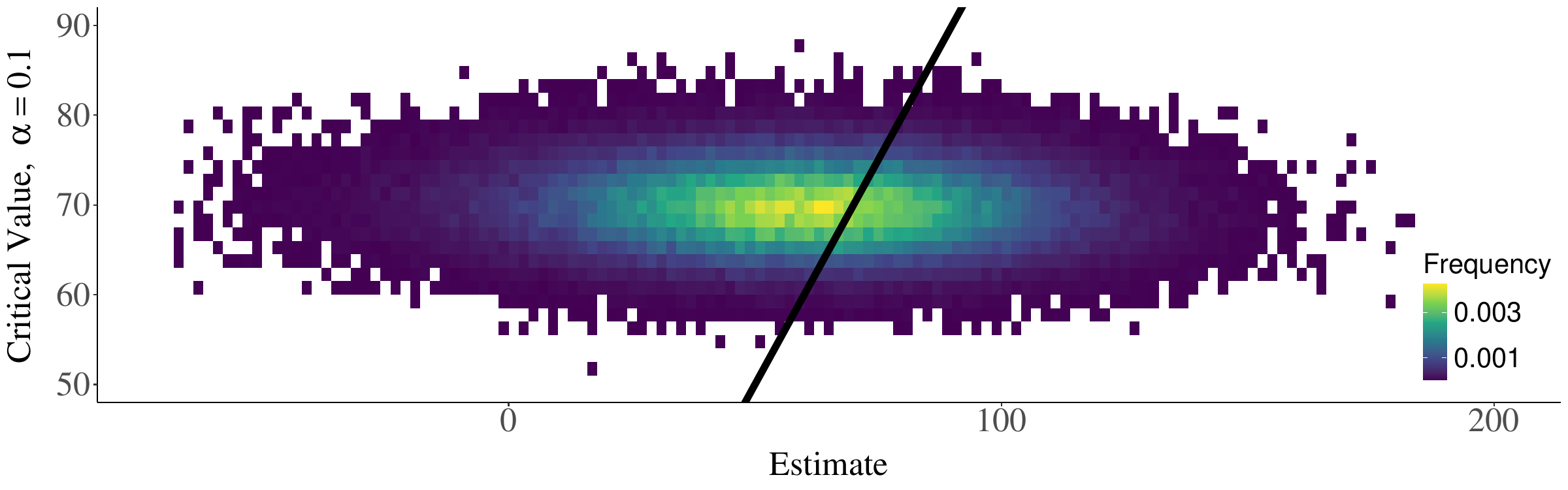}}\tabularnewline
\multicolumn{2}{c}{\textit{Panel C: Testing for Treatment Effect Heterogeneity}}\tabularnewline
\multicolumn{2}{c}{\includegraphics[scale=0.4]{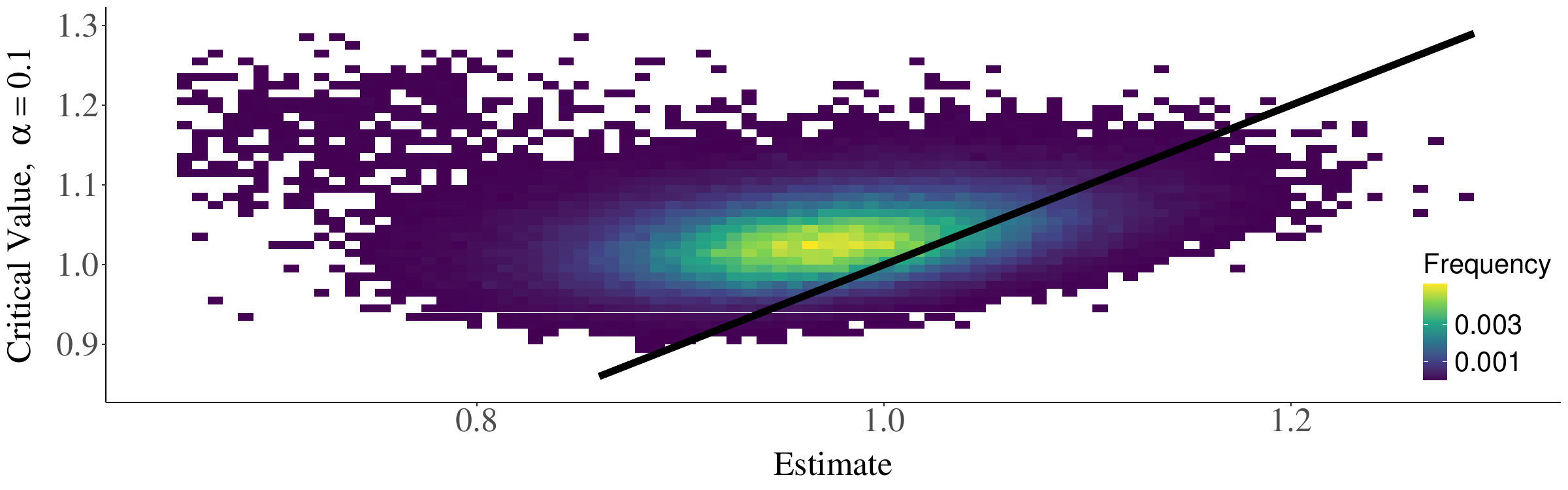}}\tabularnewline
\end{tabular}
\par\end{centering}
\medskip{}
\justifying
\noindent{\footnotesize{}Notes: \cref{fig: cross-fitting demo} displays discretized  heat maps quantifying the residual randomness of three estimators constructed with cross-fitting. All three panels give the joint distribution of an estimator and an associated critical value. In each case, a black line has been placed at the threshold where the estimator is equal to the critical value. Panel A displays the distribution of the DML estimate \eqref{eq: dml definition} cross $5$-fold cross-splits using data from \cite{chakravorty2024can}. Panel B displays the difference between treatment effects estimates for the ``most impacted'' and ``most deprived'' individuals, following \cite{haushofer2022targeting},  using data from \cite{egger2022general}, again across $5$-fold cross-splits. Panel C displays the distribution of the estimate associated with a test of treatment effect heterogeneity using data from \cite{beaman2023selection}. Here, each estimate and critical value is computed by averaging over 250 independently drawn half-samples $\mathsf{s}$. See \cref{app: simulation appendix} for further details on the construction of each panel.}{\footnotesize\par}
\end{figure}

The same behavior is exhibited in related applications that use cross-fitting in more complicated ways. In these cases, measures to stabilize results are more common, although there is little guidance on how to best operationalize this stabilization. To illustrate this, we consider examples from \cite{haushofer2022targeting} and \cite{beaman2023selection}.\footnote{Additional, related, methods that uses cross-fitting to estimate and evaluate treatment targeting rules are proposed by \cite{athey2021policy} and \cite{yadlowsky2024evaluating}.\label{fn: policy design}} \cite{haushofer2022targeting} use data from a randomized cash transfer implemented in Kenya. These data were originally considered in \cite{egger2022general}. \cite{beaman2023selection} use data from an experiment concerning agricultural lending in Mali. Both papers use sample-splitting to construct estimates, of the form \eqref{eq: cate prediction}, of the treatment effects $Y_i(1) - Y_i(0)$ for each of the units $i$ in $\mathsf{s}$. \cite{haushofer2022targeting} additionally construct sample-split estimates of the untreated outcome $Y_i(0)$ for each of the units in $\mathsf{s}$. 

\cite{haushofer2022targeting} identify the 50\% of units in $\mathsf{s}$ that have the largest predicted treatment effect as well as the 50\% of units that have the smallest predicted untreated outcome. They refer to these groups as the ``most impacted'' and ``most deprived,'' respectively. They estimate the difference in the average treatment effect for the two groups, and construct a standard error, and associated critical value, for this difference with the bootstrap. See \cref{app: Egger app} for details. Panel B of \cref{fig: cross-fitting demo} displays a heat map of the joint distribution of the $5$-fold cross-fit estimate of the difference between the treatment effect estimates for the two groups, and the associated critical value, over random draws of the partition $\mathsf{r}$.\footnote{We were not able to access a replication package associated with \cite{haushofer2022targeting}, and so implement a simplified version of the exercise considered in that paper using data from the replication package associated with \cite{egger2022general}. Treatment effects, and untreated outcomes, are estimated with random forests using the ``GRF'' R package \citep{athey2019generalized}. Analogous estimates reported in \cite{haushofer2022targeting} are statistically significant. We emphasize that, as we implement only a simplified version of the exercise conducted in \cite{haushofer2022targeting}, our estimates should only be interpreted an an illustration of the scope of residual randomness in analyses of this type, rather than as a substantive characterization of underlying treatment effect heterogeneity. In particular, we make no attempt to reweigh observations according to their sampling probabilities and appear to be using a different measure of the time between the administration of the experiment and the measurement of post-treatment outcomes.} The residual randomness is considerable. Here, the difference between the 5th and 95th quantiles of the distribution of the estimator (37.82 and 109.69, respectively) is equal to 190\% of the median of the distribution of the standard error (54.44). To stabilize these results, \cite{haushofer2022targeting} average over 400 draws of the $5$-fold partition $\mathsf{r}$. Due to this increased computation, they only report confidence intervals for their main results. The methods developed in this paper ensure that stabilization of sample-split statistics, in this way, occurs at a minimal computational expense.

Likewise, \cite{beaman2023selection} implement a test of treatment effect heterogeneity proposed by \cite{chernozhukov2018generic}. In particular, in data for units in $\mathsf{s}$, the outcome is regressed on the treatment, the treatment effect estimate, and an interaction between the treatment and the treatment effect estimate. The idea is that, if treatment effect estimates are well-calibrated, then the coefficient on the interaction should be statistically greater than zero. \cite{chernozhukov2018generic} recommend aggregating $p$-values associated with the coefficients on the interaction over 250 replications of this sample-split test. Panel C of \cref{fig: cross-fitting demo} displays a heat map measuring the joint distribution of the average coefficient and critical value associated with this test.\footnote{We follow the details of the replication package associated with \cite{beaman2023selection}. Nuisance parameters are estimated with random forests using the ``GRF'' R package \citep{athey2019generalized}. Further details are given in \cref{app: Beaman app}.} That is, each coefficient and critical value is computed by taking the average over 250 sample-splits. Despite this aggregation, the residual randomness remains meaningful.\footnote{\cite{chernozhukov2018generic} recommend aggregating $p$-values the median to ensure robustness to outliers. In practice, if aggregation of a sample-split statistic with a median, rather than a mean, makes a material difference, we strongly encourage researcher to investigate why some sample-splits generate extreme (or highly skewed) estimates (e.g., outliers in the underlying data or poor overlap in an intervention).} The difference between the 5th and 95th quantiles of the distribution of the estimator (0.86 and 1.08, respectively) is equal to 28\% of the median of the distribution of the standard error (0.80). Perhaps motivated by this instability, \cite{beaman2023selection} aggregate over 1000 sample-splits. 

Two themes emerge from these examples. First, sample-splitting is a versatile tool for simplifying various tasks associated with modern statistical inference. Second, the residual randomness induced by sample splitting is often large and can affect the substantive interpretation of results. In the next section, we provide a general-purpose method for aggregating sample-split statistics that ensures that residual randomness is controlled at a minimal computational cost. 

\section{Reproducible Aggregation\label{sec: Reproducible Aggregation}}

We propose a sequential method for aggregating sample-split statistics. Our objective is to ensure that the auxiliary randomness induced by sample-splitting is small. To introduce the method, we require some additional notation. The set $\mathcal{S}_{n,b}$ consists of all subsets of $\left[n\right]=\{1,\ldots,n\}$ of size $b$. In turn, the set $\mathcal{R}_{n,k,b}$ contains all collections of $k$ mutually exclusive elements of $\mathcal{S}_{n,b}$. That is, if $n=k\cdot b$, then the set $\mathcal{R}_{n,k,b}$ collects all partitions of $[n]$ into $k$ mutually exclusive sets of size $b$. We refer to the elements of the set $\mathcal{R}_{n,k,b}$ as cross-splits. Throughout, the quantity $\mathsf{R}_{g,k}=(\mathsf{r}_i)_{i=1}^g$ collects $g$ cross-splits, each given by $\mathsf{r}_i = (\mathsf{s}_{i,j})_{j=1}^k$. Unless otherwise specified, the elements of $\mathsf{R}_{g,k}$ are random, sampled independently and uniformly from the collection $\mathcal{R}_{n,k,b}$.

Suppose that we are interested in some sample-split statistic $T(\mathsf{s},D)$. We study the construction of aggregate statistics of the form
\begin{equation}
a(\mathsf{R}_{g,k}, D)= \frac{1}{g} \sum^g_{i=1} a(\mathsf{r_i}, D)~,
\quad\text{where}\quad
a(\mathsf{r_i}, D) = \mathcal{A}(\{T(\mathsf{s}_{i,j},D)\}_{j=1}^k)
\label{eq: aggregate}
\end{equation}
and the function $\mathcal{A}(\cdot)$ aggregates the statistics $T(\mathsf{s}_{i,j},D)$ across the cross-split $\mathsf{r}_i$. To simplify exposition, for the time being, we will restrict attention to the case that the statistic $a(\mathsf{R}_{g,k}, D)$ is real-valued. This is natural, if, for example, the statistic $a(\mathsf{r}, D)$ is a treatment effect estimate or $p$-value constructed with cross-fitting. Later, in our application to cross-validated risk estimation, we treat vector-valued sample-split statistics, e.g., cross-validated risk estimates queried at a vector of values of a penalization parameter. 

Our task is to formulate a method for choosing the number of cross-splits $g$ to ensure that the residual variability of the aggregate statistic \eqref{eq: aggregate} is small. We formalize this objective as follows. 
\begin{defn}[Reproducible Aggregation]  \label{def: reproducibility} Let the sequences $\{\mathsf{r}_i\}_{i=1}^\infty$ and $\{\mathsf{r}^\prime_{i}\}_{i=1}^\infty$ be drawn independently and uniformly, conditional on the data $D$, from the collection of cross-splits $\mathcal{R}_{n,k,b}$. Define $\mathsf{R}_{g,k} = \{\mathsf{r}_g\}_{i=1}^g $ and $\mathsf{R}^{\prime}_{g,k} = \{\mathsf{r}^{\prime}_g\}_{i=1}^g$ for each integer $g$. Suppose that the integers $\hat{g}$ and $\hat{g}^\prime$ are independent and identically distributed, again conditional on the data $D$. We say that the aggregate statistic $a(\mathsf{R}_{\hat{g},k}, D)$ is $(\xi,\beta)$-reproducible if
\begin{equation}
P\bigg\{ \big\vert a(\mathsf{R}_{\hat{g},k}, D) - a(\mathsf{R}^\prime_{\hat{g}^\prime,k}, D) \big\vert  \geq \xi \mid D\bigg\} \leq \beta \label{eq: Reproducibility}
\end{equation}
almost surely. 
\end{defn}
\noindent \cref{def: reproducibility} is motivated by the following thought experiment. Suppose that the data $D$ are given to two researchers. Each researcher is tasked with producing an estimate of the form \eqref{eq: aggregate}. They generate the collections of splits, $\mathsf{R}_{\hat{g},k}$ and $\mathsf{R}^\prime_{\hat{g}^\prime,k}$, independently using the same, potentially data-dependent, procedure. That is, the two collections of splits are independent and identically distributed, conditional on the data $D$. If the estimate $a(\mathsf{R}_{\hat{g},k}, D)$ is $(\xi,\beta)$-reproducible, then the probability that the two researchers' estimates differ by more than $\xi$ is less than $\beta$.

Of course, ensuring that estimates are reproducible, in the sense of \cref{def: reproducibility}, does not preclude deceptive behavior. Researchers might compute a reproducibility aggregated statistic many times, until a desirable estimate is obtained. However, reporting reproducible statistics greatly increases the cost of searches of this form, as the scope of residual randomness has been reduced.

We propose a sequential method for constructing reproducible sample-split statistics. The proposal is based on the fixed-length sequential confidence intervals of \cite{anscombe1952large} and \cite{chow1965asymptotic}. The procedure works by repeatedly drawing a cross-split $\mathsf{r}_g$ uniformly from $\mathcal{R}_{n,k,b}$, appending the cross-split to the collection $\mathsf{R}_{g,k} = (\mathsf{R}_{g-1,k}, \mathsf{r}_g)$, and estimating the conditional variance
\begin{flalign}
\label{eq: cond var}
v_{g,k}\left(D\right) = \Var( a(\mathsf{R}_{g,k}, D) \mid D)
\end{flalign}
with the plug-in estimator
\begin{flalign}
\label{eq: var estimator}
\hat{v}\left(\mathsf{R}_{g,k}, D\right) & 
=\frac{1}{g}\frac{1}{g-1}\sum_{i=1}^{g} (a(\mathsf{r}_i,D) - a(\mathsf{R}_{g,k},D))^2
\end{flalign}
until the condition 
\begin{equation}
\label{eq: empirical stopping time}
\hat{v}\left(\mathsf{R}_{g,k}, D\right) \leq \mathsf{cv}(\xi,\beta) = \frac{1}{2}\left(\frac{\xi}{z_{1-\beta/2}}\right)^2
\end{equation}
is satisfied, where $z_{\alpha}$ denotes the $\alpha$ quantile of the standard normal distribution. In particular, let $g_{\mathsf{init}}\geq2$ denote some ``burn-in'' period chosen by the user. The number of cross-splits $\hat{g}$ chosen by the procedure is the smallest value of $g$, greater than $g_{\mathsf{init}}$, such that \eqref{eq: empirical stopping time} is satisfied. The procedure is summarized in \cref{alg: sequential aggregation}.

\begin{algorithm}[t]
\caption{Anscombe-Chow-Robbins Aggregation} \label{alg: sequential aggregation}
\KwIn{Data $D$, tolerance $\xi$, error rate $\beta$, collection size $k$, split size $b$, initialization $g_{\mathsf{init}}$}

Set $g \leftarrow g_{\mathsf{init}}$

Draw $\mathsf{r}_1$, ..., $\mathsf{r}_{g_{\mathsf{init}}}$ independently and uniformly from $\mathcal{R}_{n,k,b}$.  Collect $\mathsf{R}_{g_{\mathsf{init}},k} = ( \mathsf{r}_j)_{j=1}^{g_{\mathsf{init}}}$.

\While{$\hat{v}(\mathsf{R}_{g,k}, D) > \mathsf{cv}(\xi,\beta) $}{

Set $g \leftarrow g + 1$

Draw $\mathsf{r}_g$ uniformly from $\mathcal{R}_{n,k,b}$. Collect $\mathsf{R}_{g,k} = (\mathsf{R}_{g-1,k}, \mathsf{r}_g)$.

}
Set $\hat{g} \leftarrow g$

\Return $a(\mathsf{R}_{\hat{g},k}, D)$ 

\nonl \hrulefill

\nonl {\footnotesize{}Notes: \cref{alg: sequential aggregation} gives a method for sequentially aggregating sample-split statistics. The critical value $\mathsf{cv}(\xi,\beta)$ is defined in display \eqref{eq: empirical stopping time}.}{\footnotesize\par}
\end{algorithm}
\smallskip

In \cref{sec: asymptotic validity}, we show that \cref{alg: sequential aggregation} is applicable, off-the-shelf, to a large class of problems. In particular, without imposing any conditions on the data generating process or the statistic of interest, we show that sample-split statistics aggregated with \cref{alg: sequential aggregation} are reproducible, in a particular asymptotic sense. Moreover, we show that, in many applications, statistics aggregated with \cref{alg: sequential aggregation} maintain, or improve on, various unconditional statistical guarantees.  We then revisit, in \cref{sec: practice}, the examples considered in \cref{sec: Reproducible}. We show that, in each case, an application of \cref{alg: sequential aggregation} produces a reproducible, stabilized, estimate. 

\subsection{Asymptotic Validity\label{sec: asymptotic validity}}

The following theorem establishes that statistics aggregated with \cref{alg: sequential aggregation} are reproducible, in a particular asymptotic sense. The proof is closely related to the arguments of \cite{anscombe1952large} and \cite{chow1965asymptotic} and is given in \cref{sec: proof asymptotic}. See e.g., Theorem 3.1 of \citet{gut2009stopped} for a textbook treatment.
\begin{theorem}
\label{eq: general reproducibility}
Suppose that the conditional variance $\Var(a(\mathsf{r}, D)\mid D)$ is strictly positive, almost surely, where $\mathsf{r}$ denotes a random collection drawn uniformly from $\mathcal{R}_{n,k,b}$. If the collections $\mathsf{R}_{\hat{g},k}$ and $\mathsf{R}^\prime_{\hat{g}^\prime,k}$ are independently obtained using \cref{alg: sequential aggregation}, then
\begin{equation}
\label{eq: almost sure reproduce}
P\bigg\{\big\vert a(\mathsf{R}_{\hat{g},k}, D) - a(\mathsf{R}^\prime_{\hat{g}^\prime,k}, D) \big\vert \geq \xi \mid D\bigg\} \overset{\text{a.s}}{\to} \beta
\quad
\text{as}\quad\xi\to0~.
\end{equation}
\end{theorem}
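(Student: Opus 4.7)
The plan is to condition on the data $D$, reduce the problem to the classical Anscombe--Chow--Robbins setting for fixed-width sequential means, and then invoke the conditional independence of the two replications. Conditional on $D$, the variables $X_i := a(\mathsf{r}_i, D)$ are i.i.d.\ with mean $\mu := \mathbb{E}[a(\mathsf{r}, D) \mid D]$ and variance $\sigma^2 := \Var(a(\mathsf{r}, D) \mid D) > 0$ almost surely by hypothesis. A quick computation gives $\hat{v}(\mathsf{R}_{g,k}, D) = s_g^2/g$, where $s_g^2 := (g-1)^{-1} \sum_{i=1}^g (X_i - \bar{X}_g)^2$ is the unbiased sample variance, so the stopping rule \eqref{eq: empirical stopping time} is the classical Chow--Robbins stopping time $\hat{g} = \inf\{g \geq g_{\mathsf{init}} : s_g^2/g \leq \mathsf{cv}(\xi,\beta)\}$.

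First I would verify that $\hat{g} \to \infty$ and $\hat{g} \cdot \mathsf{cv}(\xi,\beta) \to \sigma^2$ almost surely as $\xi \to 0$. The minimality of $\hat{g}$ yields the sandwich $s_{\hat{g}}^2 \leq \hat{g} \cdot \mathsf{cv}(\xi,\beta) \leq s_{\hat{g}-1}^2 + \mathsf{cv}(\xi,\beta)$; combined with $s_g^2 \to \sigma^2$ almost surely (strong law of large numbers) and $\mathsf{cv}(\xi,\beta) \to 0$, this pins down the claimed limit for almost every $D$.

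Next I would invoke Anscombe's random-sum central limit theorem (e.g., Theorem~3.1 of \citet{gut2009stopped}) in the conditional law given $D$. Since $(X_i)$ are i.i.d.\ with finite variance and the stopping time satisfies $\hat{g}/(\sigma^2/\mathsf{cv}(\xi,\beta)) \to 1$ almost surely, the theorem yields $\sqrt{\hat{g}}\,(a(\mathsf{R}_{\hat{g},k}, D) - \mu)/\sigma \xrightarrow{d} N(0,1)$ conditional on $D$. Slutsky together with the ratio limit from the previous step upgrades this to $(a(\mathsf{R}_{\hat{g},k}, D) - \mu)/\sqrt{\mathsf{cv}(\xi,\beta)} \xrightarrow{d} N(0,1)$. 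Repeating the argument for the primed sequence, and using conditional independence of the two replications, the rescaled difference $(a(\mathsf{R}_{\hat{g},k}, D) - a(\mathsf{R}'_{\hat{g}',k}, D)) \cdot z_{1-\beta/2}/\xi$ converges in distribution (conditional on $D$) to $N(0,1)$, since $2\mathsf{cv}(\xi,\beta) = (\xi/z_{1-\beta/2})^2$. The continuity of the standard normal c.d.f.\ then converts this distributional convergence into pointwise convergence of the conditional probability in \eqref{eq: almost sure reproduce} to $P(|Z| \geq z_{1-\beta/2}) = \beta$, for almost every $D$.

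The main obstacle is arranging that the random-sum CLT holds in the conditional law given $D$ for \emph{almost every} $D$, rather than merely in the unconditional distribution. Anscombe's uniform-continuity-in-probability hypothesis is automatic for sums of i.i.d.\ variables with finite variance via Kolmogorov's maximal inequality, so the classical proof goes through once one has fixed a version of the conditional distribution and noted that the stopping time is measurable with respect to the cross-split sequence alone, and hence retains the appropriate independence structure after conditioning. A secondary, more cosmetic, point is that the ``$\overset{\text{a.s}}{\to}\beta$'' claim is about pointwise convergence of the conditional probability as a function of $\xi$; this is why the continuity of the normal c.d.f.\ is needed to upgrade the distributional convergence to convergence of the probability of the fixed event $\{|a - a'| \geq \xi\}$ as $\xi \to 0$.
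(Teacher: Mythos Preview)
Your proposal is correct and follows the same Anscombe--Chow--Robbins strategy the paper uses. The only difference is presentational: you invoke Anscombe's random-sum central limit theorem as a black box (via Theorem~3.1 of \cite{gut2009stopped}), whereas the paper unpacks that theorem by hand---it introduces the oracle stopping time $g^\star$, establishes $\hat{g}/g^\star\to 1$ a.s.\ via the same sandwich inequality you use, then decomposes $a(\mathsf{R}_{\hat{g},k},D)-a(\mathsf{R}'_{\hat{g}',k},D)$ into an oracle piece (handled by the ordinary CLT) plus remainder terms $U$ and $Q$ that are shown to be $o_p(1)$ via Kolmogorov's maximal inequality and the ratio limit. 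Your route is more economical; the paper's explicit decomposition has the advantage that it is reused verbatim later in the non-asymptotic analysis of \cref{thm: stable symmetric reproduce appendix}, where the same $U$ and $Q$ terms are bounded quantitatively rather than just shown to vanish.
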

\noindent Two aspects of \cref{eq: general reproducibility} are worthy of emphasis. First, no restrictions on the data generating process or the statistic of interest are imposed. For instance, the data $D$ do not necessarily need to be i.i.d.\ and the statistic $a(\mathsf{r},D)$ can be random though quantities other than just $\mathsf{r}$ and $D$.  The result, thus, provides a strong assurance that \cref{alg: sequential aggregation} can be applied widely. 

Second, in the statement \eqref{eq: almost sure reproduce}, asymptotics are taken as $\xi \to 0$, with all other quantities, including the sample size $n$, fixed. This asymptotic framework is somewhat opaque, or at least nonstandard, as the parameter $\xi$ is a choice variable. The operational interpretation is that, if $\xi$ is chosen to be sufficiently small---at a point negligible relative to, say, the conditional variance $v_{1,k}(D)$---then the nominal reproducibility error $\beta$ is accurate. In \cref{sec: Guarantees}, by imposing some simplifying restrictions, we give a set of non-asymptotic results that make the dependence of the performance of \cref{alg: sequential aggregation} on the choices of $\xi$, $\beta$, and the statistic $a(\mathsf{r}, D) $ more transparent. 

The intuition underlying \cref{eq: general reproducibility} is straightforward. The result follows directly from the observation that the cross-splits $\mathsf{r}_i$ are independent and identically distributed conditional on the data $D$. Thus, for large values of $g$, the variance estimator $\hat{v}\left(\mathsf{R}_{g,k}, D\right)$ will be close to the conditional variance $v_{g,k}(D)$. As a consequence, if $\xi$ is sufficiently small, then the number of cross-splits $\hat{g}$ chosen by the procedure will be close to the ``oracle'' stopping time
\begin{align}
g^{\star}
& =
\underset{g \geq g_{\mathsf{init}} }{\arg\min}
\left\{ 
\mathsf{cv}(\xi,\beta) = \frac{1}{2}\left(\frac{\xi}{z_{1-\beta/2}}\right)^{2}
\geq
v_{g,k}(D) \right\} \label{eq: oracle splits main}\\
& = 
\underset{g \geq g_{\mathsf{init}} }{\arg\min}
\left\{ 
 \xi
 \geq 
z_{1-\beta/2}
\sqrt{\frac{2 v_{1,k}(D)}{g} }
 \right\}~,\label{eq: rewrite g star}
\end{align}
where we have used the fact that $v_{g,k}(D) = g^{-1}v_{1,k}(D)$ in writing \eqref{eq: rewrite g star}. This, then, ensures that the aggregate statistic is approximately $(\xi,\beta)$-reproducible, as 
\begin{flalign*}
 & P\left\{\big\vert a(\mathsf{R}_{g^\star,k},D)-a(\mathsf{R}_{g^\star,k}^{\prime},D)\big\vert 
 \geq 
 \xi\mid D\right\} \\
 & =P\left\{ \bigg\vert\sqrt{\frac{g^{\star}}{2 v_{1,k}(D)}}\left(\frac{1}{g^{\star}}\sum_{i=1}^{g^{\star}}a(\mathsf{r}_{i},D)-a(\mathsf{r}_{i}^{\prime},D)\right)\bigg\vert
 \geq
 z_{1-\beta/2}\mid D\right\} \overset{\text{a.s}}{\to} \beta 
\end{flalign*}
as $\xi\to0$, where the limit follows from the central limit theorem (as $g^{\star} \to \infty$ as $\xi\to0$).\footnote{The discrepancy between $a(\mathsf{R}_{\hat{g},k}, D)$ and $a(\mathsf{R}_{g^\star,k}, D)$ can be shown to be negligible by applying an appropriate maximal inequality---an idea due to \citet{renyi1957asymptotic}.}

In most cases, sample-split statistics are used because they have some desirable property. For example, a sample-split estimator $a(\mathsf{r}, D)$ might be asymptotically normal or might perform well in terms of some loss function. The following result---roughly, a sequential version of Jensen's inequality---can be applied to show that, in many situations, statistics aggregated with \cref{alg: sequential aggregation} inherit these properties. The proof is given in \cref{sec: proof asymptotic} and follows from an application of the martingale stopping theorem.
\begin{theorem}\label{thm: sequential Jensen}
If the collection $\mathsf{R}_{\hat{g},k}$ is obtained with \cref{alg: sequential aggregation}, the cross-split $\mathsf{r}$ is a random element of the set $\mathcal{R}_{n,k,b}$, and $f(\cdot)$ is any convex function, then
\begin{equation}
\mathbb{E}\left[ f( a(\mathsf{R}_{\hat{g},k}, D) ) \right] \leq  \mathbb{E}\left[ f( a(\mathsf{r}, D) )\right]~.
\end{equation}
\end{theorem}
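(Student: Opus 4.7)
Condition on the data $D$: by construction of the algorithm, the sample-split statistics $X_i := a(\mathsf r_i, D)$ for $i \ge 1$ are then i.i.d.\ copies of $X_1 := a(\mathsf r, D)$. It suffices, by the tower property, to prove the conditional inequality $\mathbb{E}[f(\bar X_{\hat g}) \mid D] \le \mu_f$ almost surely, where $\bar X_g := a(\mathsf R_{g,k}, D) = g^{-1}\sum_{i=1}^g X_i$ and $\mu_f := \mathbb{E}[f(X_1) \mid D]$.

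The central observation is the backward-martingale structure of the running average. With respect to the decreasing filtration $\mathcal H_g := \sigma(\bar X_g, X_{g+1}, X_{g+2}, \ldots, D)$, exchangeability of the conditionally i.i.d.\ $X_i$'s gives $\mathbb{E}[X_1 \mid \mathcal H_g] = \bar X_g$, so $(\bar X_g)_{g\ge 1}$ is a backward martingale. Applying Jensen's inequality to this conditional expectation promotes $(f(\bar X_g))_{g \ge 1}$ to a backward submartingale, with $\mathbb{E}[f(\bar X_1) \mid D] = \mu_f$. The desired conclusion then follows from the optional stopping theorem for backward submartingales,
\[ \mathbb{E}[f(\bar X_{\hat g}) \mid D] \le \mathbb{E}[f(\bar X_1) \mid D] = \mu_f, \]
after which integrating over $D$ recovers the statement of the theorem.

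The main obstacle I anticipate is justifying the optional stopping step: $\hat g$ is naturally adapted to the \emph{forward} filtration $\mathcal F_g := \sigma(X_1, \ldots, X_g, D)$ via the sample-variance stopping criterion, whereas the backward submartingale lives on $(\mathcal H_g)_{g \ge 1}$, and $\hat v(\mathsf R_{g,k}, D)$ is not in general $\mathcal H_g$-measurable. I would address this by first truncating at a deterministic horizon $N$, applying Doob's optional stopping identity to the bounded stopping time $\hat g \wedge N$ inside an enlarged reverse filtration that still retains the backward (sub)martingale property of $\bar X_g$ and $f(\bar X_g)$ (for example, by augmenting $\mathcal H_g$ with enough additional symmetric statistics of $X_1, \ldots, X_g$ to make the sample-variance criteria measurable), and then passing to the $N \to \infty$ limit. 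The limit step uses the almost-sure finiteness of $\hat g$---which can be inherited from the concentration of the sample-variance estimator exploited in the proof of \cref{eq: general reproducibility}---together with a dominated convergence argument.
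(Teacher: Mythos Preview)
Your backward-martingale instinct is the natural one, but the filtration-augmentation fix does not go through. The event $\{\hat g = g\}$ depends on the sample variances $\hat v(\mathsf R_{g',k},D)$ for every $g_{\mathsf{init}}\le g'\le g$, and while each $\hat v_{g'}$ is symmetric in $X_1,\ldots,X_{g'}$, it is \emph{not} symmetric in the full block $X_1,\ldots,X_g$ when $g'<g$: swapping $X_1$ with $X_g$ changes $\hat v_{g'}$. Any enlargement of $\mathcal H_g$ rich enough to make the stopping event measurable therefore destroys the exchangeability identity $\mathbb E[X_1\mid\mathcal H_g]=\bar X_g$ on which the backward-martingale property rests. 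This is a genuine obstruction, not a technicality: forward stopping times and backward submartingales do not in general combine via optional stopping, and indeed the conclusion $\mathbb E[f(\bar X_\tau)\mid D]\le\mu_f$ is false for generic forward stopping rules (take $X_i=\pm1$ fair, $f(x)=x$, and $\tau=1$ if $X_1=1$, else $\tau=2$; then $\mathbb E[\bar X_\tau]=1/4>0$).

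The paper takes a different route that avoids the backward filtration entirely: it first applies Jensen to the convex combination \emph{inside} $f$, obtaining
\[
\mathbb E\bigl[f(\bar X_{\hat g})\bigr]\ \le\ \mathbb E\!\left[\tfrac{1}{\hat g}\sum_{i\le\hat g}f(X_i)\right]\ =\ \mu_f+\mathbb E\!\left[\tfrac{1}{\hat g}\sum_{i\le\hat g}\bigl(f(X_i)-\mu_f\bigr)\right],
\]
and then argues in the forward filtration $\mathcal F_g=\sigma(X_1,\ldots,X_g,D)$, where $\hat g$ \emph{is} a bona fide stopping time. Setting $Z_g=g^{-1}\sum_{i\le g}(f(X_i)-\mu_f)$, the paper computes $\mathbb E[Z_{g+1}\mid\mathcal F_g]=\tfrac{g}{g+1}Z_g$ and invokes the martingale stopping theorem (boundedness of $f(a(\mathsf r,D))$ coming from finiteness of $\mathcal R_{n,k,b}$). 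Be aware, however, that $\tfrac{g}{g+1}Z_g\le Z_g$ fails whenever $Z_g<0$, so $(Z_g)$ is not a forward supermartingale either; the same counterexample above shows $\mathbb E[Z_\tau]$ can be strictly positive for a forward stopping time. A correct argument must therefore exploit something specific about the sample-variance stopping rule in Algorithm~\ref{alg: sequential aggregation}, which neither your sketch nor the paper's proof does.
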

\noindent Suppose that the estimator $a(\mathsf{r}, D)$ has good performance in terms of some convex loss. \cref{thm: sequential Jensen} demonstrates that the aggregate estimator $a(\mathsf{R}_{\hat{g},k}, D)$ will also perform well (and, in fact, may perform better). For example, \cite{chetverikov2021cross} demonstrate that the cross-validated lasso has nearly optimal rates of convergence in sparse regression problems. \cref{thm: sequential Jensen} shows that all higher-order moments of risk estimates obtained with our procedure are smaller than those of risk estimates obtained from a single cross-split. Thus, we expect the same result to hold for tuning parameters chosen with risk estimates aggregated with \cref{alg: sequential aggregation}.

Alternatively, in many cases, sample-split statistics are shown to be asymptotically normal by demonstrating that the variance of the discrepancy
\begin{equation}\label{eq: cheb diff main text}
\sqrt{n}\left(a(\mathsf{r}, D) - \frac{1}{n} \sum_{i=1}^b\psi(D_i, \eta)\right)
\end{equation}
is small, where $\psi(D_i, \eta)$ is some score, or influence, function and $\eta$ is some unknown nuisance parameter.\footnote{That is, if second term in \eqref{eq: cheb diff main text} is asymptotically normal and the variance of \eqref{eq: cheb diff main text} converges to zero, then the statistic $a(\mathsf{r}, D)$ is also asymptotically normal by Chebychev's inequality and the continuous mapping theorem.} This is how \cite{chernozhukov2018double} demonstrate that DML estimates of average treatment effects are asymptotically normal, for example. \cref{thm: sequential Jensen} demonstrates that the same argument will apply to statistics aggregated with \cref{alg: sequential aggregation}.

In some cases, substantial residual randomness in a sample-split statistic might cause concern for the validity of associated inferences. For example, for DML estimators, the variance of the term \eqref{eq: cheb diff main text} is bounded from below by the expectation of its variance conditional on the data. So, if the conditional variance is large, then the unconditional variance of \eqref{eq: cheb diff main text} is likely not small. On the other hand, the conditional variance can be reduced through the application of our procedure. In other cases, substantial residual randomness may not be a reason for concern, because inferences are based on estimating a nuisance parameter in one split of a data set, and then conducting inference in the other split by conditioning on this estimate. This is the case for the applications considered in  \cite{beaman2023selection} and \cite{haushofer2022targeting}.

There is a large statistical literature on the use and interpretation of average $p$-values. In general, a level $\alpha$ test can be constructed by comparing an average $p$-value to $2\cdot\alpha$ \citep[see e.g. ][]{ruger1978maximale,vovk2020combining,diciccio2020exact}.\footnote{A variety of methodological papers use this observation to construct sample-split hypothesis tests that control the Type I error rate under very general conditions.  In \cref{app: test}, we show that the sample-split hypothesis tests considered in e.g., \cite{diciccio2020exact},  \cite{meinshausen2009p}, and \cite{wasserman2020universal} continue to be valid when they are constructed sequentially with \cref{alg: sequential aggregation}.} Using a result analogous to \cref{thm: sequential Jensen}, for a pre-determined number of sample-splits $g$, \cite{chernozhukov2018generic} show that, under some high-level conditions, valid $p$-values aggregated by averaging over sample-splits are also valid $p$-values. Under the same conditions, \cref{thm: sequential Jensen} can be applied to give an analogous result for $p$-values aggregated with \cref{alg: sequential aggregation}. 

\subsection{Reproducible Aggregation in Practice\label{sec: practice}} Equipped with \cref{alg: sequential aggregation}, we return to the examples considered in \cref{sec: Reproducible}. We show that, in each case, an appropriate application of the procedure produces a stabilized, reproducible estimate. 

\subsubsection{Cross-Fitting\label{sec: cross-fitting 2}} 

We begin by treating the three examples that use cross-fitting. The most important choice to make when implementing \cref{alg: sequential aggregation} is the specification of the statistic $a(\mathsf{r}, D)$. There are many reasonable choices that one might make here. For example, one could choose to ensure that estimates or standard errors are stable up to a desired level of precision. For the sake of comparability across settings, we opt to stabilize the $p$-value
\begin{equation}\label{eq: average p val exhibit}
a(\mathsf{r_i}, D) = 1 - \Phi\left(\frac{\mathsf{est}(\mathsf{r}_i, D) }{\mathsf{se}(\mathsf{r}_i, D) }\right)~,
\end{equation}
where the function $\Phi(\cdot)$ is the standard normal c.d.f.\ and the quantities $\mathsf{est}(\mathsf{r}_i, D)$ and $\mathsf{se}(\mathsf{r}_i, D)$ denote the cross-split estimate and standard error that corresponding to the axes of \cref{fig: cross-fitting demo}. In effect, controlling the residual randomness of the average $p$-value ensures that the residual randomness in the aggregate estimate is small relative to the sampling error.

\cref{tab: practice} summarizes the results. We choose $\xi$ equal to either 0.001 and 0.01, as these are the levels of precision relevant for the determination of statistical significance at levels $\alpha = 0.025$ and $\alpha = 0.10$, respectively. For now, we follow the approaches to cross-splitting taken by all three papers. In the applications to \cite{chakravorty2024can} and \cite{haushofer2022targeting}, we aggregate over $5$-fold cross splits. In the application to \cite{beaman2023selection}, the split $\mathsf{r}_i$ contains a single half-sample, i.e., subset of $[n]$ of size $n/2$.\footnote{If it is desirable to aggregate the median $p$-value, \cref{alg: sequential aggregation} will continue to apply with a small modification. In particular, the variance estimator \eqref{eq: var estimator} can be replaced by an estimator constructed with the bootstrap. In this case, a result analogous to \cref{eq: general reproducibility} will continue to hold.} The estimates displayed in \cref{tab: practice} indicate that all three applications require aggregation over hundreds of cross-splits to ensure reproducibility, in the sense of \cref{def: reproducibility}, at these levels of error tolerance. Moreover, we find that nominal error rate $\beta$ associated with \cref{alg: sequential aggregation}, here set to $0.05$, is very accurate.

\begin{table}
\begin{centering}
\caption{Reproducible Aggregation in Practice: Cross-Fitting\label{tab: practice}}
\begin{tabular}{lcccc}
\toprule 
Application & $\xi$ & $p$-value & Average $\hat{g}$ & Reproducibility\tabularnewline
\midrule
\midrule 
\cite{chakravorty2024can} & 0.001 & 0.979 & 860.7 & 0.945 \tabularnewline
\cite{haushofer2022targeting}  & 0.01 & 0.80 & 1360.6  & 0.947 \tabularnewline
\cite{beaman2023selection} & 0.01 & 0.83  & 2794.8 & 0.950 \tabularnewline
\bottomrule
\end{tabular}
\par\end{centering}
\medskip{}\medskip{}
\justifying
\noindent{\footnotesize{}Notes: \cref{tab: practice} summarizes the application of \cref{alg: sequential aggregation} to three of the examples considered in \cref{sec: Reproducible}. The first column indicates the application. The second column specifies choices for the error tolerance $\xi$. In each case, we set the reproducibility error rate to $\beta = 0.05$ and the burn-in sample size $g_{\mathsf{init}}$ to 10. The third column gives the $p$-value produced by one-application of the procedure. The fourth column gives the average number of cross-splits $\hat{g}$ drawn in each implementation, taken across 2,000 replications of the aggregation procedure. The fifth column displays an estimate of the true reproducibility probability, i.e., the probability that two independent implementations of \cref{alg: sequential aggregation} produce estimates that differ by less than $\xi$, computed with these replicates.}{\footnotesize\par}
\noindent\hrulefill
\end{table}
 
\subsubsection{Cross-Validation\label{sec: cross val in practive}}

We now turn to our application to cross-validated risk estimation. In this setting, there is more ambiguity in how to best apply \cref{alg: sequential aggregation}. We find that the following approach works well in the data from \cite{casey2021experiment}. Recall that, in this application, the sample-split statistic $a_{\lambda}(\mathsf{r}, D)$ denotes the cross-validated estimate of the mean-squared error, queried at a specified value of the regularization parameter $\lambda$. We are interested in stabilizing these estimates for each value of $\lambda$ in an increasing sequence $(\lambda_l)_{l=1}^p$, where $\lambda_p$ is the smallest value such that no covariates have non-zero coefficients when estimated using the full data.\footnote{Throughout, we use the grid $(\lambda_l)_{l=1}^p$ of values of the regularization parameter $\lambda$ chosen by default by the ``glmnet'' R package \citep{friedman2021package}.} 

To do this, we apply \cref{alg: sequential aggregation} to each component of the vector
\begin{align}
(b_{\lambda_1}(\mathsf{r}, D), \ldots, b_{\lambda_{p-1}}(\mathsf{r}, D))~,
\quad\text{where}\quad
b_{\lambda_i}(\mathsf{r},D) = ( a_{\lambda_i}(\mathsf{r}, D) - a_{\lambda_p}(\mathsf{r}, D) ) ~.\label{eq: mse difference}
\end{align}
The rationale for this is that the relative, rather than absolute, values of the risk estimates are what are relevant for determining where the estimates takes their minimum value. We find that, in practice, consideration of the differences \eqref{eq: mse difference} can substantially reduce the amount of aggregation needed for stabilization.

We implement \cref{alg: sequential aggregation}, independently, for each component of the vector \eqref{eq: mse difference}. In other words, we ensure that each component of the risk estimate is marginally reproducible.\footnote{In principle, it is straightforward to ensure that the components of the vector \eqref{eq: mse difference} are simultaneously reproducible. For example, simultaneous reproducibility at level $\beta$ can be obtained by ensuring that each component is reproducible at level $\beta / (p-1)$, mimicking the standard Bonferonni adjustment. More sophisticated schemes, based on the bootstrap, say, are also applicable. In practice, ensuring simultaneous reproducibility can substantially increase the required computation and tends not to lead to materially different risk estimates.} We let $\xi_i$ denote the error tolerance used for the $i$th component of the vector \eqref{eq: mse difference}. We use a simple approach for determining suitable values for these tolerances. Roughly speaking, we compute the statistic \eqref{eq: mse difference} for each element of a small, initial sample of cross-splits (e.g., $g=20$). Using these estimates, we find that the level of precision needed to distinguish between the close-to-optimal values of $\lambda$ is approximately $\xi_i = 10^{-4}$. The error tolerances specified at values of $\lambda$ that can be determined to be sub-optimal are set to larger values. Further details are given in \cref{app: lasso xi determine}. 

\cref{fig: lasso reproduce} gives measurements of the performance of \cref{alg: sequential aggregation} for aggregation of the statistic \eqref{eq: mse difference}, implemented in the data from \cite{casey2021experiment}. Panel A displays quantiles of the reproducibly aggregated statistic \eqref{eq: mse difference} across replications of the procedure. The $y$-axis has been truncated to focus attention on the close-to-optimal values of the regularization parameter. There is little residual randomness. Panel B displays quantiles of the number of $10$-fold  cross-splits $\hat{g}$ chosen by the procedure at each value of the regularization parameter. Precise estimates at the close-to-optimal values require aggregation over thousands of cross-splits. Observe that less aggregation is used at small, sub-optimal values of $\lambda$, as we have used larger error tolerances for these values. We find that, if cross-validated model selection is implemented twice, and, in each case, aggregated at this level of error tolerance, then the same value of $\lambda$ is selected with probability 0.84 and the same 14 covariates are selected with probability 0.999. In \cref{app: lasso xi determine}, we show that the nominal reproducability error, again $\beta = 0.05$, is very accurate over the full range of the regularization parameter.

\begin{figure}[t]
\begin{centering}
\caption{Reproducible Aggregation in Practice: Cross-Validation}
\vspace{-20pt}
\label{fig: lasso reproduce}
\begin{tabular}{c}
\textit{Panel A: Mean-Squared Error Differences}\tabularnewline
\includegraphics[scale=0.4]{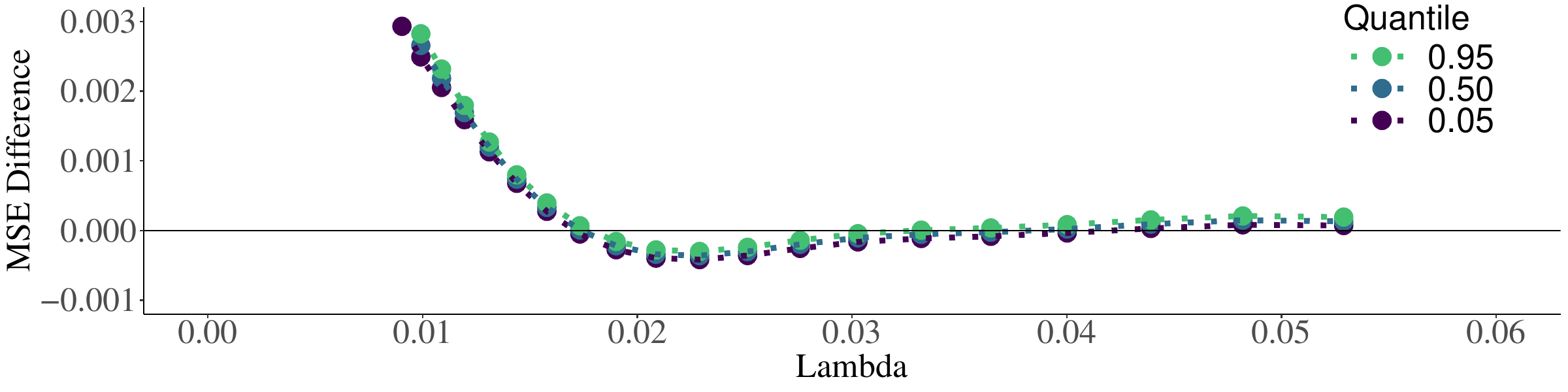}\tabularnewline
\textit{Panel B: Computation}\tabularnewline
\includegraphics[scale=0.4]{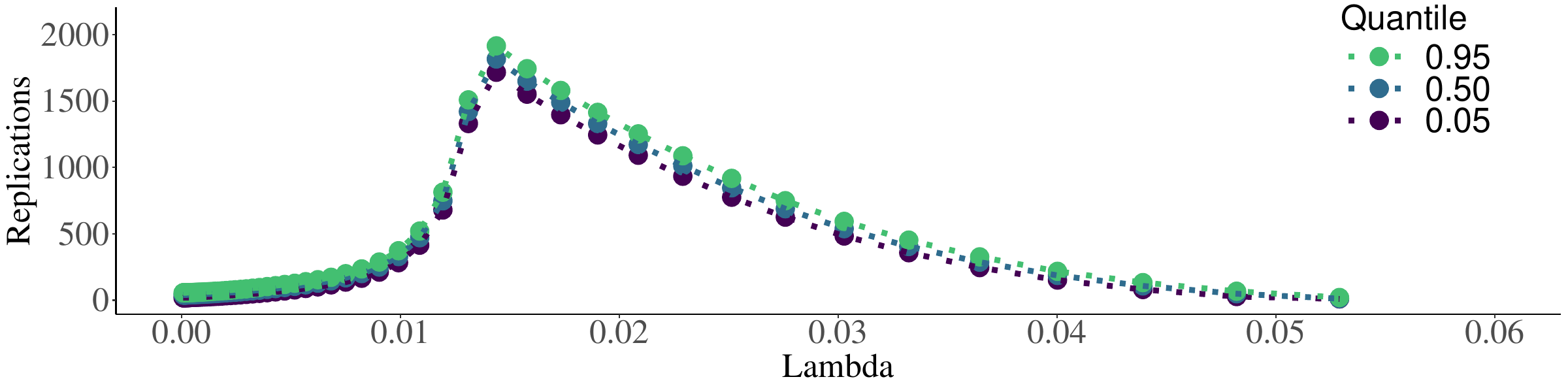}\tabularnewline
\end{tabular}
\par\end{centering}
\medskip{}
\justifying
\noindent{\footnotesize{}Notes: \cref{fig: lasso reproduce} displays the performance of \cref{alg: sequential aggregation} in the application to cross-validated Lasso, implemented in data from \cite{casey2021experiment}. \cref{alg: sequential aggregation} is applied independently to each component of the vector \eqref{eq: mse difference}. The error tolerances $\xi_i$ are specified in \cref{app: lasso xi determine}. We set the nominal reproducibility error to $\beta = 0.05$. Panel A displays quantiles of the statistic \eqref{eq: mse difference} at each value of a grid of values of the regularization parameter $\lambda$. The $y$-axes is truncated to focus attention on large values of $\lambda$. We give an un-truncated version in \cref{app: lasso xi determine}. Panel B displays quantiles of the number of $10$-fold  cross-splits $\hat{g}$ chosen by the procedure at each value of the regularization parameter.}{\footnotesize\par}
\noindent\hrulefill
\end{figure}

\section{Theoretical Analysis\label{sec: Guarantees}}

The asymptotic results given in \cref{sec: Reproducible Aggregation} are quite general. In particular, \cref{eq: general reproducibility} holds in the absence of any restrictions on the data generating process or statistic under consideration. It is worth asking, however, whether these results confer a clear statistical understanding. At least two issues arise. First, \cref{eq: general reproducibility} relies entirely on the fact that, in \cref{alg: sequential aggregation}, successive cross-splits are sampled independently. That is, we have said nothing, yet, about the role of cross-splitting. Second, the interpretation of the asymptotic approximation with $\xi\to0$ is somewhat opaque. What would be a reasonable value of $\xi$ to choose to ensure that \cref{alg: sequential aggregation} is accurate? And how do these choices impact the amount of computation required to implement the procedure?

In this section, we give a non-asymptotic description of the performance of \cref{alg: sequential aggregation}. This is accomplished by placing some simplifying restrictions on the statistic of interest. The more specialized analysis that follows is aimed at providing qualitative and quantitative intuition for how the computational cost and reproducibility error of statistics aggregated with \cref{alg: sequential aggregation} depend on the choices of $k$ and $\xi$. Proofs for results stated in this section are given in \cref{sec: general results}. 

\subsection{Symmetry, Linearity, and Stability\label{sec: sls}}

We impose a set of simplifying restrictions.  Recall that, in general, we are considering the aggregation of cross-split statistics of the form
\begin{equation}
a(\mathsf{r}, D) = \mathcal{A}(\{T(\mathsf{s}_{j},D)\}_{j=1}^k)~,
\quad\text{where}\quad 
T\left(\mathsf{s}, D \right)=
\Psi \left(D_\mathsf{s}, \hat{\eta} \left(D_{\tilde{\mathsf{s}}}\right)\right)
\end{equation}
is a sample-split statistic and the function $\hat{\eta}(\cdot)$ is an estimator of an unknown nuisance parameter $\eta$. In the main text, we restrict attention to the case that $n = k\cdot b$, i.e., where each element $\mathsf{r}$ in $\mathcal{R}_{n,k,b}$ is a complete partition of $[n]$ into $k$ sets of size $b$. Each of the results given here will follow directly from more general results stated in \cref{sec: general results}, where this restriction is not imposed.

First, we assume that the sample-split statistic under consideration is symmetric and deterministic in each part of a split sample.
\begin{assumption}[Symmetry and Determinism]
\label{assu: invariance}For all sets $\mathsf{s}$ in $\mathcal{S}_{n,b}$
and data $D$, the statistic $T\left(\mathsf{s},D\right)$ is deterministic and invariant
to permutations of the data with indices in $\mathsf{s}$ and $\tilde{\mathsf{s}}$, respectively. 
\end{assumption}
\noindent The intention of \cref{assu: invariance} is to restrict the residual randomness under consideration to the randomness introduced by sample-splitting. This holds in cases where $\hat{\eta}(\cdot)$ is deterministic, e.g., when $\hat{\eta}(\cdot)$ is a coefficient vector determined by a regularized regression or in the applications to hypothesis testing considered by \cite{diciccio2020exact} or \cite{wasserman2020universal}. \cref{assu: invariance} rules out procedures where the estimator $\hat{\eta}(\cdot)$ is random conditional on the data. This excludes settings where, e.g., $\hat{\eta}(\cdot)$ is estimated with stochastic gradient descent, bagging or subsampling, or is itself constructed with data splitting. 

Second, we assume that the aggregation function $\mathcal{A}(\cdot)$ is an average and that the statistic $T\left(\mathsf{s}, D \right)$ is linearly separable in the first part of the split sample.
\begin{assumption}[Linearity]
\label{assu: linearity}For all cross-splits $\mathsf{r} = (\mathsf{s}_j)_{j=1}^k$ in $\mathsf{R}_{n,k,b}$, the cross-split statistic $a(\mathsf{r}, D)$ can be represented by
\begin{equation}
\label{eq: linear agg}
a(\mathsf{r}, D)
= \frac{1}{k} \sum_{i\in\mathsf{s}} T\left(\mathsf{s}_j, D \right)~.
\end{equation}
Moreover, for all sets $\mathsf{s}$ in $\mathcal{S}_{n,b}$, the sample-split statistic $T\left(\mathsf{s}, D \right)$ can be represented by
\begin{equation}
\label{eq: linear sep}
T\left(\mathsf{s},D\right) 
= \frac{1}{b} \sum_{i\in\mathsf{s}} \psi(D_i, \hat{\eta} \left(D_{\tilde{\mathsf{s}}}\right))
\end{equation}
for some function $\psi(\cdot,\cdot)$.
\end{assumption}
\noindent We make these restrictions to ease exposition. \cref{assu: linearity} is satisfied if, for example, the statistic under consideration is a cross-fit treatment effect or cross-validated mean-squared error estimate. In principle, \cref{assu: linearity} rules out some applications of interest. In practice, so long as the linear representations \eqref{eq: linear agg} and \eqref{eq: linear sep} hold up to a suitable degree of approximation, the qualitative and quantitative predictions of the results that follow will continue to hold.\footnote{Extensions of our results to cases where the function $\mathcal{A}(\cdot)$ or the statistic $T\left(\mathsf{s}, D \right)$ satisfy component-wise Lipschitz or bounded differences conditions, say, are feasible, and will exhibit the same qualitative behavior.} In particular, we show below that the predictions of our results play out in the data from \cite{chakravorty2024can}, where the statistic $T\left(\mathsf{s},D\right) $ is a $p$-value of the form \eqref{eq: average p val exhibit}.

The remaining assumptions, and our ensuing results, are expressed in terms of two objects that measure the sensitivity of the statistic under consideration to perturbations of the data and of the splits, respectively. We refer to these objects as stabilities. They are defined as follows.
\begin{defn}[Sample Stability] \label{def: sample stability}
Fix a set $\mathsf{s}\subseteq\mathcal{S}_{n,b}$ and let $i$ be an arbitrary element of $\mathsf{s}$.  Let $D^\prime$ denote an independent and identical copy of the data $D$. For each $\mathsf{q}\subseteq [n]$, let $\tilde{D}^{(\mathsf{q})}$ be constructed by replacing $D_j$ with $D^\prime_j$ in $D$ for each $j$ in $\mathsf{q}$. Let $\mathsf{q}$ be a randomly selected subset of $\tilde{\mathsf{s}}$ of cardinality $q$.  We refer to the quantity
\begin{align}
\sigma^{(r,q)}
&=\mathbb{E}\left[
\big\vert
\psi(D_i, \hat{\eta} \left(D_{\tilde{\mathsf{s}}}\right))
-
\psi(D_i, \hat{\eta} (\tilde{D}^{(\mathsf{q})}_{\tilde{\mathsf{s}}}))
\big\vert^{r}\right]
\end{align}
as the $(r,q)$-order sample stability.
\end{defn}
\begin{defn}[Split Stability] \label{def: split stability}
We refer to the quantity
\begin{equation}
\zeta^{(r)}=\mathbb{E}\left[\max_{\mathsf{s},\mathsf{s}^{\prime}\in\mathsf{S}_{n,b}}\left(T\left(\mathsf{s},D\right)-T\left(\mathsf{s}^{\prime},D\right)\right)^{r}\right].
\end{equation}
as the $r$th-order split stability.
\end{defn} 

We restrict attention to statistics whose sample stabilities decay in a suitable way with the sample size $n$. Throughout, we say $x\lesssim y$ if there exists a universal constant $C$ such that $x\leq Cy$.
\begin{assumption}[Sample Stability Decay]\label{def: split stability}
The $(r,q)$-order sample stability satisfies the bound
\begin{equation} \label{eq: training bound}
\sigma^{(r,q)} \lesssim \left(\frac{\sqrt{q}}{n-b}\right)^r
\end{equation}
uniformly for each $q$ in $[b]$ and $r$ in $\{2,4\}$.
\end{assumption} 
\noindent We call a statistic sample stable if it satisfies \cref{def: split stability}.\footnote{The $(2,1)$-order sample stability $\sigma^{(2,1)}$ is a widely studied object in the statistical learning literature, where it is referred to as mean-square stability \citep[see e.g.,][]{bousquet2002stability,kale2011cross,kumar2013near}.} Many sample-split statistics of interest are sample stable. In \cref{app: M estimation}, we show that statistics satisfying \cref{assu: linearity} are sample stable if the nuisance parameter estimator $\hat{\eta}(\cdot)$ is an empirical risk minimizer of a, potentially regularized, strictly convex loss. There is a large literature that gives analogous bounds for other standard machine learning estimators, including bagged or subsampled estimators, like random forests \citep{chen2022debiased,ritzwoller2024uniform}, ensemble estimators \citep{elisseeff2005stability}, and estimators computed with stochastic gradient descent \citep{hardt2016train}.\footnote{\cite{chen2022debiased} show that, under some regularity conditions, sample-splitting is unnecessary for the consistency and asymptotic normality of DML estimators, if a condition related to, but partially stronger than, \cref{def: split stability} is satisfied. We comment on the relationship between \cref{def: split stability} and the conditions considered in \cite{chen2022debiased} in \cref{app: M estimation}.\label{fn: chen compare}} 

Nevertheless, sample-stability should be viewed as a strong assumption, that is only applicable to highly regular estimators. Below, we show that the predictions that follow from the imposition of sample-stability, concerning the qualitative behavior of the residual randomness, play out in the applications to \cite{casey2021experiment} and \cite{chakravorty2024can}. These predications may not have the same quality in settings that use less well-behaved nuisance parameter estimators. 

The split stability $\zeta^{(r)}$ is a less frequently studied object. We will only require that it is finite for $r$ equal to 4 or 8, depending on the setting. This is a weak restriction that will hold, for example, if  the statistic $T\left(\mathsf{s}, D \right)$ is bounded.

\subsection{Computation and Concentration\label{sec: concentration and normal approx}}

\cref{alg: sequential aggregation} entails sequentially computing the statistic $a(\mathsf{R}_{g,k}, D)$, as $g$ increases, until a stopping criteria is satisfied. How much computation should we expect to do? And how does this quantity depend on the parameters $k$ and $\xi$?  

Recall from \cref{sec: asymptotic validity} that we should expect the total number of splits $\hat{m}=\hat{g} \cdot k$ used by \cref{alg: sequential aggregation} to be close to the ``oracle'' quantity
\begin{equation}\label{eq: m star def}
m^\star = g^\star \cdot k \approx 2 k \cdot v_{1,k}(D)\left(\frac{z_{1-\beta/2}}{\xi}\right)^2 ~,
\end{equation}
where $v_{1,k}(D) = \Var(a(\mathsf{r}, D)\mid D)$ denotes the conditional variance of the statistic computed using a single cross-split. Two aspects of the expression \eqref{eq: m star def} are worth highlighting. First, the total number of splits depends on the error tolerance $\xi$ through the factor $\xi^{-2}$. In other words, in order to reduce the reproducibility error by a factor of $10$, e.g., to move from $\xi = 0.1$ to $\xi = 0.01$, the total number of splits must be increased by a factor of $100$. Second, the total number of splits depends on the parameter $k$ through the factor $k \cdot v_{1,k}(D)$. How should we expect this quantity to scale with $k$?

We answer this question with the following result, which characterizes the rate of convergence of the statistic $a(\mathsf{R}_{g,k}, D)$ around its conditional mean, given by
\begin{equation} \label{eq: a bar}
\bar{a}\left(D\right) = \mathbb{E}\left[a(\mathsf{R}_{g,k}, D) \mid D\right] = \mathbb{E}\left[T(\mathsf{s}_{i,j},D) \mid D\right]
\end{equation}
under \cref{assu: linearity}. The nonstandard aspect of this result is that we account for the dependence in the summands in $a(\mathsf{R}_{g,k}, D)$ across cross-splits, i.e., the dependence induced by cross-fitting. This is accomplished by applying a coupling argument due to  \cite{chatterjee2005concentration,chatterjee2007stein}.\footnote{\label{fn: concentration}\cref{thm: cross split moment} is closely related to the unconditional variance bounds given in \cite{kale2011cross} and \cite{kumar2013near}, who give bounds with the same dependence on $k$ for cross-validated risk estimation. Our result follows from a different method of argument. In particular, \cref{thm: cross split moment} is a corollary of a more general result, presented in \cref{sec: general results}, that gives an analogous large deviations bound. In particular, we show that, for each $\varepsilon>1$, the bound 
\begin{equation*}
P\left\{ \vert a(\mathsf{R}_{g,k}, D)- \bar{a}(D) \vert \leq \sqrt{\frac{b-1}{n^2} \frac{1}{g} \frac{\log(\varepsilon^{-1})}{\delta}} \mid D \right\} \geq 1 - \varepsilon
\end{equation*}
holds with probability greater than $1-\delta$ as $D$ varies. That is, the rate of convergence suggested by \cref{thm: cross split moment} holds for all higher-order moments as well. This large deviations bound is applied repeatedly to establish the result given in the following subsection.}
\begin{theorem}
\label{thm: cross split moment}Suppose that \cref{assu: invariance,assu: linearity,def: split stability} hold, the data $D$ are independently and identically distributed, and $n = k\cdot b$. If the 4th-order split stability $\zeta^{(4)}$ is finite, then for each $\delta > 0$, the inequality 
\begin{equation} \label{eq: variance bound display}
v_{g,k}(D) = \mathbb{E}\left[\left(a\left(\mathsf{R}_{g,k}, D\right)-\bar{a}\left(D\right)\right)^{2} \mid D \right] \lesssim \frac{1}{\delta} \frac{b-1}{n^2} \frac{1}{g}~.
\end{equation}
holds with probability greater than $1-\delta$ as $D$ varies. 
\end{theorem}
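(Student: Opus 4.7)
The plan is to reduce the theorem to an unconditional second-moment bound and then attack the latter through the method of exchangeable pairs. By construction of \cref{alg: sequential aggregation}, the cross-splits $\mathsf{r}_1, \ldots, \mathsf{r}_g$ are drawn i.i.d.\ uniformly from $\mathcal{R}_{n,k,b}$ conditional on $D$, so $v_{g,k}(D) = g^{-1} v_{1,k}(D)$. By Markov's inequality, the high-probability statement therefore reduces to the unconditional bound
\begin{equation*}
\mathbb{E}_D[v_{1,k}(D)] \;=\; \mathbb{E}\bigl[(a(\mathsf{r}, D) - \bar{a}(D))^2\bigr] \;\lesssim\; \frac{b-1}{n^2},
\end{equation*}
with the expectation on the right taken over both the i.i.d.\ data $D$ and an independent uniform cross-split $\mathsf{r}$; the factor $1/\delta$ in the theorem then appears as the standard cost of converting from expectation to tail.

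The main task is establishing this second-moment bound. Using \cref{assu: linearity} to expand
\begin{equation*}
a(\mathsf{r}, D) = \frac{1}{n}\sum_{j=1}^k \sum_{i \in \mathsf{s}_j} \psi\bigl(D_i, \hat{\eta}(D_{\tilde{\mathsf{s}}_j})\bigr),
\end{equation*}
I would follow the coupling argument of \cite{chatterjee2005concentration} advertised in the introduction. Concretely, construct an exchangeable pair $(\mathsf{r}, \mathsf{r}')$ conditional on $D$ by picking two indices $i_0 \in \mathsf{s}_{j_0}$ and $i_0' \in \mathsf{s}_{j_0'}$ lying in different folds uniformly at random and swapping their fold labels. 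Such a swap leaves the $k-2$ other training sets completely untouched and modifies $\tilde{\mathsf{s}}_{j_0}$ and $\tilde{\mathsf{s}}_{j_0'}$ by a single-element exchange each. Consequently, each of the $2(b-1)$ ``non-swapped'' $\psi$ evaluations in the two affected folds changes by $O(1/(n-b))$ in $L^2$ by the sample stability decay assumption at $q=1$, while the two ``swapped'' summands themselves---whose training sets can differ in up to $b$ elements---can be handled by combining higher-$q$ sample stability with the fourth-order split stability $\zeta^{(4)}$.

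The main obstacle is aggregating these elementary swap estimates into a bound on $\mathbb{E}[(a(\mathsf{r}, D) - a(\mathsf{r}', D))^2]$ with the correct $(b-1)/n^2$ scaling. A naive telescoping from $\mathsf{r}$ to an independent copy $\mathsf{r}'$ through $O(n)$ elementary swaps, followed by the triangle inequality, yields a bound that is too loose. Chatterjee's coupling circumvents this by representing $a(\mathsf{r}, D) - a(\mathsf{r}', D)$ as a sum of nearly orthogonal increments along an appropriately constructed reversible Markov dynamics on $\mathcal{R}_{n,k,b}$, producing $\mathbb{E}[(a(\mathsf{r}, D) - a(\mathsf{r}', D))^2]$ as essentially a sum of squared single-swap contributions; substituting the per-swap bounds derived from sample stability then delivers the target rate. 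An equivalent Efron--Stein decomposition along a sequential revelation of the fold labels $j(1), \ldots, j(n)$ would also serve, with the squared martingale differences bounded via the same stability arguments.
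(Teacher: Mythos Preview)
Your proposal follows essentially the same route as the paper: reduce to an unconditional second-moment bound via Markov's inequality, construct an exchangeable pair by a single index swap, and apply Chatterjee's coupling to build a Stein representer whose summed-geometric structure converts per-swap bounds into a variance bound. The paper does exactly this, using a permutation-based exchangeable pair (which also handles the $n>kb$ case) and the moment inequality in \cref{thm: chatterjee concentration}.

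The one place where the paper's analysis is sharper than your sketch is the per-swap bookkeeping. You propose to control the $2(b-1)$ ``non-swapped'' $\psi$-terms via $q=1$ stability and the two ``swapped'' terms via higher-$q$ stability. If you simply apply the triangle inequality to the $2(b-1)$ non-swapped terms, each of $L^2$ size $O(1/(n-b))$, the resulting per-swap $L^2$ bound is $O((b-1)/(n(n-b)))$, which after the Chatterjee machinery yields a variance of order $(b-1)^2/n^2$ rather than $(b-1)/n^2$---you would lose a factor of $b-1$. The paper instead organizes the swapped-element contributions from the two affected folds into a \emph{double difference}
\[
h(D_B,D_C,D_{\hat{\mathsf{s}}_C}) - h(D_B,D_C,D_{\hat{\mathsf{s}}_B}),
\]
where the two training sets differ in exactly $b-1$ indices (namely $\hat{\mathsf{s}}_B$ versus $\hat{\mathsf{s}}_C$). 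This is bounded directly by $\sigma^{(2,b-1)}_{\mathsf{train}}\lesssim (b-1)/(n-b)^2$, which is what produces the linear (rather than quadratic) dependence on $b-1$. So the sharp rate does not come from combining many $q=1$ perturbations; it comes from a single $q=b-1$ perturbation arising from the cross-fold structure. Your Efron--Stein alternative along fold-label revelation would face the same issue and would need a similar aggregation.
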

\noindent The left-hand side of the inequality \eqref{eq: variance bound display} is random through the data $D$. \cref{thm: cross split moment} says that, if $\mathcal{F}$ is the event that the inequality \eqref{eq: variance bound display} holds, then $P\{\mathcal{F}\} > 1-\delta$ unconditionally. This bound results from an application of Markov's inequality, at one point in the proof, to bound a complicated, data-dependent term with a term that depends on the sample stability. This strategy---bounding the conditional quantities of interest with unconditional quantities---is helpful because the resultant unconditional object, the sample stability, is tractable and has been characterized in many settings of interest.

\cref{thm: cross split moment} demonstrates that the variance $v_{g,k}(D)$ converges to zero at the rate
\begin{equation} \label{eq: bound simple display}
\frac{1}{n} \frac{b-1}{n} \frac{1}{g} \leq  \frac{1}{n} \frac{1}{k} \frac{1}{g}~.
\end{equation}
\noindent Consequently, for a fixed total number of splits $m=g\cdot k$, the rate of convergence is proportional to $(m n)^{-1}$. That is, the conditional randomness of aggregate statistics constructed with $m$ sample-splits concentrates like averages of $m$ i.i.d.\ random variables, despite the dependence across cross-splits. Observe, also, that if $b = 1$, corresponding to ``jackknife'' or ``leave-one-out'' sample-splitting, then there is no residual randomness and the left-hand-side of \eqref{eq: bound simple display} collapses to zero.

Plugging the bound \eqref{eq: variance bound display} into the expression \eqref{eq: m star def}, we find that
\begin{equation}\label{eq: m constant}
m^\star \approx \frac{1}{n} \left(\frac{z_{1-\beta/2}}{\xi}\right)^2 ~,
\end{equation}
In other words, \cref{thm: cross split moment} implies that---in contrast to the error tolerance $\xi$---the number of cross-folds $k$ does not affect the required computation. On the other hand, all else equal, less aggregation is needed to remove residual randomness from settings with larger sample sizes $n$.

\cref{thm: cross split moment} plays out empirically. \cref{fig: non-sequential} displays measurements of the conditional variance $v_{1,k}(D)$ in dark blue, as $k$ varies, for our applications to \cite{casey2021experiment} and \cite{chakravorty2024can}. For the application to \cite{casey2021experiment}, we use the cross-validated estimate of the mean-squared error at the value of $\lambda$ that minimizes the curves displayed in Panel A of \cref{fig: lasso reproduce}. For the application to \cite{chakravorty2024can}, we use the $p$-value \eqref{eq: average p val exhibit}, associated with the 
cross-fit estimate of the average treatment effect. If the variance bound \eqref{eq: variance bound display} is accurate, then the approximation
\begin{equation}\label{eq: var prediction}
v_{1,k}(D) \approx \left(\frac{k^\prime}{k}\right) v_{1,k^\prime}(D)
\end{equation}
should hold for each pair $k,k^\prime$. To test this, we display estimates of right-hand-side of \eqref{eq: var prediction} in light green, where $k^\prime$ is the largest value of $k$ considered in each sub-figure. In each case, the approximation is remarkably accurate.\footnote{It is worth emphasizing that neither \cref{assu: invariance} nor \cref{assu: linearity} holds for the application to \cite{chakravorty2024can}. That is, in this case, nuisance parameters are estimated with random forests, which are random conditional on the data, and the $p$-value \eqref{eq: average p val exhibit} does not admit an exact representation of the form \eqref{eq: linear sep}. Rather, in this setting, both of these assumptions are good approximations (so long as the number of trees used to construct the random forests is sufficiently large), and so the predictions of \cref{thm: cross split moment} hold.}

\begin{figure}[t]
\begin{centering}
\caption{Concentration}
\vspace{-20pt}
\label{fig: non-sequential}
\begin{tabular}{c}
\textit{Panel A: \cite{casey2021experiment}}\tabularnewline
\includegraphics[scale=0.4]{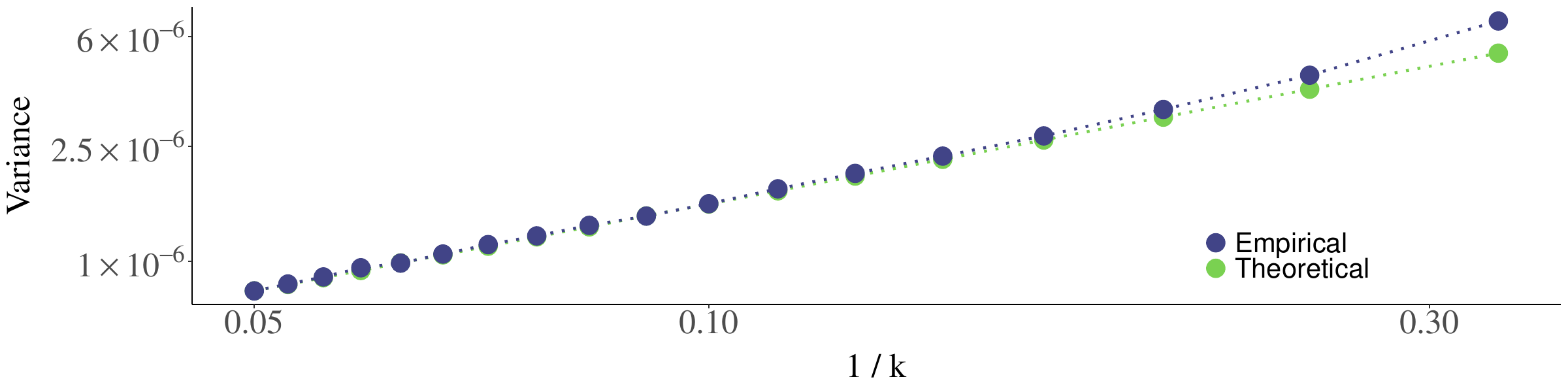}\tabularnewline
\textit{Panel B: \cite{chakravorty2024can}}\tabularnewline
\includegraphics[scale=0.4]{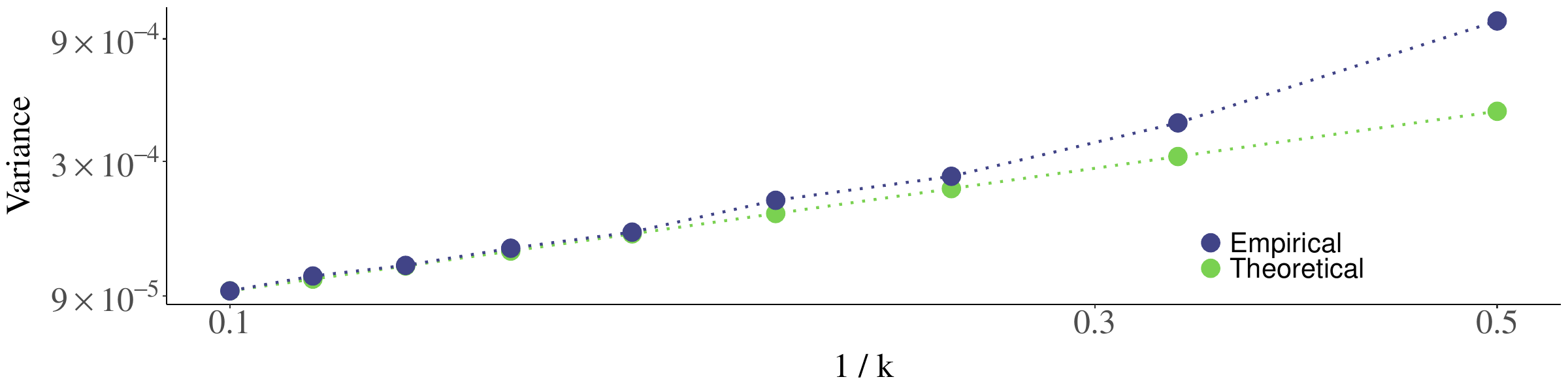}\tabularnewline
\end{tabular}
\par\end{centering}
\medskip{}
\justifying
\noindent{\footnotesize{}Notes: \cref{fig: non-sequential} illustrates the concentration of the residual randomness of various cross-split statistics with the number of cross-splits $k$, using data from \cite{casey2021experiment} and \cite{chakravorty2024can}. For the application to \cite{casey2021experiment}, we use the cross-validated estimate of the mean-squared error at the value of $\lambda$ that minimizes the curves displayed in Panel A of \cref{fig: lasso reproduce}. For the application to \cite{chakravorty2024can}, we use the $p$-value of the form \eqref{eq: average p val exhibit}, associated with the 
cross-fit estimate of the average treatment effect. The $x$-axes give $1/k$. The $y$-axes give measurements of the conditional variance of each statistic. Both the axes are displayed on a logarithmic scale, base 10. The theoretical prediction \eqref{eq: var prediction}, based on \cref{thm: cross split moment}, is given in light green.}{\footnotesize\par}
\noindent\hrulefill
\end{figure}

\subsection{Reproducibility\label{sec: nonasymp reproduce}}

\cref{alg: sequential aggregation} confers a guarantee. In particular, statistics aggregated with \cref{alg: sequential aggregation} should be interpreted as being reproducible, up to an error tolerance $\xi$, with probability greater than $\beta$. How accurate is this guarantee? In particular, how does the accuracy of the nominal reproducibility error $\beta$ depend on the choice parameters $\xi$ and $k$? 
These questions are answered by the following Berry-Esseen type bound on the accuracy of the nominal reproducibility of \cref{alg: sequential aggregation}.
\begin{theorem}
\label{thm: stable symmetric reproduce}
Suppose that the collections $\mathsf{R}_{\hat{g},k}$ and $\mathsf{R}^\prime_{\hat{g}^\prime,k}$ are independently obtained using \cref{alg: sequential aggregation}.  If \cref{assu: invariance,assu: linearity,def: split stability} hold, the conditional variance $v_{1,k}(D) = \Var(a(\mathsf{r}, D)\mid D)$ is strictly positive, almost surely, the data $D$ are independent and identically distributed, and the eighth-order split stability $\zeta^{(8)}$ is finite, then for all sufficiently small $\xi$, the inequality
\begin{align}
& \bigg\vert P\bigg\{\big\vert a(\mathsf{R}_{\hat{g},k}, D) - a(\mathsf{R}^\prime_{\hat{g}^\prime,k}, D) \big\vert \geq \xi \mid D\bigg\} 
 - \beta \bigg\vert \nonumber \\
&\quad\quad\quad\quad
\lesssim 
\frac{1}{\delta^{3/4}}
\frac{1}{k n}
\left(\frac{1}{v_{1,k}(D)}\right)^{5/4}
\left(
\frac{\xi}{z_{1-\beta/2} }
\right)^{1/2}~,\label{eq: reproduce BE bound main}
\end{align}
holds with probability greater than $1-\delta$ as $D$ varies, where in writing \eqref{eq: reproduce BE bound main}, we have omitted a multiplicative term that converges to zero logarithmically as $\xi$ decreases to zero.
\end{theorem}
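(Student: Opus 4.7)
My approach has three moving parts: reduction to the oracle stopping time $g^\star$ defined in \eqref{eq: oracle splits main}, a conditional Berry-Esseen bound at $g = g^\star$, and a R\'enyi-style maximal inequality to transfer the bound from $g^\star$ to the data-dependent $\hat g$. By the definition of $g^\star$, $v_{g^\star,k}(D)$ is within a constant factor of $\mathsf{cv}(\xi,\beta) = \tfrac{1}{2}(\xi/z_{1-\beta/2})^2$. Setting
\[
Z_{g^\star} = \frac{a(\mathsf{R}_{g^\star,k}, D) - a(\mathsf{R}^\prime_{g^\star,k}, D)}{\sqrt{2 v_{g^\star,k}(D)}}~,
\]
one has $\{|a(\mathsf{R}_{g^\star,k}, D) - a(\mathsf{R}^\prime_{g^\star,k}, D)| \geq \xi\} \approx \{|Z_{g^\star}| \geq z_{1-\beta/2}\}$, so $\beta$ coincides with the target normal tail. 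The problem thus reduces to controlling the conditional Kolmogorov distance between $\mathcal{L}(Z_{g^\star} \mid D)$ and $\mathcal{N}(0,1)$, and then transferring the bound to $\hat g$.

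For the Berry-Esseen step, the summands $\{a(\mathsf{r}_i, D)\}_i$ and $\{a(\mathsf{r}^\prime_i, D)\}_i$ are, conditional on $D$, i.i.d. The classical i.i.d. Berry-Esseen bound therefore yields an error of order $\mathbb{E}[|a(\mathsf{r},D) - \bar a(D)|^3 \mid D]/(v_{1,k}(D)^{3/2}\sqrt{g^\star})$. The conditional third absolute moment of $a(\mathsf{r}, D) - \bar a(D)$ cannot be treated by standard i.i.d. tools, because the folds $\{T(\mathsf{s}_j, D)\}_{j=1}^k$ within a single cross-split are dependent through the partition. I would extend the Chatterjee coupling argument behind \cref{thm: cross split moment}---in particular the higher-order large-deviations form recorded in \cref{fn: concentration}---to a fourth-moment analogue, showing that under $\zeta^{(8)} < \infty$ one has $\mathbb{E}[(a(\mathsf{r},D) - \bar a(D))^4 \mid D] \lesssim \delta^{-1}\{(b-1)/n^2\}^2$ with probability at least $1-\delta$. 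H\"older's inequality then converts this into a bound of order $\delta^{-3/4}\{(b-1)/n^2\}^{3/2}$ on the conditional third absolute moment. Substituting $g^\star \asymp v_{1,k}(D) z_{1-\beta/2}^2 / \xi^2$ and simplifying yields the rate in \eqref{eq: reproduce BE bound main}.

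To pass from $g^\star$ to $\hat g$, observe that $\hat v(\mathsf{R}_{g,k}, D)$ is a sample variance of $g$ conditionally i.i.d. summands. A Doob or R\'enyi maximal inequality applied to the centered estimator, combined with the fourth-moment control obtained above, gives $\sup_{g \geq g^\star}|\hat v(\mathsf{R}_{g,k}, D) - v_{g,k}(D)|$ of smaller order than $\mathsf{cv}(\xi,\beta)$, so that $\hat g / g^\star \to 1$ conditional on $D$. Because the normal tail is locally Lipschitz at $z_{1-\beta/2}$, this substitution contributes at most a logarithmic multiplicative factor in $\xi^{-1}$, which is absorbed into the unspecified term in the statement. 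The \emph{main obstacle} is the fourth-moment refinement of \cref{thm: cross split moment}: the within-cross-split dependence forces the use of exchangeable pairs rather than standard i.i.d. concentration, and carrying Chatterjee's coupling beyond the second moment is where the technical work is concentrated.
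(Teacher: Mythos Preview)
Your three-part plan matches the paper's architecture, and your identification of the fourth-moment extension of the Chatterjee coupling as the core technical step is correct. However, there is a genuine gap in the rate accounting.

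The Berry--Esseen term at $g^\star$ does \emph{not} produce the rate in \eqref{eq: reproduce BE bound main}. Carrying out your own computation, the third-moment bound $\delta^{-3/4}\{(b-1)/n^2\}^{3/2}$ combined with $\sqrt{g^\star}\approx z_{1-\beta/2}\sqrt{v_{1,k}(D)}/\xi$ gives a Berry--Esseen error of order $\delta^{-3/4}\,(b-1)^{3/2}n^{-3}\,v_{1,k}(D)^{-2}\,(\xi/z_{1-\beta/2})$, which scales linearly in $\xi$ and carries the wrong powers of $v_{1,k}(D)$ and $kn$. In the paper's argument this is the subdominant term.

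The leading contribution is the transfer from $g^\star$ to $\hat g$, which you treat as a logarithmic correction. Write $a(\mathsf{R}_{\hat g,k},D) - a(\mathsf{R}_{g^\star,k},D)$ as $(g^\star)^{-1}$ times the sum of the $|\hat g - g^\star|$ extra (or missing) centered summands, plus a rescaling piece. Exponential concentration for $\hat v$ (which itself requires the fourth-moment bound, whence the hypothesis $\zeta^{(8)}<\infty$) gives $|\hat g/g^\star - 1|\le c$ with $c$ proportional to $\xi$ up to logarithms. But on this event the extra-summand piece, normalized by $\sqrt{v_{g^\star,k}(D)}$, is a partial sum of at most $cg^\star$ terms each of conditional variance $v_{1,k}(D)$; a maximal inequality (the paper uses Steiger's exponential version rather than Doob/R\'enyi) gives it order $\sqrt{cg^\star\,v_{1,k}(D)}/\bigl(g^\star\sqrt{v_{g^\star,k}(D)}\bigr) = \sqrt{c}$. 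Thus the Lipschitz step you invoke contributes $\sqrt{c}\propto \xi^{1/2}$, not a log factor, and tracing the constants through the concentration bound for $\hat v$ delivers exactly the $\delta^{-3/4}(kn)^{-1}v_{1,k}(D)^{-5/4}(\xi/z_{1-\beta/2})^{1/2}$ rate in the statement. The logarithmic factor you anticipate does appear, but as a multiplier of this $\xi^{1/2}$ term rather than of the Berry--Esseen term. The paper notes, citing Landers and Rogge on randomly stopped sums, that the $\xi^{1/2}$ dependence is sharp, so the faster rate your plan implicitly claims cannot hold.
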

\noindent  \cref{thm: stable symmetric reproduce} describes the settings under which \cref{alg: sequential aggregation} is accurate. To unpack this result, observe that
the variance bound \eqref{eq: variance bound display} gives
\begin{equation}
\frac{1}{k n}
\left(\frac{1}{v_{1,k}(D)}\right)^{5/4}
\left(
\frac{\xi}{z_{1-\beta/2} }
\right)^{1/2} 
\gtrsim
\frac{k^{1/4}}{n}
\left(
\frac{\xi}{z_{1-\beta/2} }
\right)^{1/2} ~.
\end{equation}
This suggests that the performance of \cref{alg: sequential aggregation} improves as $k$ and $\xi$ decrease and as $n$ increases.\footnote{The dependence of the bound \eqref{eq: reproduce BE bound main} on $\xi$ is sharp, at least up to the logarithmic factor. This follows from general results concerning randomly stopped sums given in \cite{landers1976exact, landers1988sharp}.} These predictions hold empirically. Panel A of \cref{fig: performance casey} displays estimates of the reproducibility error in the application to \cite{casey2021experiment} as $k$ and $\xi$ vary. An analogous figure for the application to  \cite{chakravorty2024can} is displayed in \cref{sec: simulation app}. As predicted, over most of the range of $\xi$, the reproducibility error is increasing as $k$ increases.\footnote{Note, however, that for large values of $\xi$, the reproducibility error decreases as $k$ increases. This is due to early stopping. That is, if the variance $v_{1,k}(D)$ is large relative to $\xi$, then there is an increased chance that $\hat{g}$ stops immediately after the burn-in period, i.e., $\hat{g}=g_{\mathsf{init}}$.}
 
\begin{figure}[t]
\begin{centering}
\caption{Performance in Application to \cite{casey2021experiment}}
\label{fig: performance casey}
\vspace{-20pt}
\begin{tabular}{cc}
\multicolumn{2}{c}{\textit{Panel A: Reproducibility Error}}\tabularnewline
\multicolumn{2}{c}{\includegraphics[scale=0.4]{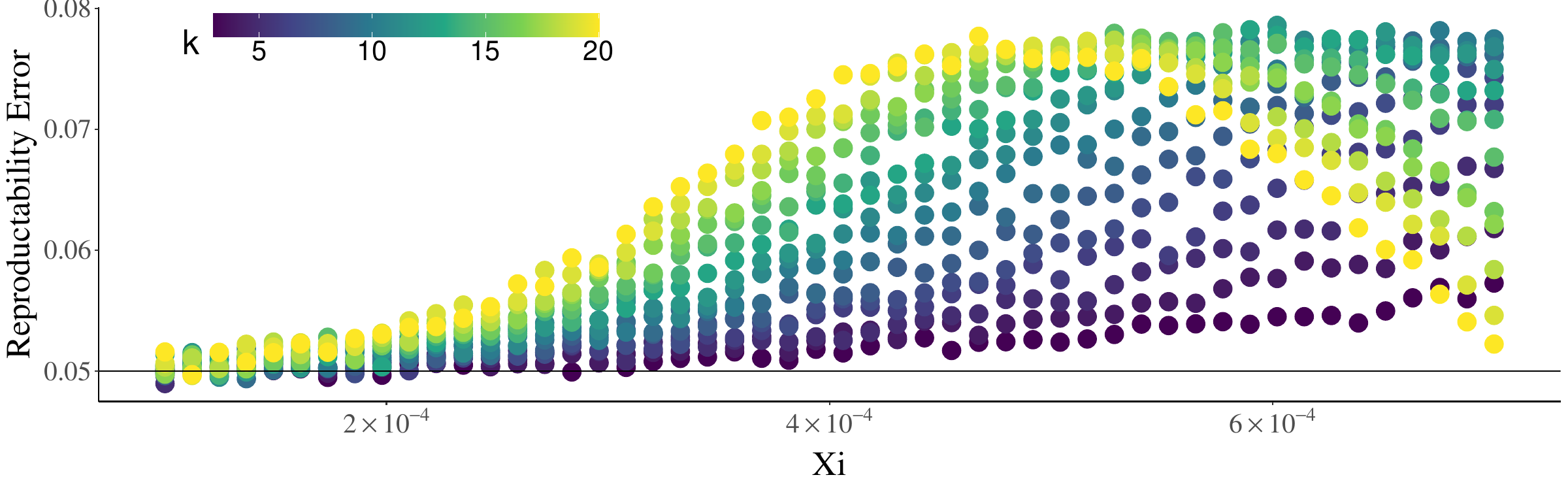}}\tabularnewline
\multicolumn{2}{c}{\textit{Panel B: Average Replications, $\hat{g}$}}\tabularnewline
\multicolumn{2}{c}{\includegraphics[scale=0.4]{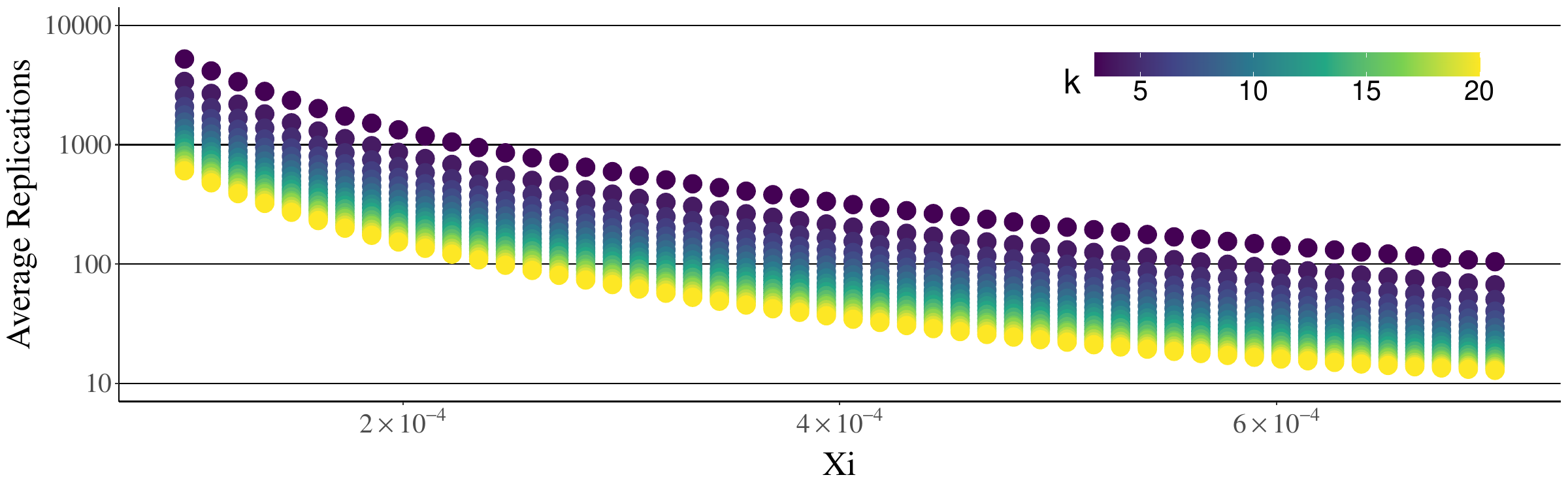}}\tabularnewline
\multicolumn{2}{c}{\textit{Panel C: Discrepancy from Oracle Stopping Time,  $\hat{g}/g^{\star} - 1$}}\tabularnewline
\multicolumn{2}{c}{\includegraphics[scale=0.4]{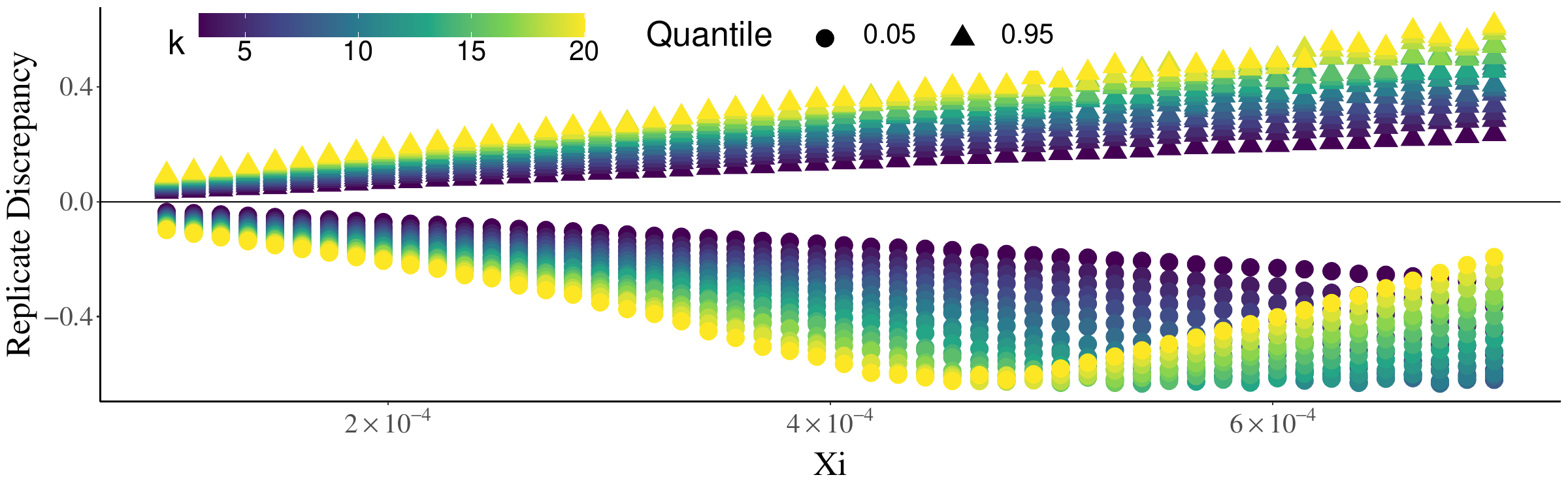}}\tabularnewline
\end{tabular}
\par\end{centering}
\medskip{}
\justifying
\noindent{\footnotesize{}Notes: \cref{fig: performance casey} displays measurements of the performance of \cref{alg: sequential aggregation} on the data from \cite{casey2021experiment}. Panel A displays measurements of the reproducibility error, $P\{\vert a(\mathsf{R}_{\hat{g},k}, D) - a(\mathsf{R}^\prime_{\hat{g}^\prime,k}, D) \vert \geq \xi \mid D\}$, as $\xi$ and $k$ vary. A solid horizontal line is displayed at the nominal error rate $\beta = 0.05$. Panel B displays measurements of the average number of replications $\hat{g}$ s $\xi$ and $k$ vary. The $y$-axis is displayed with a log scale, base 10. Solid horizontal lines are placed at each exponential factor of 10. Panel C displays measurements of the 5th and 95th quantiles of the discrepancy $\hat{g}/g^{\star} - 1$ as $k$ and $\xi$ vary. Further details on the construction of this figure are given in \cref{sec: simulation app}.}{\footnotesize\par}
\end{figure}

Likewise, Panel B displays estimates of the average number of cross-splits $\hat{g}$ used in \cref{alg: sequential aggregation}, at each value of $\xi$ and $k$.  Again, as predicted, the total number of splits used by the procedure stays constant as $k$ varies. To see this, observe that the average value of $\hat{g}$ is roughly 10 times smaller for $k = 20$ than for $k=2$, as predicted by the approximation \eqref{eq: m constant}. Comparing Panels A and B, observe that, once the average value of $\hat{g}$ is larger than approximately 500, the reproducibility error is close to the nominal error rate $\beta$.

If the total number of splits required to ensure reproducibility at a given error tolerance $\xi$ is constant as $k$ varies, then why does the performance of \cref{alg: sequential aggregation} decrease with $k$? As part of the proof of \cref{thm: stable symmetric reproduce}, we show that, for each $\varepsilon > 0$, the bound
\begin{equation}
P\left\{ \bigg\vert \frac{\hat{g}}{g^\star} - 1\bigg\vert \lesssim \frac{1}{n} \frac{1}{k} \left(\frac{1}{v_{1,k}(D)}\right)^{3/2} \frac{\xi}{z_{1-\beta/2}} \sqrt{\frac{\log(\varepsilon^{-1})}{\delta}} \mid D \right\} \geq 1 - \varepsilon
\end{equation}
holds with probability greater than $1-\delta$, as $D$ varies. Again the variance bound \eqref{eq: variance bound display} gives
\begin{equation}
 \frac{1}{n} \frac{1}{k} \left(\frac{1}{v_{1,k}(D)}\right)^{3/2} \frac{\xi}{z_{1-\beta/2}}  \gtrsim \frac{k^{1/2}}{n} \frac{\xi}{z_{1-\beta/2}}~.
\end{equation}
Written differently, as $k$ increases the discrepancy between the realized and oracle number of cross-splits, $\vert\hat{g}/g^{\star} - 1\vert$, increases, reducing the accuracy of the nominal error rate. Roughly speaking, this happens because the quality of the estimator $\hat{v}_{g,k}(D)$ for the conditional variance $v_{g,k}(D)$ depends only on the number of cross-splits $g$.\footnote{\label{fn: normal approx}Similarly, in \cref{sec: general results}, we show that the quality of a normal approximation to $a(\mathsf{R}_{g,k},D)$ depends only on $g$, although this discrepancy is not the leading term in the reproducibility error. The close approximation exhibited in \cref{fig: non-sequential} suggests that is may be reasonable to estimate $v_{1,k}(D)$ by taking the sample variance both across and within cross-splits, i.e., computing the sample variance across all $m$ sample-splits. Although this worth further consideration, asymptotic validity, i.e., \cref{eq: general reproducibility}, would not hold at the same level of generality.} That is, for a given number of sample-splits $\hat{m}=k\cdot \hat{g} $, \cref{alg: sequential aggregation} is most accurate when $\hat{g}$ is large, as the variance estimate $\hat{v}_{\hat{g},k}(D)$ is more precise. 

As before, these predictions play out in practice. Panel C of \cref{fig: performance casey}  displays estimates of the 5th and 95th quantiles of the distribution of the discrepancy $\hat{g}/g^{\star} - 1$ as $k$ and $\xi$ vary. Over most of the range of $\xi$ the discrepancy between $\hat{g}$ and $g^\star$ is increasing in $k$. At small values of $k$ and large values of $\xi$, there is an increased chance that $\hat{g}$ stops immediately after the burn-in period, i.e., $\hat{g}=g_{\mathsf{init}}$.

It is worth pausing to note that these results do not support eschewing cross-splitting altogether, i.e., setting $k$ equal to one and aggregating over independent splits, as we have restricted attention to the case that $n=k\cdot b$. In \cref{sec: general results}, we give analogous results that relax this assumption and show that cross-splitting, i.e., setting $n=k\cdot b$, reduces residual randomness at a faster rate than independent splitting, i.e., setting $k$ equal to one. In other words, all else equal, \cref{alg: sequential aggregation} performs best, in the sense that the nominal reproducibility error is most accurate, when $k$ is equal to 2 and $b$ is equal to $b/2$.

To summarize, the computation needed to achieve a desired bound on residual randomness is highly sensitive to the error tolerance $\xi$, but is not affected by the number of cross-folds $k$. On the other hand, the accuracy of the nominal error rate of \cref{alg: sequential aggregation} decreases as the number of cross-folds $k$ increases. If the error bound $\xi$ is chosen to be suitably small, such that the realized number of cross-splits $\hat{g}$ is greater than roughly 500, then the nominal reproducibility probability tends to be quite accurate, and insensitive to changes in the number of cross-splits. 

\section{Recommendations for Practice\label{sec: conclusion}}

Sample-splitting is a helpful tool for simplifying many widely encountered problems in applied econometrics. This simplification comes at the cost of the introduction of residual randomness. We have shown, in several applications, that this residual randomness is large enough to substantively affect results. To address this, we have proposed a simple procedure, summarized in \cref{alg: sequential aggregation}, for removing the auxiliary randomness from sample-split statistics. The procedure takes as input a bound and an error rate. We have shown that, if the procedure were run twice, the chance that the results differ by more than the bound is well-approximated by the error rate. 

We conclude, in this section, by detailing several recommendations for how to best implement \cref{alg: sequential aggregation} in practice. The most important choice to make is the specification of the sample-split statistic of interest. For example, suppose that we are interested in estimating an average treatment effect, or other causal contrast, using an estimator based on sample-splitting. Denote this quantity by $\mathsf{est}(\mathsf{r}, D)$ and let $\mathsf{se}(\mathsf{r}, D)$ denote an associated, potentially sample-split, standard error estimate. There are several reasonable choices that one might make. In this case, we recommend applying \cref{alg: sequential aggregation} to sequentially aggregate the $p$-value 
\begin{equation}\label{eq: con p val}
a(\mathsf{r}, D) = 1 - \Phi\left(\frac{\mathsf{est}(\mathsf{r}, D)}{\mathsf{se}(\mathsf{r}, D)}\right)
\end{equation}
associated with a test that the contrast of interest is greater than zero, where $\Phi(\cdot)$ denotes the standard normal c.d.f. This approach has the benefit of benchmarking residual randomness with an estimate of the sampling error.

In some settings, it may be of interest to report the scope of residual randomness for alternative quantities associated with statistics that have aggregated with \cref{alg: sequential aggregation}. For example, suppose that we have applied \cref{alg: sequential aggregation} to stabilize the $p$-value \eqref{eq: con p val}, and obtain $a(\mathsf{R}_{\hat{g}, k}, D)$.  We may also wish to report the associated estimate $\mathsf{est}(\mathsf{R}_{\hat{g}, k}, D)$. In this case, we recommend reporting and interpreting the standard error
\begin{equation}\label{eq: con est se}
\sqrt{ \frac{1}{\hat{g}} \frac{1}{\hat{g}-1} \sum^{\hat{g}}_{i=1} (\mathsf{est}(\mathsf{r}_i, D) - \mathsf{est}(\mathsf{R}_{\hat{g}, k}, D))^2 }
\end{equation}
in the usual way. That is, the standard error \eqref{eq: con est se} estimates the residual randomness of the estimate $\mathsf{est}(\mathsf{R}_{\hat{g}, k}, D)$, conditional on the data. If this is too large, the error tolerance $\xi$ should be decreased. 

The optimal choice of statistic in settings where cross-validation is used for model selection and risk estimation is less immediate. In this case, we recommend applying \cref{alg: sequential aggregation} to each element of the statistic \eqref{eq: mse difference}, i.e., the relative values of the risk estimates.

The second most important choice to make is the choice of the error tolerance $\xi$. In some cases, an appropriate choice is clear. For example, it may be desirable to ensure that a $p$-value is reproducible at the level relevant for the determination of statistical significance at level 0.10 or 0.025, e.g., setting $\xi = 0.01$ or $\xi = 0.001$. In other cases, like in applications to cross-validated model selection, this choice is less clear. We propose a procedure for making this choice in that setting in \cref{app: lasso xi determine}.

The error tolerance $\xi$ should be set to a value that is sufficiently small so that the nominal reproducibility error is accurate. On the other hand, $\xi$ should not be set so small that the computation associated with implementing \cref{alg: sequential aggregation} becomes infeasible. To ensure that the chosen value of $\xi$ satisfies these constraints, we recommend computing the statistic of interest for each of a small, initial sample of cross-splits (e.g., $g$ equal to 20 or 30). Let $\hat{v}_{1,k}(D)$ denote an estimate of the conditional variance $v_{1,k}(D)$ computed using this sample. An estimate of the total number of cross-splits needed to implement \cref{alg: sequential aggregation} at a specified level of $\xi$ can then be obtained by 
\begin{equation}\label{eq: g star re con}
 g(\xi) = 2 \hat{v}_{1,k}(D)\left(\frac{z_{1-\beta/2}}{\xi}\right)^2 ~.
\end{equation}
Motivated by the numerical results reported in \cref{sec: Guarantees}, we recommend choosing a value of $\xi$ such that this estimate is greater than 500. 

In some cases, the estimate \eqref{eq: g star re con} may be too large at practically relevant values of $\xi$ to be computationally feasible. Unfortunately, as we have shown in \cref{sec: Guarantees}, changing the number of cross-folds $k$ will not address this issue, and it may be best to consider alternative choices of nuisance parameter estimators. In this situation, one might consider using ``leave-one-out'' or ``jackknife'' sample-splitting, i.e., setting $k = n$, which exhibits no residual randomness. This is not an omnibus fix, however. For example, proofs of the asymptotic normality of DML estimates of average treatment effects require that $k$ is small relative to $n$ \citep{chernozhukov2018double}. Similarly, leave-one-out cross-validation is not necessarily optimal for model selection (see e.g., \cite{shao1993linear} for early discussion of this point). \cite{chetverikov2024tuning} gives a review of various, alternative, approaches to tuning parameter selection.

If the number of cross-folds $k$ is set too high, the realized number of cross-splits $\hat{g}$ may be too small and \cref{alg: sequential aggregation} may perform poorly. But, so long as the user ensures that the error tolerance $\xi$ is sufficiently small such that the number of cross-splits $\hat{g}$ tends to be large (e.g., greater than 500, say), small changes in the number of folds $k$ should not have adverse effects. Thus, this choice should be made on substantive grounds, that will depend on the application. In practice, the conventional choices of $k$ equal to 2, 5, or 10 should work well in most settings.

We have had relatively little to say about the choice of the nominal reproducibility error $\beta$. We have found that setting $\beta = 0.05$ is suitable for most applications. Finally, although it has not played a central role in our theoretical analysis, it is important to choose the burn-in period $g_{\mathsf{init}}$ to be suitably large to ensure that the variance estimate $\hat{v}_{g_{\mathsf{init},k}}(D)$ is reasonably accurate, otherwise \cref{alg: sequential aggregation} will tend to stop too early. We recommend setting $g_{\mathsf{init}}$ to be equal to at least 10.

\clearpage
\end{spacing}
\begin{spacing}{1.2}
\bibliographystyle{apalike}
\bibliography{references.bib}

\begin{thebibliography}{}

\bibitem[Abaluck et~al., 2016]{abaluck2016determinants}
Abaluck, J., Agha, L., Kabrhel, C., Raja, A., and Venkatesh, A. (2016).
\newblock The determinants of productivity in medical testing: Intensity and
  allocation of care.
\newblock {\em American Economic Review}, 106(12):3730--3764.

\bibitem[Abou-Moustafa and Szepesv{\'a}ri, 2019]{abou2019exponential}
Abou-Moustafa, K. and Szepesv{\'a}ri, C. (2019).
\newblock An exponential efron-stein inequality for $l_q$ stable learning
  rules.
\newblock In {\em Algorithmic Learning Theory}, pages 31--63. PMLR.

\bibitem[Adermon et~al., 2021]{adermon2021dynastic}
Adermon, A., Lindahl, M., and Palme, M. (2021).
\newblock Dynastic human capital, inequality, and intergenerational mobility.
\newblock {\em American Economic Review}, 111(5):1523--1548.

\bibitem[Agarwal et~al., 2019]{agarwal2019market}
Agarwal, N., Ashlagi, I., Azevedo, E., Featherstone, C.~R., and Karaduman,
  {\"O}. (2019).
\newblock Market failure in kidney exchange.
\newblock {\em American Economic Review}, 109(11):4026--4070.

\bibitem[Ahrens et~al., 2024]{ahrens2024ddml}
Ahrens, A., Hansen, C.~B., Schaffer, M.~E., and Wiemann, T. (2024).
\newblock ddml: Double/debiased machine learning in stata.
\newblock {\em The Stata Journal}, 24(1):3--45.

\bibitem[Anscombe, 1952]{anscombe1952large}
Anscombe, F.~J. (1952).
\newblock Large-sample theory of sequential estimation.
\newblock In {\em Mathematical Proceedings of the Cambridge Philosophical
  Society}, volume~48, pages 600--607. Cambridge University Press.

\bibitem[Arlot and Celisse, 2010]{arlot2010survey}
Arlot, S. and Celisse, A. (2010).
\newblock {A survey of cross-validation procedures for model selection}.
\newblock {\em Statistics Surveys}, 4:40 -- 79.

\bibitem[Arnold et~al., 2020]{arnold2020measuring}
Arnold, D., Dobbie, W., and Hull, P. (2020).
\newblock Measuring racial discrimination in bail decisions.
\newblock Technical report, National Bureau of Economic Research.

\bibitem[Athey et~al., 2019]{athey2019generalized}
Athey, S., Tibshirani, J., and Wager, S. (2019).
\newblock Generalized random forests.
\newblock {\em The Annals of Statistics}, 47(2):1148--1178.

\bibitem[Athey and Wager, 2021]{athey2021policy}
Athey, S. and Wager, S. (2021).
\newblock Policy learning with observational data.
\newblock {\em Econometrica}, 89(1):133--161.

\bibitem[Austern and Zhou, 2020]{austern2020asymptotics}
Austern, M. and Zhou, W. (2020).
\newblock Asymptotics of cross-validation.
\newblock {\em arXiv preprint arXiv:2001.11111}.

\bibitem[Bach et~al., 2024]{bach2024double}
Bach, P., Kurz, M.~S., Chernozhukov, V., Spindler, M., and Klaassen, S. (2024).
\newblock {DoubleML}: An object-oriented implementation of double machine
  learning in {R}.
\newblock {\em Journal of Statistical Software}, 108(3):1--56.

\bibitem[Barber and Cand{\`e}s, 2015]{barber2015controlling}
Barber, R.~F. and Cand{\`e}s, E.~J. (2015).
\newblock Controlling the false discovery rate via knockoffs.
\newblock {\em The Annals of statistics}, pages 2055--2085.

\bibitem[Bates et~al., 2023]{bates2023cross}
Bates, S., Hastie, T., and Tibshirani, R. (2023).
\newblock Cross-validation: what does it estimate and how well does it do it?
\newblock {\em Journal of the American Statistical Association}, pages 1--12.

\bibitem[Bayle et~al., 2020]{bayle2020cross}
Bayle, P., Bayle, A., Janson, L., and Mackey, L. (2020).
\newblock Cross-validation confidence intervals for test error.
\newblock {\em Advances in Neural Information Processing Systems},
  33:16339--16350.

\bibitem[Beaman et~al., 2023]{beaman2023selection}
Beaman, L., Karlan, D., Thuysbaert, B., and Udry, C. (2023).
\newblock Selection into credit markets: Evidence from agriculture in mali.
\newblock {\em Econometrica}, 91(5):1595--1627.

\bibitem[Belloni et~al., 2012]{belloni2012sparse}
Belloni, A., Chen, D., Chernozhukov, V., and Hansen, C. (2012).
\newblock Sparse models and methods for optimal instruments with an application
  to eminent domain.
\newblock {\em Econometrica}, 80(6):2369--2429.

\bibitem[Belloni et~al., 2014]{belloni2014inference}
Belloni, A., Chernozhukov, V., and Hansen, C. (2014).
\newblock Inference on treatment effects after selection among high-dimensional
  controls.
\newblock {\em Review of Economic Studies}, 81(2):608--650.

\bibitem[Beraja et~al., 2023]{beraja2023ai}
Beraja, M., Kao, A., Yang, D.~Y., and Yuchtman, N. (2023).
\newblock Ai-tocracy.
\newblock {\em The Quarterly Journal of Economics}, 138(3):1349--1402.

\bibitem[Beran and Millar, 1987]{beran1987stochastic}
Beran, R. and Millar, P.~W. (1987).
\newblock Stochastic estimation and testing.
\newblock {\em The Annals of Statistics}, 15(3):1131--1154.

\bibitem[Bickel, 1982]{bickel1982adaptive}
Bickel, P.~J. (1982).
\newblock On adaptive estimation.
\newblock {\em The Annals of Statistics}, pages 647--671.

\bibitem[Blattman et~al., 2024]{blattman2024gang}
Blattman, C., Duncan, G., Lessing, B., and Tob{\'o}n, S. (2024).
\newblock Gang rule: Understanding and countering criminal governance.
\newblock {\em Review of Economic Studies}, page rdae079.

\bibitem[Boucheron et~al., 2003]{boucheron2003concentration}
Boucheron, S., Lugosi, G., and Massart, P. (2003).
\newblock Concentration inequalities using the entropy method.
\newblock {\em The Annals of Probability}, 31(3):1583--1614.

\bibitem[Bousquet and Elisseeff, 2002]{bousquet2002stability}
Bousquet, O. and Elisseeff, A. (2002).
\newblock Stability and generalization.
\newblock {\em The Journal of Machine Learning Research}, 2:499--526.

\bibitem[Burkholder, 1973]{burkholder1973distribution}
Burkholder, D.~L. (1973).
\newblock Distribution function inequalities for martingales.
\newblock {\em The Annals of Probability}, 1(1):19--42.

\bibitem[Cantoni and Pons, 2022]{cantoni2022does}
Cantoni, E. and Pons, V. (2022).
\newblock Does context outweigh individual characteristics in driving voting
  behavior? evidence from relocations within the united states.
\newblock {\em American Economic Review}, 112(4):1226--1272.

\bibitem[Card et~al., 2020]{card2020referees}
Card, D., DellaVigna, S., Funk, P., and Iriberri, N. (2020).
\newblock Are referees and editors in economics gender neutral?
\newblock {\em The Quarterly Journal of Economics}, 135(1):269--327.

\bibitem[Casey et~al., 2021]{casey2021experiment}
Casey, K., Kamara, A.~B., and Meriggi, N.~F. (2021).
\newblock An experiment in candidate selection.
\newblock {\em American Economic Review}, 111(5):1575--1612.

\bibitem[Chakravorty et~al., 2024]{chakravorty2024can}
Chakravorty, B., Arulampalam, W., Bhatiya, A.~Y., Imbert, C., and Rathelot, R.
  (2024).
\newblock Can information about jobs improve the effectiveness of vocational
  training? experimental evidence from india.
\newblock {\em Journal of Development Economics}, 169:103273.

\bibitem[Chatterjee, 2005]{chatterjee2005concentration}
Chatterjee, S. (2005).
\newblock Concentration inequalities with exchangeable pairs ({Ph. D.} thesis).
\newblock {\em arXiv preprint math/0507526}.

\bibitem[Chatterjee, 2007]{chatterjee2007stein}
Chatterjee, S. (2007).
\newblock Stein's method for concentration inequalities.
\newblock {\em Probability Theory and Related Fields}, 1(138):305--321.

\bibitem[Chen et~al., 2011]{chen2011normal}
Chen, L.~H., Goldstein, L., and Shao, Q.-M. (2011).
\newblock {\em Normal approximation by {S}tein's method}, volume~2.
\newblock Springer.

\bibitem[Chen et~al., 2022]{chen2022debiased}
Chen, Q., Syrgkanis, V., and Austern, M. (2022).
\newblock Debiased machine learning without sample-splitting for stable
  estimators.
\newblock {\em Advances in Neural Information Processing Systems},
  35:3096--3109.

\bibitem[Chen et~al., 2021]{chen2021notching}
Chen, Z., Liu, Z., Su{\'a}rez~Serrato, J.~C., and Xu, D.~Y. (2021).
\newblock Notching r\&d investment with corporate income tax cuts in china.
\newblock {\em American Economic Review}, 111(7):2065--2100.

\bibitem[Chernozhukov et~al., 2018]{chernozhukov2018double}
Chernozhukov, V., Chetverikov, D., Demirer, M., Duflo, E., Hansen, C., Newey,
  W., and Robins, J. (2018).
\newblock Double/debiased machine learning for treatment and structural
  parameters: Double/debiased machine learning.
\newblock {\em The Econometrics Journal}, 21(1).

\bibitem[Chernozhukov et~al., 2023]{chernozhukov2018generic}
Chernozhukov, V., Demirer, M., Duflo, E., and Fernández-Val, I. (2023).
\newblock Generic machine learning inference on heterogenous treatment effects
  in randomized experiments, with an application to immunization in india.
\newblock {\em arXiv preprint arxiv:1712.04802}.

\bibitem[Chetverikov, 2024]{chetverikov2024tuning}
Chetverikov, D. (2024).
\newblock Tuning parameter selection in econometrics.
\newblock {\em arXiv preprint arXiv:2405.03021}.

\bibitem[Chetverikov et~al., 2021]{chetverikov2021cross}
Chetverikov, D., Liao, Z., and Chernozhukov, V. (2021).
\newblock On cross-validated lasso in high dimensions.
\newblock {\em The Annals of Statistics}, 49(3):1300--1317.

\bibitem[Chow and Robbins, 1965]{chow1965asymptotic}
Chow, Y.~S. and Robbins, H. (1965).
\newblock On the asymptotic theory of fixed-width sequential confidence
  intervals for the mean.
\newblock {\em The Annals of Mathematical Statistics}, 36(2):457--462.

\bibitem[Cicala, 2022]{cicala2022imperfect}
Cicala, S. (2022).
\newblock Imperfect markets versus imperfect regulation in us electricity
  generation.
\newblock {\em American Economic Review}, 112(2):409--441.

\bibitem[Cooper et~al., 2020]{cooper2020surprise}
Cooper, Z., Scott~Morton, F., and Shekita, N. (2020).
\newblock Surprise! out-of-network billing for emergency care in the united
  states.
\newblock {\em Journal of Political Economy}, 128(9):3626--3677.

\bibitem[Cornec, 2010]{cornec2010concentration}
Cornec, M. (2010).
\newblock Concentration inequalities of the cross-validation estimate for
  stable predictors.
\newblock {\em arXiv preprint arXiv:1011.5133}.

\bibitem[Covert and Sweeney, 2023]{covert2023relinquishing}
Covert, T.~R. and Sweeney, R.~L. (2023).
\newblock Relinquishing riches: Auctions versus informal negotiations in texas
  oil and gas leasing.
\newblock {\em American Economic Review}, 113(3):628--663.

\bibitem[Delfino, 2024]{delfino2024breaking}
Delfino, A. (2024).
\newblock Breaking gender barriers: Experimental evidence on men in pink-collar
  jobs.
\newblock {\em American Economic Review}, 114(6):1816--1853.

\bibitem[Dembo, 2021]{dembo2021probability}
Dembo, A. (2021).
\newblock Probability theory: Stat310/math230.

\bibitem[Derenoncourt, 2022]{derenoncourt2022can}
Derenoncourt, E. (2022).
\newblock Can you move to opportunity? evidence from the great migration.
\newblock {\em American Economic Review}, 112(2):369--408.

\bibitem[Deryugina et~al., 2019]{deryugina2019mortality}
Deryugina, T., Heutel, G., Miller, N.~H., Molitor, D., and Reif, J. (2019).
\newblock The mortality and medical costs of air pollution: Evidence from
  changes in wind direction.
\newblock {\em American Economic Review}, 109(12):4178--4219.

\bibitem[DiCiccio et~al., 2020]{diciccio2020exact}
DiCiccio, C.~J., DiCiccio, T.~J., and Romano, J.~P. (2020).
\newblock Exact tests via multiple data splitting.
\newblock {\em Statistics \& Probability Letters}, 166:108865.

\bibitem[Dietterich, 1998]{dietterich1998approximate}
Dietterich, T.~G. (1998).
\newblock Approximate statistical tests for comparing supervised classification
  learning algorithms.
\newblock {\em Neural Computation}, 10(7):1895--1923.

\bibitem[Dub{\'e} and Misra, 2023]{dube2023personalized}
Dub{\'e}, J.-P. and Misra, S. (2023).
\newblock Personalized pricing and consumer welfare.
\newblock {\em Journal of Political Economy}, 131(1):131--189.

\bibitem[Dunn et~al., 2023]{dunn2023gaussian}
Dunn, R., Ramdas, A., Balakrishnan, S., and Wasserman, L. (2023).
\newblock Gaussian universal likelihood ratio testing.
\newblock {\em Biometrika}, 110(2):319--337.

\bibitem[Durrett, 2019]{durrett2019probability}
Durrett, R. (2019).
\newblock {\em Probability: theory and examples}, volume~49.
\newblock Cambridge university press.

\bibitem[Egger et~al., 2022]{egger2022general}
Egger, D., Haushofer, J., Miguel, E., Niehaus, P., and Walker, M. (2022).
\newblock General equilibrium effects of cash transfers: experimental evidence
  from kenya.
\newblock {\em Econometrica}, 90(6):2603--2643.

\bibitem[Elisseeff et~al., 2005]{elisseeff2005stability}
Elisseeff, A., Evgeniou, T., Pontil, M., and Kaelbing, L.~P. (2005).
\newblock Stability of randomized learning algorithms.
\newblock {\em Journal of Machine Learning Research}, 6(1).

\bibitem[Ellison and Pathak, 2021]{ellison2021efficiency}
Ellison, G. and Pathak, P.~A. (2021).
\newblock The efficiency of race-neutral alternatives to race-based affirmative
  action: Evidence from chicago’s exam schools.
\newblock {\em American Economic Review}, 111(3):943--975.

\bibitem[Farronato et~al., 2024]{farronato2024consumer}
Farronato, C., Fradkin, A., Larsen, B.~J., and Brynjolfsson, E. (2024).
\newblock Consumer protection in an online world: An analysis of occupational
  licensing.
\newblock {\em American Economic Journal: Applied Economics}, 16(3):549--579.

\bibitem[Friedman et~al., 2021]{friedman2021package}
Friedman, J., Hastie, T., Tibshirani, R., Narasimhan, B., Tay, K., Simon, N.,
  and Qian, J. (2021).
\newblock Package ‘glmnet’.
\newblock {\em CRAN R Repositary}, 595.

\bibitem[Fudenberg and Liang, 2019]{fudenberg2019predicting}
Fudenberg, D. and Liang, A. (2019).
\newblock Predicting and understanding initial play.
\newblock {\em American Economic Review}, 109(12):4112--4141.

\bibitem[Gentzkow et~al., 2019]{gentzkow2019measuring}
Gentzkow, M., Shapiro, J.~M., and Taddy, M. (2019).
\newblock Measuring group differences in high-dimensional choices: method and
  application to congressional speech.
\newblock {\em Econometrica}, 87(4):1307--1340.

\bibitem[Guo and Romano, 2017]{guo2017analysis}
Guo, W. and Romano, J.~P. (2017).
\newblock Analysis of error control in large scale two-stage multiple
  hypothesis testing.
\newblock {\em arXiv preprint arXiv:1703.06336}.

\bibitem[Gut, 2009]{gut2009stopped}
Gut, A. (2009).
\newblock {\em Stopped random walks}.
\newblock Springer.

\bibitem[Hansen et~al., 2018]{hansen2018transparency}
Hansen, S., McMahon, M., and Prat, A. (2018).
\newblock Transparency and deliberation within the fomc: A computational
  linguistics approach.
\newblock {\em The Quarterly Journal of Economics}, 133(2):801--870.

\bibitem[Hardt et~al., 2016]{hardt2016train}
Hardt, M., Recht, B., and Singer, Y. (2016).
\newblock Train faster, generalize better: Stability of stochastic gradient
  descent.
\newblock In {\em International Conference on Machine Learning}, pages
  1225--1234. PMLR.

\bibitem[Hastie et~al., 2015]{hastie2015statistical}
Hastie, T., Tibshirani, R., and Wainwright, M. (2015).
\newblock {\em Statistical learning with sparsity: the lasso and
  generalizations}.
\newblock CRC press.

\bibitem[Haushofer et~al., 2022]{haushofer2022targeting}
Haushofer, J., Niehaus, P., Paramo, C., Miguel, E., and Walker, M.~W. (2022).
\newblock Targeting impact versus deprivation.
\newblock Technical report, National Bureau of Economic Research.

\bibitem[Kale et~al., 2011]{kale2011cross}
Kale, S., Kumar, R., and Vassilvitskii, S. (2011).
\newblock Cross-validation and mean-square stability.
\newblock In {\em ICS}, pages 487--495.

\bibitem[Kelly et~al., 2023]{kelly2023mechanics}
Kelly, M., Mokyr, J., and {\'O}~Gr{\'a}da, C. (2023).
\newblock The mechanics of the industrial revolution.
\newblock {\em Journal of Political Economy}, 131(1):59--94.

\bibitem[Knight and Fu, 2000]{fu2000asymptotics}
Knight, K. and Fu, W. (2000).
\newblock Asymptotics for lasso-type estimators.
\newblock {\em The Annals of Statistics}, 28(5):1356--1378.

\bibitem[Koijen et~al., 2024]{koijen2024investors}
Koijen, R.~S., Richmond, R.~J., and Yogo, M. (2024).
\newblock Which investors matter for equity valuations and expected returns?
\newblock {\em Review of Economic Studies}, 91(4):2387--2424.

\bibitem[Kumar et~al., 2013]{kumar2013near}
Kumar, R., Lokshtanov, D., Vassilvitskii, S., and Vattani, A. (2013).
\newblock Near-optimal bounds for cross-validation via loss stability.
\newblock In {\em International Conference on Machine Learning}, pages 27--35.
  PMLR.

\bibitem[Landers and Rogge, 1976]{landers1976exact}
Landers, D. and Rogge, L. (1976).
\newblock The exact approximation order in the central-limit-theorem for random
  summation.
\newblock {\em Zeitschrift f{\"u}r Wahrscheinlichkeitstheorie und Verwandte
  Gebiete}, 36(4):269--283.

\bibitem[Landers and Rogge, 1988]{landers1988sharp}
Landers, D. and Rogge, L. (1988).
\newblock Sharp orders of convergence in the random central limit theorem.
\newblock {\em Journal of Approximation Theory}, 53(1):86--111.

\bibitem[Lei, 2020]{lei2020cross}
Lei, J. (2020).
\newblock Cross-validation with confidence.
\newblock {\em Journal of the American Statistical Association},
  115(532):1978--1997.

\bibitem[Lei et~al., 2018]{lei2018distribution}
Lei, J., G’Sell, M., Rinaldo, A., Tibshirani, R.~J., and Wasserman, L.
  (2018).
\newblock Distribution-free predictive inference for regression.
\newblock {\em Journal of the American Statistical Association},
  113(523):1094--1111.

\bibitem[Levin and Peres, 2017]{levin2017markov}
Levin, D.~A. and Peres, Y. (2017).
\newblock {\em Markov chains and mixing times}, volume 107.
\newblock American Mathematical Soc.

\bibitem[Mastrobuoni, 2020]{mastrobuoni2020crime}
Mastrobuoni, G. (2020).
\newblock Crime is terribly revealing: Information technology and police
  productivity.
\newblock {\em The Review of Economic Studies}, 87(6):2727--2753.

\bibitem[Meinshausen and B{\"u}hlmann, 2010]{meinshausen2010stability}
Meinshausen, N. and B{\"u}hlmann, P. (2010).
\newblock Stability selection.
\newblock {\em Journal of the Royal Statistical Society Series B: Statistical
  Methodology}, 72(4):417--473.

\bibitem[Meinshausen et~al., 2009]{meinshausen2009p}
Meinshausen, N., Meier, L., and B{\"u}hlmann, P. (2009).
\newblock P-values for high-dimensional regression.
\newblock {\em Journal of the American Statistical Association},
  104(488):1671--1681.

\bibitem[Motwani and Raghavan, 1995]{motwani1995randomized}
Motwani, R. and Raghavan, P. (1995).
\newblock {\em Randomized algorithms}.
\newblock Cambridge university press.

\bibitem[Muendler and Becker, 2010]{muendler2010margins}
Muendler, M.-A. and Becker, S.~O. (2010).
\newblock Margins of multinational labor substitution.
\newblock {\em American Economic Review}, 100(5):1999--2030.

\bibitem[Myers and Lanahan, 2022]{myers2022estimating}
Myers, K.~R. and Lanahan, L. (2022).
\newblock Estimating spillovers from publicly funded r\&d: Evidence from the us
  department of energy.
\newblock {\em American Economic Review}, 112(7):2393--2423.

\bibitem[Nadeau and Bengio, 1999]{nadeau1999inference}
Nadeau, C. and Bengio, Y. (1999).
\newblock Inference for the generalization error.
\newblock {\em Advances in Neural Information Processing Systems}, 12.

\bibitem[Okunogbe and Pouliquen, 2022]{okunogbe2022technology}
Okunogbe, O. and Pouliquen, V. (2022).
\newblock Technology, taxation, and corruption: evidence from the introduction
  of electronic tax filing.
\newblock {\em American Economic Journal: Economic Policy}, 14(1):341--372.

\bibitem[Paulin et~al., 2013]{paulin2013deriving}
Paulin, D., Mackey, L., and Tropp, J.~A. (2013).
\newblock Deriving matrix concentration inequalities from kernel couplings.
\newblock {\em arXiv preprint arXiv:1305.0612}.

\bibitem[Paulin et~al., 2016]{paulin2016efron}
Paulin, D., Mackey, L., and Tropp, J.~A. (2016).
\newblock {Efron-{S}tein inequalities for random matrices}.
\newblock {\em The Annals of Probability}, 44(5):3431 -- 3473.

\bibitem[Ramdas et~al., 2023]{ramdas2023game}
Ramdas, A., Gr{\"u}nwald, P., Vovk, V., and Shafer, G. (2023).
\newblock Game-theoretic statistics and safe anytime-valid inference.
\newblock {\em Statistical Science}, 38(4):576--601.

\bibitem[Ramdas and Manole, 2023]{ramdas2023randomized}
Ramdas, A. and Manole, T. (2023).
\newblock Randomized and exchangeable improvements of markov's, chebyshev's and
  chernoff's inequalities.
\newblock {\em arXiv preprint arXiv:2304.02611}.

\bibitem[R{\'e}nyi, 1957]{renyi1957asymptotic}
R{\'e}nyi, A. (1957).
\newblock On the asymptotic distribution of the sum of a random number of
  independent random variables.
\newblock {\em Acta Math}, 8:193--199.

\bibitem[Rinaldo et~al., 2019]{rinaldo2019bootstrapping}
Rinaldo, A., Wasserman, L., and G’Sell, M. (2019).
\newblock Bootstrapping and sample splitting for high-dimensional,
  assumption-lean inference.
\newblock {\em The Annals of Statistics}, 47(6):3438--3469.

\bibitem[Ritzwoller and Syrgkanis, 2024]{ritzwoller2024uniform}
Ritzwoller, D.~M. and Syrgkanis, V. (2024).
\newblock Uniform inference for subsampled moment regression.
\newblock {\em arXiv preprint arXiv:2405.07860}.

\bibitem[Ross, 2011]{ross2011fundamentals}
Ross, N. (2011).
\newblock Fundamentals of {S}tein's method.
\newblock {\em Probability Surveys}, 8:210--293.

\bibitem[R{\"u}ger, 1978]{ruger1978maximale}
R{\"u}ger, B. (1978).
\newblock Das maximale signifikanzniveau des tests:``lehne $h_o$ ab, wenn k
  unter n gegebenen tests zur ablehnung f{\"u}hren''.
\newblock {\em Metrika}, 25:171--178.

\bibitem[Sadka et~al., 2024]{sadka2024information}
Sadka, J., Seira, E., and Woodruff, C. (2024).
\newblock Information and bargaining through agents: Experimental evidence from
  mexico’s labour courts.
\newblock {\em Review of Economic Studies}, page rdae003.

\bibitem[Schick, 1986]{schick1986asymptotically}
Schick, A. (1986).
\newblock On asymptotically efficient estimation in semiparametric models.
\newblock {\em The Annals of Statistics}, pages 1139--1151.

\bibitem[Shao, 1993]{shao1993linear}
Shao, J. (1993).
\newblock Linear model selection by cross-validation.
\newblock {\em Journal of the American statistical Association},
  88(422):486--494.

\bibitem[Shao and Zhang, 2019]{shao2019berry}
Shao, Q.-M. and Zhang, Z.-S. (2019).
\newblock Berry--{E}sseen bounds of normal and nonnormal approximation for
  unbounded exchangeable pair.
\newblock {\em The Annals of Probability}, 47(1):61--108.

\bibitem[Shevtsova, 2011]{shevtsova2011absolute}
Shevtsova, I. (2011).
\newblock On the absolute constants in the berry-esseen type inequalities for
  identically distributed summands.
\newblock {\em arXiv preprint arXiv:1111.6554}.

\bibitem[Steiger, 1970]{steiger1970bernstein}
Steiger, W. (1970).
\newblock Bernstein's inequality for martingales.
\newblock {\em Zeitschrift f{\"u}r Wahrscheinlichkeitstheorie und Verwandte
  Gebiete}, 16(2):104--106.

\bibitem[Stein, 1986]{stein1986approximate}
Stein, C. (1986).
\newblock Approximate computation of expectations.
\newblock {\em IMS Lecture Notes---Monograph Series}, 7.

\bibitem[Stone, 1974]{stone1974cross}
Stone, M. (1974).
\newblock Cross-validatory choice and assessment of statistical predictions.
\newblock {\em Journal of the royal statistical society: Series B
  (Methodological)}, 36(2):111--133.

\bibitem[Tibshirani, 1996]{tibshirani1996regression}
Tibshirani, R. (1996).
\newblock Regression shrinkage and selection via the lasso.
\newblock {\em Journal of the Royal Statistical Society Series B: Statistical
  Methodology}, 58(1):267--288.

\bibitem[Tse and Davison, 2022]{tse2022note}
Tse, T. and Davison, A.~C. (2022).
\newblock A note on universal inference.
\newblock {\em Stat}, 11(1):e501.

\bibitem[Van Der~Vaart and Wellner, 1996]{van1996weak}
Van Der~Vaart, A.~W. and Wellner, J.~A. (1996).
\newblock {\em Weak convergence}.
\newblock Springer.

\bibitem[Vovk and Wang, 2020]{vovk2020combining}
Vovk, V. and Wang, R. (2020).
\newblock Combining p-values via averaging.
\newblock {\em Biometrika}, 107(4):791--808.

\bibitem[Wainwright, 2019]{wainwright2019high}
Wainwright, M.~J. (2019).
\newblock {\em High-dimensional statistics: A non-asymptotic viewpoint},
  volume~48.
\newblock Cambridge university press.

\bibitem[Wasserman et~al., 2020]{wasserman2020universal}
Wasserman, L., Ramdas, A., and Balakrishnan, S. (2020).
\newblock Universal inference.
\newblock {\em Proceedings of the National Academy of Sciences},
  117(29):16880--16890.

\bibitem[Wright and Ziegler, 2017]{wright2017ranger}
Wright, M.~N. and Ziegler, A. (2017).
\newblock {ranger}: A fast implementation of random forests for high
  dimensional data in {C++} and {R}.
\newblock {\em Journal of Statistical Software}, 77(1):1--17.

\bibitem[Yadlowsky et~al., 2024]{yadlowsky2024evaluating}
Yadlowsky, S., Fleming, S., Shah, N., Brunskill, E., and Wager, S. (2024).
\newblock Evaluating treatment prioritization rules via rank-weighted average
  treatment effects.
\newblock {\em Journal of the American Statistical Association}, pages 1--14.

\bibitem[Zhang, 2022]{zhang2022berry}
Zhang, Z.-S. (2022).
\newblock Berry--esseen bounds for generalized u-statistics.
\newblock {\em Electronic Journal of Probability}, 27:1--36.

\end{thebibliography}
\end{spacing}
\newpage

\begin{appendix}
\renewcommand\thefigure{\thesection.\arabic{figure}}    

\begin{center}
{\large\it Supplemental Appendix to:}
\vspace{0.2em}
\begin{spacing}{1}
\Large{\textbf{Reproducible Aggregation of Sample-Split Statistics\daggerfootnote{\textit{Date}: \today}}}
\end{spacing}
\end{center}
\begin{center}
\begin{tabular}[t]{c@{\extracolsep{4em}}c} 
\large{David M. Ritzwoller} & \large{Joseph P. Romano}\vspace{-0.7em}\\ \vspace{-1em}
\small{Stanford University} & \small{Stanford University} \\ \vspace{-0.7em}
\end{tabular}%
\end{center}
\begin{spacing}{1.13}
\DoToC
\end{spacing}
\thispagestyle{empty}
\setcounter{page}{0}
\setcounter{figure}{0}   
\newpage

\begin{spacing}{1.3}

\section{Data and Simulations\label{app: simulation appendix}}

\subsection{\cite{casey2021experiment}\label{app: Casey app}} 

We obtain the data associated with \cite{casey2021experiment} from the replication package posted on OpenICPSR.\footnote{The replication package associated with \cite{casey2021experiment} is posted at the URL \url{https://www.openicpsr.org/openicpsr/project/124501/version/V1/view}.} The units of observation are the aspirant political candidates in Sierra Leone's 2018 parliamentary elections. The data that we consider consist of an outcome and 48 covariates. The outcome is the vote share for each aspirant candidate in a poll of party officials. The covariates collect various measurements relating to the candidate's personal, political, and financial characteristics. All data cleaning steps match \cite{casey2021experiment}.

We consider the exercise summarized in Appendix Table A.4 of \cite{casey2021experiment}. These authors repeatedly compute 10-fold cross-validated estimates of the risk associated with a regularized regression of the outcome on the collection of covariates, at each value of a grid of penalization parameters. They include both Ridge and Lasso type penalties. For each replication, they compute the covariates that are selected in the regression associated with the penalty parameter having the minimum risk estimate. They display the covariates that are selected in more than 200 of the 400 replications. These covariates are included as controls in various downstream analyses. In our replication of this exercise, to ease exposition, we only include a Lasso type penalty. 

\subsection{\cite{chakravorty2024can}\label{app: Chakravorty app}} We obtain the data and replication package associated with \cite{chakravorty2024can} directly from the authors.\footnote{We thank Bhaskar Chakravorty for sharing this material.} The units of observation are trainees in a job training program implemented in the Indian states of Bihar and Jharkhand. There are 2,488 total trainees, who were divided into batches of approximately 30 trainees. Batches of trainees were randomly assigned to treatment with a program involving the provision of information concerning potential placement jobs. All data cleaning steps match those taken in \cite{chakravorty2024can} .

We replicate the analysis reported in Column 4 of Table 1 of \cite{chakravorty2024can}. This estimate restricts attention to the 890 trainees that completed the training and were subsequently placed into a job. The outcome variable is an indicator for whether the trainee is still in the job five months after training completion. The covariates collect measurements of 77 attributes related to the demographics, human capital, and expectations of the trainees. These covariates are listed in Table 3.1 of \cite{chakravorty2024can}. 

Following \cite{chakravorty2024can}, we estimate the effect of the treatment using the implementation of Double Machine Learning available through the ``DoubleML'' R package \citep{bach2024double}. We use 5-fold cross-fitting. Nuisance parameters (i.e., outcome regressions and propensity scores) are estimated with Random Forests using the default tuning parameters associated with the  ``Ranger'' R package \citep{wright2017ranger}. Standard errors are clustered at the batch level. 

\subsection{\cite{haushofer2022targeting}\label{app: Egger app}} 

\cite{haushofer2022targeting} consider data originally studied in \cite{egger2022general}. At the time of the preparation of this paper, no replication materials for \cite{haushofer2022targeting} were publicly available. We replicate a simplified version of the exercise considered in \cite{haushofer2022targeting}, using data obtained from the replication package associated with \cite{egger2022general}.\footnote{The replication package for \cite{egger2022general} is available at \url{https://www.econometricsociety. org/publications/econometrica/2022/11/01/General-Equilibrium-Effects-of-Cash-Transfers-Experimental-Evidence-From-Kenya}.} The units of observation are households in three sub-counties of Siaya County, western Kenya. Households were randomly allocated large cash transfers. See \cite{egger2022general} and \cite{haushofer2022targeting} for further details on the design of this experiment. 

Using the replication data from \cite{egger2022general}, we begin with a sample of 5,423 treatment eligible households. We then restrict attention to 4,758 households who were surveyed in both the first and second round of the experiment. We drop any households that are missing measurements of post-treatment assets, consumption, income, or food security, leaving a sample of 4,755 households. These data are merged with pre-treatment measurements  of the demographics, assets, food security, and labor market participation for each household, leaving a sample of 4,754 households. 

We obtain a cleaned dataset that records 24 measurements associated with each household. The dataset identifies the village that contains each household, whether each household was assigned to treatment, and post-treatment measurements of indices of assets, consumption, income, and food security. The covariates are measurements of the household size, the number of meals eaten the day before the survey, the number of high-protein meals eaten the day before the survey, the time between the administration of the program and the measurement of post-treatment outcomes,\footnote{We use the measurement of the time between the administration of the program and the measurement of post-treatment outcomes available as ``exptoend'' in the dataset ``GE\_Survey\_and\_Transfer\_Dates.dta'' in the replication package for \cite{egger2022general}. This variable appears to differ from the analogous quantity used in \cite{haushofer2022targeting}, whose density is displayed in their Figure A.2.} and indicators for whether the household contains a widow, contains a female, has children, has children under 3, has children under 6, contains an elderly resident, has livestock, has land, has more than a quarter acre of land, has a radio and tv, has a self-employed resident, and has an employed resident. 

We implement a simplified version of the exercise studied in \cite{haushofer2022targeting}. Consider a subset $\mathsf{s}$. Let the complement of $\mathsf{s}$ in $[n]$ be denoted by $\tilde{\mathsf{s}}$. Let $Y_i$ denote a measurement of an index quantifying post-treatment consumption, $X_i$ denote a vector of measurements of pre-treatment covariates, and $W_i$ denote assignment to treatment. Let $H_i$ denote the number of occupants of household $i$. Let $D_i$ collect these observations. Let $Y_i(1)$ and $Y_i(0)$ denote the potential outcomes induced by the treatment $W_i$. For each household $i$ in $\mathsf{s}$, let 
\begin{equation}\label{eq: te and uto}
\mathsf{te}(D_i, \hat{\eta}(D_{\tilde{\mathsf{s}}}))
\quad\text{and}\quad
\mathsf{uto}(D_i, \hat{\eta}(D_{\tilde{\mathsf{s}}}))
\end{equation} 
denote sample-split estimates of the treatment effect $Y_i(1) - Y_i(0)$ and \emph{per-capita} untreated outcome $Y_i(0)/H_i$, respectively. In this case, the nuisance parameters $\hat{\eta}(D_{\tilde{\mathsf{s}}})$ consist of nonparametric estimates of outcome regressions, i.e., the conditional expectations \eqref{eq: conditional expectations}, computed using the data for units $i$ in $\tilde{\mathsf{s}}$. 

Following \cite{haushofer2022targeting}, we compute these estimates with Random Forests using the ``GRF'' R package \citep{athey2019generalized}.\footnote{Following \cite{haushofer2022targeting}, we use the default tuning parameters, but set ``sample.fraction'' equal to  0.1 and ``min.node.size'' equal to 10.} As best as we can tell, there are two differences between our implementation and the results presented in \cite{haushofer2022targeting}. First, we make no attempt to include sample weights that reflect the sampling probabilities. Second, we include the time between treatment and the measurement of post-treatment outcomes as a covariate, rather than implement some form of de-meaning and re-weighing of estimates across time. 

Let $\mathsf{Impacted}(\mathsf{s})$ collect the 50\% of units in $\mathsf{s}$ associated with the largest values of $\mathsf{te}(D_i, \hat{\eta}(D_{\tilde{\mathsf{s}}}))$. Similarly, let $\mathsf{Deprived}(\mathsf{s})$ collect the 50\% of units $i$ in $\mathsf{s}$ associated with the smallest values of $\mathsf{uto}(D_i, \hat{\eta}(D_{\tilde{\mathsf{s}}}))$. Let $\mathsf{Impacted}_w(\mathsf{s})$ collect the subset of $\mathsf{Impacted}(\mathsf{s})$ with $W_i = w$. Define $\mathsf{Deprived}_w(\mathsf{s})$ analogously. Consider the sample-split statistic
\begin{align}\label{eq: def imp dep diff}
T(\mathsf{s},D) 
& = \left( \frac{1}{\vert \mathsf{Impacted}_1(\mathsf{s}) \vert } \sum_{i \in \mathsf{Impacted}_1(\mathsf{s}) } Y_i -  \frac{1}{\vert \mathsf{Impacted}_0(\mathsf{s}) \vert } \sum_{i \in \mathsf{Impacted}_0(\mathsf{s}) } Y_i \right) \nonumber \\
& - \left( \frac{1}{\vert \mathsf{Deprived}_1(\mathsf{s}) \vert } \sum_{i \in \mathsf{Deprived}_1(\mathsf{s}) } Y_i -  \frac{1}{\vert \mathsf{Deprived}_0(\mathsf{s}) \vert } \sum_{i \in \mathsf{Deprived}_0(\mathsf{s}) } Y_i \right)~.
\end{align}
That is, the sample-split statistic \eqref{eq: def imp dep diff} is an estimator of the difference in the average treatment effect of the most impacted and most deprived units in $\mathsf{s}$. The statistic \eqref{eq: def imp dep diff} can be aggregated with cross-splitting, through
\begin{equation}\label{eq: appendix haus cross}
a(\mathsf{r},D) = \frac{1}{k} \sum_{i=1}^k T(\mathsf{s}_i,D)~,
\end{equation}
where $\mathsf{r} = (\mathsf{s}_i)_{i=1}^k$ is a $k$-fold partition of $[n]$. Following \cite{haushofer2022targeting}, in \cref{sec: Reproducible}, we use 5-fold cross-fitting. 

\cite{haushofer2022targeting} say that they construct a standard error for \eqref{eq: appendix haus cross} with the bootstrap. There are a variety of ways that one might do this. In our implementation, we keep the values of the treatment effect and untreated outcome estimates \eqref{eq: te and uto} fixed for each household, and compute a bootstrap standard error for the statistic \eqref{eq: appendix haus cross}, re-computing the groups $\mathsf{Impacted}(\mathsf{s})$ and $\mathsf{Deprived}(\mathsf{s})$  in each bootstrap replicate. 

\subsection{\cite{beaman2023selection}\label{app: Beaman app}} We obtain the data associated with \cite{beaman2023selection} from the associated replication package posted on the Econometric Society's webpage.\footnote{The replication package for \cite{beaman2023selection} is available at  \url{https://www.econometricsociety.org/publications/econometrica/2023/09/01/Selection-into-Credit-Markets-Evidence-from-Agriculture-in-Mali}} The units of observations are low-income farmers in Mali. The cleaned data consist of observations of 5210 farmers across 198 villages. All data cleaning steps match \cite{beaman2023selection}.

\cite{beaman2023selection} consider a two-stage experiment. In the first stage, a microcredit organization randomly offered group-liability loans to women in 88 of the 198 villages. In the second stage, households were randomly offered cash grants. We replicate the analysis reported in Column (4) of Appendix Table VI of \cite{beaman2023selection}. In this setting, attention is restricted to the 2,142 households who received a loan. They are interested in testing whether the effects of cash-grants on farm profits are heterogeneous. For this problem, the data $D_i$ consist of a measurement $Y_i$ of the post-transfer profit for farm $i$, the variable $W_i$ denotes assignment to the cash transfer, and the vector $X_i$ collects a vector of pre-treatment measurements of the physical and financial attributes of each farm. Let $Y_i(1)$ and $Y_i(0)$ denote the potential outcomes induced by the treatment $W_i$.

\cite{beaman2023selection} implement a sample-split test for treatment effect heterogeneity proposed by \cite{chernozhukov2018generic}. Let $\mathsf{s}$ denote a random half-sample of $[n]$, i.e., a random subset of $[n]$ of size $n/2$. As before $\tilde{\mathsf{s}}$ denotes the complement of $\mathsf{s}$ in $[n]$. For each farm $i$ in $\mathsf{s}$, let 
\begin{equation}\label{eq: te Beaman}
\mathsf{te}(D_i, \hat{\eta}(D_{\tilde{\mathsf{s}}}))
\end{equation} 
denote sample-split estimates of the treatment effect $Y_i(1) - Y_i(0)$. The nuisance parameters $\hat{\eta}(D_{\tilde{\mathsf{s}}})$ consist of nonparametric estimates of outcome regressions, i.e., the conditional expectations \eqref{eq: conditional expectations}, computed using the data for units $i$ in $\tilde{\mathsf{s}}$. We compute these estimates with Random Forests using the ``GRF'' R package \citep{athey2019generalized}. Tuning parameters are chosen with cross-validation in the same manner implemented in \cite{beaman2023selection}.

To test for treatment effect heterogeneity, we estimate the coefficients of the linear regression
\begin{equation}
Y_i = \alpha + \tau_1 \cdot W_i + \tau_2 \cdot \mathsf{te}(D_i, \hat{\eta}(D_{\tilde{\mathsf{s}}}))
+ \tau_3 \cdot W_i \cdot \mathsf{te}(D_i, \hat{\eta}(D_{\tilde{\mathsf{s}}})) + \varepsilon_i~.
\end{equation}
The idea is that, if the treatment effect estimates are heterogeneous, then the ``true'' value of the interaction coefficient $\tau_3$ should be positive. Let $\hat{\tau}_3(\mathsf{s},D)$ denote the estimate of $\tau_3$ computed with linear regression in the sample $\mathsf{s}$. Let $\mathsf{se}(\mathsf{s},D)$ denote the associated standard error, clustered at the village level. \cite{chernozhukov2018generic} advocate for computing the median estimate and standard error over 250 half-splits $\mathsf{s}$. \cite{beaman2023selection} use 1000 half-splits. 

\subsection{Simulations\label{sec: simulation app}}

In this section, we give further details concerning the simulations whose results are reported in \cref{sec: Guarantees}. For both the applications to \cite{casey2021experiment} and \cite{chakravorty2024can}, we sample 100,000 replications of the cross-split statistic of over a range of values of $k$. In the application to \cite{chakravorty2024can}, we have increased the number of trees used to compute the random forest nuisance parameter estimates, from 100 to 3000, in order to ensure that the residual randomness induced by cross-splitting is greater than the residual randomness induced by the nuisance parameter estimate. We use these replicates to estimate the variance $v_{1,k}(D)$ for each value of $k$, in addition to the oracle stopping time
\begin{equation}\label{eq: m star re def}
g^\star = 2  v_{1,k}(D)\left(\frac{z_{1-\beta/2}}{\xi}\right)^2 ~,
\end{equation}
at each value of $k$ and $\xi$

We implement \cref{alg: sequential aggregation} 100,000 times for each value of $k$ and $\xi$, for each application, by sampling with replacement from the replicates of 100,000 draws. We use these estimates to estimate the reproducibility error 
\begin{equation}
P\bigg\{\big\vert a(\mathsf{R}_{\hat{g},k}, D) - a(\mathsf{R}^\prime_{\hat{g}^\prime,k}, D) \big\vert \geq \xi \mid D\bigg\} 
\end{equation}
as well as the distribution of $\hat{g}$ for each $k$ and $\xi$. These estimates are used to construct \cref{fig: performance casey}.

\cref{fig: performance chakravorty} displays measurements analogous to those displayed in \cref{fig: performance casey} for the application to \cite{chakravorty2024can}. The accuracy of the reproducibility error, relative to the application to \cite{casey2021experiment}, is somewhat worse. This is likely a consequence of the fact that the quality of a normal approximation to the conditional distribution of the $p$-value \eqref{eq: average p val exhibit} is worse than the quality of the normal approximation to the conditional distribution of the mean-squared error estimate. As before, the reproducibility error tends to be most accurate at values of $k$ and $\xi$ where the average number of cross-splits $\hat{g}$ is greater than 500.  

\begin{figure}[p]
\begin{centering}
\caption{Performance in Application to \cite{chakravorty2024can}}
\label{fig: performance chakravorty}
\vspace{-20pt}
\begin{tabular}{cc}
\multicolumn{2}{c}{\textit{Panel A: Reproducibility Error}}\tabularnewline
\multicolumn{2}{c}{\includegraphics[scale=0.4]{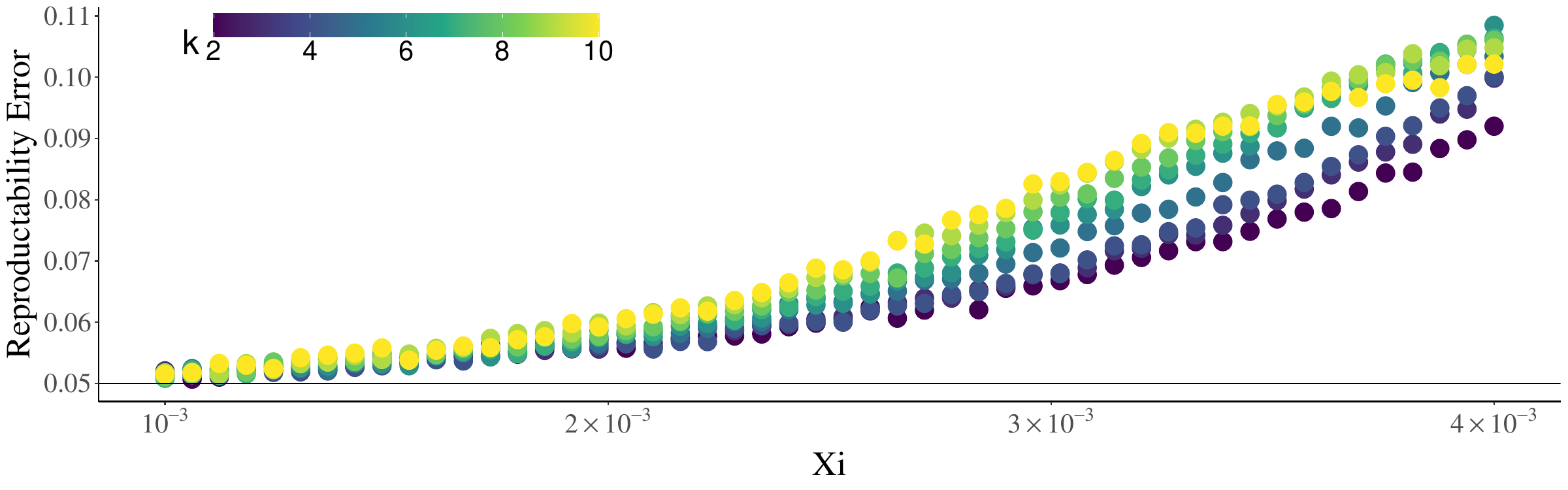}}\tabularnewline
\multicolumn{2}{c}{\textit{Panel B: Average Replications}, $\hat{g}$}\tabularnewline
\multicolumn{2}{c}{\includegraphics[scale=0.4]{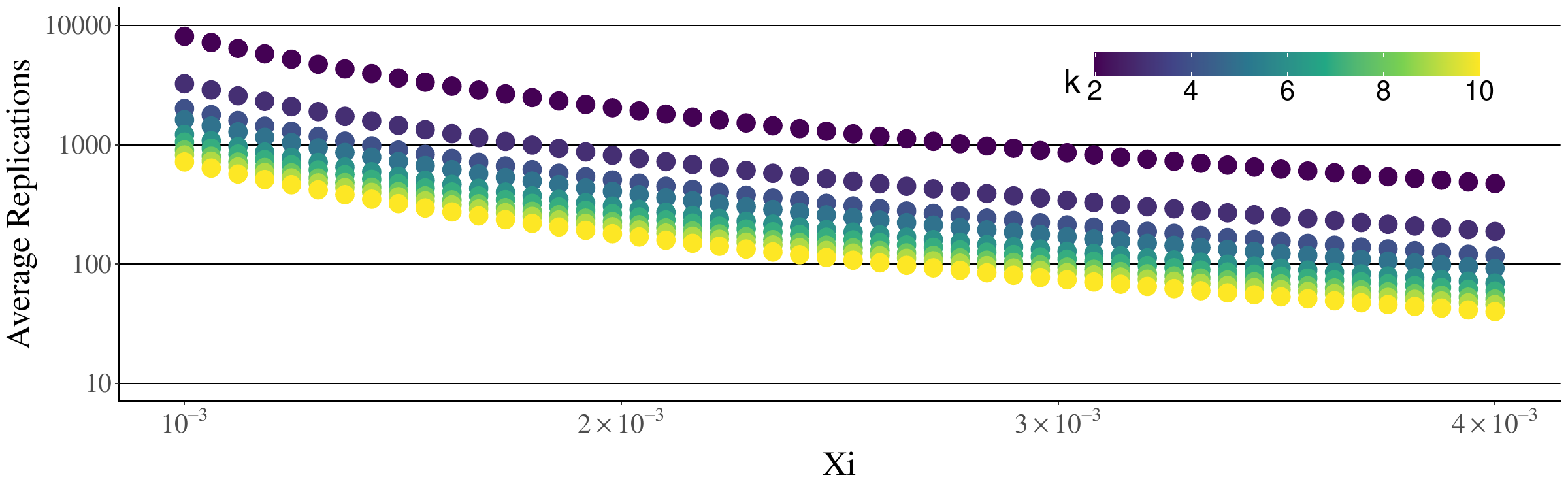}}\tabularnewline
\multicolumn{2}{c}{\textit{Panel C: Discrepancy from Oracle Stopping Time}, $\hat{g}/g^{\star} - 1$}\tabularnewline
\multicolumn{2}{c}{\includegraphics[scale=0.4]{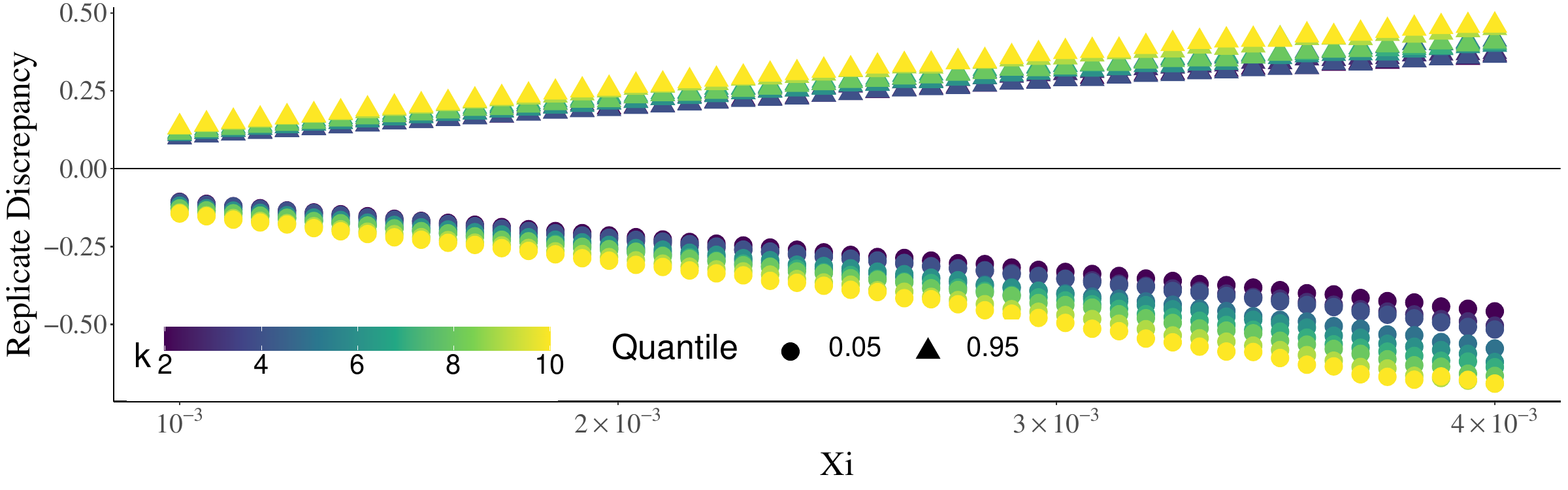}}\tabularnewline
\end{tabular}
\par\end{centering}
\medskip{}
\justifying
\noindent{\footnotesize{}Notes: \cref{fig: performance chakravorty} displays measurements of the performance of \cref{alg: sequential aggregation} on the data from \cite{chakravorty2024can}. Panel A displays measurements of the reproducibility error, $P\{\vert a(\mathsf{R}_{\hat{g},k}, D) - a(\mathsf{R}^\prime_{\hat{g}^\prime,k}, D) \vert \geq \xi \mid D\}$, as $\xi$ and $k$ vary. A solid horizontal line is displayed at the nominal error rate $\beta = 0.05$. Panel B displays measurements of the average number of replications $\hat{g}$ s $\xi$ and $k$ vary. The $y$-axis is displayed with a log scale, base 10. Solid horizontal lines are placed at each exponential factor of 10. Panel C displays measurements of the 5th and 95th quantiles of the discrepancy $\hat{g}/g^{\star} - 1$ as $k$ and $\xi$ vary. Further details on the construction of this figure are given in \cref{sec: simulation app}.}{\footnotesize\par}
\end{figure}

\clearpage

\section{Auxiliary Results and Discussion\label{app: additional}}

\subsection{Validity of Testing Procedures Based on Multiple Sample-Splitting\label{app: test}}

In this appendix, we discuss the application of \cref{alg: sequential aggregation} to testing procedures based on averaging over multiple splits of the same sample. We show that methods based on both $p$-values and $e$-values constructed with sample splitting continue to control the Type I error rate if they are aggregated sequentially with \cref{alg: sequential aggregation}. Both results follow from the ``Exchangeable Markov Inequality'' of \cite{ramdas2023randomized}. As before, let $D=(D_i)_{i=1}^{n}$ be independent and identically distributed according to a probability distribution $P$. Interest is in testing the null hypothesis $H_0:P\in\mathbf{P}$ for some collection of probability distributions $\mathbf{P}$.

\subsubsection{Methods Based on $p$-Values}  Suppose that we have access to a valid $p$-value $\hat{p}(\mathsf{s},D)$. That is, the statistic $\hat{p}(\mathsf{s},D)$ satisfies
\[
P\left\{ \hat{p}(\mathsf{s},D) \leq u \mid D_{\tilde{\mathsf{s}}} \right\} \leq u
\]
for all $u$ in $(0,1)$ and $P$ in $\mathbf{P}$. For example, a test-statistic could be chosen using the data in $D_{\tilde{\mathsf{s}}}$ and a $p$-value can be constructed based on this test statistic using the data in $D_{\mathsf{s}}$. For any collection $\mathsf{R}_{g,k}$ in $\mathcal{R}_{n,k,b}$, let
\begin{equation} \label{eq: agg p val}
a_\delta(\mathsf{R}_{g,k},D) = \frac{1}{g}\frac{1}{k} \sum_{i=1}^{g} \sum_{j=1}^k \mathbb{I}\left\{\hat{p}(\mathsf{s}_{i,j},D) \leq \delta \right\}
\end{equation}
denote the proportion of $p$-values that are less than or equal than $\delta$. \cite{ruger1978maximale}, \cite{meinshausen2009p}, and \cite{diciccio2020exact} observe that if $\mathsf{R}_{g,k}$ is constructed independently of the data $D$, then
\[
P\left\{ a_\delta(\mathsf{R}_{g,k},D) \geq c \right\} \leq \frac{\mathbb{E}\left[a_\delta(\mathsf{R}_{g,k},D)\right]}{c} \leq \frac{\delta}{c}
\]
by Markov's inequality, for all $P$ in $\mathbf{P}$. Thus, if $\delta$ and $c$ are chosen such that $\delta/c=\alpha$, then the test that rejects the null hypothesis $H_0$ if $a_\delta(\mathsf{R}_{g,k},D)$ is larger than $c$ has level $\alpha$.

The following theorem establishes that this test continues to be valid if the collection of sample-splits $\mathsf{R}_{g,k}$ is constructed sequentially with \cref{alg: sequential aggregation}. 
\begin{theorem}\label{thm: seq p val}
If the statistic $a_\delta(\mathsf{R}_{\hat{g},k},D)$ defined in \eqref{eq: agg p val} is constructed sequentially with \cref{alg: sequential aggregation}, then
\begin{equation} \label{eq: pval valid}
P\left\{ a_\delta(\mathsf{R}_{\hat{g},k},D) \geq c \right\} \leq \frac{\delta}{c}
\end{equation}
for all $P$ in $\mathbf{P}$.
\end{theorem}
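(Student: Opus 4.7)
The plan is to recognize that $a_\delta(\mathsf{R}_{\hat{g},k}, D)$ is the running average of a non-negative, exchangeable sequence, stopped at a stopping time adapted to its own history, and then to invoke the Exchangeable Markov Inequality directly.

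First, I would define the per-cross-split summand
\[
Y_i \;=\; a_\delta(\mathsf{r}_i, D) \;=\; \frac{1}{k}\sum_{j=1}^{k} \mathbb{I}\{\hat{p}(\mathsf{s}_{i,j}, D) \leq \delta\}, \qquad i = 1, 2, \ldots.
\]
Each $Y_i$ lies in $[0,1]$, and the sequence $\{Y_i\}_{i \geq 1}$ is i.i.d.\ conditional on $D$ (since the cross-splits $\mathsf{r}_i$ are drawn independently and uniformly from $\mathcal{R}_{n,k,b}$). Marginally, the sequence is therefore exchangeable. The by-now-routine bound on the marginal expectation of each $Y_i$ follows from the validity of $\hat{p}$: tower over $D_{\tilde{\mathsf{s}}_{i,j}}$ to get $P\{\hat{p}(\mathsf{s}_{i,j}, D) \leq \delta\} \leq \delta$, and average over $j$, which yields $\mathbb{E}[Y_i] \leq \delta$ for every $i$ and every $P \in \mathbf{P}$.

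Next, I would verify that $\hat{g}$ is a stopping time with respect to the natural filtration $\mathcal{F}_g = \sigma(Y_1,\ldots,Y_g)$. Inspecting \cref{alg: sequential aggregation}, the stopping rule halts at the first $g \geq g_{\mathsf{init}}$ for which
\[
\hat{v}(\mathsf{R}_{g,k}, D) \;=\; \frac{1}{g(g-1)} \sum_{i=1}^{g} (Y_i - \bar{Y}_g)^2 \;\leq\; \mathsf{cv}(\xi,\beta),
\]
which is a measurable function of $(Y_1, \ldots, Y_g)$ only. Thus $\{\hat{g} \leq g\} \in \mathcal{F}_g$. Finiteness of $\hat{g}$ almost surely follows from the fact that $\hat{v}(\mathsf{R}_{g,k}, D) \to 0$ a.s.\ (at rate $1/g$) by the law of large numbers applied to the bounded i.i.d.\ sequence $\{Y_i\}$.

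The output of the algorithm satisfies $a_\delta(\mathsf{R}_{\hat{g},k}, D) = \hat{g}^{-1}\sum_{i=1}^{\hat{g}} Y_i$, which is exactly the running mean of a non-negative exchangeable sequence evaluated at a stopping time adapted to its own history. Applying the Exchangeable Markov Inequality of \cite{ramdas2023randomized} yields
\[
P\bigl\{a_\delta(\mathsf{R}_{\hat{g},k}, D) \geq c\bigr\} \;\leq\; \frac{\mathbb{E}[Y_1]}{c} \;\leq\; \frac{\delta}{c},
\]
which is \eqref{eq: pval valid}. The main obstacle is ensuring that the hypotheses of the Exchangeable Markov Inequality are met---specifically, confirming that $\hat{g}$ is genuinely $\mathcal{F}_g$-adapted (which hinges on the fact that the variance estimator in the stopping criterion is constructed from the same $Y_i$'s being aggregated) and that finiteness of $\hat{g}$ is secured even when $P \in \mathbf{P}$ could render $v_{1,k}(D)$ pathological. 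Both points are handled by the explicit form of \cref{alg: sequential aggregation} and the boundedness of the indicator-based summand.
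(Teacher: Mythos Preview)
Your proposal is correct and uses the same key tool as the paper---the Exchangeable Markov Inequality of \cite{ramdas2023randomized}---but the paper's route is slightly cleaner. The paper invokes the inequality in its ``exists $t\geq 1$'' form,
\[
P\bigl\{a_\delta(\mathsf{R}_{\hat{g},k},D)\geq c\bigr\}\;\leq\;P\bigl\{\exists\,g\geq 1:\ a_\delta(\mathsf{R}_{g,k},D)\geq c\bigr\}\;\leq\;\frac{\mathbb{E}[\mathbb{I}\{\hat p(\mathsf{s}_{i,j},D)\leq\delta\}]}{c}\;\leq\;\frac{\delta}{c},
\]
which holds for \emph{any} positive-integer-valued $\hat g$, adapted or not. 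Your careful verification that $\hat g$ is a stopping time with respect to $\sigma(Y_1,\ldots,Y_g)$, and that $\hat g<\infty$ almost surely, is therefore unnecessary: once you have the uniform-in-$g$ bound, the containment $\{a_\delta(\mathsf{R}_{\hat g,k},D)\geq c\}\subseteq\{\exists\,g:\ a_\delta(\mathsf{R}_{g,k},D)\geq c\}$ is automatic. Your extra checks are not wrong, just superfluous---and they would matter if the stopping rule in \cref{alg: sequential aggregation} were based on a statistic other than $a_\delta(\mathsf{r}_i,D)$ itself, a complication the paper's argument sidesteps entirely.
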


\begin{proof}
We apply the following inequality, due to \cite{ramdas2023randomized}. 
\begin{theorem}[Theorem 1.1, \cite{ramdas2023randomized}] \label{eq: exchangeable Markov}
If $X_1,X_2,\ldots$ form an exchangeable sequence of integrable random variables, then
\begin{equation}
P\left\{ \exists t \geq1:  \frac{1}{t} \sum_{i=1}^t \vert X_i \vert \geq 1/a \right\} \leq a \mathbb{E}\left[\vert X_i\vert\right]
\end{equation}
for any $a>0$.
\end{theorem}
\noindent
Consequently, we have that
\begin{align}
P\left\{ a_\delta(\mathsf{R}_{\hat{g},k},D) \geq c \right\} 
& \leq P\left\{ \exists g \geq 1: a_\delta(\mathsf{R}_{g,k},D) \geq  c \right\} \\
& \leq  \frac{\mathbb{E}\left[\mathbb{I}\left\{\hat{p}(\mathsf{s}_{i,j},D) \leq \delta \right\}\right]}{c}  \leq\frac{\delta}{c}\nonumber
\end{align}
by \cref{eq: exchangeable Markov}, as required.\hfill
\end{proof}
 
\subsubsection{Methods Based on $e$-Values} Next, we consider settings where we have access to a valid $e$-value $\hat{e}(\mathsf{s},D)$ (see \cite{ramdas2023game} for a recent review). That is, the nonnegative statistic $\hat{e}(\mathsf{s},D)$ satisfies
\[
\mathbb{E}_P \left[ \hat{e}(\mathsf{s},D) \right] \leq 1
\]
for all $P$ in $\mathbf{P}$. For example, this setting applies to the ``Universal Inference'' procedure of \cite{wasserman2020universal}. Here, an estimator $\hat{P}(\tilde{\mathsf{s}})$ of $P$ is formed using the data $D_{\tilde{\mathsf{s}}}$ and is used in the split-likelihood ratio test statistic
\begin{equation} \label{eq: split ratio}
\hat{e}(\mathsf{s},D) = \inf_{P \in \mathbf{P}} \prod_{i \in \mathsf{s}} \frac{\text{d} \hat{P}(\tilde{\mathsf{s}})}{\text{d} P}(D_i)~.
\end{equation}
\cite{wasserman2020universal} prove that \eqref{eq: split ratio} is an $e$-value. For any collection $\mathsf{R}_{g,k}$ in $\mathcal{R}_{n,k,b}$, let
\begin{equation} \label{eq: agg e val}
a(\mathsf{R}_{g,k},D) = \frac{1}{g}\frac{1}{k} \sum_{i=1}^{g} \sum_{j=1}^k \hat{e}(\mathsf{s}_{i,j},D)
\end{equation} 
denote an aggregate $e$-value. See \cite{dunn2023gaussian} and \cite{tse2022note} for further discussion of aggregate $e$-values. Observe that
\begin{equation} \label{eq: e val valid}
P\left\{ a(\mathsf{R}_{g,k},D) \geq 1/\alpha \right\} \leq \alpha \mathbb{E}\left[\hat{e}(\mathsf{s},D)\right] \leq \alpha~,
\end{equation}
by Markov's inequality, for all $P$ in $\mathbf{P}$. Thus, the test that rejects the null hypothesis $H_0$ if $a(\mathsf{R}_{g,k},D)$ is larger than $1/\alpha$ has level $\alpha$. 

We again establish that this test continues to be valid if the collection of sample splits $\mathsf{R}_{g,k}$ is constructed sequentially with \cref{alg: sequential aggregation}.
\begin{theorem}
If the aggregate $e$-value $a_\delta(\mathsf{R}_{\hat{g},k},D)$ defined in \eqref{eq: agg e val} is constructed sequentially with \cref{alg: sequential aggregation}, then
\begin{equation} \label{eq: pval valid}
P\left\{ a(\mathsf{R}_{\hat{g},k},D) \geq \frac{1}{\alpha} \right\} \leq \alpha
\end{equation}
for all $P$ in $\mathbf{P}$.
\end{theorem}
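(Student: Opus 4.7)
The plan is to mimic the proof of \cref{thm: seq p val} almost verbatim, substituting $e$-values for the thresholded $p$-value indicators. The only substantive difference is that we no longer need the intermediate thresholding step, because the $e$-value property $\mathbb{E}_P[\hat{e}(\mathsf{s},D)] \leq 1$ already gives us the mean bound directly, whereas for $p$-values we needed Markov applied to $\mathbb{I}\{\hat{p} \leq \delta\}$ to get the factor $\delta$.

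First I would observe that the sample-split $e$-values $\{\hat{e}(\mathsf{s}_{i,j}, D)\}_{i \geq 1, \, 1 \leq j \leq k}$, viewed as a sequence indexed lexicographically by $(i,j)$, are exchangeable conditional on $D$. This follows from the construction in \cref{alg: sequential aggregation}: the cross-splits $\mathsf{r}_i = (\mathsf{s}_{i,j})_{j=1}^k$ are drawn i.i.d.\ uniformly from $\mathcal{R}_{n,k,b}$, so the collection is trivially exchangeable across the index $i$, and within each cross-split the marginal distribution of $\mathsf{s}_{i,j}$ is uniform on $\mathcal{S}_{n,b}$ for every $j$. In particular, each $\hat{e}(\mathsf{s}_{i,j}, D)$ has the same marginal distribution, and the $e$-value property gives $\mathbb{E}_P[\hat{e}(\mathsf{s}_{i,j}, D)] \leq 1$ for every $P \in \mathbf{P}$.

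Next I would apply the Exchangeable Markov Inequality (\cref{eq: exchangeable Markov}) to this exchangeable, nonnegative sequence with $a = \alpha$. Since the $t$th cumulative average of the sequence (for $t = gk$) equals $a(\mathsf{R}_{g,k}, D)$, and since the event that the sequentially stopped aggregate exceeds $1/\alpha$ is contained in the event that some prefix average exceeds $1/\alpha$, we obtain
\begin{align*}
P\left\{ a(\mathsf{R}_{\hat{g},k}, D) \geq \tfrac{1}{\alpha} \right\}
& \leq P\left\{ \exists g \geq 1 : a(\mathsf{R}_{g,k}, D) \geq \tfrac{1}{\alpha} \right\} \\
& \leq \alpha \cdot \mathbb{E}_P[\hat{e}(\mathsf{s}_{1,1}, D)] \leq \alpha,
\end{align*}
which is the desired conclusion. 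The cumulative-average-to-prefix-maximum step is precisely where the stopping time $\hat{g}$ is handled; no martingale or optional-stopping argument is required, because the Exchangeable Markov Inequality already delivers a uniform bound over all $t$.

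There is no real obstacle here: the result is essentially a corollary of \cref{eq: exchangeable Markov} together with the definition of an $e$-value, and the argument is strictly simpler than the $p$-value version since we avoid the thresholding parameter $\delta$. The only point worth double-checking is that the stopping rule in \cref{alg: sequential aggregation} depends only on the observed sequence of aggregates through the variance estimator $\hat{v}(\mathsf{R}_{g,k}, D)$ and on $D$, so that $\hat{g}$ is a bona fide stopping time relative to the natural filtration generated by the exchangeable sequence; this is immediate from the algorithm's definition and does not require additional assumptions.
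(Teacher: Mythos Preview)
Your overall strategy matches the paper's: both invoke the Exchangeable Markov Inequality (\cref{eq: exchangeable Markov}) in place of ordinary Markov, and the $e$-value argument is indeed simpler than the $p$-value one because the mean bound $\mathbb{E}_P[\hat{e}(\mathsf{s},D)]\le 1$ is available directly. However, there is a gap in your exchangeability step. The lexicographically indexed sequence $\{\hat{e}(\mathsf{s}_{i,j},D)\}$ is \emph{not} exchangeable, even conditional on $D$: within a single cross-split $\mathsf{r}_i$ the folds $\mathsf{s}_{i,1},\dots,\mathsf{s}_{i,k}$ are mutually exclusive and hence dependent, whereas folds from distinct cross-splits are independent. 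Swapping, say, $\mathsf{s}_{1,1}$ with $\mathsf{s}_{2,1}$ changes the joint law (the dependence between positions $1$ and $2$ goes from ``same partition'' to ``independent''), so the hypothesis of \cref{eq: exchangeable Markov} fails at the fold level. Your justification---identical marginals for every $j$---establishes equal marginal distributions, not joint exchangeability.

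The fix is immediate and is exactly what the paper does (by reference to the $p$-value proof): apply \cref{eq: exchangeable Markov} at the cross-split level, to the sequence $X_i = a(\mathsf{r}_i,D) = \tfrac{1}{k}\sum_{j=1}^k \hat{e}(\mathsf{s}_{i,j},D)$ for $i\ge 1$. These are i.i.d.\ because the $\mathsf{r}_i$ are drawn i.i.d.\ from $\mathcal{R}_{n,k,b}$, hence exchangeable, and $\mathbb{E}_P[X_i]=\mathbb{E}_P[\hat{e}(\mathsf{s}_{1,1},D)]\le 1$ by the $e$-value property together with the equal-marginals observation you already made. With this substitution your displayed chain
\[
P\{a(\mathsf{R}_{\hat g,k},D)\ge 1/\alpha\}\le P\{\exists g\ge 1: a(\mathsf{R}_{g,k},D)\ge 1/\alpha\}\le \alpha\,\mathbb{E}_P[X_1]\le \alpha
\]
is fully justified.
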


\begin{proof}
The claim is established with an argument very similar to the proof of \cref{thm: seq p val}. Namely, the Markov inequality used to establish \eqref{eq: e val valid} can then be replaced by \cref{eq: exchangeable Markov}, as before.\hfill
\end{proof}

\subsection{Determining Suitable Levels of Precision for Cross-Validated Risk Estimation\label{app: lasso xi determine}}
In this section, we give the details supporting our application of \cref{alg: sequential aggregation} to cross-validated risk estimation, using data from \cite{casey2021experiment}. Recall from \cref{sec: cross val in practive}, that we apply the procedure, independently, to each component of the vector
\begin{align}
(b_{\lambda_1}(\mathsf{r}, D), \ldots, b_{\lambda_{p-1}}(\mathsf{r}, D))~,
\quad\text{where}\quad
b_{\lambda_i}(\mathsf{r},D) = ( a_{\lambda_i}(\mathsf{r}, D) - a_{\lambda_p}(\mathsf{r}, D) ) ~.\label{eq: mse difference app}
\end{align}
Let $\xi_i$ denote the error tolerance used for the $i$th component of the vector \eqref{eq: mse difference app}. We use a simple, data-driven procedure for choosing these tolerances. 

In particular, we draw a small number of cross-splits, i.e., $g=20$ and compute the aggregated difference
\begin{equation}\label{eq: initial estimates}
b_{\lambda_i}(\mathsf{R}_{g,k}, D) = \frac{1}{g} \sum^g_{i=1} b_{\lambda_i}(\mathsf{r}_i, D)
\end{equation}
for each $i$ in $1,\ldots,p-1$. Panel A of \cref{fig: lasso xi choice} displays the absolute value of these estimates in purple, where the $y$-axis has been transformed to a logarithmic scale. The absolute value of the differences \eqref{eq: initial estimates} are on the order $10^{-4}$ at the close-to-optimal values of $\lambda$, i.e., values of $\lambda$ above about $0.01$. The differences for smaller values of $\lambda$ are much larger, and, as indicated by \cref{fig: lasso demo}, have larger conditional variances. These qualitative features are not particularly sensitive to residual randomness. Panel B of \cref{fig: lasso xi choice} displays the quantiles of the difference \eqref{eq: initial estimates} across draws of $g=20$ cross-splits. 

\begin{figure}[t]
\begin{centering}
\caption{Determining Error Tolerances for Cross-Validated Risk Estimation}
\vspace{-20pt}
\label{fig: lasso xi choice}
\begin{tabular}{c}
\textit{Panel A: One Draw of Risk Estimates, The Choice of $\xi_i$}\tabularnewline
\includegraphics[scale=0.4]{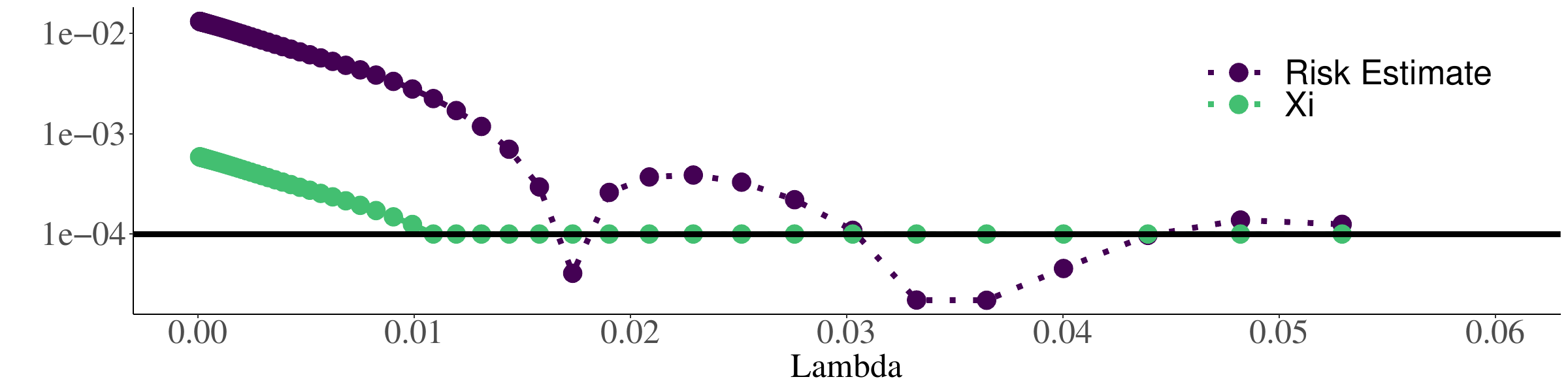}\tabularnewline
\textit{Panel B: Quantiles of Risk Estimates}\tabularnewline
\includegraphics[scale=0.4]{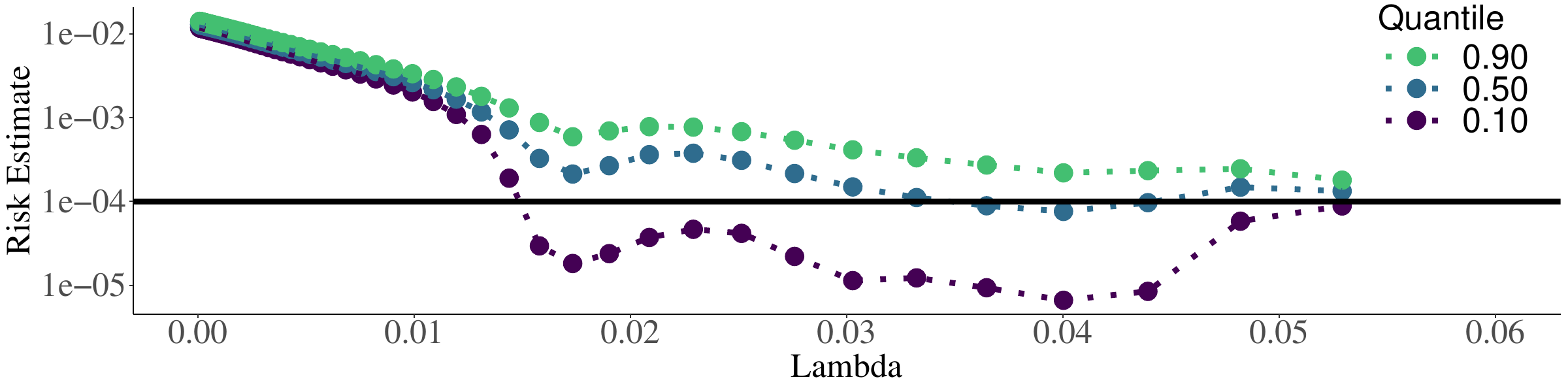}\tabularnewline
\end{tabular}
\par\end{centering}
\medskip{}
\justifying
\noindent{\footnotesize{}Notes: \cref{fig: lasso xi choice} displays auxiliary measurements relating to the the performance of \cref{alg: sequential aggregation} in the application to cross-validated Lasso, implemented in data from \cite{casey2021experiment}. Panel A displays the absolute value of the differences \eqref{eq: initial estimates} computed on a random sample of $g=20$ cross-splits, in purple, and the realized values of the error tolerances \eqref{eq: error tol lasso}, in green. Panel B displays the quantiles of the absolute value of the differences \eqref{eq: initial estimates}, over replications of random samples of $g=20$ cross-splits. In both cases, the $y$-axes is displayed in a logarithmic scale.}{\footnotesize\par}
\noindent\hrulefill
\end{figure}

These observations suggest the following approach for choosing the tolerances $\xi_i$. We would like to set $\xi_i = 10^{-4}$ for the large, close-to-optimal values of the regularization parameter, but would be willing to tolerate a larger error tolerance at smaller, sub-optimal values. This is practically important, as achieving the same level of precision for the small values of $\lambda$ requires aggregation over many more cross-splits (as the conditional variances are much larger). We operationalize this approach in the following way.  First, using the sample of $g=20$ cross-splits that we used to estimate \eqref{eq: initial estimates}, we compute the $p$-values
\begin{equation}
q_{\lambda_i}(\mathsf{R}_{g,k}, D) = 1 - \Phi\left(\frac{b_{\lambda_i}(\mathsf{R}_{g,k}, D) }{\sqrt{\hat{v}_{\lambda_i}(\mathsf{R}_{g,k}, D) }} \right)
\end{equation}
for each $i$ in $1,\ldots,p-1$, where $\hat{v}_{\lambda_i}(\mathsf{R}_{g,k}, D)$ denotes the sample-variance of the statistics $b_{\lambda_i}(\mathsf{r}_i, D)$ across the sample-splits. Let $\bar{\lambda}$ denote the largest value of the regularization parameter such that $q_{\bar{\lambda}}(\mathsf{R}_{g,k}, D) < 0.2$ and let the constant ``$\mathsf{scale}$'' solve the equality 
\begin{equation}
10^{-4} = \frac{b_{\bar{\lambda}}(\mathsf{R}_{g,k}, D)}{\mathsf{scale}}~.
\end{equation}
We set the error tolerances to
\begin{equation}\label{eq: error tol lasso}
\xi_i = \max\{ 10^{-4}, b_{\lambda_i}(\mathsf{R}_{g,k}, D)\times \mathsf{scale}  \}~.
\end{equation}
In other words,  we set $\xi_i=10^{-4}$ for all $i$ with $q_{\lambda_i}(\mathsf{R}_{g,k}, D) \geq 0.2$ and increase the error tolerance in proportion to $b_{\lambda_i}(\mathsf{R}_{g,k}, D)$ for all other values of $i$. The realized values of $\xi_i$ are displayed in Panel A of \cref{fig: lasso xi choice}. These error tolerances are used throughout the main text.

Panel A displays quantiles of the reproducibly aggregated statistic \eqref{eq: mse difference} across replications of the procedure, without truncation of the $y$-axis. Panel B of \cref{fig: lasso xi choice} displays estimates of the marginal reproducibility probability at each value of the regularization parameter, computed using $2,000$ replications of the procedure. Over the full range of the regularization parameter, the reproducibility error is very close to the nominal value $\beta = 0.05$. 

\begin{figure}[t]
\begin{centering}
\caption{Auxiliary Figures for Application to Cross-Validation}
\vspace{-20pt}
\label{fig: lasso xi auc}
\begin{tabular}{c}
\textit{Panel A: Mean-Squared Error Differences}\tabularnewline
\includegraphics[scale=0.4]{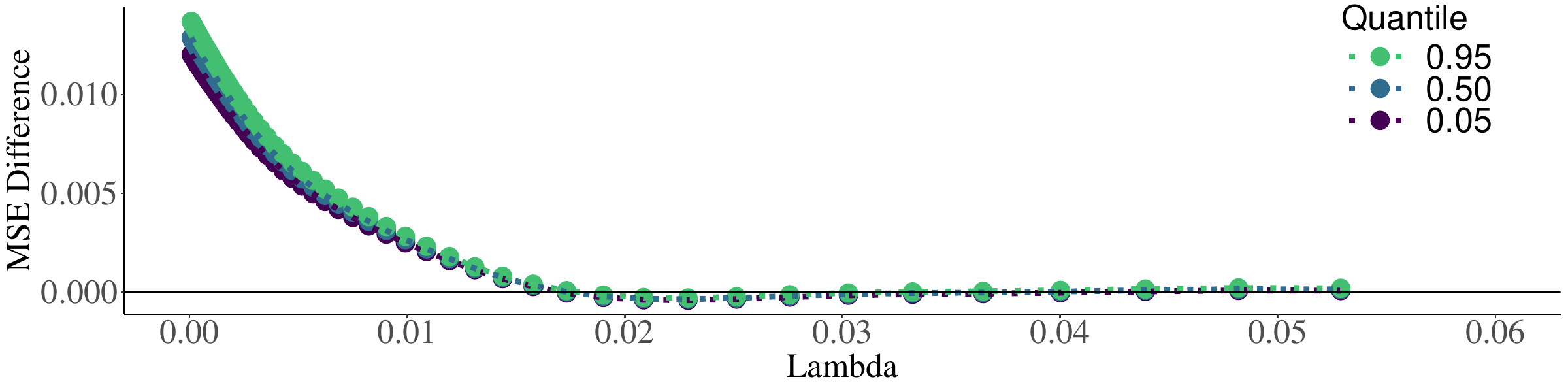}\tabularnewline
\textit{Panel B: Reproducibility}\tabularnewline
\includegraphics[scale=0.4]{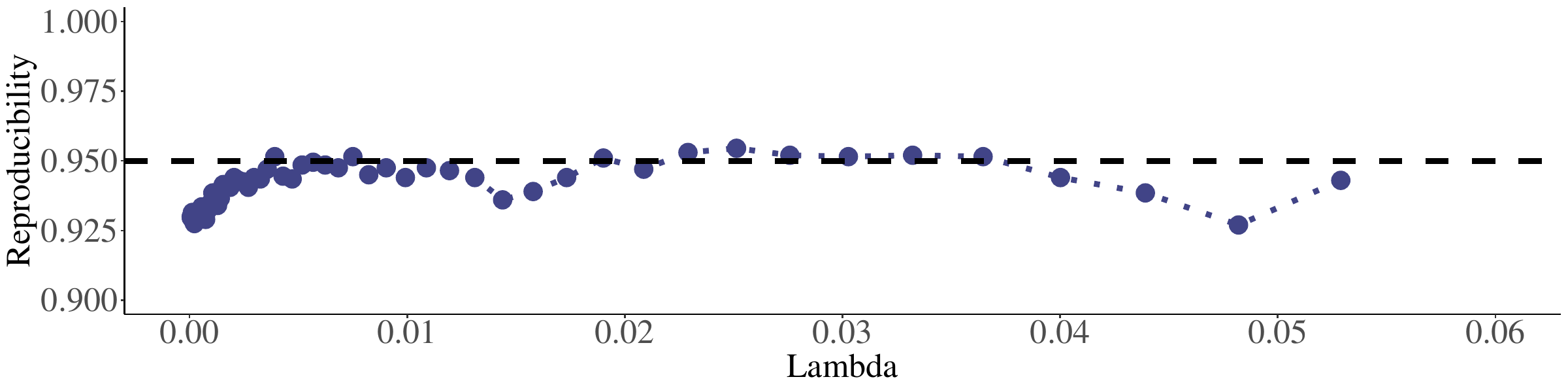}\tabularnewline
\end{tabular}
\par\end{centering}
\medskip{}
\justifying
\noindent{\footnotesize{}Notes: \cref{fig: lasso xi choice} displays auxiliary measurements relating to the the performance of \cref{alg: sequential aggregation} in the application to cross-validated Lasso, implemented in data from \cite{casey2021experiment}. Panel A is analogous to Panel A of \cref{fig: lasso reproduce}, but the $y$-axis is not truncated. Panel B displays estimates of the marginal reproducibility probability for each value of the regularization parameter, where the nominal reproducibility error is set to $\beta = 0.05$.}{\footnotesize\par}
\noindent\hrulefill
\end{figure}

\subsection{Sample Stability\label{app: M estimation}}

This appendix collects discussion and analysis concerning the $(r,q)$-sample stability $\sigma^{(r,q)}$ defined in \cref{def: sample stability}. First, in \cref{app: m est app}, we specialized our consideration to the case that the estimator $\hat{\eta}$ is a regularized empirical risk minimizer. Some closely related arguments are used in proof of Proposition 4 of \citet{austern2020asymptotics}. Second, in \cref{app: chen compare} we compare we compare the stable-stability decay condition \cref{def: split stability} to the conditions considered by \cite{chen2022debiased}. The proof of the Theorem stated in \cref{app: m est app} is given in \cref{app: train proof}

\subsubsection{Specialization to Regularized Empirical Risk Minimizers\label{app: m est app}}

Assume that the parameter $\eta$ is an element of some closed convex space $H\subseteq\mathbb{R}^{p}$. Consider the estimator
\begin{flalign}
\Psi\left(D_{\mathsf{s}},\eta\right) & =\frac{1}{b}\sum_{i\in\mathsf{s}}\psi\left(D_{i},\eta\right)\label{eq: psi average}\\
\hat{\eta} & =\underset{\eta\in H}{\arg\min}\left\{ \frac{1}{n-b}\sum_{i\in\tilde{\mathsf{s}}}\ell\left(D_{i},\eta\right)+\lambda_{1,n}\|\eta\|_{1}+\lambda_{2,n}\|\eta\|_{2}\right\} ,\label{eq: eta M estimator}
\end{flalign}
where $\psi\left(\cdot,\cdot\right)$ are $\ell\left(\cdot,\cdot\right)$ functions and $\lambda_{1,n},\lambda_{2,n}\geq0$ are penalty parameters.

Let $\nabla\ell\left(d,\eta\right)$ and $\nabla^{2}\ell\left(d,\eta\right)$
denote the gradient and Hessian of the function $\eta\mapsto\ell\left(d,\eta\right)$. Let $\nabla_j \ell\left(\cdot,\cdot\right)$ denote the $j$th component of $\nabla \ell\left(\cdot,\cdot\right)$.
Similarly, we write 
\begin{flalign*}
\bar{\nabla}_{\tilde{\mathsf{s}}}\ell\left(D,\eta\right) & =\frac{1}{n-b}\sum_{i\in\tilde{\mathsf{s}}}\nabla\ell\left(D_{i},\eta\right)\quad\text{and}\\
\bar{\nabla}_{\tilde{\mathsf{s}}}^{2}\ell\left(D,\eta\right) & =\frac{1}{n-b}\sum_{i\in\tilde{\mathsf{s}}}\nabla^{2}\ell\left(D_{i},\eta\right)
\end{flalign*}
for the empirical averages of the gradients and the Hessian evaluated on the data in the set $\tilde{\mathsf{s}}$.

We impose the following set of restrictions. The first two restrictions concern the curvature of the loss function. 
\begin{assumption}
\label{assu: invertibility}The minimum eigenvalue of $\bar{\nabla}_{\tilde{\mathsf{s}}}^{2}\ell\left(D,\eta\right)$
is bounded below by $\rho$ almost surely. 
\end{assumption}
\begin{assumption}
\label{assu: regularity} The loss function $\ell\left(\cdot,\cdot\right)$
is strictly convex and twice continuously differentiable in its second
argument. 
\end{assumption}
\begin{remark}
Under \cref{assu: invertibility}, if $\rho>0$, then the dimension $p$
of the parameter vector $\eta$ is less than the number of observations
$n$. To the best of our knowledge, analysis of the stability of $\ell_{1}$
regularized empirical risk minimization in the regime with $p$ larger
than $n$ is an open problem. Further analysis of the sample stability in
the high-dimensional regime is an interesting direction for further research.\hfill\qed
\end{remark}

The second two restrictions are statistical. The first concerns the expected curvature of the moment function $\psi\left(\cdot,\cdot\right)$. The second is a sub-exponential tail bound on the gradient $\nabla_j \ell\left(\cdot,\cdot\right)$.
\begin{assumption}
\label{assu: psi eta lipschitz}The function $\psi\left(\cdot,\cdot\right)$ satisfies
the inequality
\begin{equation}
\mathbb{E}\left[
  \left(
  \psi(D_i, \eta) - \psi(D_i, \eta^\prime)
  \right)^{r}
  \right] 
   \lesssim
  \mathbb{E} \left[\|\hat{\eta}-\hat{\eta}^{\prime}\|_{2}^r  \right]
\end{equation}
for each pair $\eta,\eta^\prime \in H$. 
\end{assumption}
\begin{assumption}
\label{assu: orlicz}The $\psi_1$-Orlicz norm bound 
\begin{equation}
\| \nabla_j \ell_j(D^\prime_i, \eta) - \nabla_j\ell(D_i, \eta) \| \lesssim \kappa^2
\end{equation}
holds for each pair $\eta \in H$. 
\end{assumption}
\begin{remark}
\cref{assu: psi eta lipschitz} is satisfied in many problems of interest. For example, \cite{chen2022debiased} show that similar conditions hold when $\psi(\cdot,\cdot)$ is given by various moment functions used for semiparametric causal inference. \cref{assu: orlicz} is equivalent to the assumption that $\nabla_j \ell_j(D^\prime_i, \eta) - \nabla\ell_j(D_i, \eta)$ is sub-exponential with parameter $\kappa^2$. \hfill\qed
\end{remark}

The following theorem gives a bound on the $(r,q)$-sample stability. 
\begin{theorem}
\label{thm: train}If \cref{assu: invertibility,assu: psi eta lipschitz,assu: regularity,assu: orlicz}, if $\rho+\lambda_{2,n}>0$, then 
\[
\sigma^{(r,q)}\lesssim 
C_r p \left(\frac{\sqrt{q}}{n-b}\frac{\kappa}{\rho+\lambda_{2,n}}\right)^{r}
     + C_r \left(\lambda_{1,n}\frac{p}{\rho+\lambda_{2,n}}\right)^{r}~,
\]
uniformly over all even integers $r$ and positive integers $q \leq n-b$, where $C_r$ is some positive constant that depends only on $r$.
\end{theorem}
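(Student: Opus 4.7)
The plan is to combine three ingredients: \cref{assu: psi eta lipschitz}, which reduces a stability bound on $\psi$ to one on $\hat{\eta}$; \cref{assu: invertibility} together with \cref{assu: regularity}, which yield strong convexity of the regularized empirical risk with modulus $\rho+\lambda_{2,n}$; and \cref{assu: orlicz}, which controls the sub-exponential data perturbation of the empirical gradient. The strategy is first-order / subgradient analysis coupled with a Bernstein-type moment bound.

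First, by \cref{assu: psi eta lipschitz} it suffices to bound $\mathbb{E}[\|\hat{\eta}-\hat{\eta}^{\prime}\|_{2}^{r}]$, where $\hat{\eta}=\hat{\eta}(D_{\tilde{\mathsf{s}}})$ and $\hat{\eta}^{\prime}=\hat{\eta}(\tilde{D}^{(\mathsf{q})}_{\tilde{\mathsf{s}}})$. Writing the subgradient KKT conditions for both estimators,
\[
g(\hat{\eta}) + 2\lambda_{2,n}\hat{\eta} + \lambda_{1,n} v(\hat{\eta}) = 0, \qquad
g^{\prime}(\hat{\eta}^{\prime}) + 2\lambda_{2,n}\hat{\eta}^{\prime} + \lambda_{1,n} v(\hat{\eta}^{\prime}) = 0,
\]
where $g(\eta) = \bar{\nabla}_{\tilde{\mathsf{s}}}\ell(D,\eta)$, $g^{\prime}$ is defined analogously on the perturbed data, and $v\in\partial\|\cdot\|_1$. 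Decompose $g(\hat{\eta}) - g^{\prime}(\hat{\eta}^{\prime}) = [g(\hat{\eta})-g(\hat{\eta}^{\prime})] + [g(\hat{\eta}^{\prime})-g^{\prime}(\hat{\eta}^{\prime})]$ and take the inner product of the subtracted FOCs with $\hat{\eta}-\hat{\eta}^{\prime}$. By \cref{assu: invertibility}, the first bracket is $\geq \rho\|\hat{\eta}-\hat{\eta}^{\prime}\|_2^2$; the $\ell_2$-penalty contributes $2\lambda_{2,n}\|\hat{\eta}-\hat{\eta}^{\prime}\|_2^2$; and the $\ell_1$-subgradient term is bounded in absolute value by $\lambda_{1,n}\|v(\hat{\eta})-v(\hat{\eta}^{\prime})\|_1\|\hat{\eta}-\hat{\eta}^{\prime}\|_\infty \leq 2\lambda_{1,n}p\|\hat{\eta}-\hat{\eta}^{\prime}\|_2$. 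Rearranging with Cauchy--Schwarz yields
\[
(\rho+\lambda_{2,n})\|\hat{\eta}-\hat{\eta}^{\prime}\|_2 \lesssim \|g(\hat{\eta}^{\prime})-g^{\prime}(\hat{\eta}^{\prime})\|_2 + \lambda_{1,n} p.
\]

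Second, I would bound the $r$-th moment of the gradient perturbation. Writing
\[
g(\eta)-g^{\prime}(\eta) = \frac{1}{n-b}\sum_{i\in\mathsf{q}} \bigl[\nabla\ell(D_i,\eta) - \nabla\ell(D_i^{\prime},\eta)\bigr],
\]
and invoking \cref{assu: orlicz}, each componentwise summand is centered sub-exponential, so a Bernstein-type moment bound gives $\mathbb{E}[|(g(\eta)-g^{\prime}(\eta))_j|^{r}] \lesssim C_r (\sqrt{q}\,\kappa/(n-b))^{r}$ uniformly in $\eta$ and in the coordinate $j$. Converting componentwise to $\ell_2$ moments absorbs the stated $p$-dependence in the first term. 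Raising the displayed bound on $\|\hat{\eta}-\hat{\eta}^{\prime}\|_2$ to the $r$-th power, using the inequality $(a+b)^r \lesssim_r a^r + b^r$, and taking expectations produces the two-term bound claimed in the theorem.

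The main obstacle is that the gradient perturbation appears at the \emph{random} point $\hat{\eta}^{\prime}$, so the summands in $g(\hat{\eta}^{\prime})-g^{\prime}(\hat{\eta}^{\prime})$ are not independent of the indices being swapped and the clean centered sub-exponential structure is spoiled. The cleanest way around this is to upgrade the pointwise Bernstein bound to a bound that is uniform in $\eta$ over an appropriate parameter set via a covering / chaining argument, exploiting the Lipschitz continuity of $\eta\mapsto\nabla\ell(\cdot,\eta)$ implied by the twice-continuous differentiability and Hessian control in \cref{assu: invertibility,assu: regularity}. The remaining delicate bookkeeping is tracking the dimension $p$ consistently through the $\ell_1$ subgradient term and the norm-equivalence step in the gradient moment bound.
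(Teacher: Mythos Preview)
Your proposal is correct and arrives at the same destination, but the route differs from the paper's in two places worth noting.

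First, to bound $\|\hat\eta-\hat\eta'\|_2$ the paper does not use your inner-product/strong-convexity argument. Instead it Taylor-expands the KKT condition $\bar\nabla_{\tilde{\mathsf{s}}}\ell(D,\eta)+\lambda_{2,n}\eta+\lambda_{1,n}\hat z=0$ around $\hat\eta$, equates it with the perturbed-data KKT at $\hat\eta'$, and inverts the matrix $\bar\nabla^{2}_{\tilde{\mathsf{s}}}\ell(D,\tilde\eta)+\lambda_{2,n}I_p$ (nonsingular under \cref{assu: invertibility}). This yields the same inequality as yours up to constants. Your argument is arguably cleaner---indeed, had you exploited the monotonicity of the $\ell_1$ subgradient rather than bounding it in absolute value, the $\lambda_{1,n}p$ term could have been dropped altogether---while the paper's version makes the Hessian dependence explicit.

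Second, and more substantively: the obstacle you single out---that the gradient perturbation is evaluated at the random $\hat\eta'$, spoiling the centered sub-exponential structure needed for a Bernstein bound---is not resolved by chaining in the paper but simply sidestepped. The paper applies only the triangle inequality to the $\psi_1$-Orlicz norm, using that \cref{assu: orlicz} is stated uniformly for every $\eta\in H$, to get $\bigl\|\sum_{i\in\mathsf q}\nabla_j\ell(D_i',\hat\eta')-\nabla_j\ell(D_i,\hat\eta')\bigr\|_{\psi_1}\lesssim q\kappa$ with no appeal to independence or centering; moments are then read off via $\|X\|_{r}\lesssim r!\,\|X\|_{\psi_1}$. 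So the chaining machinery you propose, while a valid alternative, is unnecessary here. Your Bernstein route would deliver the $\sqrt q$ scaling more transparently; in the paper it emerges only through the interplay of this crude Orlicz bound with a Jensen step on the $\ell_2$ norm, which buys simplicity at some cost in sharpness of the constants and the $\kappa$-dependence.
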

\begin{remark}
A necessary and sufficient condition for $\hat{\eta}$ to be
consistent for the population risk minimizer associated with (\ref{eq: eta M estimator})
is that $\lambda_{1,n}=o\left((n-b)^{-1}\right)$. See e.g., \citet{fu2000asymptotics}.
Thus, so long as the penalty $\lambda_{1,n}$ is chosen in this regime,
the sample stability $\sigma^{(r,q)}$
will be
 \[
O\left(\left(\frac{\sqrt{q} \kappa }{n-b}\right)^{r}\right)
\]
up to terms that depend on the dimension of $\eta$, as required.\hfill\qed
\end{remark}

\subsubsection{Comparison to \cite{chen2022debiased}\label{app: chen compare}} 

In the main text, we restrict attention to sample stable estimators. In a recent article, \cite{chen2022debiased} show that, under some regularity conditions, if a condition related to, but partially stronger than, sample stability is satisfied, then sample-splitting is unnecessary for the consistency and asymptotic normality of DML estimators. In this section, we comment on the difference between the notion of sample stability used in this article and the related condition used in \cite{chen2022debiased}.

First, for ease of reference, we recall our definition of sample stability. We restrict attention to the case that $q=1$ and $r=2$, as this is what is relevant to make the comparison. Fix a set $\mathsf{s}\subseteq\mathcal{S}_{n,b}$ and let $i$ be an arbitrary element of $\mathsf{s}$.  Let $D^\prime$ denote an independent and identical copy of the data $D$. For each $i$ in $[n]$, let $\tilde{D}^{(i)}$ be constructed by replacing $D_i$ with $D^\prime_i$ in $D$. Let $I$ be a randomly selected element of $\tilde{\mathsf{s}}$.  In this paper, we refer to the quantity
\begin{align}
\sigma^{(2,1)}
&=\mathbb{E}\left[
\big(
\psi(D_i, \hat{\eta} \left(D_{\tilde{\mathsf{s}}}\right))
-
\psi(D_i, \hat{\eta} (\tilde{D}^{(I)}_{\tilde{\mathsf{s}}}))
\big)^2\right] \label{eq: sample stable app}
\end{align}
as the $(2,1)$-sample stability. In particular, in the main text, we restrict attention to settings where the bound
\begin{equation}\label{eq: stability specialize}
\sigma^{(2,1)} \lesssim \left(\frac{1}{n-b}\right)^2
\end{equation}
holds. 

\cite{chen2022debiased} consider settings where the nuisance parameter estimator $\hat{\eta}$ is not constructed with sample-splitting. Likewise, they consider moments of the form
\begin{align}
\tilde{\sigma}^{(2,1)}
&=\mathbb{E}\left[
\left(
\psi(D_i, \hat{\eta} \left(D\right))
-
\psi(D_i, \hat{\eta} (\tilde{D}^{(i)}))
\right)^2\right] ~, \label{eq: sample stable chen}
\end{align}
where $i$ is an arbitrary element of $[n]$. Observe that, in contrast to \eqref{eq: sample stable app}, the moment \eqref{eq: sample stable chen} is evaluated using the same observation $D_i$ used to perturb the nuisance parameter estimate---and that the nuisance parameter estimate is evaluated using the complete data. Roughly speaking, \cite{chen2022debiased} show that, under some regularity conditions, DML estimators are consistent and asymptotically normal if the bound
\begin{equation}\label{eq: stability chen}
\tilde{\sigma}^{(2,1)} \lesssim n^{-1}
\end{equation}
holds for a suitable choice of the function $\psi(\cdot,\cdot)$. Moreover, they show that, under certain conditions on the choice of tuning parameters, nuisance parameter estimators constructed with a bagged 1-nearest-neighbor estimator satisfies this bound.

Although the right-hand-side of the bound \eqref{eq: stability chen} is larger than the bound \eqref{eq: stability specialize}, control of the moment \eqref{eq: sample stable chen} can be more difficult to achieve than control of the moment \eqref{eq: sample stable app}. To see the substantive differences between the conditions \eqref{eq: stability specialize} and \eqref{eq: sample stable chen}, we consider an artificial, but illustrative, example adapted from an example given in the introduction to \cite{chernozhukov2018double}. Suppose that the observation $D_i$ consists of the real-valued quantities $Y_i$, $W_i$, and $X_i$. Furthermore, suppose that the nuisance parameter $\eta(x)$ denotes the conditional expectation $\mathbb{E}[W_i \mid X_i = x]$ and that
\begin{equation}\label{eq: stylized}
\psi(D_i, \eta) = Y_i (W_i - \mathbb{E}[W_i \mid  X_i]) = Y_i (W_i - \eta(x)) ~.
\end{equation}
Suppose that the nuisance parameter estimator takes the form
\begin{equation}\label{eq: weighted est}
\hat{\eta}(D_{\mathsf{\tilde{s}}})(x) = \sum_{j \in \mathsf{\tilde{s}}} w_j(x) \hat{\eta}_j(x)
\end{equation}
where $\tilde{\mathsf{s}}$ is an arbitrary subset of $[n]$, $\hat{\eta}_j(x)$ is an estimator of the nuisance parameter $\eta(x)$ based on the observation $D_j$, and the weight $w_j(x)$ again depends on the observation $D_j$ and the set $\tilde{\mathsf{s}}$. We refer the reader to discussion in Section 1 of \cite{chernozhukov2018double} and Section 2 of \cite{chen2022debiased} for further context on why understanding quantities of the form \eqref{eq: stylized} is essential for establishing the consistency of DML estimators.\footnote{In particular, establishing the Stochastic Equicontinuity of a given Neyman Orthogonal moment amounts to bounding scaled averages of quantities of the form \eqref{eq: stylized}.}

To study the sample stability \eqref{eq: sample stable app} in this example, i.e., in the sense considered in this article, we are interested in the difference
\begin{align}
\psi(D_i, \hat{\eta} \left(D_{\tilde{\mathsf{s}}}\right))
-
\psi(D_i, \hat{\eta} (\tilde{D}^{(I)}_{\tilde{\mathsf{s}}}))
& = Y_i( w_I(X_i) \hat{\eta}_I(X_i) - w^\prime_I(X_i) \hat{\eta}^\prime_I(X_i))\label{eq: example diff}
\end{align}
where $w^\prime_I(X_i)$ and $\hat{\eta}^\prime_I(X_i))$ denote the updated weight and estimator associated with the perturbed data $\tilde{D}^{(I)}_{\tilde{\mathsf{s}}}$. On the other hand, in order to consider the stability \eqref{eq: sample stable chen}, i.e., in the sense of \cite{chen2022debiased}, we are interested in the difference
\begin{align}
\psi(D_i, \hat{\eta} \left(D\right))
-
\psi(D_i, \hat{\eta} (\tilde{D}^{(i)}))
& =  Y_i( w_i(X_i) \hat{\eta}_i(X_i) - w^\prime_i(X_i) \hat{\eta}^\prime_i(X_i)) \label{eq: chen diff}
\end{align}
where, as before, $w^\prime_i(X_i)$ and $\hat{\eta}^\prime_i(X_i)$ denote the updated weight and estimator associated with the perturbed data $\tilde{D}^{(i)}$.

The quantities \eqref{eq: example diff} and \eqref{eq: chen diff} will behave differently if the nuisance parameter estimator exhibits overfitting. In particular, to take a highly stylized example, suppose that the weights satisfy
\begin{equation}
w_i(X_j) = 
\begin{cases}
\vert \tilde{\mathsf{s}} \vert^{-1/3}, & i = j,\\
\vert \tilde{\mathsf{s}} \vert^{-1}, & \text{otherwise.}
\label{eq: overfitting weights}
\end{cases}
\end{equation}
That is, the nuisance parameter estimator \eqref{eq: weighted est} meaningfully up-weights the estimate obtained from the evaluation point $X_j$, if $X_j$ is part of the training data $\tilde{\mathsf{s}}$, and is otherwise equally balanced. Overfitting to the training data, in a perhaps less stylized form, is a well-known property of some machine learning estimators. 

It is easy to see that, in this case, if all other quantities are bounded, we should expect that 
\begin{align}
\sigma^{(2,1)}
&=\mathbb{E}\left[
\big(
\psi(D_i, \hat{\eta} \left(D_{\tilde{\mathsf{s}}}\right))
-
\psi(D_i, \hat{\eta} (\tilde{D}^{(I)}_{\tilde{\mathsf{s}}}))
\big)^2\right] \nonumber \\
& = \mathbb{E}\left[
\big(
Y_i( w_I(X_i) \hat{\eta}_I(X_i) - w^\prime_I(X_i) \hat{\eta}^\prime_I(X_i))
\big)^2\right] 
 \lesssim \left(\frac{1}{ \vert \tilde{\mathsf{s}} \vert } \right)^{2}\label{eq: stable evaluate}
\end{align}
whereas
\begin{align}
\tilde{\sigma}^{(2,1)}
&=\mathbb{E}\left[
\left(
\psi(D_i, \hat{\eta} \left(D\right))
-
\psi(D_i, \hat{\eta} (\tilde{D}^{(i)}))
\right)^2\right] \nonumber \\
& =\mathbb{E}\left[
\left(
Y_i( w_i(X_i) \hat{\eta}_i(X_i) - w^\prime_i(X_i) \hat{\eta}^\prime_i(X_i))
\right)^2\right]
 \lesssim \left(\frac{1}{ \vert \tilde{\mathsf{s}} \vert } \right)^{2/3}~.\label{eq: stable chen evaluate}
\end{align}
In particular, we can see that all that is needed for us to expect the moment \eqref{eq: stable evaluate} to be approximately of order $O((n-b)^{-2})$, is that the \emph{randomly selected} weights $w_I(X_i)$ \emph{tend} to be close to, or smaller than, $(n-b)^{-1}$. The moment \eqref{eq: stable chen evaluate}, by contrast, is highly sensitive to the weight $w_i(X_i)$ placed on the evaluation point $X_i$.

The main point of this example is that establishing that a nuisance parameter estimator is stable, in the sense of \cite{chen2022debiased}, requires showing that it does not overfit its training data, i.e., that evaluations of points \emph{in-sample} are not overly-determined by a few observations. By contrast, all that is needed to establish that a nuisance parameter estimator is sample stable, in the sense used in this article, is that evaluations of points \emph{out-of-sample} are not overly concentrated on a small number of training samples. In general, we should expect that out-of-sample balance is more likely to occur in practice than in-sample balance.

\subsubsection{Proof of Theorem \ref{thm: train}\label{app: train proof}}

By \cref{assu: regularity,assu: invertibility},
the objective function for the estimator (\ref{eq: eta M estimator})
is strongly convex. Thus, there is a unique solution to (\ref{eq: eta M estimator})
for any data $D$. Let $\hat{\eta}$ and $\hat{\eta}^{\prime}$ denote
the solutions to (\ref{eq: eta M estimator}) for the data $D$ and
$\tilde{D}^{(\mathsf{q})}$ respectively. Let $\partial f\left(x\right)$ denote
the subgradient set of the function $x\mapsto f\left(x\right)$. The
Karush-Kuhn-Tucker condition for the program (\ref{eq: eta M estimator})
is given by 
\begin{equation}
\bar{\nabla}_{\tilde{\mathsf{s}}}\ell\left(D,\eta\right)+\lambda_{2,n}\hat{\eta}+\lambda_{1,n}\hat{z}=0,\label{eq: KKT}
\end{equation}
where $\hat{z}\in\partial\|\hat{\eta}\|_{1}$ is the subgradient associated
with the Lasso penalty. Observe that, in this case, $\hat{z}\in\text{sign}\left(\hat{\eta}\right)$,
where we set $\text{sign}\left(0\right)=\left[-1,1\right]$. Let $\hat{z}$
and $\hat{z}^{\prime}$ denote the subgradients obtained from $D$
and $\tilde{D}^{(\mathsf{q})}$, respectively. As $\ell\left(d,\cdot\right)$
is twice continuously differentiable under \cref{assu: regularity}
(i), we have that
\begin{flalign}
 & \bar{\nabla}_{\tilde{\mathsf{s}}}\ell\left(D,\hat{\eta}^{\prime}\right)+\lambda_{2,n}\hat{\eta}^{\prime}+\lambda_{1,n}\hat{z}\nonumber \\
 & =\bar{\nabla}_{\tilde{\mathsf{s}}}\ell\left(D,\hat{\eta}\right)+\lambda_{2,n}\hat{\eta}
 +\lambda_{1,n}\hat{z}+\left(\bar{\nabla}_{\tilde{\mathsf{s}}}^{(2)}\ell\left(D,\tilde{\eta}\right)+\lambda_{2,n}I_{d}\right)\left(\hat{\eta}-\hat{\eta}^{\prime}\right)\nonumber \\
 & =\left(\bar{\nabla}_{\tilde{\mathsf{s}}}^{(2)}\ell\left(D,\tilde{\eta}\right)
 +\lambda_{2,n}I_{d}\right)\left(\hat{\eta}-\hat{\eta}^{\prime}\right)\label{eq: Taylor term}
\end{flalign}
for some vector $\tilde{\eta}$, by a Taylor expansion and the optimality
condition (\ref{eq: KKT}). On the other hand, we have that 
\begin{flalign}
\bar{\nabla}_{\tilde{\mathsf{s}}}\ell\left(D,\hat{\eta}^{\prime}\right)+\lambda_{2,n}\hat{\eta}^{\prime}+\lambda_{1,n}\hat{z} 
& =\bar{\nabla}_{\tilde{\mathsf{s}}}\ell(\tilde{D}^{(\mathsf{q})},\hat{\eta}^{\prime})+\lambda_{2,n}\hat{\eta}^{\prime}+\lambda_{1,n}\hat{z}^{\prime}\nonumber \\
& \quad+\frac{1}{n-b}\left(\sum_{i \in \mathsf{q}} \nabla \ell(D^\prime_i, \hat{\eta}^{\prime}) - \nabla\ell(D_i, \hat{\eta}^{\prime})\right)
+\lambda_{1,n}\left(\hat{z}-\hat{z}^{\prime}\right)\nonumber \\
 & =\frac{1}{n-b}\left(\sum_{i \in \mathsf{q}} \nabla \ell(D^\prime_i, \hat{\eta}^{\prime}) - \nabla\ell(D_i, \hat{\eta}^{\prime})\right)
+\lambda_{1,n}\left(\hat{z}-\hat{z}^{\prime}\right)\label{eq: swap term}
\end{flalign}
again by the optimality condition (\ref{eq: KKT}). Thus, we have
that
\begin{flalign}
& \left(\bar{\nabla}_{\tilde{\mathsf{s}}}^{(2)}\ell\left(D,\tilde{\eta}\right)+\lambda_{2,n}I_{p}\right)\left(\hat{\eta}-\hat{\eta}^{\prime}\right) \nonumber \\
& =\frac{1}{n-b}\left(\sum_{i \in \mathsf{q}} \nabla \ell(D^\prime_i, \hat{\eta}^{\prime}) -\nabla \ell(D_i, \hat{\eta}^{\prime})\right)
+\lambda_{1,n}\left(\hat{z}-\hat{z}^{\prime}\right)\label{eq: basic equality}
\end{flalign}
by (\ref{eq: Taylor term}) and (\ref{eq: swap term}). Observe that
the the matrix $\bar{\nabla}_{\tilde{\mathsf{s}}}^{(2)}\ell\left(D,\tilde{\eta}\right)+\lambda_{2,n}I_{p}$
is invertible by \cref{assu: invertibility}. Thus, we find
that 
\begin{flalign*}
\hat{\eta}-\hat{\eta}^{\prime} & =\frac{1}{n-b}\left(\bar{\nabla}^{(2)}\ell\left(D,\tilde{\eta}\right)+\lambda_{2}I_{p}\right)^{-1}
\left(\sum_{i \in \mathsf{q}} \nabla \ell(D^\prime_i, \hat{\eta}^{\prime}) - \nabla\ell(D_i, \hat{\eta}^{\prime})\right)\\
 & +\lambda_{1,n}\left(\bar{\nabla}^{(2)}\ell\left(D,\tilde{\eta}\right)+\lambda_{2}I_{d}\right)^{-1}\left(\hat{z}-\hat{z}^{\prime}\right)
\end{flalign*}
and that consequently
\begin{flalign}
\|\hat{\eta}-\hat{\eta}^{\prime}\|_{2} 
& \lesssim\frac{1}{n-b}
\frac{ \| \sum_{i \in \mathsf{q}} \nabla \ell(D^\prime_i, \hat{\eta}^{\prime}) - \nabla\ell(D_i, \hat{\eta}^{\prime})\|_2}{\rho+\lambda_{2,n}} 
+\lambda_{1,n}\frac{\| \hat{z}-\hat{z}^{\prime} \|_2}{\rho+\lambda_{2,n}}  \nonumber \\
& \lesssim\frac{1}{n-b}
\frac{\| \sum_{i \in \mathsf{q}} \nabla \ell(D^\prime_i, \hat{\eta}^{\prime}) - \nabla\ell(D_i, \hat{\eta}^{\prime})\|_2}{\rho+\lambda_{2,n}} 
+\lambda_{1,n}\frac{p}{\rho+\lambda_{2,n}}\label{eq: agg lipschitz}
\end{flalign}
by \cref{assu: invertibility}. 

Now, observe that
\begin{align}\label{eq: stability re-write}
\sigma^{(r,q)}
 & = \mathbb{E}\left[
  \left(
  \psi(D_i, \hat{\eta}) - \psi(D_i, \hat{\eta}^\prime)
  \right)^{r}
  \right] \nonumber \\
  & \lesssim
  \mathbb{E} \left[\|\hat{\eta}-\hat{\eta}^{\prime}\|_{2}^r  \right] \nonumber \\
    & \lesssim\mathbb{E}\left[
    \left(
    \frac{1}{n-b}
    \frac{ \| \sum_{i \in \mathsf{q}} \nabla \ell(D^\prime_i, \hat{\eta}^{\prime}) - \nabla\ell(D_i, \hat{\eta}^{\prime})\|_2}{\rho+\lambda_{2,n}} 
   + \lambda_{1,n}\frac{p}{\rho+\lambda_{2,n}}
    \right)^r
    \right]\\
   & \lesssim 2^r \left(\frac{1}{n-b}\frac{1}{\rho+\lambda_{2,n}}\right)^{r}
    \mathbb{E}\left[ \| \sum_{i \in \mathsf{q}} \nabla \ell(D^\prime_i, \hat{\eta}^{\prime}) - \nabla\ell(D_i, \hat{\eta}^{\prime})\|^r_2 \right] \nonumber \\
    & +2^r \left(\lambda_{1,n}\frac{p}{\rho+\lambda_{2,n}}\right)^{r}~,
\end{align}
where the first inequality follows from \cref{assu: psi eta lipschitz}, the second inequality follows from \eqref{eq: agg lipschitz}, and the third inequality follows from the Binomial Theorem and Cauchy-Schwarz. By \cref{assu: orlicz}, we have that 
\begin{equation}
\| \sum_{i\in\mathsf{q}} \nabla_j \ell_j(D^\prime_i, \hat{\eta}^{\prime}) - \nabla\ell_j(D_i, \hat{\eta}^{\prime}) \|_{\psi_1}
 \lesssim q \| \nabla_j \ell_j(D^\prime_i, \hat{\eta}^{\prime}) - \nabla\ell_j(D_i, \hat{\eta}^{\prime}) \|_{\psi_1}
 \lesssim q \kappa ~.
\end{equation}
Consequently, we have that
\begin{align}
& \mathbb{E}\left[ \| \sum_{i \in \mathsf{q}} \nabla \ell(D^\prime_i, \hat{\eta}^{\prime}) - \nabla\ell(D_i, \hat{\eta}^{\prime})\|^r_2 \right] \nonumber \\
& \lesssim 
\sum_{j \in [p]}  \mathbb{E}\left[ \bigg\vert \sum_{i\in\mathsf{q}} \nabla_j \ell_j(D^\prime_i, \hat{\eta}^{\prime}) - \nabla\ell_j(D_i, \hat{\eta}^{\prime})  \bigg\vert^{r/2} \right] \tag{Jensen}\\
& 
\lesssim
p ((r/2)!)^{r/2} (\kappa q)^{r/2}~
\end{align}
where, for the final inequality, we have used the fact that $\| X \|_{r/2} \lesssim ((r/2)!) \|X\|_{\psi_1} $ (see e.g., Section 2.2 of \cite{van1996weak}). Putting the pieces together, we find that
\begin{equation}
\sigma^{(r,q)} 
\lesssim 2^r ((r/2)!)^{r/2} p \left(\frac{q^{1/2}}{n-b}\frac{\kappa^{1/2}}{\rho+\lambda_{2,n}}\right)^{r}
     +2^r \left(\lambda_{1,n}\frac{p}{\rho+\lambda_{2,n}}\right)^{r}~,
\end{equation}
as required.\hfill\qed

\section{Proofs for Results Stated in \cref{sec: Reproducible Aggregation}\label{sec: proof asymptotic}}

\subsection{Proof of \cref{eq: general reproducibility}\label{app: proof of general reproducibility}}
For each collection $\mathsf{r} = (\mathsf{s}_{j})_{j=1}^{k}$ in $\mathcal{R}_{n,k,b}$, we write 
\[
\bar{a}(\mathsf{r},D)=\frac{1}{k}\sum_{j=1}^{k}T(\mathsf{s}_{j},D) - \mathbb{E}\left[T(\mathsf{s}_{j},D)\mid D\right].
\]
Define the oracle stopping time 
\[
g^{\star}=
\underset{g\geq2}{\arg\min}
\left\{ v_{g,k}(D) \leq\frac{1}{2}\left(\frac{\xi}{z_{1-\beta/2}}\right)^{2} \right\}.
\]
Observe that $v_{g,k}(D)= (1/g)\cdot v_{1,k}(D)$ as the collections of cross-splits are drawn independently and identically. Moreover, we have that
\begin{equation}
\frac{\hat{v}(\mathsf{R}_{g,k},D)}{v_{g,k}(D)}
=
\frac{g \hat{v}(\mathsf{R}_{g,k},D)}{v_{1,k}(D)}
\overset{\text{a.s}}{\to}1\label{eq: var consistency}
\end{equation}
as $g\to\infty$, by the strong law of large numbers. Now, observe that
\begin{equation}
\frac{\hat{g} \cdot \hat{v}(\mathsf{R}_{\hat{g},k},D)}{v_{1,k}(D)}
\leq
\frac{\hat{g}}{2v_{1,k}(D)}\left(\frac{\xi}{z_{1-\beta/2}}\right)^{2}
\leq
\frac{\hat{g}\cdot\hat{v}(\mathsf{R}_{\hat{g}-1,k},D)}{v_{1,k}(D)},\label{eq: sandwich}
\end{equation}
by definition. Consequently, as $\hat{g}\to\infty$ and 
\[
g^\star \left(2v_{1,k}(D) \left(\frac{z_{1-\beta/2}}{\xi}\right)^{2}\right) \to 1
\]
as $\xi\to0$, we have that
\begin{equation}
\hat{g}/g^{\star}\overset{\text{a.s}}{\to} 1\label{eq: cond var a.s limit}
\end{equation}
as $\xi\to0$ by (\ref{eq: var consistency}) and (\ref{eq: sandwich}).

Define the objects
\begin{flalign}
U\left(\mathsf{R}_{g,k},D\right) & =\frac{1}{g^\star}\left(\sum_{i=1}^{g}\bar{a}\left(\mathsf{r}_{i},D\right)-\sum_{i=1}^{g^{\star}}\bar{a}\left(\mathsf{r}_{i},D\right)\right)\quad\text{and}\label{eq: maximal term proof}\\
Q\left(\mathsf{R}_{g,k},D\right) & =\left(1-\frac{g}{g^{\star}}\right)\left(a(\mathsf{R}_{g,k},D)-\mathbb{E}\left[T(\mathsf{s}_{j},D)\mid D\right]\right)\label{eq: cross product proof}
\end{flalign}
Observe that we can decompose
\begin{flalign}
 & \left(a(\mathsf{R}_{\hat{g},k},D)-a(\mathsf{R}_{\hat{g}^{\prime},k}^{\prime},D)\right)/\sqrt{2 v_{g^\star,k}(D)}\nonumber\\
 &  =  \sqrt{\frac{g^{\star}}{2 v_{1,k}(D)}}\left(a(\mathsf{R}_{g^{\star},k},D)-a(\mathsf{R}_{g^{\star},k}^{\prime},D)\right)\nonumber\\
 & +\sqrt{\frac{g^{\star}}{2 v_{1,k}(D)}} \left(U(\mathsf{R}_{\hat{g},k},D)-U(\mathsf{R}_{\hat{g}^{\prime},k}^{\prime},D)\right)
  +\sqrt{\frac{g^{\star}}{2 v_{1,k}(D)}}\left(Q(\mathsf{R}_{\hat{g},k},D)-Q(\mathsf{R}_{\hat{g}^{\prime},k}^{\prime},D)\right).\label{eq: sequential decomposition}
\end{flalign}
First, we have that
\[
\sqrt{\frac{g^{\star}}{2 v_{1,k}(D)}} \left(Q(\mathsf{R}_{\hat{g},k},D)-Q(\mathsf{R}_{\hat{g}^{\prime},k}^{\prime},D)\right) = o_p(1)
\]
almost surely as $\xi \to 0$ by \eqref{eq: cond var a.s limit}. To handle the terms involving \eqref{eq: maximal term proof}, fix $\varepsilon>0$ and observe that 
\begin{flalign}
 & P\left\{ \bigg\vert\sum_{i=1}^{\hat{g}}a(\mathsf{r}_{i},D)-\sum_{i=1}^{g^\star}a(\mathsf{r}_{i},D)\bigg\vert
 >
 \varepsilon\sqrt{g^\star}\mid D\right\} \nonumber \\
 & \leq P\left\{ \bigg\vert\sum_{i=1}^{\hat{g}}a(\mathsf{r}_{i},D)-\sum_{i=1}^{g^\star}a(\mathsf{r}_{i},D)\bigg\vert
 >
 \varepsilon\sqrt{g^\star},\hat{g}\in\left[g^\star\left(1-\varepsilon^{3}\right),g^\star\left(1+\varepsilon^{3}\right)\right]\mid D\right\} \nonumber \\
 & +P\left\{ \bigg\vert\sum_{i=1}^{\hat{g}}a(\mathsf{r}_{i},D)-\sum_{i=1}^{g^\star}a(\mathsf{r}_{i},D)\bigg\vert
 >
 \varepsilon\sqrt{g^\star},\hat{g}^{\prime}\not\in\left[g^\star\left(1-\varepsilon^{3}\right),g^\star\left(1+\varepsilon^{3}\right)\right]\mid D\right\} \nonumber \\
 & \leq P\left\{ \max_{g^\star\left(1-\varepsilon^{3}\right)\leq g\leq g^\star}\bigg\vert\sum_{i=1}^{g}a(\mathsf{r}_{i},D)\bigg\vert
 >\varepsilon\sqrt{g^\star}\mid D\right\} \label{eq: break into three pieces}\\
 & +P\left\{ \max_{g^\star\leq g\leq\left(1+\varepsilon^{3}\right)g^\star}\bigg\vert\sum_{i=1}^{g}a(\mathsf{r}_{i},D)\bigg\vert
 >\varepsilon\sqrt{g^{\star}}\mid D\right\} \nonumber \\
 & +P\left\{ \hat{g}\not\in\left[g^\star\left(1-\varepsilon^{3}\right),g^\star\left(1+\varepsilon^{3}\right)\right]\mid D\right\} \nonumber~.
\end{flalign}
The third term in (\ref{eq: break into three pieces}) is smaller than $\varepsilon$ for all sufficiently small $\xi$ by \eqref{eq: cond var a.s limit}. To handle the first two terms, observe that
\begin{flalign*}
 & P\left\{ \max_{g^\star\leq g\leq\left(1+\varepsilon^{3}\right)g^\star}\bigg\vert\sum_{i=1}^{g}a(\mathsf{r}_{i},D)\bigg\vert  > \varepsilon\sqrt{\hat{g}}\mid D\right\} \\
& \leq\frac{1}{\varepsilon^{2}}\frac{1}{g^\star}\Var\left(\sum_{i=1}^{\lfloor \varepsilon^{3} g^\star \rfloor}a(\mathsf{r}_{i},D)\mid D\right) \\
& \leq \varepsilon \Var(a(\mathsf{r}_{i},D)),
\end{flalign*}
almost surely, where the first inequality follows from Kolmogorov's maximal inequality
(see e.g., Proposition 2.3.16 of \cite{dembo2021probability}) and the second inequality follows from the fact that the cross-splits $\mathsf{r}_i$ are independent. An analogous bound
holds for the remaining term.  Thus, we have
\[
\sqrt{\frac{g^{\star}}{2 v_{1,k}(D)}}\left(U(\mathsf{R}_{\hat{g},k},D)-U(\mathsf{R}_{\hat{g}^{\prime},k}^{\prime},D)\right)=o_{p}\left(1\right)
\]
almost surely as $\xi\to0$. Hence, we have that 
\begin{flalign*}
 & P\left\{\big\vert a(\mathsf{R}_{\hat{g},k},D)-a(\mathsf{R}_{\hat{g}^{\prime},k}^{\prime},D)\big\vert>\xi\mid D\right\} \\
 & =P\left\{ \bigg\vert\sqrt{\frac{g^{\star}}{2 v_{1,k}(D)}}\left(\frac{1}{g^{\star}}\sum_{i=1}^{g^{\star}}a(\mathsf{r}_{i,k},D)-a(\mathsf{r}_{i,k}^{\prime},D)\right)\bigg\vert
 >
 z_{1-\beta/2}\mid D\right\} +o\left(1\right)\\
 & =1-\beta+o\left(1\right)
\end{flalign*}
as $\xi\to0$ almost surely, by the central limit theorem.\hfill\qed

\subsection{Proof of \cref{thm: sequential Jensen}}

Recall that the cross-split $\mathsf{r}$ is drawn independently and uniformly from the collection $\mathcal{R}_{n,k,b}$. The result is based on the bound
\begin{flalign}
\mathbb{E}\left[ f( a(\mathsf{R}_{g,k}, D) ) \right] 
& = 
\mathbb{E}\left[ \mathbb{E}\left[ f\left( \frac{1}{\hat{g}} \sum_{i=1}^{\hat{g}} a(\mathsf{r}_i, D) \right)  \mid \hat{g} \right] \right] \nonumber \\
& 
\leq \mathbb{E}\left[ \mathbb{E}\left[ \frac{1}{\hat{g}} \sum_{i=1}^{\hat{g}} f\left( a(\mathsf{r}_i, D) \right)  \mid \hat{g} \right] \right]  \tag{Jensen}\\
&
= \mathbb{E}\left[ \frac{1}{\hat{g}} \sum_{i=1}^{\hat{g}} f\left( a(\mathsf{r}_{i}, D) \right) \right] \nonumber \\
&
= 
\mathbb{E}\left[ f\left( a(\mathsf{r}, D) \right) \right] 
+
\mathbb{E}\left[ \frac{1}{\hat{g}} \sum_{i=1}^{\hat{g}} \left(f\left( a(\mathsf{r}_{i}, D) \right) - \mathbb{E}\left[ f\left( a(\mathsf{r}, D) \right) \mid D \right] \right)\right]~. \label{eq: jensen pre martingale}
\end{flalign}
In particular, it will suffice to show that the second term in \eqref{eq: jensen pre martingale} is weakly negative. To do this, we apply the following optimal stopping theorem. 

\begin{theorem}[Theorem 5.7.5, \cite{durrett2019probability}]\label{thm: optimal stopping}
Suppose that $X_g$ is a real-valued supermartingale, that $\mathcal{F}_g$ is the $\sigma$-algebra generated by $X_1,\ldots,X_g$, and that there exists some constant $B$ such that 
\begin{equation}\label{eq: martingale stopping condition}
\mathbb{E}\left[ \vert X_{g+1} - X_g \vert \mid \mathcal{F}_g \right] \leq B
\end{equation}
almost surely. If $\hat{g}$ is a stopping time with respect to the filtration $\{\mathcal{F}_g\}_{g=1}^\infty$, then $\mathbb{E}\left[ X_{\hat{g}} \right] \leq \mathbb{E}\left[X_1 \right]$.
\end{theorem}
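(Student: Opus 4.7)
The plan is to prove this optional stopping theorem by the classical two-step truncation argument followed by dominated convergence. First, I would introduce the stopped process $Y_g = X_{\hat{g} \wedge g}$ and verify that $Y$ is itself a supermartingale with respect to $\{\mathcal{F}_g\}$. The verification rests on the telescoping identity $Y_{g+1} - Y_g = \mathbf{1}\{\hat{g} > g\}(X_{g+1} - X_g)$, together with the observation that $\{\hat{g} > g\} \in \mathcal{F}_g$ because $\hat{g}$ is a stopping time. Conditioning on $\mathcal{F}_g$ and invoking the supermartingale property of $X$ yields $\mathbb{E}[Y_{g+1} \mid \mathcal{F}_g] \leq Y_g$ almost surely. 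Since $\hat{g} \wedge N$ is a bounded stopping time for every finite $N$, a telescoping summation then produces $\mathbb{E}[X_{\hat{g} \wedge N}] = \mathbb{E}[Y_N] \leq \mathbb{E}[Y_1] = \mathbb{E}[X_1]$.

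Second, I would pass to the limit $N \to \infty$. Since $\hat{g}$ is finite almost surely, $X_{\hat{g} \wedge N} \to X_{\hat{g}}$ pointwise. To carry the inequality through the limit, I would apply dominated convergence with the uniform-in-$N$ bound
\begin{equation*}
|X_{\hat{g} \wedge N}| \leq |X_1| + \sum_{i=1}^{\hat{g}-1} |X_{i+1} - X_i|.
\end{equation*}
Rewriting the sum as $\sum_{i \geq 1} \mathbf{1}\{i < \hat{g}\}\, |X_{i+1} - X_i|$ and noting that $\{i < \hat{g}\} \in \mathcal{F}_i$, the tower property combined with the hypothesis $\mathbb{E}[|X_{i+1} - X_i| \mid \mathcal{F}_i] \leq B$ yields the Wald-type estimate
\begin{equation*}
\mathbb{E}\left[\sum_{i=1}^{\hat{g}-1} |X_{i+1} - X_i|\right] \leq B \cdot \mathbb{E}[\hat{g} - 1].
\end{equation*}
Provided $\mathbb{E}[\hat{g}] < \infty$, the dominator is integrable and dominated convergence delivers $\mathbb{E}[X_{\hat{g}}] = \lim_{N \to \infty} \mathbb{E}[X_{\hat{g} \wedge N}] \leq \mathbb{E}[X_1]$.

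The main obstacle is a hypothesis issue rather than a technical one. The conclusion can fail without some integrability condition on $\hat{g}$---simple random walk stopped at its first return to zero is the canonical counterexample---so the theorem as quoted tacitly presupposes $\mathbb{E}[\hat{g}] < \infty$. This omitted hypothesis must be verified in any downstream application; in the present paper, it can be controlled using the variance-threshold stopping rule in \cref{alg: sequential aggregation} together with the assumption that $v_{1,k}(D)$ is finite. With $\mathbb{E}[\hat{g}] < \infty$ in hand, the remainder of the argument is entirely classical, requiring nothing beyond the stopped-supermartingale identity and the Wald-type control of the accumulated conditional increments.
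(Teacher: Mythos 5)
The paper states this result as a citation to Durrett (Theorem 5.7.5) and does not supply its own proof, so there is no in-paper argument to compare against. Your proposal reproduces, in essence, Durrett's argument: the stopped process $Y_g = X_{\hat{g}\wedge g}$ is a supermartingale, which gives $\mathbb{E}[X_{\hat{g}\wedge N}]\leq\mathbb{E}[X_1]$ for bounded stopping times, and the increment bound \eqref{eq: martingale stopping condition} yields the Wald-type dominator that establishes uniform integrability and allows passage to the limit $N\to\infty$.

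More importantly, your observation about the hypotheses is correct and substantive: as quoted, the theorem is missing the integrability assumption $\mathbb{E}[\hat{g}]<\infty$, without which the conclusion fails (the simple random walk stopped at its first return to the origin is the standard counterexample). Durrett's Theorem 5.7.5 does in fact include this hypothesis, and it is the crucial ingredient that turns the increment bound into the domination needed for dominated convergence. Your note that the hypothesis needs to be verified in the downstream application is also apt. In the paper's use of this theorem (\cref{thm: sequential Jensen}), the stopping time $\hat{g}$ from \cref{alg: sequential aggregation} can be shown to satisfy $\mathbb{E}[\hat{g}\mid D]<\infty$: conditional on $D$, the statistics $a(\mathsf{r}_i,D)$ take values in a finite set (since $\mathcal{R}_{n,k,b}$ is finite), so the variance estimator is a bounded average with exponential concentration, yielding geometric tails for $\hat{g}$. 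This verification, however, is omitted in the paper and would be a worthwhile addition to the proof of \cref{thm: sequential Jensen}.
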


\noindent Consider the sequence of random variables
\begin{equation}
X_g = \frac{1}{g} \sum^g_{i=1}  \left(f\left( a(\mathsf{r}_{i}, D) \right) - \mathbb{E}\left[ f\left( a(\mathsf{r}, D) \right) \mid D \right]  \right)
\end{equation}
Observe that 
\begin{flalign}
\mathbb{E}\left[ X_{g+1} \mid \mathcal{F}_g, D \right] 
&
= \mathbb{E}\left[ \frac{1}{g+1} \sum_{i=1}^{g+1}  \left(f\left( a(\mathsf{r}_{i}, D) \right) - \mathbb{E}\left[ f\left( a(\mathsf{r}, D) \right) \mid D \right]  \right) \mid \mathcal{F}_g, D \right] \nonumber \\
& 
= \frac{1}{g+1} \mathbb{E}\left[ f\left( a(\mathsf{r}_{i}, D) \right) - \mathbb{E}\left[ f\left( a(\mathsf{r}, D) \right) \mid D \right] \mid \mathcal{F}_g, D \right] \nonumber \\
& 
+ 
\frac{g}{g+1} \frac{1}{g} \sum_{i=1}^{g} \left(f\left( a(\mathsf{r}_{i}, D) \right) - \mathbb{E}\left[ f\left( a(\mathsf{r}, D) \right) \mid D \right]  \right) \nonumber \\
& = \frac{g}{g+1} X_g~.
\end{flalign}
Consequently, $X_g$ is a supermartingale. To verify the condition \eqref{eq: martingale stopping condition}, observe that, as the collection $\mathcal{R}_{n,k,b}$ is finite, there exists some positive, finite function $B(D)$ such that
\begin{equation}
f(a(\mathsf{r}, D)) \leq B(D)
\end{equation}
almost surely. Thus, we find that
\begin{flalign}
\mathbb{E}\left[ \vert X_{g+1} - X_g \vert \mid \mathcal{F}_g, D \right] 
&
= \mathbb{E}\left[ \bigg\vert \frac{1}{g+1} \sum_{i=1}^{g+1} f\left( a(\mathsf{r}_{i}, D) \right) - \frac{1}{g} \sum_{i=1}^{g} f\left( a(\mathsf{r}_{i}, D) \right) \bigg\vert \mid \mathcal{F}_g, D \right] \nonumber \\
&
\leq \frac{1}{g+1} \mathbb{E}\left[ \vert f\left( a(\mathsf{r}, D) \right) \vert \right]
+
\frac{1}{g(g+1)} \sum^g_{i=1} \mathbb{E}\left[ \vert f\left( a(\mathsf{r}_i, D) \right) \vert \right] \nonumber \\
&
\leq \frac{2}{g+1} B(D)
\end{flalign}
almost surely. Hence, by \cref{thm: optimal stopping}, we have
\begin{flalign}
&\mathbb{E}\left[ \frac{1}{\hat{g}} \sum_{i=1}^{\hat{g}} \left(f\left( a(\mathsf{r}_{i}, D) \right) - \mathbb{E}\left[ f\left( a(\mathsf{r}, D) \right) \mid D \right] \right) \mid D \right] \nonumber \\
&\leq
\mathbb{E}\left[ f\left( a(\mathsf{r}, D) \right) - \mathbb{E}\left[ f\left( a(\mathsf{r}, D) \right) \mid D \right]  \mid D\right] = 0
\end{flalign}
almost surely, completing the proof.\hfill\qed

\section{General Non-Asymptotic Results\label{sec: general results}}

In this appendix, we give a general, non-asymptotic analysis of the performance of \cref{alg: sequential aggregation}. We begin, in \cref{sec: concentration and normal approx app}, by studying the concentration and normal approximation of the aggregate statistic $a(\mathsf{R}_{k,g}, D)$. Equipped with these results, in \cref{sec: app Reproducibility}, we characterize the accuracy of the nominal error rate of \cref{alg: sequential aggregation}. In \cref{sec: proof outline}, we show that these general results imply the results stated in \cref{sec: Guarantees}, by restricting attention to sample-stable, cross-split statistics. Proofs are given in \cref{sec: construction,sec: general main proofs}.

\subsection{Concentration and Normal Approximation\label{sec: concentration and normal approx app}}

We begin by giving a non-asymptotic characterization of the concentration and normal approximation of the statistic $a(\mathsf{R}_{g,k}, D)$. To state these results, we require a slightly more general definition of sample stability. 

\begin{defn}[General Sample Stability] \label{def: sample stability app}
Let $D^\prime$ denote an independent and identical copy of the data $D$. For each $\mathsf{q}\subseteq [n]$, let $\tilde{D}^{(\mathsf{q})}$ be constructed by replacing $D_i$ with $D^\prime_i$ in $D$ for each $i$ in $\mathsf{q}$. Fix a set $\mathsf{s}\subseteq\mathcal{S}_{n,b}$. Let $i$ be an arbitrary element of $\mathsf{s}$. Let $\mathsf{q}$ be a randomly selected subset of $\tilde{\mathsf{s}}$ of cardinality $q$. We refer to the quantities
\begin{align*}
\sigma_{\mathsf{valid}}^{(r)}
& =\mathbb{E}\left[
\big\vert 
\psi(D_I, \hat{\eta} \left(D_{\tilde{\mathsf{s}}}\right))
-
\psi(D^\prime_I, \hat{\eta} \left(D_{\tilde{\mathsf{s}}}\right))
\big\vert^{r}\right]
\quad\text{and}\\
\sigma_{\mathsf{train}}^{(r,q)}
&=\mathbb{E}\left[
\big\vert
\psi(D_i, \hat{\eta} \left(D_{\tilde{\mathsf{s}}}\right))
-
\psi(D_i, \hat{\eta} (\tilde{D}^{(\mathsf{q})}_{\tilde{\mathsf{s}}}))
\big\vert^{r}\right]
\end{align*}
as the $r$th-order validation and $(r,q)$th-order training sample stabilities, respectively. Similarly, we refer to the quantity 
\begin{equation} \label{eq: sample stability app}
\sigma^{(r)}_{\mathsf{max}} = \max\left\{\sigma_{\mathsf{valid}}^{(r)}, \sigma_{\mathsf{train}}^{(r,1)}\right\}
\end{equation}
as the $r$th-order full sample stability.
\end{defn}
\noindent In other words, we define separate stabilities associated with resampling data in the subsets $\mathsf{s}$ and $\tilde{\mathsf{s}}$. We refer to these subsets as the ``validation" and ``training'' samples, respectively. The training sample stability $\sigma_{\mathsf{train}}^{(r,q)}$ is identical to the sample stability defined in \cref{sec: Guarantees}.

With this in place, we give a large deviations bound for the statistic $a(\mathsf{R}_{g,k}, D)$ about its conditional mean. The primary difficulty is accommodating the dependence in the summands of $a(\mathsf{R}_{g,k},D)$ across elements of the same cross-split. We tackle this problem through the method of exchangeable pairs \citep{stein1986approximate}. In particular, we construct a suitable exchangeable pair with a coupling argument due to \cite{chatterjee2005concentration} and apply a method for deriving concentration inequalities with exchangeable pairs, also due to \cite{chatterjee2005concentration,chatterjee2007stein}. Some preliminary results that facilitate the application of this approach are given in \cref{sec: construction}. Proofs for results stated in this appendix are then given in \cref{sec: general main proofs}. 

\begin{theorem}
\label{thm: cross split concentration appendix}Suppose that \cref{assu: invariance,assu: linearity} hold, that the data $D$ are independently and identically distributed, and that the statistic $a(\mathsf{R}_{1,k}, D)$ has a non-zero variance conditional on $D$ almost surely. If the fourth-order split stability $\zeta^{(4)}$ is finite, then, for each $\varepsilon>0$, the inequality
\begin{align}
P\left\{ \vert a(\mathsf{R}_{g,k}, D) - \bar{a}\left(D\right) \vert
\leq 
\sqrt{\frac{2^4(2 - \varphi k - \varphi)^2}{\delta}\frac{\Gamma_{k,\varphi,b} \log(\varepsilon^{-1})}{g}}
\mid D\right\} \geq 1 - \frac{\varepsilon}{2}
\label{eq: cross-splitting concentration app}
\end{align}
holds with probability greater than $1-\delta$ as $D$ varies, where 
\begin{align}
\Gamma_{k,\varphi,b} & = \left(\frac{1-k\varphi}{1-\varphi}\right)4\sigma^{(2)}_{\mathsf{max}} + \left(\frac{k\varphi - \varphi}{1-\varphi}\right)\sigma^{(2,b-1)}_{\mathsf{train}}
\label{eq: Gamma 1 def}
\end{align}
and $\varphi = b/n$.
\end{theorem}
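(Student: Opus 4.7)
The plan is to exploit two complementary structures: across cross-splits, the $\mathsf{r}_1,\ldots,\mathsf{r}_g$ are i.i.d.\ conditional on $D$, so $a(\mathsf{R}_{g,k}, D)$ is a sample mean of $g$ conditionally i.i.d.\ summands; within a single cross-split, the $k$ summands are dependent but derive from a uniform random partition to which the method of exchangeable pairs of \cite{chatterjee2005concentration} applies. This suggests a two-layer argument: first reduce the $g$-sample tail to the conditional variance of a single cross-split, and then bound that variance via a carefully chosen exchangeable pair.

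Concretely, I would first condition on $D$ and write $a(\mathsf{R}_{g,k},D) - \bar{a}(D) = g^{-1} \sum_{i=1}^g (a(\mathsf{r}_i,D) - \bar{a}(D))$. The finiteness of the fourth-order split stability $\zeta^{(4)}$ gives enough moment control for a sub-Gaussian/Bernstein-type tail inequality for this i.i.d.\ sum, yielding conditional concentration at rate $\sqrt{v_{1,k}(D)\log(\varepsilon^{-1})/g}$ with probability at least $1-\varepsilon/2$. What remains is a high-probability (over $D$) bound of the form $v_{1,k}(D) \lesssim \Gamma_{k,\varphi,b}/\delta$, which, via Markov's inequality applied to $\mathbb{E}[v_{1,k}(D)]$, is equivalent to an unconditional bound $\mathbb{E}\bigl[(a(\mathsf{r},D)-\bar{a}(D))^2\bigr] \lesssim \Gamma_{k,\varphi,b}$.

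To produce that unconditional second-moment bound, introduce an exchangeable pair $(\mathsf{r},\mathsf{r}')$ on $\mathcal{R}_{n,k,b}$ by one step of a reversible chain: pick a uniformly random fold $j$, a uniformly random element of $\mathsf{s}_j$, and a uniformly random element of $\tilde{\mathsf{s}}_j$, and exchange them. Because $\tilde{\mathsf{s}}_j$ has size $n-b$, a fraction $(1-k\varphi)/(1-\varphi) = (n-kb)/(n-b)$ of its elements lie in the uncovered set $[n]\setminus\bigcup_{j'}\mathsf{s}_{j'}$ and a fraction $(k-1)\varphi/(1-\varphi)$ lie in another fold; these two event-classes produce the two weighted contributions in $\Gamma_{k,\varphi,b}$. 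Chatterjee's variance identity then controls $\Var(a(\mathsf{r},D))$ by (a multiple of) $\mathbb{E}[(a(\mathsf{r},D)-a(\mathsf{r}',D))^2]$. Under \cref{assu: invariance,assu: linearity}, this squared increment decomposes cleanly: in the "outside" branch only one fold is perturbed and its difference expands into a single validation-resample term (captured by $\sigma^{(2)}_{\mathsf{valid}}$) and $b-1$ single-index training perturbations (captured by $\sigma^{(2,1)}_{\mathsf{train}}$), jointly bounded by $\sigma^{(2)}_{\mathsf{max}}$; in the "cross-fold" branch two folds are perturbed and, rather than dissipating via a union bound, the $b-1$ correlated training perturbations telescope into the higher-order stability $\sigma^{(2,b-1)}_{\mathsf{train}}$ by exchangeability of the validation indices within each fold.

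The main obstacle is the last step: converting the $k-1$ (or $b-1$) individual training-sample perturbations into a single higher-order training stability $\sigma^{(2,b-1)}_{\mathsf{train}}$ rather than the looser $(b-1) \cdot \sigma^{(2,1)}_{\mathsf{train}}$. This is where a secondary coupling argument, in the spirit of \cite{chatterjee2005concentration, chatterjee2007stein}, is needed to line up the correlated training perturbations so that the stability can be applied in aggregate. Once that bookkeeping is complete, reassembling the probability-weighted contributions gives $\mathbb{E}[v_{1,k}(D)] \lesssim \Gamma_{k,\varphi,b}$; Markov's inequality lifts this to the desired $(1-\delta)$ event over $D$; and combining with the conditional sub-Gaussian tail from the first step yields the bound \eqref{eq: cross-splitting concentration app}.
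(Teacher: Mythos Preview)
Your two-layer strategy diverges from the paper's argument at a point that matters, and the first layer contains a real gap.

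The paper does not separate the $g$-fold averaging from the cross-split structure. It builds a single exchangeable pair $(\bm{\pi},\bm{\pi}')$ on the \emph{full collection} of $g$ permutations that generate $\mathsf{R}_{g,k}$: select $L$ uniformly in $[g]$, select $I,J$ uniformly in $[n]$, and modify only the $L$th permutation so that $I\mapsto J$. Chatterjee's concentration (Lemma~\ref{thm: chatterjee concentration}) is then applied directly to $a(\mathsf{R}_{g,k}(\bm{\pi}),D)-\bar a(D)$. The coupled-chain construction of the Stein representer (Lemma~\ref{lem: representer construction applied}) and the bound on the increments along the chain (Lemma~\ref{lem: deterministic bound}) together yield $U_a^{(1)}(Z)\le s^{-1}u$ and $U_A^{(1)}(Z)\le s u$ with $s=gn^2/2$ and $u=\delta^{-1}\cdot 2^4(2-\varphi k-\varphi)^2\Gamma_{k,\varphi,b}/g$ on a $(1-\delta)$-event in $D$; the $1/g$ factor appears because the perturbation touches only one of the $g$ permutations. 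The sub-Gaussian tail in \eqref{eq: cross-splitting concentration app} is then immediate.

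Your first layer asserts that finiteness of $\zeta^{(4)}$ delivers a conditional sub-Gaussian tail for the i.i.d.\ mean with proxy $v_{1,k}(D)$. It does not. Fourth moments yield polynomial (Markov/Rosenthal) tails, not exponential ones; Hoeffding would use the conditional range $\max_{\mathsf{r}}|a(\mathsf{r},D)-\bar a(D)|$, not the variance; and Bernstein carries an additive $Mt$ term that spoils the pure $\sqrt{\log(\varepsilon^{-1})}$ rate for small $\varepsilon$. The way to obtain the sub-Gaussian proxy you need is exactly Chatterjee's method---but applied to the statistic itself, not to the variance. One could salvage a two-layer version by applying Chatterjee to a \emph{single} cross-split to show $a(\mathsf{r},D)-\bar a(D)$ is conditionally sub-Gaussian with a data-dependent proxy $u_1(D)$, then use that i.i.d.\ averages of sub-Gaussians are sub-Gaussian with proxy $u_1(D)/g$, and finally Markov-bound $u_1(D)$. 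But that is not what you outlined: your Step~2 bounds $\mathbb{E}[v_{1,k}(D)]$, which controls the variance, not the sub-Gaussian proxy, and so cannot be plugged back into Step~1 as written.
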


\begin{remark}
The quantity $\Gamma_{k,\varphi,b}$ interpolates between $\sigma^{(2)}_{\mathsf{max}}$ and $\sigma^{(2,b-1)}_{\mathsf{train}}$ as $k$ varies between its bounds $1$ and $1/\varphi$. In the case of independent splitting, i.e., $k = 1$, the rate of concentration for the statistic $a(\mathsf{R}_{g,k}, D)$ is driven by the full sample stability $\sigma^{(2)}_{\mathsf{max}}$. On the other hand, in the case of cross-splitting, concentration depends only on the training sample stability $\sigma^{(2,b-1)}_{\mathsf{train}}$. In a previous draft of this paper, we give estimates of $\sigma^{(2)}_{\mathsf{valid}}$ and $\sigma^{(2,b-1)}_{\mathsf{train}}$ in applications to treatment effect estimation and cross-validated risk estimation. We show that the training stability is dramatically smaller than the validation stability and decreases rapidly as $b$ decreases.\footnote{See Figure 3 of the preprint posted at \url{https://arxiv.org/abs/2311.14204v2}.} To see this, observe that we should next expect the validation sample-stability $\sigma_{\mathsf{valid}}^{(r)}$ to change with $k$. That is, residual randomness is smaller for full cross-splitting, i.e., $k=n/b$, than for independent splitting.\hfill$\blacksquare$
\end{remark}

Next, we derive bounds for the centered conditional moments of $a\left(\mathsf{R}_{g,k}, D\right)$ through an argument closely related to the proof of \cref{thm: cross split concentration appendix}. Bounds of this form are known as Burkholder-Davis-Gundy inequalities \citep{burkholder1973distribution}.
\begin{theorem}
\label{cor: Moment bound}Suppose that \cref{assu: invariance,assu: linearity} hold and that the data $D$ are independent and identically distributed. If $r=2^{c-1}$ for some positive integer $c$ and the $4r$th-order split stability $\zeta^{(4r)}$ is finite, then, for each $\delta > $, the inequality
\begin{align}
\label{eq: applied BDG}
\mathbb{E}\left[\left(a\left(\mathsf{R}_{g,k}, D\right)-\bar{a}\left(D\right)\right)^{2r} \mid D\right]
 & \leq 
(2r-1)^r \left(\frac{2^4(2-\varphi k - \varphi)^2}{g}\right)^r \Gamma_{k,\varphi,b}^{(r)}~,
\end{align}
holds with probability greater than $1-\delta$ as $D$ varies, where
\begin{align}
 \Gamma_{k,\varphi,b}^{(r)}   &= 
 \left(\frac{1-k\varphi}{1-\varphi}\right) 
 2^{2r} \sigma_{\mathsf{max}}^{(2r,1)} 
 + \left(\frac{k\varphi - \varphi}{1-\varphi}\right) \sigma_{\mathsf{valid}}^{(2r,b-1)}
\end{align}
and $\varphi = b/n$.
\end{theorem}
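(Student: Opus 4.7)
The plan is to adapt the exchangeable-pair construction used in the proof of the preceding concentration bound (\cref{thm: cross split concentration appendix}) and substitute an iterated moment version of Chatterjee's Stein-method machinery in place of its tail-bound version. Specifically, I reuse the exchangeable coupling of cross-splits developed there: given $\mathsf{R}_{g,k}$, perturb one coordinate---either swapping a single data index in a validation slot or resampling a $(b-1)$-sized training complement, following the two modes of the coupling---to form $\mathsf{R}'_{g,k}$. This yields an exchangeable pair $(W, W') = (a(\mathsf{R}_{g,k},D), a(\mathsf{R}'_{g,k},D))$ whose Stein drift satisfies
\begin{equation*}
\mathbb{E}[W' - W \mid \mathsf{R}_{g,k}, D] = -\lambda\bigl(W - \bar{a}(D)\bigr),
\end{equation*}
for the same $\lambda$ (of order $g^{-1}(2-\varphi k - \varphi)^{-2}$ up to absolute constants) that appears in the concentration proof. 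This is the structural input required by the exchangeable-pair moment identity.

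Next, I would apply the standard antisymmetrisation moment inequality: for each even $p \geq 2$,
\begin{equation*}
\mathbb{E}\bigl[(W-\bar a(D))^{p} \mid D\bigr] \leq \tfrac{p-1}{2\lambda}\, \mathbb{E}\bigl[(W-W')^{2} (W-\bar a(D))^{p-2} \mid D\bigr],
\end{equation*}
obtained by writing $\mathbb{E}[(W-\bar a)^{p-1}(W-W') \mid D] = \tfrac12 \mathbb{E}[\{(W-\bar a)^{p-1} - (W'-\bar a)^{p-1}\}(W-W') \mid D]$ and invoking convexity. Because $r = 2^{c-1}$, I apply this identity with $p = 2r$, split the right-hand side by Cauchy--Schwarz into $\mathbb{E}[(W-W')^{4}(W-\bar a)^{2r-4}\mid D]^{1/2} \cdot \mathbb{E}[(W-\bar a)^{2r}\mid D]^{1/2}$, absorb the latter factor on the left, and iterate the doubling $c$ times. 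The dyadic structure collapses into
\begin{equation*}
\mathbb{E}\bigl[(W-\bar a(D))^{2r}\mid D\bigr] \leq (2r-1)^{r} (2\lambda)^{-r}\, \mathbb{E}\bigl[(W-W')^{2r} \mid D\bigr],
\end{equation*}
which matches the prefactor $(2r-1)^{r}\bigl(2^{4}(2-\varphi k - \varphi)^{2}/g\bigr)^{r}$ in the statement.

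The final step is to bound $\mathbb{E}[(W-W')^{2r} \mid D]$ by the higher-order stabilities. The coupling, with conditional probability $(1-k\varphi)/(1-\varphi)$, perturbs a single data index inside a fixed cross-split; with the complementary probability $(k\varphi-\varphi)/(1-\varphi)$, it resamples a $(b-1)$-sized training slot. Under \cref{assu: linearity}, each perturbation alters $W$ by a normalised sum whose $2r$th moment is controlled by $2^{2r}\sigma_{\mathsf{max}}^{(2r,1)}$ in the first case (after a triangle-inequality expansion separating validation and training contributions) and by $\sigma_{\mathsf{valid}}^{(2r,b-1)}$ in the second. Summing the two contributions with their weights reproduces $\Gamma_{k,\varphi,b}^{(r)}$. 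The same Markov step used to pass from data-dependent conditional moments to unconditional stabilities in the concentration proof then delivers the claimed bound with probability at least $1-\delta$ over the randomness of $D$.

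The principal obstacle will be keeping constants sharp across the dyadic iteration: verifying that after each Cauchy--Schwarz the leftover factor $\mathbb{E}[(W-\bar a)^{2r}\mid D]^{1/2}$ can be cancelled without introducing spurious terms, and that the geometric accumulation yields exactly $(2r-1)^{r}$ rather than the looser $(2r)!!$ or $(2r)^{r}$ that a naive Burkholder--Davis--Gundy bootstrap would give. A secondary subtlety is showing that the two coupling weights $(1-k\varphi)/(1-\varphi)$ and $(k\varphi-\varphi)/(1-\varphi)$ continue to factor the $2r$th-moment decomposition cleanly; this amounts to reusing the two-mode conditional probability calculations from the concentration proof verbatim at the higher moment without losing the two-term structure of $\Gamma_{k,\varphi,b}^{(r)}$.
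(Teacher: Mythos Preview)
Your proposal rests on a claim that the paper never makes and that is almost certainly false here: that the one-step exchangeable pair satisfies an \emph{exact linear drift} $\mathbb{E}[W'-W\mid \mathsf{R}_{g,k},D]=-\lambda(W-\bar a(D))$. The statistic $a(\mathsf{R}_{g,k},D)$ depends on the splits through the sets $\mathsf{s}_j$ \emph{and} their complements via the nonlinear map $\hat\eta(D_{\tilde{\mathsf{s}}_j})$, so a single random transposition of indices does not yield a contraction that is linear in $a(\cdot)-\bar a(D)$. This is precisely why the paper does not use the linear-drift formalism at all: it instead constructs the full Stein representer $A(Z,Z')=\sum_{m\ge0}\mathbb{E}[a(Z_m)-a(Z'_m)\mid Z,Z']$ via Chatterjee's coupled Markov chain argument (\cref{lem: representer construction applied}), which requires no drift identity. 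The $(2r-1)^r$ constant then comes directly from Chatterjee's Burkholder--Davis--Gundy inequality $\mathbb{E}[f(X)^{2r}]\le(2r-1)^r\mathbb{E}[\Delta(X)^r]$ with $\Delta(X)=\tfrac12\mathbb{E}[|F(X,X')(f(X)-f(X'))|\mid X]$, followed by a single Young's inequality splitting $\Delta^r$ into $s^{-1}U_f^{(r)}+sU_F^{(r)}$ with $s=(gn^2/2)^r$. Your dyadic Cauchy--Schwarz iteration, by contrast, produces mixed moments $\mathbb{E}[(W-W')^4(W-\bar a)^{2r-4}]$ that are not of the form you can feed back into the linear-drift identity, so the recursion does not close as you describe.

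A second, smaller issue: the weights $(1-k\varphi)/(1-\varphi)$ and $(k\varphi-\varphi)/(1-\varphi)$ do not arise from two distinct coupling mechanisms (``swap one index'' versus ``resample a $(b-1)$ block''). In the paper's coupling, a \emph{single} pair of indices is swapped in one permutation; the two weights are the conditional probabilities $P\{\mathcal G\mid\mathcal H\}$ and its complement, distinguishing whether both swapped indices lie in the used portion of the partition (yielding the double-difference controlled by $\sigma_{\mathsf{train}}^{(2r,b-1)}$) or one lies outside (yielding $2^{2r}\sigma_{\mathsf{max}}^{(2r)}$). The two-term structure of $\Gamma_{k,\varphi,b}^{(r)}$ comes from \cref{lem: deterministic bound}, which bounds $\mathbb{E}[\mathbb{E}[a(Z_m)-a(Z'_m)\mid Z,Z']^{2r}\mid\bm\pi]$ uniformly in $m$; summing the geometric series over $m$ via \cref{lem: variance bounds} produces the factor $(gn^2/2)^r$ that combines with $s$ to give the displayed rate.
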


\begin{remark}
By setting $r=1$, the inequality \eqref{eq: applied BDG} gives a variance bound. In this case, the right-hand side of the inequality \eqref{eq: applied BDG} is equal to negative one times the inverse of the right-hand side of the inequality \eqref{eq: cross-splitting concentration app}. In this sense, the concentration inequality \cref{thm: cross split concentration appendix} can be thought of as an exponential Efron-Stein inequality adapted to the dependence inherent in our problem \citep{boucheron2003concentration}. For values of $r$ greater than 1, the unconditional quantity $\Gamma_{k,\varphi,b}^{(r)}$ again interpolates between the $2r$th order full sample stability $\sigma_{\mathsf{max}}^{(2r,1)}$ and the $(2r,b-1)$th order training sample stability $\sigma_{\mathsf{valid}}^{(2r,b-1)}$ as $k$ increases from $1$ to $1/\varphi$. \hfill\qed
\end{remark}

\begin{remark}
An analogous, unconditional, result will follow from the same argument. In particular, under the same conditions, we have that 
\begin{equation}
\mathbb{E}\left[\left(a\left(\mathsf{R}_{g,k}, D\right)-\bar{a}\left(D\right)\right)^{2r}\right]
\leq 
(2r-1)^r \left(\frac{2^4(2-\varphi k - \varphi)^2}{g}\right)^r \Gamma_{k,\varphi,b}^{(r)}~.
\end{equation}
We opt to state the conditional version of the result thought, as, arguably, the conditional behavior of the statistic is of primary interest.
\hfill\qed
\end{remark}

In turn, we bound the distance between the conditional distribution of $a\left(\mathsf{R}_{g,k}, D\right)$ and the standard normal distribution. The result follows by combining the bounds derived in \cref{cor: Moment bound} with a standard Berry-Esseen inequality \citep{shevtsova2011absolute}.

\begin{theorem} \label{thm: normal approximation appendix}
Suppose that \cref{assu: invariance,assu: linearity} hold, that the data $D$ are independently and identically distributed, and that the statistic $a(\mathsf{R}_{1,k}, D)$ has a non-zero variance conditional on $D$ almost surely. If the eighth-order split stability $\zeta^{(8)}$ is finite, then for all $g$, the inequality
\begin{flalign}
&\sup_{z \in \mathbb{R}}
\bigg\vert 
P\left\{ \frac{a\left(\mathsf{R}_{g,k}, D\right)-\bar{a}\left(D\right)}{\sqrt{v_{g,k}\left(D\right)}} \leq z \mid D\right\}
-
\Phi(z)  \bigg\vert \nonumber \\
&\quad\quad\quad\quad\lesssim 
\frac{2^6 3^2}{\delta}\frac{\left(2-\varphi k-\varphi \right)^{3}}{g^{1/2}} \left(\frac{(\Gamma_{k,\varphi,b}^{(2)})^{1/2}}{v_{1,k}(D)}\right)^{3/2} \label{eq: cross-splitting berry esseen appendix}
\end{flalign}
is satisfied with probability greater than $1-\delta$ as $D$ varies, where $\Phi(\cdot)$ is the standard normal c.d.f.
\end{theorem}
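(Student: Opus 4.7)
The plan is to exploit the fact that, conditional on the data $D$, the aggregate statistic $a(\mathsf{R}_{g,k},D)$ decomposes as an average of i.i.d.\ random variables, and then apply a classical Berry-Esseen bound, controlling the requisite third absolute moment using the fourth-moment bound of \cref{cor: Moment bound}. More specifically, since the cross-splits $\mathsf{r}_1,\ldots,\mathsf{r}_g$ are drawn independently and uniformly from $\mathcal{R}_{n,k,b}$, the summands $\{a(\mathsf{r}_i,D)\}_{i=1}^{g}$ are conditionally i.i.d.\ given $D$, with conditional mean $\bar{a}(D)$ and conditional variance $v_{1,k}(D)$. Consequently, $v_{g,k}(D) = v_{1,k}(D)/g$ almost surely, and the classical Berry-Esseen inequality of \cite{shevtsova2011absolute}, applied conditionally on $D$, yields
\begin{equation*}
\sup_{z \in \mathbb{R}}\bigg\vert P\left\{\frac{a(\mathsf{R}_{g,k},D)-\bar{a}(D)}{\sqrt{v_{g,k}(D)}} \leq z \mid D\right\} - \Phi(z) \bigg\vert \lesssim \frac{\rho_3(D)}{\sqrt{g}\,v_{1,k}(D)^{3/2}}~,
\end{equation*}
where $\rho_3(D) = \mathbb{E}\left[\vert a(\mathsf{r}_1,D)-\bar{a}(D)\vert^3 \mid D\right]$.

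The next step is to bound the conditional third absolute moment $\rho_3(D)$ by a conditional fourth moment. Lyapunov's inequality gives
\begin{equation*}
\rho_3(D) \leq \left(\mathbb{E}\left[(a(\mathsf{r}_1,D)-\bar{a}(D))^4 \mid D\right]\right)^{3/4}~.
\end{equation*}
I would then invoke \cref{cor: Moment bound} with $r=2$ and $g=1$, which is applicable exactly under the finiteness of the eighth-order split stability $\zeta^{(8)}$ assumed in the theorem. This bounds the conditional fourth central moment of a single cross-split statistic, with probability at least $1-\delta$ over the draw of $D$, by a constant multiple of $\delta^{-1}(2-\varphi k - \varphi)^4\,\Gamma_{k,\varphi,b}^{(2)}$, where the $\delta^{-1}$ factor arises from a Markov step converting the underlying unconditional moment bound into a high-probability statement about the conditional fourth moment.

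Raising this bound to the $3/4$ power produces a factor of $\delta^{-3/4}(2-\varphi k - \varphi)^3 (\Gamma_{k,\varphi,b}^{(2)})^{3/4}$. Substituting into the Berry-Esseen display and rewriting $(\Gamma_{k,\varphi,b}^{(2)})^{3/4}/v_{1,k}(D)^{3/2}$ as $\bigl((\Gamma_{k,\varphi,b}^{(2)})^{1/2}/v_{1,k}(D)\bigr)^{3/2}$ reproduces the form of the stated bound, with the constant $2^6 3^2$ recovered by carefully tracking the numerical factors in \cref{cor: Moment bound} through the $3/4$-power operation. The main technical obstacle is the bookkeeping of constants and the $\delta$-dependence: the Berry-Esseen inequality holds almost surely conditional on $D$, so combining it with the high-probability control on $\rho_3(D)$ does not require an additional union bound and the single $\delta^{-1}$ factor ultimately appears linearly after the $(3/4)$-power is applied, together with a benign factor that is absorbed into the implicit constant.
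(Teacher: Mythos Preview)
Your approach is essentially the same as the paper's: decompose $a(\mathsf{R}_{g,k},D)-\bar a(D)$ as a conditional i.i.d.\ average, apply the Berry--Esseen inequality of \cite{shevtsova2011absolute}, and control the third absolute moment via H\"older and the fourth-moment bound from \cref{cor: Moment bound} with $r=2$, $g=1$. The only difference is the order in which you apply H\"older and Markov. The paper first bounds the \emph{unconditional} third moment by $(\mathbb{E}[\bar a(\mathsf{r},D)^4])^{3/4}$, invokes the unconditional version of \cref{cor: Moment bound}, and then applies Markov to pass to the conditional third moment---this yields the $\delta^{-1}$ factor exactly as stated. Your route (Markov on the fourth moment first, then the $3/4$ power) produces $\delta^{-3/4}$, which is in fact slightly sharper and certainly implies the stated $\delta^{-1}$ bound; your final paragraph's attempt to recover a linear $\delta^{-1}$ from $\delta^{-3/4}$ is muddled, but the argument is valid either way.
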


\begin{remark}
To the best of our knowledge, the only Stein's method central limit theorem in the Kolmogorov distance, applicable to our setting, is given in \cite{zhang2022berry}, which generalizes the argument of \cite{shao2019berry}. In \cref{app: zheng comparion}, we show that this central limit theorem implies a bound that does not reduce as either $g$ or $k$ increase.\hfill$\blacksquare$
\end{remark}

\subsection{Reproducibility\label{sec: app Reproducibility}}

We now characterize accuracy of the nominal reproducibility error of \cref{alg: sequential aggregation}.
\begin{theorem}
\label{thm: stable symmetric reproduce appendix}
Suppose that the conditional variance $\Var(a(\mathsf{r}, D)\mid D)$ is strictly positive, almost surely, where $\mathsf{r}$ denotes a random collection drawn uniformly from $\mathcal{R}_{n,k,b}$. Suppose that the collections $\mathsf{R}_{\hat{g},k}$ and $\mathsf{R}^\prime_{\hat{g}^\prime,k}$ are independently obtained using \cref{alg: sequential aggregation}. If \cref{assu: invariance,assu: linearity} hold, the data $D$ are independent and identically distributed, and the eighth-order split stability $\zeta^{(8)}$ is finite, then, for all sufficiently small $\xi$, the inequality
\begin{align}
&P\left\{ \vert a(\mathsf{R}_{\hat{g},k},D)-a(\mathsf{R}_{\hat{g}^{\prime},k}^{\prime},D) \vert \leq\xi\mid D\right\} -\left(1-\beta\right)\vert \nonumber\\
& \quad\quad\quad\quad
\lesssim
 \left(
\frac{\xi}{z_{1-\beta/2}}
\frac{(2-\varphi k-\varphi)^{4}\Gamma^{(1)}_{k,\varphi,b} (\Gamma_{k,\varphi,b}^{(2)})^{1/2}}{\delta^{3/2}(v_{1,k}\left(D\right))^{5/2}}
\right)^{1/2} \nonumber \\
& \quad\quad\quad\quad \quad\quad \cdot \log^{3/4} \left( \frac{z_{1-\beta/2}}{\xi} \frac{\delta (v_{1,k}(D))^2}{(2-\varphi k - \varphi)^3 (\Gamma^{(2)}_{k,\varphi,b})^{3/4}} \right)~,
\label{eq: reproduce BE bound app}
\end{align}
holds with probability greater than $1-\delta$ as $D$ varies.
\end{theorem}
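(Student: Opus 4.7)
The plan is to make the informal argument of Section 3.1 quantitative, by combining the concentration and normal approximation results already established (Theorems \ref{thm: cross split concentration appendix}, \ref{cor: Moment bound}, and \ref{thm: normal approximation appendix}). Define the oracle stopping time $g^\star$ as in \eqref{eq: oracle splits main}, so that $g^\star \asymp v_{1,k}(D)(z_{1-\beta/2}/\xi)^{2}$. The argument decomposes the reproducibility error into three contributions: (a) a comparison of the realized stopping time $\hat g$ with the oracle $g^\star$, (b) a comparison of the aggregate statistic at $\hat g$ with the aggregate at $g^\star$, and (c) a normal approximation to $a(\mathsf{R}_{g^\star,k},D)-a(\mathsf{R}_{g^\star,k}^\prime,D)$. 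Everything will be conditioned on the high-probability event (probability at least $1-\delta$ in $D$) on which the variance and moment bounds of \cref{thm: cross split concentration appendix,cor: Moment bound} hold, so the conditional distribution of the cross-splits is all that I need to track.

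First, I would show that $\hat g$ is close to $g^\star$. Since conditional on $D$ the random variables $a(\mathsf{r}_i,D)$ are i.i.d. with mean $\bar a(D)$ and variance $v_{1,k}(D)$, the plug-in variance estimator $\hat v(\mathsf{R}_{g,k},D)$ satisfies $g\hat v(\mathsf{R}_{g,k},D)\to v_{1,k}(D)$; a conditional Chebyshev inequality together with the fourth-moment bound in \cref{cor: Moment bound} gives, for each $\varepsilon > 0$,
\begin{equation*}
P\Bigl\{\,\bigl\vert \hat g/g^\star - 1\bigr\vert \gtrsim \tfrac{(2-\varphi k-\varphi)\,(\Gamma_{k,\varphi,b}^{(2)})^{1/2}}{k\,v_{1,k}(D)^{3/2}}\,\tfrac{\xi}{z_{1-\beta/2}}\sqrt{\log(\varepsilon^{-1})/\delta}\,\Bigm\vert D\Bigr\}\le \varepsilon,
\end{equation*}
via the sandwich inequality on $\hat v$ analogous to \eqref{eq: sandwich}. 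Call $\eta$ the right-hand side, so that $|\hat g - g^\star| \leq \eta g^\star$ with high conditional probability.

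Next, on this event I write
\begin{equation*}
a(\mathsf{R}_{\hat g,k},D)-a(\mathsf{R}_{g^\star,k},D) = \tfrac{1}{\hat g}\sum_{i=g^\star+1}^{\hat g}\bigl(a(\mathsf{r}_i,D)-\bar a(D)\bigr) + \bigl(\tfrac{1}{\hat g}-\tfrac{1}{g^\star}\bigr)\sum_{i=1}^{g^\star}\bigl(a(\mathsf{r}_i,D)-\bar a(D)\bigr),
\end{equation*}
and bound each piece using a Kolmogorov-type maximal inequality applied to the i.i.d. partial sums, together with the variance bound of \cref{thm: cross split concentration appendix}. This yields a residual $R$ of order $\sqrt{\eta\,v_{1,k}(D)/g^\star}$, and similarly for the primed copy. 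For the leading term, \cref{thm: normal approximation appendix} at $g = g^\star$ gives
\begin{equation*}
\bigl\vert P\{|a(\mathsf{R}_{g^\star,k},D)-a(\mathsf{R}_{g^\star,k}^\prime,D)|\le \xi\mid D\}-(1-\beta)\bigr\vert \lesssim \tfrac{(2-\varphi k-\varphi)^3 (\Gamma^{(2)}_{k,\varphi,b})^{3/4}}{\delta\,g^{\star 1/2}\,v_{1,k}(D)^{3/2}}.
\end{equation*}
To combine with the residual, I use the fact that the standardized main term is within $o(1)$ Kolmogorov distance of a standard normal, whose density is bounded by $1/\sqrt{2\pi}$, so shifting the threshold $\xi$ by $R$ changes the probability by at most a constant multiple of $R/\sqrt{v_{g^\star,k}(D)}\asymp R\cdot z_{1-\beta/2}/\xi$. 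Optimizing $\varepsilon$ in the tail bound for $\eta$ produces the logarithmic factor in \eqref{eq: reproduce BE bound app}.

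The main obstacle is getting the sharp $(\xi/z_{1-\beta/2})^{1/2}$ dependence rather than the $\xi/z_{1-\beta/2}$ dependence that a naive chain of inequalities would yield. The key observation is that $\eta$ scales linearly in $\xi$, so the residual $R$ scales like $\sqrt{\eta\,v_{1,k}(D)/g^\star}\asymp \xi^{3/2}$, and after dividing by $\sqrt{v_{g^\star,k}(D)}\asymp \xi$ one obtains an anti-concentration contribution of order $\xi^{1/2}$. Making this trade-off rigorous—while tracking the dependence on $k$, $\varphi$, the split stabilities, and $v_{1,k}(D)$—is the technical heart of the argument; the logarithmic term is inherited from the sub-Gaussian tail controlling $|\hat g/g^\star - 1|$, and this logarithmic slack is essentially the price paid in \cite{landers1976exact,landers1988sharp} for a randomly stopped normal approximation.
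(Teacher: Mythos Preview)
Your overall architecture matches the paper's: decompose via the oracle stopping time $g^\star$, control the difference $a(\mathsf{R}_{\hat g,k},D)-a(\mathsf{R}_{g^\star,k},D)$ by a maximal inequality conditioned on $\vert\hat g/g^\star-1\vert\le c$, apply the Berry--Esseen bound of \cref{thm: normal approximation appendix} at $g^\star$, and then optimize $c$ to balance the normal-approximation error against the residual. The paper formalizes this through the $U$/$Q$ decomposition (your two displayed pieces are exactly these terms rewritten) and a three-term bound (\cref{lem: reproducability decomposition}, \cref{lem: three components}).

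There is, however, a genuine gap in how you propose to control $\vert\hat g/g^\star-1\vert$. You invoke ``a conditional Chebyshev inequality together with the fourth-moment bound in \cref{cor: Moment bound},'' but then display a sub-Gaussian tail with $\sqrt{\log(\varepsilon^{-1})}$. Chebyshev applied to fourth moments gives only polynomial tails $P\{\,\cdot\,\ge t\}\lesssim t^{-4}$, which after optimization yields a polynomial---not logarithmic---correction, and the stated $\xi^{1/2}$ rate would be lost. Moreover, \cref{cor: Moment bound} bounds moments of $a(\mathsf{R}_{g,k},D)-\bar a(D)$, not of the variance estimator $\hat v(\mathsf{R}_{g,k},D)$; the latter is a quadratic functional of the $a(\mathsf{r}_i,D)$ and its exponential concentration does not follow from the cited results. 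The paper closes this gap by proving a separate exponential concentration inequality for $\hat v$ (\cref{thm: variance estimator concentration}), obtained by rerunning the exchangeable-pairs/Stein-representer machinery on the quadratic statistic. With that in hand, the bound on $\vert\hat g/g^\star-1\vert$ follows from a Chernoff-type maximal inequality due to \citet{steiger1970bernstein} (Theorem~\ref{thm: Steiger}), not the variance-based Kolmogorov inequality you mention; the variance-based version would again degrade the tail to polynomial. These two ingredients---exponential concentration for the variance estimator and Steiger's maximal inequality---are what produce the sub-Gaussian control you assert and, ultimately, the $\log^{3/4}$ factor in \eqref{eq: reproduce BE bound app}.
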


\begin{remark}
Recall the definition of the oracle stopping time $g^\star$ given in \eqref{eq: oracle splits main}. The bound \eqref{eq: reproduce BE bound app} is obtained by decomposing the reproducibility error into two terms. The first term involves the error in a normal approximation to the difference $a(\mathsf{R}_{g^\star,k}, D) - a(\mathsf{R}^\prime_{g^\star,k}, D)$. As $g^\star$ is deterministic conditional on $D$, a bound on this term follows from an argument very similar to the proof of \cref{thm: normal approximation appendix}. The second term involves the differences $a(\mathsf{R}_{\hat{g},k}, D) - a(\mathsf{R}_{g^\star,k}, D)$ and $a(\mathsf{R}^\prime_{\hat{g}^\prime,k}, D) - a(\mathsf{R}^\prime_{g^\star,k}, D)$. The key step in bounding these quantities involves deriving a high probability bound for the difference $\hat{g} - g^\star$. This follows by combining a concentration inequality analogous to \cref{thm: cross split concentration appendix} for the estimator $\hat{v}_{g,k}(D)$ with a maximal inequality due to \cite{steiger1970bernstein}.\hfill$\blacksquare$
\end{remark}

\subsection{Specialization to Cross-Split, Sample-Stable Statistics\label{sec: proof outline}}

In this section, we specialize the results stated in \cref{sec: concentration and normal approx app,sec: app Reproducibility} to cross-fit, sample-stable statistics. That is, we focus attention to the setting where $n = k\cdot b$ and impose \cref{def: split stability}. By doing this, we prove each of the results stated in \cref{sec: Guarantees}.

We begin by stating the following corollary of \cref{thm: cross split concentration appendix}, which is referenced in \cref{fn: concentration}. 

\begin{cor}
\label{thm: cross split concentration}Suppose that \cref{assu: invariance,assu: linearity,def: split stability} hold, that the data $D$ are independently and identically distributed, and that the statistic $a(\mathsf{R}_{1,k}, D)$ has a non-zero variance conditional on $D$ almost surely. If the fourth-order split stability $\zeta^{(4)}$ is finite, then, for each $\varepsilon>0$, the inequality
\begin{align}
P\left\{ \vert a(\mathsf{R}_{g,k}, D) - \bar{a}\left(D\right) \vert
\leq 
\sqrt{\frac{b-1}{n^2} \frac{1}{g} \frac{\log(\varepsilon^{-1})}{\delta}}
\mid D\right\} \geq 1 - \varepsilon
\label{eq: cross-splitting concentration}
\end{align}
holds with probability greater than $1-\delta$ as $D$ varies.
\end{cor}

\begin{proof}
\cref{thm: cross split concentration} follows immediately from \cref{thm: cross split concentration appendix} by restricting attention to the case that $n = k\cdot b$ and applying \cref{def: split stability}. In particular, observe that if $n = k\cdot b$, then $\Gamma_{k,\varphi,b} = \sigma^{(2,b-1)}_{\mathsf{train}}$ and 
\begin{equation}\label{eq: rewrite scale term}
(2 - \varphi k - \varphi)^2 = (1 - \varphi)^2 = \left(\frac{n-b}{b}\right)^2~.
\end{equation}
Applying \cref{def: split stability} then gives
\begin{equation}\label{eq: rewrite Gamma app}
\Gamma_{k,\varphi,b} = \sigma^{(2,b-1)}_{\mathsf{train}} \lesssim \left(\frac{\sqrt{b-1}}{n-b}\right)^2~.
\end{equation}
Plugging \eqref{eq: rewrite scale term} and \eqref{eq: rewrite Gamma app} into \cref{thm: cross split concentration appendix} gives the desired bound.\hfill
\end{proof}

Next, we show that \cref{cor: Moment bound} implies the following generalization of \cref{thm: cross split moment} to higher-order moments. \cref{thm: cross split moment} is a strictly weaker result, as it only holds for the second moment. 
\begin{cor}
\label{thm: cross split moment app}Suppose that \cref{assu: invariance,assu: linearity,def: split stability} hold, the data $D$ are independently and identically distributed, and $n = k\cdot b$. If the 4$r$th-order split stability $\zeta^{(4)}$ is finite, then, for each $\delta > 0$, the inequality
\begin{equation} \label{eq: variance bound display app}
\mathbb{E}\left[\left(a\left(\mathsf{R}_{g,k}, D\right)-\bar{a}\left(D\right)\right)^{2r} \mid D \right] \lesssim \left(\frac{b-1}{n^2} \frac{1}{g}\right)^{r/2}
\end{equation}
holds with probability greater than $1-\delta$ as $D$ varies.
\end{cor}

\begin{proof}
If $n=k\cdot b$, then $\Gamma_{k,\varphi,b}^{(r)} = \sigma_{\mathsf{valid}}^{(2r,b-1)}$. Applying \cref{def: split stability} then gives
\begin{equation}\label{eq: rewrite Gamma higher app}
\Gamma_{k,\varphi,b}^{(r)}  \lesssim \left(\frac{\sqrt{b-1}}{n-b}\right)^{2r}~.
\end{equation}
Plugging \eqref{eq: rewrite scale term} and \eqref{eq: rewrite Gamma higher app} into \eqref{eq: applied BDG} gives
\begin{equation}
\label{eq: applied moment bound r gtr 1}
\mathbb{E}\left[\left(a\left(\mathsf{R}_{g,k}, D\right)-\bar{a}\left(D\right)\right)^{2r}\right]
\lesssim
\left(\frac{1}{nkg}\right)^r~,
\end{equation}
where we omit constants the depend only on $r$.\hfill
\end{proof}

In turn, we specialize the non-asymptotic central limit theorem \cref{thm: normal approximation appendix} cross-fit, sample-stable statistics. This result suggests that, although the aggregate statistic $a(\mathsf{R}_{g,k}, D)$ is more concentrated at larger values of $k$, the quality of a normal approximation to its conditional distribution does not necessarily improve. This result is referenced in \cref{fn: normal approx}.
\begin{cor} \label{thm: normal approximation}
Suppose that \cref{assu: invariance,assu: linearity,def: split stability} hold, the data $D$ are independently and identically distributed, $n = k\cdot b$, and the conditional variance $v_{1,k}(D) = \Var(a(\mathsf{r}, D) \mid D)$ is strictly positive almost surely, where $\mathsf{r}$ denotes a random collection drawn uniformly from $\mathcal{R}_{n,k,b}$. If the 8th-order split stability $\zeta^{(8)}$ is finite, then, for all $g$, the inequality
\begin{flalign}
\label{eq: cross-splitting berry esseen}
\sup_{z \in \mathbb{R}}
\bigg\vert 
P\left\{ \frac{a\left(\mathsf{R}_{g,k}, D\right)-\bar{a}\left(D\right)}{\sqrt{v_{g,k}\left(D\right)}} \leq z \mid D\right\}
-
\Phi(z)  \bigg\vert 
& 
\lesssim 
\frac{1}{\delta} \left(\frac{1}{n} \frac{1}{k} \frac{b-1}{v_{1,k}(D)}\right)^{3/2} \sqrt{\frac{1}{g}} 
\end{flalign}
is satisfied with probability greater than $1-\delta$ as $D$ varies, where $\Phi(\cdot)$ is the standard normal c.d.f.
\end{cor}

\begin{proof}
As before, \cref{thm: normal approximation} follows by restricting attention to the case that $n=k\cdot b$ and imposing \cref{def: split stability}. In particular, \cref{thm: normal approximation} is obtained by plugging  the bounds \eqref{eq: rewrite scale term} and \eqref{eq: rewrite Gamma higher app} into \eqref{eq: cross-splitting berry esseen appendix}.\hfill
\end{proof}

\begin{remark} In this case, the quantities $g$ and $k$ enter differently. In particular, observe that, by applying the variance bound \eqref{eq: variance bound display} for the case $g=1$, to lower bound the rate \eqref{eq: cross-splitting berry esseen}, we get
\begin{equation}
\left(\frac{1}{n} \frac{1}{k} \frac{1}{v_{1,k}(D)}\right)^{3/2} \sqrt{\frac{1}{g}} \gtrsim \sqrt{\frac{1}{g}}~. 
\end{equation}
Written differently, \cref{thm: normal approximation} suggests that the quality of a normal approximation to the conditional distribution of the aggregate statistic $a\left(\mathsf{R}_{g,k}, D\right)$ does not increase with $k$---it only depends on $g$.\hfill$\blacksquare$
\end{remark}

Finally, we show that \cref{thm: stable symmetric reproduce} is a corollary of \cref{thm: stable symmetric reproduce appendix} by restricting attention to the case that $n=k \cdot b$ and applying \cref{def: split stability}. We restate \cref{thm: stable symmetric reproduce} below as a corollary for ease of reference.

\begin{cor}
\label{cor: stable symmetric reproduce}
Suppose that the collections $\mathsf{R}_{\hat{g},k}$ and $\mathsf{R}^\prime_{\hat{g}^\prime,k}$ are independently obtained using \cref{alg: sequential aggregation}.  If \cref{assu: invariance,assu: linearity,def: split stability} hold, the conditional variance $v_{1,k}(D) = \Var(a(\mathsf{r}, D)\mid D)$ is strictly positive, almost surely, the data $D$ are independent and identically distributed, and the eighth-order split stability $\zeta^{(8)}$ is finite, then for all sufficiently small $\xi$, the inequality
\begin{align}
& \bigg\vert P\bigg\{\big\vert a(\mathsf{R}_{\hat{g},k}, D) - a(\mathsf{R}^\prime_{\hat{g}^\prime,k}, D) \big\vert \geq \xi \mid D\bigg\} 
 - \beta \bigg\vert \nonumber \\
&\quad\quad\quad\quad
\lesssim 
\frac{1}{\delta^{3/4}}
\frac{1}{k n}
\left(\frac{1}{v_{1,k}(D)}\right)^{5/4}
\left(
\frac{\xi}{z_{1-\beta/2} }
\right)^{1/2}~,\label{eq: reproduce BE bound app}
\end{align}
holds with probability greater than $1-\delta$ as $D$ varies, where in writing \eqref{eq: reproduce BE bound app}, we have omitted a multiplicative term that converges to zero logarithmically as $\xi$ decreases to zero.
\end{cor}

\begin{proof}
Observe that if $n=k \cdot b$, then the right-hand-side of the inequality \eqref{eq: reproduce BE bound app} can be written 
\begin{align}
& \left(
\frac{\xi}{z_{1-\beta/2} } \frac{\sigma^{(2,b-1)}_{\mathsf{train}} (\sigma^{(4,b-1)}_{\mathsf{train}})^{1/2}}{\delta^{3/2}(v_{1,k}\left(D\right))^{5/2}}
\left(\frac{n-b}{n}\right)^{4}
\right)^{1/2} \nonumber \\
& \quad\quad\quad
 \cdot \log^{3/4} \left( \frac{z_{1-\beta/2}}{\xi} \left(\frac{n}{n-b}\right)^3 \frac{\delta (v_{1,k}(D))^2}{ (\sigma^{(4,b-1)}_{\mathsf{train}})^{3/4}} \right)~.
\end{align}
By applying \cref{def: split stability}, the non-logarithmic factor reduces to
\begin{align}
\left(
\frac{\xi}{z_{1-\beta/2} } \frac{1}{\delta^{3/2}(v_{1,k}\left(D\right))^{5/2}}
\left(\frac{\sqrt{b-1}}{n}\right)^{4}
\right)^{1/2}
& \leq 
\left(
\frac{1}{\delta^{3/2}} \frac{\xi}{z_{1-\beta/2} } \left(\frac{1}{k n}\right)^2 \left(\frac{1}{v_{1,k}(D)}\right)^{5/2}
\right)^{1/2}
\end{align}
as required.\hfill
\end{proof}

\subsection{Constructing a Stein Representer\label{sec: construction}} Suppose that we are interested in studying the statistic $f\left(X\right)$, where $X$ is random variable valued on the separable metric space $\mathcal{X}$. The method of exchangeable pairs has two ingredients. First, we need to construct a random variable $X^{\prime}$ such that $\left(X,X^{\prime}\right)$ is an exchangeable pair. Second, we need to construct an antisymmetric function $F\left(X,X^{\prime}\right)$ such that 
\begin{equation}
f\left(X\right)=\mathbb{E}\left[F\left(X,X^{\prime}\right)\mid X\right]\label{eq: stein representer}
\end{equation}
almost surely. We will refer to the function $F\left(X,X^{\prime}\right)$ as a ``Stein representer'' for $f\left(X\right)$.

In many applications, the Stein representer \eqref{eq: stein representer} induced by a suitable exchangeable pair $\left(X,X^{\prime}\right)$ can be derived in closed form. See e.g \cite{chen2011normal} and \cite{ross2011fundamentals}. A closed form derivation is more challenging in our setting, where the collection $\mathsf{R}_{g,k}$ takes the place of the random variable $X$. To address this issue, we apply a method due to \cite{chatterjee2005concentration} for constructing Stein representers through a pair of coupled Markov chains induced by an appropriately chosen exchangeable pair. 

Chatterjee's construction is founded on a observation, due to \cite{stein1986approximate}, that an exchangeable pair $\left(X,X^{\prime}\right)$ induces a reversible Markov kernel $K$ through
\[
K g(x) = \mathbb{E}\left[ g(X^\prime) \mid X = x\right]~,
\]
where $g$ is any function satisfying $\mathbb{E}\left[\vert g(X)\vert\right]<\infty$. Suppose that $\{X_m\}_{m\geq0}$ and $\{X^\prime_m\}_{m\geq0}$ are two Markov chains constructed with the kernel induced by $\left(X,X^{\prime}\right)$ and coupled in such a way that the marginal distributions of $X_m$ and $X^\prime_m$ depend only on the initial conditions $X_0$ and $X^\prime_0$, respectively. Chatterjee makes the following observation.  If there exists a constant $C$ such that
\begin{equation} \label{eq: general finiteness}
\sum_{m=0}^\infty \vert \mathbb{E}\left[ f(X_m) - f(X_m^\prime) \mid X_0 = x, X_0^\prime = y \right] \vert \leq C~,
\end{equation}
for each $x$ and $y$ in $\mathcal{X}$, then the function 
\[
F(x,y) = \sum_{m=0}^\infty \mathbb{E}\left[ f(X_m) - f(X_m^\prime) \mid X_0 = x, X_0^\prime = y \right] 
\]
is a Stein representer for $f\left(X\right)$. See \cite{paulin2013deriving,paulin2016efron} for applications of this idea to the derivation of matrix concentration inequalities.
To apply this construction to our setting, we have two tasks. First, we need to specify a suitable exchangeable pair. Second, we need to construct a pair of coupled Markov chains induced by this pair that satisfy the finiteness condition \eqref{eq: general finiteness}. Throughout, for a vector $x=(x_i)_{i=1}^n$ we let $(x_{-l},y)$ denote the vector formed by replacing the $l$th component of $x$ with $y$.

Our construction is premised
on the observation that the random collection of splits $\mathsf{R}_{g,k}$
can be generated by a random collection of permutations. To see this,
let $\mathcal{P}_{n}$ denote the set of permutations of the set $\left[n\right]$,
treating each $\pi\in\mathcal{P}_{n}$ as a bijection from $\left[n\right]$
to $\left[n\right]$. Observe that each permutation $\pi\in\mathcal{P}_{n}$
can be associated with an element of $\mathcal{R}_{n,k,b}$,
denoted by $\mathsf{\bm{r}}_k\left(\pi\right)=\left(\mathsf{s}_{1}\left(\pi\right),\ldots,\mathsf{s}_{k}\left(\pi\right)\right)$,
through 
\[
\mathsf{s}_{i}\left(\pi\right)=\left\{ \pi\left(k\cdot\left(i-1\right)+1\right),\ldots,\pi\left(k\cdot\left(i-1\right)+b\right)\right\} .
\]
If $\bm{\pi}=\left(\pi_{i}\right)_{i=1}^{g}$denotes a collection
of permutations drawn independently and uniformly at random from $\mathcal{P}_{n}$,
then the collection $\mathsf{R}_{g,k}\left(\bm{\pi}\right)=\left(\mathsf{r}_k\left(\pi_{i}\right)\right)_{i=1}^{g}$
is equidistributed with the collection $\mathsf{R}_{g,k}$ defined in \cref{sec: Reproducible}. 

Now, we construct an exchangeable pair $(\bm{\pi},\bm{\pi}^\prime)$, keeping in mind that our aim is to verify a condition of the form \eqref{eq: general finiteness}. For any permutation $\pi\in\mathcal{P}_{n}$ and indices $i,j\in\left[n\right]$,
define the updated permutation 
\[
\hat{\pi}\left(i,j\right)\left(x\right)=\begin{cases}
j, & x=i,\\
\pi\left(i\right), & x=\pi^{-1}\left(j\right),\\
\pi\left(x\right), & \text{otherwise.}
\end{cases}
\]
In other words, $\hat{\pi}\left(i,j\right)$ is identical to $\pi$,
except that $i$ maps to $j$ and $\pi^{-1}\left(j\right)$ maps to
$\pi\left(i\right)$. Let $L$ be distributed uniformly on $\left[g\right]$ and let $I$ and $J$
be independently and uniformly distributed on $\left[n\right]$. Define
the modified collection
\begin{flalign*}
\bm{\pi}^{\prime} & =(\pi_{-L},\hat{\pi}_{L}(I,J)).
\end{flalign*}
and observe that $(\bm{\pi},\bm{\pi}^\prime)$ is an exchangeable pair. 

With the exchangeable pair $(\bm{\pi},\bm{\pi}^\prime)$ in place, we choose a coupled
pair of Markov chains that it induces. For each $m\geq1$, let $L_{m}$ be distributed uniformly on $\left[g\right]$
and let $I_{m}$ and $J_{m}$ be distributed uniformly on $\left[n\right]$.
Construct $\left(\bm{\pi}_{m},\bm{\pi}_{m}^{\prime}\right)$ from
the pair $\left(\bm{\pi}_0,\bm{\pi}^{\prime}_0\right)= \left(\bm{\pi},\bm{\pi}^{\prime}\right)$ by setting 
\begin{flalign}
\bm{\pi}_{m} & =(\pi_{m-1,-L_{m}},\hat{\pi}_{m-1,L_{m}}(I_{m},J_{m}))
\quad\text{and}\quad
\bm{\pi}_{m}^{\prime}=(\pi_{m-1,-L_{m}}^{\prime},\hat{\pi}_{m-1,L_{m}}^{\prime}(I_{m},J_{m}))\label{eq: kernel coupling state-1}
\end{flalign}
recursively. In other words, the Markov chain $\bm{\pi}_0^\prime$ is initialized by choosing one permutation in $\bm{\pi}_0$ and swapping one pair of indices. In the $m$th iteration of the Markov chain $(\bm{\pi}_m,\bm{\pi}_m^\prime)_{m\geq0}$, the permutations $\pi_{m-1,L_m}$ and  $\pi^\prime_{m-1,L_m}$ are selected and updated so that $I_m$ now maps to $J_m$. As the iterations proceed, $I_m$ will continue to map to the same index in the $L_m$th permutation in both collections. That is, in each iteration that a new permutation $L$ and index $I$ are selected, the two collections become more similar. Eventually, every $L$ and $I$ will have been selected, the two collections $\bm{\pi}_m$ and $\bm{\pi}_m^\prime$ will be the same, and so $a(\mathsf{R}_{g,k}(\bm{\pi}_m), D) - a(\mathsf{R}_{g,k}(\bm{\pi}_m^\prime), D) = 0$. This will imply the finiteness \eqref{eq: general finiteness}. This argument is formalized in the proof of the following Lemma. 
\begin{lemma}
\label{lem: representer construction applied}The function 
\[
A\left(\bm{\pi},\bm{\pi}^\prime\mid D\right)
=\sum_{m=0}^{\infty}
\mathbb{E}\left[
\left(a\left(\mathsf{R}_{g,k}(\bm{\pi}_m), D\right) -a\left(\mathsf{R}_{g,k}(\bm{\pi}_m^\prime), D\right) \right)
\mid \bm{\pi}_0=\bm{\pi},\bm{\pi}_0^{\prime}=\bm{\pi}^\prime,D\right],
\]
is finite, antisymmetric, and satisfies the equality 
\begin{equation}
\mathbb{E}\left[
A\left(\bm{\pi},\bm{\pi}^\prime\mid D\right)
\mid \bm{\pi}, D\right]
= a\left(\mathsf{R}_{g,k}(\bm{\pi}), D\right) 
- \bar{a}\left(D\right)\label{eq: conditional  representer def}
\end{equation}
almost surely. 
\end{lemma}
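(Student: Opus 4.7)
The plan is to exploit the coupled Markov chain construction to reduce the Stein identity to a telescoping sum, with finiteness guaranteed by coalescence of the two chains.

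The key computation is distributional: conditional on $\bm{\pi}_0 = \bm{\pi}$, the chain $\{\bm{\pi}_m'\}_{m\geq 0}$ has marginal law equal to the shifted chain $\{\bm{\pi}_{m+1}\}_{m\geq 0}$. Because $(\bm{\pi},\bm{\pi}')$ is an exchangeable pair, $\bm{\pi}_0'\mid \bm{\pi}_0$ is distributed as one step of the Markov kernel $K$ induced by that pair; each subsequent coupled update in \eqref{eq: kernel coupling state-1} applies $K$ marginally to each chain, giving $\bm{\pi}_m'\mid \bm{\pi}_0 \sim K^{m+1}(\bm{\pi}_0,\cdot)$. Taking conditional expectation of $A$ given $\bm{\pi}$ and $D$, interchanging sum and expectation, and telescoping yields
\[
\mathbb{E}\left[A(\bm{\pi},\bm{\pi}'\mid D)\mid \bm{\pi}, D\right]
= a\left(\mathsf{R}_{g,k}(\bm{\pi}),D\right)
- \lim_{N\to\infty} \mathbb{E}\left[a\left(\mathsf{R}_{g,k}(\bm{\pi}_N),D\right)\mid \bm{\pi}_0=\bm{\pi}, D\right].
\]
Because $K$ has the uniform distribution on $\mathcal{P}_n^g$ as its unique stationary distribution, the limit equals $\bar{a}(D)$, delivering the Stein identity.

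Antisymmetry of $A$ is immediate from the construction: exchanging the initial conditions $\bm{\pi}_0 \leftrightarrow \bm{\pi}_0'$ runs the same coupled chain with the two labels swapped, so every summand reverses sign.

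The main technical obstacle is establishing that $A$ is well-defined as an almost-surely convergent sum and justifying the interchange of sum and expectation. I would follow the coalescence strategy of \cite{chatterjee2005concentration}. By direct inspection of \eqref{eq: kernel coupling state-1}, after the update at step $m$ the two permutations $\pi_{m,L_m}$ and $\pi_{m,L_m}'$ both send $I_m$ to $J_m$, forcing agreement at that coordinate; combining this mechanism with ergodicity of $K$ on the finite state space $\mathcal{P}_n^g$ implies that the coupling time $\tau = \inf\{m : \bm{\pi}_m = \bm{\pi}_m'\}$ has finite expectation. Summands for $m\geq \tau$ vanish, and each of the at most $\tau$ remaining terms is bounded by the $D$-measurable diameter $\max_{\mathsf{r},\mathsf{r}'\in \mathcal{R}_{n,k,b}} |a(\mathsf{r},D) - a(\mathsf{r}',D)|$, which is almost surely finite. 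Dominated convergence conditional on $D$ then licenses the termwise interchange and delivers the claimed finiteness.
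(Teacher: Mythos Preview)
Your proposal is correct and follows essentially the same route as the paper: both use Chatterjee's coupled Markov chain construction, telescope the summands via $K^m a - K^{m+1}a$, and establish absolute convergence through a coupling-time bound (the paper packages this as Lemma~\ref{lem: finiteness}, giving the explicit constant $2gn^2$ via a coupon-collector argument).

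The one noteworthy difference is how the limit $\lim_{N\to\infty} K^N a(\bm{\pi})$ is identified. You invoke ergodicity of the finite-state chain directly to conclude the limit is $\bar{a}(D)$. The paper instead argues indirectly: it uses the finiteness lemma to show the limit is independent of the initial $\bm{\pi}$ (hence a function $b(D)$ of $D$ alone), then computes $\mathbb{E}[A\mid D]=0$ by exchangeability and dominated convergence, forcing $b(D)=0$ since $\mathbb{E}[a-\bar{a}\mid D]=0$. Your route is shorter but requires checking irreducibility and aperiodicity of $K$ (both hold---the chain has positive holding probability whenever $\pi_L(I)=J$); the paper's route avoids this verification at the cost of an extra step. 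One small imprecision: your phrase ``summands for $m\geq\tau$ vanish'' should be read as the \emph{integrand} vanishing, since the summands are conditional expectations and $\tau$ is random; the correct bound is $\sum_m|\text{summand}_m|\leq \text{diameter}\cdot\mathbb{E}[\tau\mid\bm{\pi}_0,\bm{\pi}_0',D]$, which is what you need.
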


\subsection{Proofs\label{sec: general main proofs}}

\subsubsection{Proofs for Non-Sequential Results\label{sec: non-sequential appendix}}

To obtain the concentration inequality \cref{thm: cross split concentration appendix} and Burkholder-Davis-Gundy inequality \cref{cor: Moment bound}, we apply the following result due to \citet{chatterjee2005concentration,chatterjee2007stein}, which we have augmented to be applicable under a set of assumptions considered in \citet{paulin2016efron}. 
\begin{lemma}
\label{thm: chatterjee concentration}Let $\mathcal{X}$ be a separable
metric space and suppose that $\left(X,X^{\prime}\right)$ is an exchangeable
pair of $\mathcal{X}$-valued random variables. Let $f:\mathcal{X}\to\mathbb{R}$ be a square integrable function and let $F:\mathcal{X}\times\mathcal{X}\to\mathbb{R}$ 
be a Stein representer for $f$. For each positive integer $(r)$, define the quantities 
\begin{equation}
U_{f}^{(r)}\left(X\right)=\frac{1}{2}\mathbb{E}\left[\left(f\left(X\right)-f\left(X^{\prime}\right)\right)^{2r}\mid X\right]
\quad\text{and}\quad 
U_{F}^{(r)}\left(X\right)=\frac{1}{2}\mathbb{E}\left[F\left(X,X^{\prime}\right)^{2r}\mid X\right].\label{eq: conditional variance def}
\end{equation}
If there exist nonnegative constants $u$ and $s$ such that
\begin{equation}
U^{(1)}_{f}\left(X\right)\leq s^{-1}u\quad\text{and}\quad U^{(1)}_{F}\left(X\right)\leq su,\label{eq: conditional variance bound}
\end{equation}
then the concentration inequality
\begin{equation} \label{eq: chatterjee concentration display}
P\left\{ \vert f\left(X\right)\vert\geq\delta\right\} \leq2\exp\left(-t^{2}/2u\right)
\end{equation}
holds for all $t\geq0$. Moreover, the moment inequality
\begin{equation} \label{eq: chatterjee moment display}
\mathbb{E}\left[f\left(X\right)^{2r}\right]\leq\left(2r-1\right)^{r}\left(s\mathbb{E}\left[U_{F}^{(r)}(X)\right]+s^{-1}\mathbb{E}\left[U_{f}^{(r)}(X)\right]\right)
\end{equation}
holds for all positive integers $r$ for any positive constant $s$.
\end{lemma}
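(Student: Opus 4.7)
The plan is to establish both parts of the lemma through a moment generating function / moment bookkeeping argument in the spirit of Chatterjee's thesis, exploiting three features of the exchangeable pair $(X,X')$ together with its Stein representer: the identity $\mathbb{E}[F(X,X') \mid X] = f(X)$, the antisymmetry $F(X,X') = -F(X',X)$, and the equality in distribution $(X,X') \stackrel{d}{=} (X',X)$.

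For the concentration inequality \eqref{eq: chatterjee concentration display}, I would study the moment generating function $m(\theta) = \mathbb{E}[e^{\theta f(X)}]$ and aim for a sub-Gaussian estimate. Differentiating and then applying the Stein identity, antisymmetry, and exchangeability in succession rewrites $m'(\theta) = \mathbb{E}[f(X) e^{\theta f(X)}]$ as $\tfrac{1}{2}\mathbb{E}[(e^{\theta f(X)} - e^{\theta f(X')}) F(X,X')]$. The convexity inequality $e^{a} - e^{b} \leq \tfrac{1}{2}(a-b)(e^{a} + e^{b})$, combined with one more use of exchangeability, bounds this by $\tfrac{\theta}{2}\mathbb{E}[(f(X) - f(X'))F(X,X') e^{\theta f(X)}]$. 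Conditioning on $X$ and applying Cauchy--Schwarz produces the pointwise bound $\mathbb{E}[(f(X) - f(X')) F(X,X') \mid X] \leq 2\sqrt{U_f^{(1)}(X) U_F^{(1)}(X)} \leq 2u$, where the last inequality multiplies the two hypotheses in \eqref{eq: conditional variance bound} so that the free parameter $s$ cancels. Integrating the resulting differential inequality $m'(\theta) \leq u\theta m(\theta)$ yields $m(\theta) \leq e^{u\theta^{2}/2}$, and the claim follows from Chernoff's inequality optimized at $\theta = t/u$, combined with the analogous argument applied to $-f$.

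For the moment inequality \eqref{eq: chatterjee moment display}, I would apply the Stein identity with $f(X)^{2r-1}$ in place of $e^{\theta f(X)}$ to write $\mathbb{E}[f(X)^{2r}] = \tfrac{1}{2}\mathbb{E}[(f(X)^{2r-1} - f(X')^{2r-1}) F(X,X')]$. The elementary bound $|a^{p} - b^{p}| \leq p|a - b|(|a|^{p-1} + |b|^{p-1})$ with $p = 2r - 1$, followed by one use of exchangeability, reduces the right-hand side to $(2r - 1)\mathbb{E}[|f(X)|^{2r-2} |f(X) - f(X')| |F(X,X')|]$. H\"older's inequality with exponents $(r/(r-1),\, 2r,\, 2r)$ (whose reciprocals sum to one), combined with the identities $\mathbb{E}[(f(X) - f(X'))^{2r}] = 2\mathbb{E}[U_f^{(r)}(X)]$ and its analogue for $F$, then gives the bound $\mathbb{E}[f(X)^{2r}] \leq (2r-1) \mathbb{E}[f(X)^{2r}]^{(r-1)/r}(2\mathbb{E}[U_f^{(r)}(X)])^{1/(2r)}(2\mathbb{E}[U_F^{(r)}(X)])^{1/(2r)}$. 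Cancelling $\mathbb{E}[f(X)^{2r}]^{(r-1)/r}$ from both sides and raising to the $r$th power gives $\mathbb{E}[f(X)^{2r}] \leq 2(2r-1)^{r}\sqrt{\mathbb{E}[U_f^{(r)}(X)]\mathbb{E}[U_F^{(r)}(X)]}$, and a final AM--GM step $2\sqrt{ab} \leq s^{-1} a + s b$ introduces the free parameter $s$ advertised in the statement. Note that this part does not invoke the hypothesis \eqref{eq: conditional variance bound}; it uses only that $F$ is a Stein representer for $f$.

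The main obstacle will be the bookkeeping in the moment inequality, specifically the justification for cancelling $\mathbb{E}[f(X)^{2r}]^{(r-1)/r}$ from both sides: if that quantity is a priori infinite or zero, the division is unjustified. I would handle this by first running the argument for the truncation $f_{M}(X) = f(X)\mathbb{I}\{|f(X)| \leq M\}$, whose $2r$th moment is trivially finite, and then sending $M \to \infty$ via monotone convergence on both sides, noting that a Stein representer for $f_{M}$ can be produced from $F$ and the coupled Markov chain machinery behind \cref{lem: representer construction applied}. The boundary case $r = 1$, where H\"older's exponent $r/(r-1)$ degenerates, is handled by a direct Cauchy--Schwarz step in place of H\"older, yielding exactly the same bound.
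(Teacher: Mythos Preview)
Your argument for the concentration inequality \eqref{eq: chatterjee concentration display} is essentially the paper's: the same moment-generating-function identity, the same exponential mean-value bound, and the same differential inequality $m'(\theta)\le u\theta\,m(\theta)$. The only cosmetic difference is that the paper invokes a separate lemma of the form $|(e^{x}-e^{y})c|\le\tfrac14\bigl(s(x-y)^{2}+s^{-1}c^{2}\bigr)(e^{x}+e^{y})$ and then plugs in the additive hypothesis $sU_{f}^{(1)}+s^{-1}U_{F}^{(1)}\le 2u$, whereas you pass through Cauchy--Schwarz to reach the geometric-mean form $2\sqrt{U_{f}^{(1)}U_{F}^{(1)}}\le 2u$; by AM--GM these are the same bound. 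The paper also records a preliminary step you omit: since $U_{F}^{(1)}(X)\ge\tfrac12\bigl(\mathbb{E}[F(X,X')\mid X]\bigr)^{2}=\tfrac12 f(X)^{2}$, the hypothesis $U_{F}^{(1)}\le su$ forces $f(X)$ to be bounded a.s., which is what justifies differentiating under the expectation.

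For the moment inequality \eqref{eq: chatterjee moment display} the routes genuinely diverge. The paper does not reprove anything: it cites Chatterjee's published Burkholder--Davis--Gundy inequality $\mathbb{E}[f(X)^{2r}]\le(2r-1)^{r}\mathbb{E}[\Delta(X)^{r}]$ with $\Delta(X)=\tfrac12\mathbb{E}\bigl[|F(X,X')(f(X)-f(X'))|\mid X\bigr]$, and then bounds $\mathbb{E}[\Delta(X)^{r}]$ by Jensen followed by Young's inequality. You instead give a self-contained derivation via $|a^{2r-1}-b^{2r-1}|\le(2r-1)|a-b|(|a|^{2r-2}+|b|^{2r-2})$ and a three-factor H\"older with exponents $(r/(r-1),2r,2r)$. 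Both are correct; yours buys independence from the external reference at the price of a longer argument, while the paper's is shorter but leans on Chatterjee (2007).

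One simplification for your finiteness obstacle: there is no need for truncation or to manufacture a Stein representer for $f_{M}$. By Jensen, $\mathbb{E}\bigl[U_{F}^{(r)}(X)\bigr]=\tfrac12\mathbb{E}\bigl[F(X,X')^{2r}\bigr]\ge\tfrac12\mathbb{E}\bigl[\mathbb{E}[F(X,X')\mid X]^{2r}\bigr]=\tfrac12\mathbb{E}[f(X)^{2r}]$, so whenever the right-hand side of \eqref{eq: chatterjee moment display} is finite, $\mathbb{E}[f(X)^{2r}]$ is automatically finite and the cancellation step is legitimate; when the right-hand side is infinite the inequality is vacuous.
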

\noindent
Throughout, to ease notation, we use the short hand 
\[
Z = (\mathsf{R}_{g,k}(\bm{\pi}), D)
\quad\text{and}\quad
 Z^\prime = (\mathsf{R}_{g,k}(\bm{\pi}^\prime), D)~.
\] 
The Markov chains $(Z_m)_{m\geq0}$ and $(Z^\prime_m)_{m\geq0}$ are defined accordingly and we simplify $A\left(\bm{\pi},\bm{\pi}^\prime\mid D\right)$ to $A(Z,Z^\prime)$. To apply \cref{thm: chatterjee concentration}, we are required to develop bounds for the objects
\begin{flalign*}
U_{a}^{(r)}\left(Z\right) & =\frac{1}{2}\mathbb{E}\left[\left(a\left(Z\right)-a\left(Z^{\prime}\right)\right)^{2r}\mid Z\right]
\quad\text{and}\quad 
U_{A}^{(r)}\left(Z\right)=\frac{1}{2}\mathbb{E}\left[A\left(Z,Z^{\prime}\right)^{2r}\mid Z\right].
\end{flalign*}
Our approach is based on the following Lemma, which combines
a generalization of an idea due to Lemma 10.4 of \citet{paulin2016efron}
with a Markov type bound.
\begin{lemma}
\label{lem: variance bounds}Let $r=2^{c}$ for some positive integer
$c$. Let $f:\mathcal{Z}\to\mathbb{R}$ be a function such that there
exists a square integrable random variable $W$ with
\begin{equation}
\left(\sum_{m=0}^{\infty}\mathbb{E}\left[f\left(Z_{m}\right)-f\left(Z_{m}^{\prime}\right)\mid Z_{0}=Z,Z_{0}^{\prime}=Z^{\prime}\right]\right)^{r}\leq W\label{eq: abstract integrability}
\end{equation}
almost surely. If the inequality
\begin{equation}
\mathbb{E}\left[\mathbb{E}\left[f(Z_{m})-f(Z_{m}^{\prime})\mid Z_{0}=Z,Z_{0}^{\prime}=Z^{\prime}\right]^{r}\mid\bm{\pi}\right]\leq h_{m}^{r}\label{eq: c_i general bound}
\end{equation}
holds for each $m\geq0$ and each collection $\bm{\pi}$, where $\left(h_{m}\right)_{m\geq0}$
is a deterministic sequence of nonnegative numbers, then the inequalities
\begin{flalign}
\frac{1}{2}\mathbb{E}\left[\left(f\left(Z\right)-f\left(Z^{\prime}\right)\right)^{r}\mid Z\right] & \leq\frac{h_{0}^{r}}{\delta}\quad\text{and}\label{eq: cond variance bound}\\
\frac{1}{2}\mathbb{E}\left[\left(\sum_{m=0}^{\infty}\mathbb{E}\left[f\left(Z_{m}\right)-f\left(Z_{m}^{\prime}\right)\mid Z_{0}=Z,Z_{0}^{\prime}=Z^{\prime}\right]\right)^{r}\mid Z\right] & \leq\frac{1}{\delta}\left(\sum_{m=0}^{\infty}h_{m}\right)^{r}
\end{flalign}
both hold with probability greater than $1-\delta$ as $D$ varies
\end{lemma}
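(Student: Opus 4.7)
The plan is to derive both inequalities via a uniform two-step recipe: first obtain an unconditional $L^r$ moment bound by applying the hypothesis termwise and invoking Minkowski's inequality, and then convert this to the desired conditional bound (given $Z$) that holds with probability at least $1-\delta$ by a Markov argument on the nonnegative random variable $\mathbb{E}[\cdot \mid Z]$. The square-integrable majorant $W$ enters only to legitimize the interchange of the infinite sum and the expectation in the second step.

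For the first inequality I would specialize the hypothesis to $m=0$. Since $Z_0 = Z$ and $Z_0^\prime = Z^\prime$, the inner conditional expectation collapses to $f(Z) - f(Z^\prime)$, so the hypothesis reads $\mathbb{E}[(f(Z)-f(Z^\prime))^r \mid \bm{\pi}] \leq h_0^r$ almost surely. Averaging over $\bm{\pi}$ yields the unconditional bound $\mathbb{E}[(f(Z)-f(Z^\prime))^r] \leq h_0^r$. By the tower property, this is also the unconditional expectation of the nonnegative random variable $\mathbb{E}[(f(Z)-f(Z^\prime))^r \mid Z]$. Applying Markov's inequality then gives $\mathbb{E}[(f(Z)-f(Z^\prime))^r \mid Z] \leq h_0^r/\delta$ on an event of probability at least $1-\delta$, and the factor $\tfrac12$ in the stated bound provides comfortable slack.

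For the second inequality, write $Y_m = \mathbb{E}[f(Z_m) - f(Z_m^\prime) \mid Z_0 = Z, Z_0^\prime = Z^\prime]$, so that $A(Z,Z^\prime) = \sum_{m \geq 0} Y_m$. Because $r = 2^c$ with $c \geq 1$ is an even integer, $Y_m^r \geq 0$ almost surely, and Minkowski's inequality in $L^r(\mathbb{P})$ gives
\begin{equation*}
\left( \mathbb{E}\left[ \Bigl( \sum_{m=0}^\infty Y_m \Bigr)^r \right] \right)^{1/r}
\leq \sum_{m=0}^\infty \left( \mathbb{E}[Y_m^r] \right)^{1/r}
\leq \sum_{m=0}^\infty h_m,
\end{equation*}
where the second inequality follows by averaging the hypothesis $\mathbb{E}[Y_m^r \mid \bm{\pi}] \leq h_m^r$ over $\bm{\pi}$. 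Raising both sides to the $r$th power gives $\mathbb{E}[A(Z,Z^\prime)^r] \leq (\sum_m h_m)^r$, and applying Markov's inequality to $\mathbb{E}[A(Z,Z^\prime)^r \mid Z]$ exactly as in the first step produces the claimed conditional bound with probability at least $1-\delta$.

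I expect the only delicate point to be bookkeeping around the two different conditional expectations: $\mathbb{E}[\cdot \mid \bm{\pi}]$ in the hypothesis averages over both $D$ and the coupling randomness of $(Z_m, Z_m^\prime)_{m \geq 0}$, whereas $\mathbb{E}[\cdot \mid Z]$ averages only over the latter. The Markov step is what converts a bound of the first type into a high-probability bound of the second, at the price of the factor $1/\delta$. The majorization $A^r \leq W$ with $W$ square integrable ensures that the infinite series defining $A$ converges almost surely and in $L^r$, so that the exchange of sum and expectation underlying Minkowski's inequality is valid and no boundary issues arise.
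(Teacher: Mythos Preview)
Your proposal is correct and follows essentially the same two-step recipe as the paper: first bound the $\bm{\pi}$-conditional (equivalently, unconditional) $r$th moment of the increments and of their infinite sum, then apply Markov's inequality to convert this into the high-probability conditional bound. The only substantive difference is in how the moment bound for the infinite sum is obtained: the paper isolates this as a separate ``sequence bound'' lemma and proves it by a repeated Young/AM-GM argument tailored to exponents $r=2^{c}$, whereas you invoke Minkowski's inequality in $L^{r}$ directly. These yield the same conclusion, and your route is arguably more transparent; the paper's version has the minor advantage of making the role of the majorant $W$ (via dominated convergence) completely explicit.

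One small bookkeeping point: the lemma asserts that \emph{both} displayed inequalities hold simultaneously with probability greater than $1-\delta$, so a union bound is needed. The paper handles this by applying Markov with threshold $\tfrac{2}{\delta}\,\mathbb{E}[\,\cdot\mid\bm{\pi}]$ to each of $U_{f}$ and $U_{F}$ and then intersecting; the factor $\tfrac{1}{2}$ built into the definitions of $U_{f}$ and $U_{F}$ is precisely what absorbs the extra $2$, rather than being left over as slack. Your argument, as written, gives each inequality separately with probability $1-\delta$ and would need this adjustment to get them jointly.
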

\noindent To apply this Lemma, we begin by noting that the inequality
\begin{flalign}
& \left(
\sum_{m=0}^{\infty}\mathbb{E}\left[a\left(Z_{m}\right)-a\left(Z_{m}^{\prime}\right)\mid Z_{0}=Z,Z_{0}^{\prime}=Z^{\prime}\right]
\right)^2  \nonumber \\
 & \lesssim g^2n^{4}\max_{\mathsf{s},\mathsf{s}^{\prime}\in\mathsf{S}_{n,b}}\left(T\left(\mathsf{s},U,D\right)-T\left(\mathsf{s}^{\prime},U,D\right)\right)^2
 \label{eq: bound for sq representer}
\end{flalign}
follows from \cref{lem: finiteness}, stated in \cref{app: proof of representer construction applied}. Moreover, the right hand
side of \eqref{eq: bound for sq representer} is square integrable,
as the fourth order split-stability $\zeta^{(4)}$ is finite by assumption. Deterministic
bounds of the form \eqref{eq: c_i general bound} are obtained through
the following Lemma.
\begin{lemma}
\label{lem: deterministic bound}Under \cref{assu: invariance,assu: linearity},
for all integers $m\geq0$ and $r\geq1$, the inequality 
\begin{flalign*}
 & \mathbb{E}\left[\mathbb{E}\left[a(Z_{m})-a(Z_{m}^{\prime})\mid Z_{0}=Z,Z_{0}^{\prime}=Z^{\prime}\right]^{2r}\mid\bm{\pi}\right] \\ 
 & \leq 2^{4r}  \left(1-\frac{2}{gn^{2}}\right)^{2mr}
\left(\frac{2n-bk-b}{gn^{2}}\right)^{2r} \Gamma_{k,\varphi,b}^{(r)}
\end{flalign*}
holds almost surely, where $\Gamma_{k,\varphi,b}^{(r)}$ is defined in the statement of \cref{cor: Moment bound}.
\end{lemma}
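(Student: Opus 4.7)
The plan is to directly estimate $c_m(Z,Z') := \mathbb{E}[a(Z_m) - a(Z_m') \mid Z_0 = Z, Z_0' = Z']$ by exploiting two structural facts: (i) by \cref{assu: linearity}, the initial difference $a(Z_0) - a(Z_0')$ only involves one of the $g$ cross-splits, and (ii) under the synchronous coupling specified in \eqref{eq: kernel coupling state-1}, a single update step $(L_s, I_s, J_s)$ leaves the difference of the two chains intact unless $L_s = L_0$ and $I_s$ coincides with one of the two positions at which $\pi_{L_0}$ and $\hat{\pi}_{L_0}(I_0, J_0)$ disagree. This joint event has probability $2/(gn^2)$, which will furnish the geometric contraction $(1 - 2/(gn^2))^m$.

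I would proceed in three steps. First, using \cref{assu: linearity}, write $a(Z_0) - a(Z_0') = g^{-1}\bigl(a(\mathsf{r}_k(\pi_L), D) - a(\mathsf{r}_k(\hat{\pi}_L(I,J)), D)\bigr)$ and expand both terms via \eqref{eq: linear agg}--\eqref{eq: linear sep}. The transposition induced by $\hat{\pi}_L(I, J)$ reassigns observations across folds only when the swapped coordinates lie in distinct folds or when one lies outside all folds; a direct count gives these events probabilities proportional to $(k\varphi - \varphi)/(1 - \varphi)$ and $(1 - k\varphi)/(1 - \varphi)$, respectively. Under \cref{assu: invariance} and the symmetry of the nuisance estimator, the resulting perturbations of $\psi$ on the "validation" side are controlled by $\sigma_{\mathsf{valid}}^{(2r)}$ in the "leftover" case and on the "training" side by $\sigma_{\mathsf{train}}^{(2r, b-1)}$ in the "cross-fold" case, assembling into $\Gamma_{k,\varphi,b}^{(r)}$ after the case split and Jensen's inequality. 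This yields the bound on $|c_0(Z,Z')|^{2r}$ averaged over $(L, I, J)$. Second, I would analyze the semigroup action: write $c_m = K^m (a - a^{\text{shift}})$ where $K$ is the Markov kernel induced by a single update. A one-step decomposition shows that with probability $1 - 2/(gn^2)$ the distinguishing coordinates remain untouched and $c_m$ simply inherits $c_{m-1}$, while on the complementary event the updated chains can be coupled to a fresh draw of the affected coordinates; in either case the contribution to $c_m$ is dominated by a contraction of $c_{m-1}$. Iterating produces $|c_m(Z,Z')| \leq (1 - 2/(gn^2))^m |c_0(Z, Z')|$ pathwise. Third, raise to the $2r$-th power and take expectation conditional on $\bm\pi$; the factor $(2n - bk - b)/(gn^2)$ arises as the product of $1/g$ (restriction to a single cross-split) and the effective perturbation magnitude $(2n - bk - b)/n^2$ coming from the probability that the transposition crosses fold boundaries.

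The main obstacle is the single-step contraction on the "reset" event, where $L_s = L_0$ and $I_s$ hits one of the two distinguishing positions. In this case the two chains do not simply merge—the post-update pair of permutations can differ at a new pair of positions whose distribution depends on the current chain state. Handling this rigorously requires a second layer of coupling, in the spirit of \cite{chatterjee2005concentration}, in which the reset step is represented as an independent resample of the relevant data coordinates. This replaces the chain-dependent residual with a fresh perturbation whose $2r$-th moment is again bounded by a term absorbable into $\Gamma_{k,\varphi,b}^{(r)}$. Once this representation is secured, the remaining work is bookkeeping: combining Jensen's inequality with the convexity of $x \mapsto x^{2r}$, extracting the $2^{4r}$ prefactor from the dyadic expansion of $(a - a')^{2r}$ on the "same-fold" versus "different-fold" partitions, and collecting the geometric factor $(1 - 2/(gn^2))^{2mr}$ from iterating the per-step contraction.
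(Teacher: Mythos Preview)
Your Step 2 contains a genuine gap. The pathwise inequality $|c_m(Z,Z')|\le(1-2/(gn^2))^m|c_0(Z,Z')|$ does not follow from the coupling in \eqref{eq: kernel coupling state-1}. Two concrete failures: (a) the event ``$L_s=L_0$ and $I_s$ hits one of the two distinguishing positions'' has probability $2/(gn)$, not $2/(gn^2)$; the merge probability $2/(gn^2)$ corresponds to the strictly smaller event in which $J_s$ additionally hits one of the two distinguishing \emph{values}; (b) on the no-merge event the pair of distinguishing positions can \emph{shift} (whenever $I_s$ or $J_s$ touches the pair without completing a merge), so $a(Z_1)-a(Z_1')$ is in general a different data-swap from $a(Z_0)-a(Z_0')$. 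Consequently $c_{m-1}(Z_1,Z_1')$ on the no-merge event is not dominated by $c_{m-1}(Z_0,Z_0')$, and a one-step contraction cannot be iterated. Your final paragraph anticipates the trouble and proposes to control the reset by a fresh perturbation bounded in $2r$-th moment, but that trades the pathwise bound for a moment bound at each step; you would then need a separate argument to sum the resulting series, and you no longer obtain the stated inequality.

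The paper avoids any pathwise recursion. It introduces the event $\mathcal H_m=\mathcal E_m\cap\mathcal F_m$ (the chains still differ \emph{and} the two distinguishing indices land in different folds), writes
\[
\mathbb{E}[a(Z_m)-a(Z_m')\mid Z,Z']
= P\{\mathcal H_m\mid Z,Z'\}\cdot\mathbb{E}\bigl[a(Z_m)-a(Z_m')\mid Z,Z',\mathcal H_m\bigr],
\]
raises to the $2r$-th power, and applies Jensen to the second factor. The entire $m$-dependence is carried by $P\{\mathcal H_m\mid\mathcal E_0\}^{2r}$, which \emph{simultaneously} produces the geometric factor $(1-2/(gn^2))^{2mr}$ and the cross-fold factor $(kb(2n-bk-b)/(n(n-1)))^{2r}$. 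The remaining term $\mathbb{E}[(a(Z_m)-a(Z_m'))^{2r}\mid\mathcal H_m]$ is, after averaging over the uniform $\bm\pi$, the same for every $m$; the case split on whether both distinguishing indices sit inside folds (event $\mathcal G$) or only one does ($\mathcal H\setminus\mathcal G$) then reduces it to the sample stabilities that combine into $\Gamma_{k,\varphi,b}^{(r)}$. The key idea you are missing is that the decay $(1-2/(gn^2))^m$ enters as the \emph{probability} of a null-complement event at time $m$, not as a contraction of the magnitude of the difference along the chain.
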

\noindent
Combining \cref{lem: variance bounds,lem: deterministic bound}, we have that  
\begin{flalign*}
U_{A}^{(r)}(Z) & \leq \frac{1}{\delta}\left(\frac{gn^{2}}{2}\right)^{r}\left(\frac{2^4\left(2n-bk-b\right)^{2}}{gn^{2}} \right)^r \Gamma_{k,\varphi,b}^{(r)}
\end{flalign*}
and 
\begin{flalign*}
U_{a}^{(r)}(Z) & \leq \frac{1}{\delta}\left(\frac{2}{gn^{2}}\right)^{r}\left(\frac{2^4\left(2n-bk-b\right)^{2}}{gn^{2}} \right)^r \Gamma_{k,\varphi,b}^{(r)}
\end{flalign*}
with probability $1-\delta$. \cref{thm: cross split concentration} is obtained by setting $r=1$ and applying \eqref{eq: chatterjee concentration display} of \cref{thm: chatterjee concentration} with $s=gn^2/2$ and
\[
u=\frac{1}{\delta}\frac{2^4\left(2n-bk-b\right)^{2}}{gn^{2}} \Gamma_{n,k,b}^{(1)}.
\]
Similarly, \cref{cor: Moment bound} is obtained by applying  \eqref{eq: chatterjee moment display} of \cref{thm: chatterjee concentration} with $s=(gn^2/2)^r$.

The normal approximation error bound \cref{thm: normal approximation appendix} follows from \cref{cor: Moment bound}. To see this, consider the centered statistic 
\begin{equation}
a\left(Z\right)-\bar{a}\left(D\right)=\frac{1}{g}\sum_{l=1}^{g}\bar{a}\left(\mathsf{r}_{l},D\right),
\quad\text{where}\quad
\bar{a}\left(\mathsf{r}_{l},D\right)=\frac{1}{k}\sum_{i=1}^{k}\left(T(\mathsf{s}_{l,i},D)-\bar{a}\left(D\right)\right)\label{eq: centered}
\end{equation}
for each $l$ in $\left[g\right]$. Conditional on the data $D$, the statistics $\bar{a}\left(\mathsf{r}_{l},D\right)$ are independent, identically distributed, and mean zero. Moreover, we have that 
\begin{flalign}
\mathbb{E}\left[\vert\bar{a}\left(\mathsf{r}_{l},D\right)\vert^{3}\right] &
\leq
3^2 2^6 \left(2-\varphi k-\varphi\right)^{3}(\Gamma_{k,\varphi,b}^{(2)})^{3/4}
\label{eq: apply fourth moment bound}
\end{flalign}
by H\"older's inequality and \cref{cor: Moment bound}. Hence, we find that 
\begin{flalign*}
\frac{\mathbb{E}\left[\vert\bar{a}\left(\mathsf{r}_{l},D\right)\vert^{3}\mid D \right]}
{g^{1/2}\left(v_{1,k}(D)\right)^{3/2}} & 
\leq
\frac{3^2 2^6}{\delta}\frac{\left(2-\varphi k-\varphi \right)^{3}}{\left(v_{1,k}(D)\right)^{3/2}} \frac{(\Gamma_{k,\varphi,b}^{(2)})^{3/4}}{g^{1/2}}
\end{flalign*}
holds with probability greater than $1-\delta,$ by combining \eqref{eq: apply fourth moment bound} with the Markov inequality. The proof then follows by the Berry-Esseen inequality. See e.g., Corollary 1 of \cite{shevtsova2011absolute}.\hfill\qed

\subsubsection{Proofs for Sequential Results\label{sec: sequential appendix}}The first step for verifying \cref{thm: stable symmetric reproduce appendix} is deriving a large deviation bound for the variance estimator $\hat{v}\left(\mathsf{R}_{g,k}, D\right)$ defined in \eqref{eq: var estimator}. This is obtained in the following Lemma, which follows from an argument very similar to the proof of \cref{thm: cross split concentration appendix}.
\begin{lemma}
\label{thm: variance estimator concentration}Suppose that \cref{assu: invariance,assu: linearity} hold and that the data $D$ are independent and identically distributed. If the eighth-order split-stability
$\zeta^{(8)}$ is finite, then the conditional concentration inequality
\begin{flalign*}
 & \log \frac{1}{4}P\left\{ 
 \bigg\vert \frac{\hat{v}(\mathsf{R}_{g,k}, D)}{v_{g,k}\left(D\right)} - 1\bigg\vert
 \geq t\mid D\right\}
 \lesssim
  - \frac{\delta \left(v_{1,k}(D)\right)^2}{\left(2- \varphi k-\varphi\right)^{4}} \frac{g t^{2}}{\Gamma_{k,\varphi,b}^{(2)}}
\end{flalign*}
holds for all $t>0$ with probability greater than $1-\delta$ as $D$ varies.
\end{lemma}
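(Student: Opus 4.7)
\medskip
\noindent\textbf{Proof proposal.} The plan is to apply the exchangeable-pairs machinery used to establish \cref{thm: cross split concentration appendix}, now with $f(\bm{\pi}) = \hat{v}(\mathsf{R}_{g,k}(\bm{\pi}), D) - v_{g,k}(D)$ playing the role previously filled by the centered statistic $a(\mathsf{R}_{g,k}, D) - \bar{a}(D)$. I reuse verbatim the exchangeable pair $(\bm{\pi}, \bm{\pi}^\prime)$ and the coupled Markov chains $(Z_m, Z_m^\prime)_{m\geq 0}$ constructed in \cref{sec: construction}. The analogue of \cref{lem: representer construction applied} will produce a Stein representer $H(Z, Z^\prime)$ with $\mathbb{E}[H(Z, Z^\prime)\mid Z, D] = f(\bm{\pi})$, provided the $L^2$ analogue of the finiteness condition \eqref{eq: bound for sq representer} holds. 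Since $\hat{v}$ is quadratic in the $a(\mathsf{r}_i, D)$, which are themselves linear in the $T(\mathsf{s}, D)$, this integrability step requires square integrability of fourth powers of $T(\mathsf{s}, D)$, which is exactly the reason the eighth-order split stability $\zeta^{(8)}$ appears as an assumption.

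The heart of the proof is the analogue of \cref{lem: deterministic bound} for $f$. One Markov step replaces $\pi_L$ by $\hat{\pi}_L(I, J)$ and so modifies only $a(\mathsf{r}_L, D)$, say by an amount $\delta_a$, while shifting $\bar{a}(\mathsf{R}_{g,k}, D)$ by $\delta_a/g$. Expanding $(a - \bar{a})^2$ and using $x^2 - y^2 = (x-y)(x+y)$, the resulting one-step change in $\hat{v}$ equals, up to $O(\delta_a^2/g^2)$ corrections,
\[
\frac{2}{g^2}\, \bigl(a(\mathsf{r}_L, D) - \bar{a}(\mathsf{R}_{g,k}, D)\bigr) \cdot \delta_a.
\]
Taking squares, applying Cauchy--Schwarz, and exploiting independence of the step direction from the current Markov state separates the relevant expectation into a product of two factors: a squared single-permutation change of $a$, already controlled in the proof of \cref{lem: deterministic bound}, and a squared centered deviation of $a$, controlled through the $r = 2$ case of \cref{cor: Moment bound} (hence the appearance of $\Gamma_{k,\varphi,b}^{(2)}$ in place of $\Gamma_{k,\varphi,b}^{(1)}$). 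Iterating the geometric contraction of the coupled chain and invoking the analogue of \cref{lem: variance bounds} with the Markov-inequality device from \cref{sec: non-sequential appendix} converts the unconditional deterministic bound into a $(1-\delta)$-probability conditional bound on the quantities $U_f^{(1)}(Z)$ and $U_H^{(1)}(Z)$ entering \cref{thm: chatterjee concentration}.

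Finally, I apply \cref{thm: chatterjee concentration} with $s = (U_H^{(1)}(Z)/U_f^{(1)}(Z))^{1/2}$, which gives an exponent proportional to $-\tilde{t}^2 / \bigl(2 \sqrt{U_f^{(1)}(Z)\, U_H^{(1)}(Z)}\bigr)$ in the resulting bound on $P\{|f| \geq \tilde{t} \mid D\}$. Passing from absolute to relative error via $\tilde{t} = t\cdot v_{g,k}(D) = t\, v_{1,k}(D)/g$ and substituting the bounds from the previous paragraph produces an exponent of order $-\delta\, g\, t^2 (v_{1,k}(D))^2 / \bigl((2 - \varphi k - \varphi)^4 \Gamma_{k,\varphi,b}^{(2)}\bigr)$, matching the claim. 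The main obstacle is the quadratic-in-$a$ structure of $\hat{v}$: a single coupling step produces a product of two $a$-type factors rather than one, so every stability input must be taken at twice the order of its counterpart in the proof of \cref{thm: cross split concentration appendix}, and the centered-deviation factor $a(\mathsf{r}_L, D) - \bar{a}(\mathsf{R}_{g,k}, D)$ in the one-step bound forces the use of the fourth-moment bound from \cref{cor: Moment bound}, which in turn is what upgrades the hypothesis from $\zeta^{(4)}$ to $\zeta^{(8)}$.
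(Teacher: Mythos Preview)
Your approach is correct and parallels the paper's, but the paper takes a slightly different structural route. Rather than applying the exchangeable-pairs method directly to $\hat{v}(\mathsf{R}_{g,k},D)-v_{g,k}(D)$, the paper first algebraically decomposes
\[
\hat{v}_{g,k}(Z)=\frac{g}{g-1}\tilde{v}_{g,k}(Z)-\frac{1}{g-1}\check{v}_{g,k}(Z),
\]
where $\tilde{v}_{g,k}(Z)=g^{-2}\sum_{l}(\bar a(\mathsf{r}_l,D))^2$ and $\check{v}_{g,k}(Z)=(a(Z)-\bar a(D))^2$ are both centered by the \emph{conditional} mean $\bar a(D)$ rather than the empirical one. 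It then runs the Chatterjee machinery separately on $\tilde{v}$ and $\check{v}$ (Lemmas~\ref{lem: var representer} and~\ref{lem: var deterministic bound}), obtaining exponents of order $g^3t^2$ and $g^2t^2$ respectively, and recombines via a weighted union bound. The difference-of-squares plus Cauchy--Schwarz step you describe, together with the $r=2$ case of \cref{cor: Moment bound} for the ``sum'' factor and the $r=2$ case of \cref{lem: deterministic bound} for the ``difference'' factor, is exactly what the paper does for each piece in Lemma~\ref{lem: var deterministic bound}.

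Your direct route avoids the decomposition at the cost of carrying the empirical centering $a(\mathsf{r}_L)-a(\mathsf{R}_{g,k},D)$ through the analysis; this is controlled by the same moment bound after one triangle inequality, so the final rate is identical. The paper's decomposition buys slightly cleaner bookkeeping (each piece has fixed centering and an exact, rather than approximate, difference formula), and it is reused later in the proof of \cref{lem: star hat diff}, Part~(ii), where the martingale structure of the partial sums associated with $\tilde{v}$ and $\check{v}$ is exploited to apply the Steiger maximal inequality. Your $O(\delta_a^2/g^2)$ remainder is indeed lower order and does not affect the exponent.
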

\noindent We focus our analysis on the error
\begin{equation}
\vert P\left\{ a(\mathsf{R}_{\hat{g},k},D)-a(\mathsf{R}_{\hat{g}^{\prime},k}^{\prime},D)\leq\xi\mid D\right\} -\left(1-\beta/2\right)\vert.\label{eq: rep upper tail}
\end{equation}
An analogous argument will yield the same bound for the lower tail.
We begin by bounding \eqref{eq: rep upper tail} with quantities that
will be easier to handle in isolation. Define the objects
\begin{flalign}
U\left(\mathsf{R}_{g,k},D\right) & =\frac{1}{g^\star}\left(\sum_{i=1}^{g}\bar{a}\left(\mathsf{r}_{i,k},D\right)-\sum_{i=1}^{g^{\star}}\bar{a}\left(\mathsf{r}_{i,k},D\right)\right)\quad\text{and}\label{eq: maximal term}\\
Q\left(\mathsf{R}_{g,k},D\right) & =\left(1-\frac{g}{g^{\star}}\right)\left(a(\mathsf{R}_{g,k},D)-\bar{a}\left(D\right)\right)~.\label{eq: cross product}
\end{flalign}
The following Lemma bounds the error \eqref{eq: rep upper tail} in
terms of the error in the normal approximation to the quantity $a(\mathsf{R}_{g^\star,k},D)-a(\mathsf{R}_{g^\star,k}^{\prime},D)$ and generic high
probability bounds on \eqref{eq: maximal term} and \eqref{eq: cross product}. 
\begin{lemma}
\label{lem: reproducability decomposition} Define the events
\begin{flalign*}
\mathcal{U}_{k,\lambda}\left(D\right) & =\left\{ \vert U(\mathsf{R}_{\hat{g},k},D)-U(\mathsf{R}_{\hat{g}^{\prime},k}^{\prime},D)\vert\leq\lambda\sqrt{2v_{g^{\star},k}\left(D\right)}\right\} \quad\text{and}\\
\mathcal{Q}_{k,\lambda}(D) & =\left\{ \Big\vert Q(\mathsf{R}_{g,k},D)-Q(\mathsf{R}_{\hat{g}^{\prime},k}^{\prime},D)\Big\vert\leq\lambda\sqrt{2v_{g^{\star},k}\left(D\right)}\right\} .
\end{flalign*}
The quantity \eqref{eq: rep upper tail} is bounded above by 
\begin{align}
\sup_{z \in \mathbb{R}}
&\bigg\vert 
P\left\{ \frac{a(\mathsf{R}_{g^{\star},k},D)-a(\mathsf{R}_{g^{\star},k}^{\prime},D)}{\sqrt{2v_{g^{\star},k}\left(D\right)}} \leq z \mid D\right\}
-
\Phi(z)  \bigg\vert \nonumber\\
& \quad\quad\quad\quad
+2\lambda+\left(1-P\left\{ \mathcal{U}_{k,\lambda}\left(D\right)\cap\mathcal{Q}_{k,\lambda}(D)\mid D\right\} \right)~,\label{eq: bound with three terms}
\end{align}
where $\Phi(\cdot)$ is the standard normal c.d.f.
\end{lemma}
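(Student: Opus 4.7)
The plan is to reduce the randomly-stopped difference $a(\mathsf{R}_{\hat{g},k},D)-a(\mathsf{R}_{\hat{g}^{\prime},k}^{\prime},D)$ to its analogue evaluated at the deterministic (conditional on $D$) oracle time $g^\star$, plus correction terms whose sizes are controlled exactly by the events $\mathcal{U}_{k,\lambda}(D)$ and $\mathcal{Q}_{k,\lambda}(D)$, and then to invoke a normal approximation for the oracle-time difference. The key algebraic input is a direct rearrangement of the definitions \eqref{eq: maximal term} and \eqref{eq: cross product}:
\begin{equation*}
a(\mathsf{R}_{g,k},D)-\bar a(D)=\bigl(a(\mathsf{R}_{g^\star,k},D)-\bar a(D)\bigr)+U(\mathsf{R}_{g,k},D)+Q(\mathsf{R}_{g,k},D),
\end{equation*}
which holds identically in $g$ and is the identity already used in display \eqref{eq: sequential decomposition} of the proof of \cref{eq: general reproducibility}. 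Applying it to both streams and differencing so that the $\bar a(D)$'s cancel yields
\begin{equation*}
a(\mathsf{R}_{\hat g,k},D)-a(\mathsf{R}_{\hat g^{\prime},k}^{\prime},D)=\bigl[a(\mathsf{R}_{g^\star,k},D)-a(\mathsf{R}_{g^\star,k}^{\prime},D)\bigr]+\bigl[U(\mathsf{R}_{\hat g,k},D)-U(\mathsf{R}_{\hat g^{\prime},k}^{\prime},D)\bigr]+\bigl[Q(\mathsf{R}_{\hat g,k},D)-Q(\mathsf{R}_{\hat g^{\prime},k}^{\prime},D)\bigr].
\end{equation*}

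On the event $\mathcal{E}=\mathcal{U}_{k,\lambda}(D)\cap\mathcal{Q}_{k,\lambda}(D)$, the triangle inequality bounds the last two brackets jointly by $2\lambda\sqrt{2v_{g^\star,k}(D)}$. Writing $Z_\star=(a(\mathsf{R}_{g^\star,k},D)-a(\mathsf{R}_{g^\star,k}^{\prime},D))/\sqrt{2v_{g^\star,k}(D)}$ and $c=\xi/\sqrt{2v_{g^\star,k}(D)}$, this produces the sandwich
\begin{equation*}
\{Z_\star\leq c-2\lambda\}\cap\mathcal{E}\subseteq\{a(\mathsf{R}_{\hat g,k},D)-a(\mathsf{R}_{\hat g^{\prime},k}^{\prime},D)\leq\xi\}\cap\mathcal{E}\subseteq\{Z_\star\leq c+2\lambda\}.
\end{equation*}
Taking conditional probabilities, subtracting $\Phi(c)$, and using the Kolmogorov distance $\sup_z|P\{Z_\star\leq z\mid D\}-\Phi(z)|$ as a triangle-inequality pivot gives
\begin{equation*}
\bigl|P\{a(\mathsf{R}_{\hat g,k},D)-a(\mathsf{R}_{\hat g^{\prime},k}^{\prime},D)\leq\xi\mid D\}-\Phi(c)\bigr|\leq \sup_z\bigl|P\{Z_\star\leq z\mid D\}-\Phi(z)\bigr|+\bigl(\Phi(c+2\lambda)-\Phi(c-2\lambda)\bigr)+P\{\mathcal{E}^c\mid D\}.
\end{equation*}

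The middle term is at most $4\lambda/\sqrt{2\pi}<2\lambda$ by the Lipschitz continuity of $\Phi$. To replace $\Phi(c)$ by the target $1-\beta/2$, note that the defining property $v_{g^\star,k}(D)\leq\tfrac12(\xi/z_{1-\beta/2})^2$ gives $c\geq z_{1-\beta/2}$, while the sharpness at $g^\star-1$ forces the discreteness gap $c-z_{1-\beta/2}$ to be of order $O(\xi^2/v_{1,k}(D))$, which is negligible relative to $\lambda$ (which is ultimately chosen of order $\xi^{1/2}$) and can be absorbed into the constant $2\lambda$. The main obstacle is purely bookkeeping: one must verify that the event structure genuinely routes all of the randomness in $\hat g$ into the $U$ and $Q$ terms, so that the normal-approximation term $Z_\star$ depends only on the deterministic oracle time $g^\star$ and the Kolmogorov comparison from the proof of \cref{thm: normal approximation appendix} applies without modification.
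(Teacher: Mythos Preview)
Your proposal is correct and follows essentially the same route as the paper: both use the decomposition \eqref{eq: sequential decomposition} to write the stopped difference as the oracle-time difference plus the $U$- and $Q$-corrections, restrict to the event $\mathcal{U}_{k,\lambda}(D)\cap\mathcal{Q}_{k,\lambda}(D)$ to sandwich the threshold, and then split the resulting error into the Kolmogorov distance of $Z_\star$ plus a Lipschitz term of size $2\lambda$ plus the probability of the bad event. Your treatment is in fact slightly more explicit than the paper's about the discreteness gap $c-z_{1-\beta/2}=O(\xi^2/v_{1,k}(D))$, which the paper's proof silently identifies with zero when it passes from $\xi/\sqrt{2v_{g^\star,k}(D)}$ to $z_{1-\beta/2}$.
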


Thus, it remains to give suitable bounds for the objects in \eqref{eq: bound with three terms}. We state these bounds in terms of the quantities
\begin{align}
\rho_{k,\varphi,b}(\xi,\beta \mid D)  &=
\frac{\xi}{z_{1-\beta/2}} \frac{(2 - \varphi k - \varphi)^3 (\Gamma^{(2)}_{k,\varphi,b})^{3/4}}{\delta \left(v_{1,k}\left(D\right)\right)^{2}}\quad\text{and}\label{eq: rho term}\\
\lambda_{k,\varphi,b}(\xi,\beta \mid D)  &= \left(
\frac{\xi}{z_{1-\beta/2}}
\frac{(2-\varphi k-\varphi)^{4}\Gamma^{(1)}_{k,\varphi,b} (\Gamma_{k,\varphi,b}^{(2)})^{1/2}}{\delta^{3/2}(v_{1,k}\left(D\right))^{5/2}}
\right)^{1/2}~,\label{eq: lambda term}
\end{align}
respectively. Each part of following Lemma follows from an application of \cref{lem: reproducability decomposition}. 

\begin{lemma}
\label{lem: three components}
Suppose that \cref{assu: invariance,assu: linearity} hold, that the data $D$ are independent and identically distributed, that the conditional variance $v_{1,k}(D) = \Var(a(\mathsf{r}, D) \mid D)$ is strictly positive almost surely, and that the eighth-order split stability $\zeta^{(8)}$ is finite.

\noindent \textbf{(i)} The inequality
\begin{flalign*}
\bigg\vert 
P\left\{ \frac{a(\mathsf{R}_{g^{\star},k},D)-a(\mathsf{R}_{g^{\star},k}^{\prime},D)}{\sqrt{2v_{g^{\star},k}\left(D\right)}} \leq z \mid D\right\}
-
\Phi(z)  \bigg\vert
 & \lesssim\rho_{k,\varphi,b}\left(\xi,\beta\mid D\right)
\end{flalign*}
is satisfied with probability greater than $1-\delta$ as $D$ varies. 

\noindent \textbf{(ii)} The conditional concentration inequality 
\begin{flalign*}
 & P\left\{ \frac{\vert U(\mathsf{R}_{\hat{g},k},D)-U(\mathsf{R}_{\hat{g}^{\prime},k}^{\prime},D)\vert}{\sqrt{2v_{g^{\star},k}\left(D\right)}}\geq
 \lambda_{k,\varphi,b}\left(\xi,\beta\mid D\right) \log^{3/4}\left(\rho_{k,\varphi,b}\left(\xi,\beta\mid D\right)^{-1}\right) \mid D\right\} \\
 & \quad\quad\quad\quad \lesssim\rho_{k,\varphi,b}\left(\xi,\beta\mid D\right)
\end{flalign*}
holds with probability greater than $1-\delta$ as $D$ varies.

\noindent \textbf{(iii)} For all sufficiently small $\xi$, the conditional concentration inequality 
\[
P\left\{ \frac{\vert Q(\mathsf{R}_{\hat{g},k},D)-Q(\mathsf{R}_{\hat{g}^{\prime},k}^{\prime},D)\vert}{\sqrt{2v_{g^{\star},k}\left(D\right)}}\geq\lambda_{k,\varphi,b}\left(\xi,\beta\mid D\right)\right\} \lesssim\rho_{k,\varphi,b}\left(\xi,\beta\mid D\right)
\]
holds with probability greater than $1-\delta$ as $D$ varies.
\end{lemma}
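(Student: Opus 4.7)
Each of the three bounds specializes an already-established non-sequential estimate to quantities evaluated at (or near) the deterministic oracle stopping time $g^\star$. For part (i), note that $g^\star$ is a deterministic function of $D$, so conditional on $D$ the statistics $a(\mathsf{R}_{g^\star,k},D)$ and $a(\mathsf{R}^\prime_{g^\star,k},D)$ are independent copies of an average of $g^\star$ i.i.d.\ summands. Theorem \ref{thm: normal approximation appendix} at $g = g^\star$ bounds the Kolmogorov distance to $\Phi$ for each copy, and a routine convolution argument (the Kolmogorov distance between $F \ast \widetilde F$ and $\Phi \ast \widetilde \Phi$ is at most twice the Kolmogorov distance between $F$ and $\Phi$) yields the analogous bound for the scaled difference. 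Substituting the inversion $g^\star \approx 2 v_{1,k}(D)(z_{1-\beta/2}/\xi)^2$ obtained from the defining inequality $v_{g^\star,k}(D) \leq \tfrac{1}{2}(\xi/z_{1-\beta/2})^2$ collapses the Berry--Esseen rate from Theorem \ref{thm: normal approximation appendix} to $\rho_{k,\varphi,b}(\xi,\beta \mid D)$ up to universal constants.

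For part (iii), I would decompose
\[
Q(\mathsf{R}_{\hat g,k},D) = \bigl(1 - \hat g/g^\star\bigr)\bigl(a(\mathsf{R}_{\hat g,k},D) - \bar a(D)\bigr)
\]
and bound the two factors separately. Lemma \ref{thm: variance estimator concentration} controls the first factor: calibrating the tail parameter $t$ so that the failure probability equals $\rho_{k,\varphi,b}(\xi,\beta \mid D)$ and combining with the sandwich $\hat v(\mathsf{R}_{\hat g - 1, k},D) > \mathsf{cv}(\xi,\beta) \geq \hat v(\mathsf{R}_{\hat g, k}, D)$ implied by the stopping rule converts a bound on the relative error of $\hat v$ into one on $|\hat g/g^\star - 1|$. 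Theorem \ref{thm: cross split concentration appendix}, applied at $g = \hat g$ on the same high-probability event where $\hat g \asymp g^\star$, bounds the second factor. Multiplying the two bounds, dividing by $\sqrt{2 v_{g^\star,k}(D)}$, and again invoking $g^\star \approx 2 v_{1,k}(D)(z_{1-\beta/2}/\xi)^2$ produces $\lambda_{k,\varphi,b}(\xi,\beta \mid D)$ up to a logarithmic factor that vanishes as $\xi \to 0$.

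Part (ii) will be the main obstacle, because $U(\mathsf{R}_{\hat g, k}, D)$ is a partial sum of the i.i.d.\ terms $\bar a(\mathsf{r}_i, D)$ evaluated at a random upper index that depends on the same sequence. My plan is: (a) use Lemma \ref{thm: variance estimator concentration} to restrict attention to the event $\mathcal{E} = \{|\hat g/g^\star - 1| \leq \Delta\}$ for a $\Delta$ matching the bound derived in part (iii); (b) on $\mathcal{E}$, apply Steiger's maximal inequality (as mentioned in the second remark following Theorem \ref{thm: stable symmetric reproduce appendix}) to control $\max_{|g - g^\star| \leq \Delta g^\star} |S_g - S_{g^\star}|$, where $S_g = \sum_{i=1}^g \bar a(\mathsf{r}_i,D)$, in terms of the partial variance $\Delta g^\star v_{1,k}(D)$; (c) calibrate $\Delta$ and the maximal-inequality tail parameter so that both failure probabilities are $\lesssim \rho_{k,\varphi,b}(\xi,\beta \mid D)$, with the exponent $\log^{3/4}(\rho_{k,\varphi,b}^{-1})$ arising through the conversion of these tail probabilities to radii. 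Because $U$ is the difference of two independent such partial sums, $\mathcal{E}$ must be constructed jointly for both sequences and the maximal inequality applied symmetrically; this requires a careful union bound but no new probabilistic ingredient beyond what is used in parts (i) and (iii).
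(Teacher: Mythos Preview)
Your overall architecture matches the paper's: Berry--Esseen at $g^\star$ for (i), restriction to $\{|\hat g/g^\star - 1| \le c\}$ plus a maximal inequality for (ii), and the two-factor decomposition for (iii), with $c$ and $\lambda$ calibrated jointly. Two technical steps, however, are not yet closed.

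First, the control of $|\hat g/g^\star - 1|$ cannot be obtained from Lemma~\ref{thm: variance estimator concentration} together with the sandwich alone. The sandwich gives $\hat v_{\hat g} \le \mathsf{cv}(\xi,\beta) < \hat v_{\hat g - 1}$, but both indices are random, so the lemma (stated for deterministic $g$) does not apply directly. The overshoot event $\{\hat g > g^\star(1+c)\}$ is contained in the deterministic-index event $\{\hat v_{g^\star(1+c)} > \mathsf{cv}(\xi,\beta)\}$, so there your route works. But the undershoot event $\{\hat g < g^\star(1-c)\}$ only yields $\min_{g \le g^\star(1-c)} \hat v_g \le \mathsf{cv}(\xi,\beta)$, a minimum over $g$; the paper handles it by decomposing $\hat v_g - v_g(D)$ into two martingale partial sums and applying Steiger's maximal inequality to each (packaged as Lemma~\ref{lem: star hat diff}(ii)). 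You need this device in both step (ii)(a) and in (iii).

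Second, in (iii) you cannot apply Theorem~\ref{thm: cross split concentration appendix} ``at $g = \hat g$'', because $\hat g$ depends on the same sequence of cross-splits. The paper instead bounds $\max_{|g - g^\star| \le c g^\star} |a(\mathsf{R}_{g,k},D) - \bar a(D)|$ by a union bound over the $O(c g^\star)$ deterministic values of $g$ in that window. This union bound introduces a prefactor of order $c g^\star \asymp c\,(z_{1-\beta/2}/\xi)^2$, which diverges as $\xi \to 0$; only for sufficiently small $\xi$ does the exponential tail beat it, which is exactly why the statement of (iii) carries that qualifier. Your proposal does not account for this prefactor, so the assertion that the product of the two bounds reproduces $\lambda_{k,\varphi,b}$ ``up to a logarithmic factor that vanishes'' is not yet justified.
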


Now, observe that $\rho_{k,\varphi,b}\left(\xi,\beta\mid D\right)$ is smaller than $\lambda_{k,\varphi,b}(\xi,\beta \mid D)$ for all sufficiently small $\xi$, almost surely. Consequently, by combining \cref{lem: reproducability decomposition} and \cref{lem: three components}, we find that
\begin{align}
&\vert P\left\{ \vert a(\mathsf{R}_{\hat{g},k},D)-a(\mathsf{R}_{\hat{g}^{\prime},k}^{\prime},D) \vert \leq\xi\mid D\right\} -\left(1-\beta\right)\vert  \nonumber\\
& \quad\quad\quad\quad
\lesssim
 \left(
\frac{\xi}{z_{1-\beta/2}}
\frac{(2-\varphi k-\varphi)^{4}\Gamma^{(1)}_{k,\varphi,b} (\Gamma_{k,\varphi,b}^{(2)})^{1/2}}{\delta^{3/2}(v_{1,k}\left(D\right))^{5/2}}
\right)^{1/2} \nonumber \\
& \quad\quad\quad\quad \quad\quad \cdot \log^{3/4} \left( \frac{z_{1-\beta/2}}{\xi} \frac{\delta (v_{1,k}(D))^2}{(2-\varphi k - \varphi)^3 (\Gamma^{(2)}_{k,\varphi,b})^{3/4}} \right)
\end{align}
with probability greater than $1-\delta$, as required. \hfill\qed

\subsection{Comparison with \cite{zhang2022berry}\label{app: zheng comparion}}
We state a Berry-Esseen bound for $a\left(\mathsf{R}_{g,k}, D\right)$ through an application of a result due to \citet{zhang2022berry}. In contrast to the bound stated in \cref{thm: normal approximation appendix}, the bound obtained here does not shrink as $g$ increases. On the other hand, the bound obtained below is unconditional. It is straightforward to modify our argument to give an analogous high-probability conditional bound. 
\begin{theorem}
\label{thm: zhang normal approximation}Let $W$ denote a standard
normal random variable. Suppose that \cref{assu: invariance,assu: linearity} hold
and that the data $D$ are independent and identically distributed. If the eighth-order split stability $\zeta^{(8)}$ is finite,
then the Berry-Esseen inequality
\begin{flalign*}
d_{\text{K}}\left(\frac{a\left(\mathsf{R}_{g,k}, D\right)-\bar{a}\left(D\right)}{\sqrt{\mathbb{E}\left[v_{g,k}\left(D\right)\right]}},W\right) & 
\leq
\frac{4 \left(2-\varphi k-\varphi\right)^{2} (\Gamma_{k,\varphi,b}^{(2)})^{1/2}}{\mathbb{E}\left[v_{1,k}(D)\right]}
\end{flalign*}
is satisfied.
\end{theorem}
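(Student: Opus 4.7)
The plan is to apply the exchangeable pairs Berry-Esseen bound of \citet{zhang2022berry} to the standardized statistic $W = (a(\mathsf{R}_{g,k}, D) - \bar{a}(D))/\sqrt{\mathbb{E}[v_{g,k}(D)]}$, using the exchangeable pair $(Z, Z')$ constructed in \cref{sec: construction} together with the Stein representer $A(Z,Z')/\sqrt{\mathbb{E}[v_{g,k}(D)]}$. By \cref{lem: representer construction applied}, this representer is antisymmetric and satisfies the identity $\mathbb{E}[A(Z,Z') \mid Z] = a(Z) - \bar{a}(D)$, so after rescaling it meets the hypotheses required to invoke Zhang's inequality. The resulting bound reduces the proof to controlling a short list of mixed moments in $A(Z,Z')$ and $a(Z) - a(Z')$, of orders two and four.

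To bound these quantities, I would proceed exactly as in \cref{sec: non-sequential appendix}, applying \cref{lem: variance bounds} and \cref{lem: deterministic bound} in their unconditional form (i.e., taking expectations in place of conditional tail bounds, which avoids the $1/\delta$ factor produced by the Markov step). This yields
\begin{align*}
\mathbb{E}[A(Z,Z')^{2r}] &\lesssim \left(\tfrac{gn^2}{2}\right)^r \left(\tfrac{2^4(2n-bk-b)^2}{gn^2}\right)^r \Gamma_{k,\varphi,b}^{(r)}, \\
\mathbb{E}[(a(Z)-a(Z'))^{2r}] &\lesssim \left(\tfrac{2}{gn^2}\right)^r \left(\tfrac{2^4(2n-bk-b)^2}{gn^2}\right)^r \Gamma_{k,\varphi,b}^{(r)}.
\end{align*}
An application of Cauchy-Schwarz then converts any third-moment-type quantity appearing in Zhang's bound, e.g.\ $\mathbb{E}[|A(Z,Z')|(a(Z)-a(Z'))^2]$, into a product of a second- and fourth-moment bound, which collapses to a single factor of $(\Gamma_{k,\varphi,b}^{(2)})^{1/2}$.

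Assembling these pieces, the factors of $g$ cancel against the normalization $\mathbb{E}[v_{g,k}(D)] = \mathbb{E}[v_{1,k}(D)]/g$ (which holds because the cross-splits $\mathsf{r}_i$ are drawn i.i.d.\ conditional on $D$), producing the final bound $4(2-\varphi k-\varphi)^2 (\Gamma_{k,\varphi,b}^{(2)})^{1/2}/\mathbb{E}[v_{1,k}(D)]$. The cancellation is also what forces the bound to be insensitive to $g$, as flagged in the remark following the statement.

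The main obstacle is bookkeeping: ensuring that the particular combination of mixed moments that Zhang's theorem produces is exactly the one that collapses, after Cauchy-Schwarz, to a single factor of $(\Gamma^{(2)}_{k,\varphi,b})^{1/2}$ divided by one power of $\mathbb{E}[v_{1,k}(D)]$, rather than, say, the $3/2$ power that appeared in the conditional Berry-Esseen bound of \cref{thm: normal approximation appendix}. Once this verification is carried out, no new ideas beyond those developed in the proofs of \cref{thm: cross split concentration appendix} and \cref{cor: Moment bound} are needed.
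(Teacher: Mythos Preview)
Your plan matches the paper's proof: apply Zhang's bound with the Stein representer of \cref{lem: representer construction applied}, control the resulting mixed moments using the unconditional form of \cref{lem: sequence bound} and \cref{lem: deterministic bound}, and let the factors of $g$ cancel against $\mathbb{E}[v_{g,k}(D)]=\mathbb{E}[v_{1,k}(D)]/g$. The only refinement worth noting is that Zhang's numerator is $\mathbb{E}[|G-\mathbb{E} G|]+\mathbb{E}[|\bar G|]$ with $G=\tfrac12\mathbb{E}[(a-a')A\mid Z]$; the paper reduces both terms, via Cauchy--Schwarz and Jensen, to $\sqrt{\mathbb{E}[(a(Z)-a(Z'))^2A(Z,Z')^2]}$, which Young's inequality then splits into the two \emph{fourth} moments $\mathbb{E}[U_a^{(2)}]$ and $\mathbb{E}[U_A^{(2)}]$ rather than the third-moment quantity $\mathbb{E}[|A|(a-a')^2]$ you wrote down---so your bookkeeping worry is well placed, but the fix is exactly the one you anticipate.
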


\subsubsection{Proof of \cref{thm: zhang normal approximation}}

We apply the following central limit theorem, due to \citet{zhang2022berry}.
This result generalizes Theorem 2.1 of \citet{shao2019berry} to accommodate
general Stein representers. 
\begin{theorem}[{\citealp[Theorem 4.1, ][]{zhang2022berry}}]
\label{thm: kol approx} Let $\mathcal{X}$ be a separable metric
space and suppose that $\left(X,X^{\prime}\right)$ is an exchangeable
pair of $\mathcal{X}$-valued random variables. Suppose that $f:\mathcal{X}\to\mathbb{R}$
and $F:\mathcal{X}\times\mathcal{X}\to\mathbb{R}$ are square-integrable
functions such that $F$ is antisymmetric and 
\[
\mathbb{E}\left[F\left(X,X^{\prime}\right)\mid X\right]=f\left(X\right)
\]
almost surely. Assume that $\Var\left(f\left(X\right)\right)$ is
finite and non-zero and that $\mathbb{E}\left[f\left(X\right)\right]=0$.
Define the objects
\begin{flalign}
\bar{f}\left(X\right) & =f\left(X\right)/\sqrt{\Var\left(f\left(X\right)\right)}.\label{eq: T def}\\
G\left(X\right) & =\frac{1}{2}\mathbb{E}\left[\left(f\left(X\right)-f\left(X^{\prime}\right)\right)F\left(X,X^{\prime}\right)\mid X\right]\quad\text{and}\label{eq: delta def}\\
\bar{G}\left(X\right) & =\frac{1}{2}\mathbb{E}\left[\left(f\left(X\right)-f\left(X^{\prime}\right)\right)\vert F\left(X,X^{\prime}\right)\vert\mid X\right].\label{eq: delta bar def}
\end{flalign}
Let $W$ denote a standard normal random variable. The bound
\begin{flalign}
d_{K}\left(\bar{f}\left(X\right),W\right) & \leq\frac{\mathbb{E}\left[\vert G\left(X\right)-\mathbb{\mathbb{E}}\left[G\left(X\right)\right]\vert\right]+\mathbb{E}\left[\vert\bar{G}\left(X\right)\vert\right]}{\Var\left(f\left(X\right)\right)}\label{eq: KS bound}
\end{flalign}
is satisfied.
\end{theorem}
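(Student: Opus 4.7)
The plan is to apply the general Kolmogorov-distance bound of \citet{zhang2022berry}, stated here as \cref{thm: kol approx}, to the exchangeable pair $(\bm{\pi}, \bm{\pi}')$ and Stein representer $F(\bm{\pi}, \bm{\pi}') = A(\bm{\pi}, \bm{\pi}' \mid D)$ already built in \cref{sec: construction}. Setting $f(\bm{\pi}) = a(\mathsf{R}_{g,k}(\bm{\pi}), D) - \bar{a}(D)$, the hypotheses of \cref{thm: kol approx} are verified as follows: the identity $\mathbb{E}[F(\bm{\pi}, \bm{\pi}') \mid \bm{\pi}, D] = f(\bm{\pi})$ and antisymmetry of $F$ are furnished by \cref{lem: representer construction applied}; square integrability of $f$ and $F$ follows from the pointwise bound \eqref{eq: bound for sq representer} together with finiteness of $\zeta^{(8)}$; and $\Var(f) = \mathbb{E}[v_{g,k}(D)]$ is positive because the conditional variance is almost surely positive. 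With these inputs, \eqref{eq: KS bound} reduces the claim to bounds on $\mathbb{E}[|\bar G(\bm{\pi})|]$ and $\mathbb{E}[|G(\bm{\pi}) - \mathbb{E}[G(\bm{\pi})]|]$.

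The common engine for both bounds is \cref{lem: deterministic bound}, which gives the decay $\mathbb{E}[(\mathbb{E}[a(Z_m) - a(Z_m') \mid Z_0, Z_0'])^{2r} \mid \bm{\pi}] \lesssim h_m^{2r}$ with $h_m \propto (1 - 2/(gn^2))^{m}(2n-bk-b)/(gn^2)$ for the increments that telescope to form $F$. Summing via Minkowski's inequality in $L^{2r}$ gives $(\mathbb{E}[F^{2r}])^{1/(2r)} \lesssim \sum_{m \ge 0} h_m$, and the geometric series $\sum_{m \ge 0}(1 - 2/(gn^2))^m = gn^2/2$ cancels the $1/(gn^2)$ in $h_m$, leaving the scale $(2n-bk-b)$. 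Dividing by $\Var(f) = \mathbb{E}[v_{1,k}(D)]/g$ then produces the factor $(2n-bk-b)^2/n^2 = (2-\varphi k - \varphi)^2$ appearing in the statement; the lack of any residual $g$-dependence explains why this bound does not sharpen with the number of cross-splits.

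For $\mathbb{E}[|\bar G|]$ I would apply a two-stage Cauchy--Schwarz $\mathbb{E}[|f-f'|\,|F|] \leq (\mathbb{E}[(f-f')^2])^{1/2}(\mathbb{E}[F^2])^{1/2}$, bound the first factor by the $r=1$, $m=0$ case of \cref{lem: deterministic bound}, and bound the second by the $r=1$ summation above. This yields $\mathbb{E}[|\bar G|] \lesssim (2n-bk-b)^2 \Gamma^{(1)}_{k,\varphi,b}/(gn^2)$. For the trickier term $\mathbb{E}[|G - \mathbb{E} G|]$, I would first pass to $\sqrt{\Var(G(\bm{\pi}))} \leq \sqrt{\mathbb{E}[G(\bm{\pi})^2]}$, then use the conditional Cauchy--Schwarz $G(\bm{\pi})^2 \leq U^{(1)}_f(\bm{\pi})\,U^{(1)}_F(\bm{\pi})$, apply another Cauchy--Schwarz on the outer expectation, and use Jensen's inequality to dominate $\mathbb{E}[U^{(1)}_f(\bm{\pi})^2]$ and $\mathbb{E}[U^{(1)}_F(\bm{\pi})^2]$ by multiples of $\mathbb{E}[(f-f')^4]$ and $\mathbb{E}[F^4]$, respectively. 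Invoking \cref{lem: deterministic bound} at $r=2$ (with Minkowski in $L^4$ for the $F^4$ factor) then gives $\mathbb{E}[|G - \mathbb{E} G|] \lesssim (2n-bk-b)^2 (\Gamma^{(2)}_{k,\varphi,b})^{1/2}/(gn^2)$. Dividing by $\mathbb{E}[v_{g,k}(D)]$ and using Jensen's inequality to dominate $\Gamma^{(1)}_{k,\varphi,b}$ by $(\Gamma^{(2)}_{k,\varphi,b})^{1/2}$ yields the displayed bound.

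The main obstacle will be the sharpness of the $|G - \mathbb{E} G|$ estimate: since $\mathbb{E}[G(\bm{\pi})] = \Var(f(\bm{\pi}))$ is itself on the scale of the quantities we are trying to control, the naive triangle inequality $\mathbb{E}|G - \mathbb{E} G| \leq 2\,\mathbb{E}|G|$ produces a residue that does not decay with $n$. The detour through $\sqrt{\Var(G)}$ is what forces the use of fourth-moment estimates for both $f - f'$ and $F$, and this is exactly what the hypothesis $\zeta^{(8)} < \infty$ is there to accommodate: finiteness of $\mathbb{E}[F^4]$ requires the $L^4$-Minkowski tail of $A$ to converge, and its summands involve the $\sigma^{(4)}$ sample stabilities, whose control through the $r=2$ version of \cref{lem: deterministic bound} in turn draws on the eighth-order split stability.
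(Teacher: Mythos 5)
The statement \cref{thm: kol approx} is quoted verbatim from \citet{zhang2022berry}; the paper supplies no proof of it, only citing it as an ingredient in the proof of \cref{thm: zhang normal approximation}. You have correctly recognized this and instead proved \cref{thm: zhang normal approximation}, using Zhang's bound as a black box, which is exactly what the paper does. Your argument matches the paper's proof of \cref{thm: zhang normal approximation} in all essentials: both apply \cref{thm: kol approx} to the pair $(\bm{\pi},\bm{\pi}^{\prime})$ with representer $A(\bm{\pi},\bm{\pi}^{\prime}\mid D)$, reduce the two right-hand-side error terms to fourth moments of $a(Z)-a(Z^{\prime})$ and of $A$, sum the coupled-chain increments from \cref{lem: deterministic bound} geometrically, and divide by $\mathbb{E}[v_{g,k}(D)]$; your Cauchy--Schwarz step is equivalent to the paper's Young inequality once the scale $s=(gn^{2}/2)^{2}$ is optimized, and your Minkowski-in-$L^{2r}$ summation is precisely the content of the paper's \cref{lem: sequence bound}, so the two proofs differ only cosmetically.
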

\noindent
To this end, define the objects
\begin{flalign*}
B\left(Z\right) & =\frac{1}{2}\mathbb{E}\left[\left(a\left(Z\right)-a\left(Z^{\prime}\right)\right)A\left(Z,Z^{\prime}\right)\mid Z\right]\quad\text{and}\\
\bar{B}\left(Z\right) & =\frac{1}{2}\mathbb{E}\left[\left(a\left(Z\right)-a\left(Z^{\prime}\right)\right)\vert A\left(Z,Z^{\prime}\right)\vert\mid Z\right].
\end{flalign*}
It will suffice to bound the
quantity 
\begin{equation}
\frac{\mathbb{E}\left[\vert B\left(Z\right)-\mathbb{\mathbb{E}}\left[B\left(Z\right)\right]\vert\right]+\mathbb{E}\left[\vert\bar{B}\left(Z\right)\vert\right]}{\Var\left(a\left(Z\right)-\bar{a}\left(Z\right)\right)}.\label{eq: BE rhs}
\end{equation}
Observe that
\begin{flalign*}
\mathbb{E}\left[\vert B\left(X\right)-\mathbb{\mathbb{E}}\left[B\left(X\right)\right]\vert\right] & \leq\sqrt{\Var\left(B\left(Z\right)\right)}\quad\text{and}\\
\mathbb{E}\left[\vert\bar{B}\left(Z\right)\vert\right] & \leq\sqrt{\Var\left(\bar{B}\left(Z\right)\right)}
\end{flalign*}
by the Cauchy-Schwarz inequality and the fact that $\mathbb{E}\left[\bar{B}\left(X\right)\right]=0$ by exchangeability. Consequently, as
\begin{flalign*}
\Var\left(B\left(Z\right)\right) & \leq\mathbb{E}\left[B\left(Z\right)^{2}\right]=\mathbb{E}\left[\left(a\left(Z\right)-a\left(Z^{\prime}\right)\right)^{2}A\left(Z,Z^{\prime}\right)^{2}\right]=\Var\left(\bar{B}\left(Z\right)\right)
\end{flalign*}
it will suffice to bound
\begin{equation}
2\mathbb{E}\left[\left(a\left(Z\right)-a\left(Z^{\prime}\right)\right)^{2}A\left(Z,Z^{\prime}\right)^{2}\right].\label{eq: simplified be rhs}
\end{equation}
By Young's inequality, we have that 
\begin{flalign*}
\mathbb{E}\left[\left(a\left(Z\right)-a\left(Z^{\prime}\right)\right)^{2}A\left(Z,Z^{\prime}\right)^{2}\right] 
& \leq\frac{1}{2}\left(s^{-1}\mathbb{E}\left[U_a^{(2)}(Z)\right] + s\mathbb{E}\left[U_A^{(2)}(Z)\right]\right)~,
\end{flalign*}
where $U_a^{(2)}(Z)$ and $U_A^{(2)}(Z)$ are defined in \cref{sec: non-sequential appendix}. 

Observe that the bound 
\begin{align}
& \left(\sum_{m=0}^{\infty}\mathbb{E}\left[a\left(Z_{m}\right)-a\left(Z_{m}^{\prime}\right)\mid Z_{0}=Z,Z_{0}^{\prime}=Z^{\prime}\right]\right)^{4} \nonumber \\
& \quad\quad \leq
\left(2gn^{2}\max_{\mathsf{s},\mathsf{s}^{\prime}\in\mathsf{S}_{n,b}}\left(T\left(\mathsf{s},D\right)-T\left(\mathsf{s}^{\prime},D\right)\right)\right)^{4}\label{eq: be general integrability}
\end{align}
holds by \cref{lem: finiteness} and that the right-hand side
of (\ref{eq: be general integrability}) is square-integrable as the
eighth-order split stability $\zeta^{(8)}$ is finite. Thus, by combining
\cref{lem: sequence bound} and \cref{lem: deterministic bound},
we have that 
\[
\bar{U}_{A}^{(2)}\leq\
\left(\frac{gn^{2}}{2}\right)^{2}\left(\frac{4\left(2n-bk-b\right)^{2}}{gn^{2}} \right)^2 \Gamma_{k,\varphi,b}^{(2)}
\]
and 
\[
\bar{U}_{a}^{(2)}\leq
\left(\frac{2}{gn^{2}}\right)^{2}\left(\frac{4\left(2n-bk-b\right)^{2}}{gn^{2}} \right)^2 \Gamma_{k,\varphi,b}^{(2)} ~.
\]
Hence, by taking $s=\left(gn^{2}/2\right)^{2}$, we find that \eqref{eq: simplified be rhs} is bounded above by
\[
\frac{4}{gn^{2}}\left(\left(2n-bk-b\right)^{2} (\Gamma_{k,\varphi,b}^{(2)})^{1/2}\right)
\]
Now, we have that 
\[
\text{Var}\left(a\left(Z\right)-\bar{a}\left(D\right)\right)=\mathbb{E}\left[v_{g,k}\left(D\right)\right]
\]
by the law of total variance and the fact that $\mathbb{E}\left[a\left(Z\right)-\bar{a}\left(D\right)\mid D\right]=0$.
Consequently we can decompose
\begin{flalign}
\mathbb{E}\left[v_{g,k}\left(D\right)\right] & = \frac{\mathbb{E}[\phi_{n,b}(D)] + \left(k-1\right)\mathbb{E}[\gamma_{n,b}(D)]}{kg}.\label{eq: covariance decomposition-1}
\end{flalign}
Hence, we have that \eqref{eq: BE rhs} is bounded above by
\[
\frac{4 \left(2-\varphi k-\varphi\right)^{2} (\Gamma_{k,\varphi,b}^{(2)})^{1/2}}{\mathbb{E}\left[v_{1,k}(D)\right]},
\]
as required.\hfill\qed

\section{Proofs for Lemmas Stated in \cref{sec: general results}\label{app: proof of main text}}

\subsection{Proof of \cref{lem: representer construction applied}\label{app: proof of representer construction applied}}

We use the following Lemma in several places.
\begin{lemma}
\label{lem: finiteness}Let $\bm{\psi}$ and $\bm{\psi}^{\prime}$
be two sets, each containing $g$ elements of $\mathcal{P}_{n}$. The inequality
\begin{flalign*}
 & \sum_{m=0}^{\infty}\bigg\vert
 \mathbb{E}\left[
 a\left(\mathsf{R}_{g,k}(\bm{\pi}_m)\right) - 
 a\left(\mathsf{R}_{g,k}(\bm{\pi}^\prime_m)\right)
 \mid 
 \bm{\pi}_0=\bm{\psi},
 \bm{\pi}^\prime_0=\bm{\psi}^\prime, D\right]\bigg\vert\\
 & \leq 2gn^{2}
 \max_{\mathsf{s},\mathsf{s}^{\prime}\in\mathsf{S}_{n,b}}\left(T\left(\mathsf{s},D\right)-T\left(\mathsf{s}^{\prime},D\right)\right)
\end{flalign*}
holds almost surely. 
\end{lemma}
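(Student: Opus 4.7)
The plan is to bound the absolute sum by $\zeta := \max_{\mathsf{s},\mathsf{s}' \in \mathsf{S}_{n,b}}(T(\mathsf{s},D) - T(\mathsf{s}',D))$ times an expected coupling time, and then absorb the factor $g$ via a per-coordinate decomposition of $a$. Under \cref{assu: linearity}, $a(\mathsf{R}_{g,k}(\bm{\pi})) = g^{-1}\sum_{l=1}^{g} \bar{a}(\mathsf{r}_k(\pi_l), D)$ with $\bar{a}$ depending on $\pi_l$ only through the induced cross-split $\mathsf{r}_k(\pi_l)$, and the single-coordinate difference $|\bar{a}(\mathsf{r}_k(\pi_l),D) - \bar{a}(\mathsf{r}_k(\pi_l'),D)|$ is at most $\zeta$ and vanishes when $\mathsf{r}_k(\pi_l) = \mathsf{r}_k(\pi_l')$. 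Taking absolute expectations of the resulting pointwise inequality and summing over $m$ gives
\begin{equation*}
\sum_{m=0}^\infty \bigl|\mathbb{E}[a(\mathsf{R}_{g,k}(\bm{\pi}_m)) - a(\mathsf{R}_{g,k}(\bm{\pi}_m')) \mid \bm{\pi}_0 = \bm{\psi},\, \bm{\pi}_0' = \bm{\psi}',\, D]\bigr| \leq \frac{\zeta}{g}\sum_{l=1}^{g} \mathbb{E}[\tau_l^{\mathsf{part}} \mid \bm{\psi}, \bm{\psi}'],
\end{equation*}
where $\tau_l^{\mathsf{part}} := \inf\{m : \mathsf{r}_k(\pi_{m,l}) = \mathsf{r}_k(\pi_{m,l}')\}$. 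Since partition coupling arrives no later than full permutation coupling, $\tau_l^{\mathsf{part}} \leq \tau_l := \inf\{m : \pi_{m,l} = \pi_{m,l}'\}$, and it suffices to show $\mathbb{E}[\tau_l] \leq 2 g n^2$ uniformly in the initial states.

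Next, I would separate the rate at which the $l$-th coordinate is visited from the within-coordinate dynamics. The pair $(\pi_{m,l}, \pi_{m,l}')$ evolves only when $L_m = l$, an event of probability $1/g$ independent of $(I_m, J_m)$, so Wald's identity gives $\mathbb{E}[\tau_l] = g \cdot \mathbb{E}[\tilde{\tau}_l]$, where $\tilde{\tau}_l$ is the coupling time for the accelerated chain that applies the joint update $(\pi, \pi') \mapsto (\pi \circ \sigma_{I, \pi^{-1}(J)},\ \pi' \circ \sigma_{I, \pi'^{-1}(J)})$ at every step with $(I,J)$ uniform on $[n]^2$. The remaining goal is to show $\mathbb{E}[\tilde{\tau}_l] \leq 2 n^2$ from any starting pair. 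Let $A_m = \{i : \pi_m(i) \neq \pi_m'(i)\}$ denote the disagreement set and $d_m = |A_m|$; since two permutations on $[n]$ cannot disagree at exactly one index, $d_m \in \{0\} \cup \{2, 3, \dots, n\}$.

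The core step is a case analysis on the indicators $(\mathbb{1}\{I_m \in A_m\},\ \mathbb{1}\{J_m \in \pi_m(A_m)\})$, using that $\pi_m(A_m) = \pi_m'(A_m)$ as sets. Tracking the at most four positions whose values change, namely $I_m$, $b = \pi_m^{-1}(J_m)$, and $b' = \pi_m'^{-1}(J_m)$, I would verify that $d_{m+1} \leq d_m$ almost surely, and that $d_{m+1} \leq d_m - 1$ whenever both indicators equal one, in which case position $I_m$ is forced into the agreement set while the already-disagreeing positions $b, b' \in A_m$ (which satisfy $b \neq b'$ in this regime) can only retain or lose their disagreement status. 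Because this decrease event has conditional probability $(d_m/n)^2$, the sojourn time at level $d$ is stochastically dominated by a geometric random variable with parameter $d^2/n^2$, and summing the expected sojourns yields
\begin{equation*}
\mathbb{E}[\tilde{\tau}_l] \leq \sum_{d=2}^{n} \frac{n^2}{d^2} \leq \frac{\pi^2}{6}\, n^2 \leq 2 n^2.
\end{equation*}
Combining the pieces gives $\sum_{l=1}^g \mathbb{E}[\tau_l^{\mathsf{part}}] \leq 2 g^2 n^2$, hence the claimed bound $2 g n^2\, \zeta$. The hard part will be the case enumeration underlying the monotonicity of $d_m$: the interaction of the two involutions $\sigma_{I, \pi^{-1}(J)}$ and $\sigma_{I, \pi'^{-1}(J)}$ depends delicately on whether $b$ and $b'$ coincide, so each of the four subcases indexed by the two indicators must be verified individually to rule out transient increases of $d_m$.
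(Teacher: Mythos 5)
Your proposal follows the same approach as the paper: decompose $a$ coordinate-wise under \cref{assu: linearity}, bound each fold's difference by $\zeta := \max_{\mathsf{s},\mathsf{s}'}|T(\mathsf{s},D)-T(\mathsf{s}',D)|$ when the two chains have not yet coupled, and then control the expected coupling time by observing that the disagreement count $d_m$ is non-increasing and decreases with probability at least $(d_m/n)^2$ conditional on coordinate $l$ being updated, giving a coupon-collector bound of $n^2\sum 1/d^2 \le 2n^2$. The paper's proof does exactly this, except it keeps the $1/g$ selection probability inside the transition rates rather than factoring it out via Wald, and it indexes by the agreement count $\bar R_m(l) = n - d_m$; your additional observations (partition coupling arrives no later than permutation coupling, $d_m \neq 1$, and $\pi(A_m) = \pi'(A_m)$) are correct and slightly tighten or clean up intermediate steps without changing the final constant.
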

\noindent Observe that the quantity
\[
\max_{\mathsf{s},\mathsf{s}^{\prime}\in\mathsf{S}_{n,b}}\left(T\left(\mathsf{s},D\right)-T\left(\mathsf{s}^{\prime},D\right)\right)
\]
is finite almost surely. Thus, the convergence of the series defining
$A\left( \bm{\pi}, \bm{\pi}^\prime \mid D\right)$ follows from Lemma \ref{lem: finiteness}.
Define the operator 
\begin{flalign*}
K:\mathcal{F} & \to\mathcal{F}\\
f\left(\cdot\right) & \mapsto\mathbb{E}\left[f\left(\bm{\pi}^\prime\right)\mid \bm{\pi}=\cdot\right],
\end{flalign*}
where $\mathcal{F}$ is the set of all measurable functions supported
on the domain of $\bm{\pi}$. Observe that 
\begin{flalign*}
 & \mathbb{E}\left[
 a\left(\mathsf{R}_{g,k}(\bm{\pi}_m), D\right) - 
 a\left(\mathsf{R}_{g,k}(\bm{\pi}^\prime_m), D\right)
 \mid 
 \bm{\pi}_0= \bm{\pi},
 \bm{\pi}^\prime_0= \bm{\pi}^\prime, D\right]\\
 & = \mathbb{E}\left[
 a\left(\mathsf{R}_{g,k}(\bm{\pi}_m), D\right) - 
\bar{a}(D)
 \mid 
 \bm{\pi}_0= \bm{\pi}, D\right]
 - 
 \mathbb{E}\left[
 a\left(\mathsf{R}_{g,k}(\bm{\pi}_m^\prime), D\right) - 
\bar{a}(D)
 \mid 
 \bm{\pi}^\prime_0= \bm{\pi}^\prime, D\right]\\
 & = 
  K^{m}\left(
 a\left(\mathsf{R}_{g,k}(\bm{\pi})\right) - 
\bar{a}(D)
\right)
 -K^{m+1}\left(
 a\left(\mathsf{R}_{g,k}(\bm{\pi}^\prime)\right) - 
\bar{a}(D)
\right).
\end{flalign*}
Thus, for any $m^\prime$, we have that 
\begin{align}
&\sum_{m=0}^{m^\prime}\mathbb{E}\left[
 a\left(\mathsf{R}_{g,k}(\bm{\pi}_m), D\right) - 
 a\left(\mathsf{R}_{g,k}(\bm{\pi}^\prime_m), D\right)
 \mid 
 \bm{\pi}_0= \bm{\pi},
 \bm{\pi}^\prime_0= \bm{\pi}^\prime, D\right] \nonumber\\
& =
a\left(\mathsf{R}_{g,k}(\bm{\pi}), D\right)-\bar{a}\left(D\right)
-K^{m^\prime+1}
\left(a\left(\mathsf{R}_{g,k}(\bm{\pi}), D\right)-\bar{a}\left(D\right)\right).\label{eq: partial sums}
\end{align}
By Lemma \ref{lem: finiteness}, the partial sums (\ref{eq: partial sums})
converge almost everywhere and so the sequence 
\begin{equation}
\left(K^{m+1}
\left(a\left(\mathsf{R}_{g,k}(\bm{\pi}), D\right)-\bar{a}\left(D\right)\right)
\right)_{m\geq0}\label{eq: K a sequence}
\end{equation}
also converges almost everywhere. Lemma \ref{lem: finiteness} also
implies that the limit of (\ref{eq: K a sequence}) depends only on
$D$, as 
\[
K^{m}\left(a\left(\left(\mathsf{R}_{g,k}\left(\bm{\psi}\right),D\right)\right)-\bar{a}\left(D\right)\right)
-K^{m}a\left(\left(\mathsf{R}_{g,k}\left(\bm{\psi}^{\prime}\right),D\right)-\bar{a}\left(D\right)\right)\to0
\]
for any $\bm{\psi}$ and $\bm{\psi}^{\prime}$ each containing $g$
elements of $\mathcal{P}_{n}$. Therefore, we
have that 
\begin{flalign*}
&\mathbb{E}\left[A\left(\bm{\pi},\bm{\pi}^{\prime}\mid D \right)\mid D\right] \\
& =\mathbb{E}\left[\lim_{n\to\infty}\left(a\left(\mathsf{R}_{g,k}(\bm{\pi}), D\right)-\bar{a}\left(D\right)
-K^{m+1}a\left(\mathsf{R}_{g,k}(\bm{\pi}), D\right)-\bar{a}\left(D\right)\right)\mid D \right]\\
 & =\mathbb{E}\left[a\left(\mathsf{R}_{g,k}(\bm{\pi}), D\right)-\bar{a}\left(D\right)\mid D\right]-b\left(D\right),
\end{flalign*}
for some quantity
\[
b\left(D\right)=\lim_{m\to\infty}K^{m}\left(a\left(\mathsf{R}_{g,k}(\bm{\pi}), D\right)-\bar{a}\left(D\right)\right)
\]
that depends only on $D$. Observe that
\begin{flalign*}
& \mathbb{E}\left[A\left(\bm{\pi},\bm{\pi}^{\prime}\mid D \right)\mid D\right]\\ 
& =\mathbb{E}\left[\lim_{m^\prime\to\infty}\sum_{m=0}^{m^\prime}
\mathbb{E}\left[a\left(\mathsf{R}_{g,k}(\bm{\pi}_m), D\right)
-a\left(\mathsf{R}_{g,k}(\bm{\pi}_m^\prime), D\right)
\mid \bm{\pi}_0=\bm{\pi},\bm{\pi}^\prime_0=\bm{\pi}^{\prime}, D\right]\mid D\right]\\
 & =\lim_{m^\prime\to\infty}\sum_{m=0}^{m^\prime}
 \mathbb{E}\left[a\left(\mathsf{R}_{g,k}(\bm{\pi}_m), D\right)
-a\left(\mathsf{R}_{g,k}(\bm{\pi}_m^\prime), D\right)
\mid D\right]\tag{Dominated Conv.}\\
& = 0,\tag{Exchangeability}
\end{flalign*}
where the applicability of the Dominated Convergence Theorem follows
from Lemma \ref{lem: finiteness}. Thus, as 
\begin{flalign*}
\mathbb{E}\left[a\left(\mathsf{R}_{g,k}(\bm{\pi}), D\right)-\bar{a}\left(D\right)\mid D\right] & =0,
\end{flalign*}
we can conclude that $b\left(D\right)=0$ almost surely. Hence, we
find that 
\begin{flalign*}
& \mathbb{E}\left[A\left(\bm{\pi},\bm{\pi}^{\prime}\mid D \right)\mid \bm{\pi}, D\right]\\ 
& =\mathbb{E}\left[\lim_{m^\prime\to\infty}\sum_{m=0}^{m^\prime}
\mathbb{E}\left[a\left(\mathsf{R}_{g,k}(\bm{\pi}_m), D\right)
-a\left(\mathsf{R}_{g,k}(\bm{\pi}_m^\prime), D\right)
\mid \bm{\pi}_0=\bm{\pi},\bm{\pi}^\prime_0=\bm{\pi}^{\prime}, D\right]\mid \bm{\pi}, D\right]\\
 & 
 =\lim_{m^\prime\to\infty}\sum_{m=0}^{m^\prime}
 \mathbb{E}\left[a\left(\mathsf{R}_{g,k}(\bm{\pi}_m), D\right)
-a\left(\mathsf{R}_{g,k}(\bm{\pi}_m^\prime), D\right)
\mid\bm{\pi}_0=\bm{\pi},\bm{\pi}^\prime_0=\bm{\pi}^{\prime}, D\right]
\tag{Dominated Conv.}\\
 & -\lim_{m^\prime\to\infty}K^{m^\prime+1}\left(a\left(\mathsf{R}_{g,k}(\bm{\pi}), D\right)-\bar{a}\left(D\right)\right)\\
 & =a\left(\mathsf{R}_{g,k}(\bm{\pi}), D\right)-\bar{a}\left(D\right),
\end{flalign*}
completing the proof.\hfill$\qed$

\subsection{Proof of \cref{thm: chatterjee concentration}\label{subsec: proof of chatterjee concentration}}
First, observe that 
\[
U_{F}\left(X\right)\geq\frac{1}{2}\left(\mathbb{E}\left[F\left(X,X^{\prime}\right)\mid X\right]\right)^{2}=\frac{1}{2}f\left(X\right)^{2},
\]
where the inequality follows from Jensen's inequality and the definition
of the Stein representer $F$. By \eqref{eq: cond variance bound},
we have that 
\[
su\geq\frac{1}{2}f\left(X\right)^{2}.
\]
Hence, the random variable $f\left(X\right)$ is bounded almost surely. 

Now, suppose $h:\mathcal{X}\to\mathbb{R}$ is any measurable function
such that $\mathbb{E}\left[h\left(X\right)F\left(X,X^{\prime}\right)\right]<\infty$.
Then, $\mathbb{E}\left[h\left(X\right)f\left(X\right)\right]=\mathbb{E}\left[h\left(X\right)F\left(X,X^{\prime}\right)\right]$.
Using the exchangeability of $X$ and $X^{\prime}$ and the fact that
$F$ is antisymmetric, we have that 
\[
\mathbb{E}\left[h\left(X\right)F\left(X,X^{\prime}\right)\right]=\mathbb{E}\left[h\left(X^{\prime}\right)F\left(X^{\prime},X\right)\right]=-\mathbb{E}\left[h\left(X^{\prime}\right)F\left(X,X^{\prime}\right)\right]
\]
and that therefore 
\begin{equation}
\mathbb{E}\left[h\left(X\right)f\left(X\right)\right]=\frac{1}{2}\mathbb{E}\left[\left(h\left(X\right)-h\left(X^{\prime}\right)\right)F\left(X,X^{\prime}\right)\right].\label{eq: kernel symmetry}
\end{equation}
Let 
\[
m\left(\theta\right)=\mathbb{E}\left[\exp\left(\theta f\left(X\right)\right)\right]
\]
denote the moment generating function of $f\left(X\right)$. As $f\left(X\right)$
is bounded almost surely, we can exchange differentiation and expectation
in the differentiation of $m\left(\theta\right)$. Thus, we obtain
\begin{flalign}
m^{\prime}\left(\theta\right) & =\mathbb{E}\left[\exp\left(\theta f\left(X\right)\right)f\left(X\right)\right].\nonumber \\
 & =\frac{1}{2}\mathbb{E}\left[\left(\exp\left(\theta f\left(X\right)\right)-\exp\left(\theta f\left(X^{\prime}\right)\right)\right)F\left(X,X^{\prime}\right)\right],\label{eq: mgf symmetry}
\end{flalign}
where the second inequality follows from (\ref{eq: kernel symmetry}). To bound $m^{\prime}\left(\theta\right)$ we apply the following exponential
mean-value inequality, stated in a more general form in \citet{paulin2016efron}.
\begin{lemma}
\label{lem: exp mv ineq}For all constants $x$, $y$, and $c$ in
$\mathbb{R}$ and $s>0$, it holds that
\[
\vert\left(e^{x}-e^{y}\right)c\vert\leq\frac{1}{4}\left(s\left(x-y\right)^{2}+s^{-1}c^{2}\right)\left(e^{x}+e^{y}\right).
\]
\end{lemma}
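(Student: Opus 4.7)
The plan is to decompose the bound into two independent pieces: first reduce the exponential difference $e^{x}-e^{y}$ to a product involving $x-y$ and the sum $e^{x}+e^{y}$, and then split the resulting cross term $|x-y|\,|c|$ with Young's inequality.

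The first step is to establish the auxiliary inequality
\[
|e^{x}-e^{y}|\leq \tfrac{1}{2}\,|x-y|\,(e^{x}+e^{y}).
\]
Without loss of generality assume $x\geq y$. Because $t\mapsto e^{t}$ is convex, the trapezoidal rule overestimates the integral on $[y,x]$, so
\[
e^{x}-e^{y} \;=\; \int_{y}^{x} e^{t}\,dt \;\leq\; \tfrac{1}{2}(x-y)(e^{x}+e^{y}),
\]
which gives the claim after taking absolute values. (The naive mean value theorem bound $|e^{x}-e^{y}|\leq |x-y|\max(e^{x},e^{y})$ would lose a factor of $2$, so the convexity/trapezoidal observation is essential in order to recover the constant $\tfrac{1}{4}$ in the final bound rather than $\tfrac{1}{2}$.)

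The second step is to apply Young's inequality in the form $2ab\leq sa^{2}+s^{-1}b^{2}$ with $a=|x-y|$ and $b=|c|$, giving
\[
|x-y|\cdot|c| \;\leq\; \tfrac{1}{2}\bigl(s(x-y)^{2}+s^{-1}c^{2}\bigr).
\]
Multiplying the first inequality by $|c|$ and substituting this bound yields
\[
|(e^{x}-e^{y})c| \;\leq\; \tfrac{1}{2}(e^{x}+e^{y})\,|x-y|\,|c| \;\leq\; \tfrac{1}{4}\bigl(s(x-y)^{2}+s^{-1}c^{2}\bigr)(e^{x}+e^{y}),
\]
which is the stated inequality. There is no genuine obstacle here beyond tracking constants: the only subtlety is recognising that the factor of $\tfrac{1}{2}$ in the exponential inequality must come from convexity (the trapezoidal/midpoint comparison), since the pointwise mean value theorem alone is too weak.
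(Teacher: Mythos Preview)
Your proof is correct and essentially identical to the paper's: the paper writes $e^{x}-e^{y}=(x-y)\int_{0}^{1}e^{tx+(1-t)y}\,dt$ and bounds the integral by $\tfrac{1}{2}(e^{x}+e^{y})$ via convexity of the exponential (which is exactly your trapezoidal over-estimate after the substitution $t\mapsto y+t(x-y)$), then applies AM--GM to $s(x-y)^{2}$ and $s^{-1}c^{2}$, which is your Young step. The only cosmetic difference is the parametrisation of the integral and the name of the final inequality.
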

\noindent In particular, by (\ref{eq: mgf symmetry}) and Lemma \ref{lem: exp mv ineq},
we obtain the bound
\begin{flalign*}
\vert m^{\prime}\left(\theta\right)\vert & \leq\frac{1}{2}\mathbb{E}\left[\vert\left(\exp\left(\theta f\left(X\right)\right)-\exp\left(\theta f\left(X^{\prime}\right)\right)\right)F\left(X,X^{\prime}\right)\vert\right]\\
 & \leq\frac{1}{8}\inf_{t>0}\mathbb{E}\left[\left(t\left(\theta f\left(X\right)-\theta f\left(X^{\prime}\right)\right)^{2}+t^{-1}F\left(X,X^{\prime}\right)^{2}\right)\left(\exp\left(\theta f\left(X\right)\right)+\exp\left(\theta f\left(X^{\prime}\right)\right)\right)\right]\\
 & =\frac{\vert\theta\vert}{4}\inf_{t>0}\mathbb{E}\left[\left(t\left(f\left(X\right)-f\left(X^{\prime}\right)\right)^{2}+t^{-1}F\left(X,X^{\prime}\right)^{2}\right)\exp\left(\theta f\left(X\right)\right)\right]\\
 & =\frac{\vert\theta\vert}{2}\inf_{t>0}\mathbb{E}\Bigg[\left(\frac{t}{2}\mathbb{E}\left[\left(f\left(X\right)-f\left(X^{\prime}\right)\right)^{2}\mid X\right]+\frac{1}{2t}\mathbb{E}\left[F\left(X,X^{\prime}\right)^{2}\mid X\right]\right)  \\
 & \quad\quad\quad\quad\quad\quad \cdot \mathbb{E}\left[\exp\left(\theta f\left(X\right)\right)\mid X\right]\Bigg]\\
 & =\frac{\vert\theta\vert}{2}\inf_{t>0}\mathbb{E}\left[\left(tU_{f}\left(X\right)+t^{-1}U_{F}\left(X\right)\right)\mathbb{E}\left[\exp\left(\theta f\left(X\right)\right)\mid X\right]\right]\\
 & \leq\frac{\vert\theta\vert}{2}\mathbb{E}\left[\left(sU_{f}\left(X\right)+s^{-1}U_{F}\left(X\right)\right)\mathbb{E}\left[\exp\left(\theta f\left(X\right)\right)\mid X\right]\right]\\
 & \leq\vert\theta\vert v\mathbb{E}\left[\exp\left(\theta f\left(X\right)\right)\right]
\end{flalign*}
for all $\theta\in\mathbb{R}$. Thus, we have that 
\[
m^{\prime}\left(\theta\right)\leq u\theta m\left(\theta\right)
\]
for all $\theta>0$. As $m\left(\cdot\right)$ is a convex function
and $m^{\prime}\left(0\right)=0$, $m^{\prime}\left(\theta\right)$
always has the same sign as $\theta$, we find that
\[
\frac{\text{d}}{\text{d}\theta}\log m\left(\theta\right)\leq u\theta.
\]
As a consequence, and by $m\left(0\right)=1$, we have that 
\[
\log m\left(\theta\right)\leq\int_{0}^{\theta}ut\text{d}t\leq\frac{u\theta^{2}}{2}.
\]
By the Chernoff bound (see e.g., \citealp[Equation 2.5, ][]{wainwright2019high}),
we have
\[
\log P\left\{ f\left(X\right)\geq\delta\right\} \le\inf_{\theta\ge0}\left(\log m\left(\theta\right)-\theta\delta\right)\leq\inf_{\theta\geq0}\left(\frac{u\theta^{2}}{2}-\theta\delta\right)
\]

\noindent Solving this minimization with $\theta=\delta/u$, we find
\[
P\left\{ f\left(X\right)\geq t\right\} \leq\exp\left(\frac{-\delta^{2}}{2u}\right),
\]
as required. The analogous lower tail bound follows from an identical argument, which completes the proof of \eqref{eq: chatterjee concentration display}. 
To prove \eqref{eq: chatterjee moment display} we apply the following result stated in \citet{chatterjee2007stein}. 
\begin{theorem}[Theorem 1.5, (iii), \citealp{chatterjee2007stein}]Reintroduce the notation and assumptions from the statement of Theorem \ref{thm: chatterjee concentration}.
Define 
\[
\Delta\left(X\right)=\frac{1}{2}\mathbb{E}\left[\vert F\left(X,X^{\prime}\right)\left(f\left(X\right)-f\left(X^{\prime}\right)\right)\vert\mid X\right].
\]
 The Burkholder-Davis-Gundy inequality 
\[
\mathbb{E}\left[f\left(X\right)^{2r}\right]\leq\left(2r-1\right)^{r}\mathbb{E}\left[\Delta\left(X\right)^{r}\right]
\]
holds for any positive integer $r$. 
\end{theorem}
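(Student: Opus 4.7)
The plan is to leverage the Stein identity \eqref{eq: kernel symmetry} that has already been established in the preceding proof, together with a polynomial mean-value factorization and a self-bounding Hölder argument. First, I would observe that the hypothesis $U_F(X) \leq su$ from Theorem~\ref{thm: chatterjee concentration} implies $f(X)^2 \leq 2su$ almost surely (by Jensen on the Stein identity), so $f(X)$ is bounded and in particular $\mathbb{E}[h(X) F(X, X')] < \infty$ when $h(x) = x^{2r-1}$. Applying \eqref{eq: kernel symmetry} with this choice of test function gives
\[
\mathbb{E}[f(X)^{2r}] = \mathbb{E}[h(X) f(X)] = \tfrac{1}{2} \mathbb{E}\bigl[(f(X)^{2r-1} - f(X')^{2r-1}) F(X, X')\bigr].
\]

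Next, I would use the identity $a^{2r-1} - b^{2r-1} = (a - b) \sum_{j=0}^{2r-2} a^{2r-2-j} b^{j}$ with $a = f(X)$ and $b = f(X')$. Taking absolute values inside the expectation and applying the weighted AM-GM inequality $|f(X)|^{2r-2-j}|f(X')|^j \leq \frac{2r-2-j}{2r-2}|f(X)|^{2r-2} + \frac{j}{2r-2}|f(X')|^{2r-2}$ to each summand yields $\sum_{j=0}^{2r-2} |f(X)|^{2r-2-j} |f(X')|^j \leq \tfrac{2r-1}{2}(|f(X)|^{2r-2} + |f(X')|^{2r-2})$. Using exchangeability of $(X,X')$ (combined with the fact that $|f(X)-f(X')|\cdot|F(X,X')|$ is symmetric in its arguments, as $F$ is antisymmetric) the two contributions are equal in expectation, and conditioning on $X$ produces
\[
\mathbb{E}[f(X)^{2r}] \leq (2r-1)\,\mathbb{E}\bigl[\Delta(X)\,|f(X)|^{2r-2}\bigr].
\]

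The final move is a self-bounding step: apply Hölder's inequality with conjugate exponents $r$ and $r/(r-1)$ to obtain $\mathbb{E}[\Delta(X) |f(X)|^{2r-2}] \leq \mathbb{E}[\Delta(X)^r]^{1/r} \mathbb{E}[f(X)^{2r}]^{(r-1)/r}$. Substituting back and dividing both sides by $\mathbb{E}[f(X)^{2r}]^{(r-1)/r}$ (justified by the almost-sure boundedness of $f(X)$, which guarantees the ratio is finite) gives $\mathbb{E}[f(X)^{2r}]^{1/r} \leq (2r-1)\,\mathbb{E}[\Delta(X)^r]^{1/r}$, and raising both sides to the $r$th power delivers the claim.

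I expect the main technical obstacle to be two-fold. First, obtaining the sharp constant $(2r-1)^r$ hinges on the weighted AM-GM producing exactly the factor $(2r-1)/2$ — a cruder bound such as $|f(X)|^{2r-2-j}|f(X')|^j \leq \max(|f(X)|, |f(X')|)^{2r-2}$ would lose a combinatorial factor. Second, the division step in the self-bounding argument needs care when $\mathbb{E}[f(X)^{2r}]$ could \emph{a priori} be zero or infinite; the boundedness of $f(X)$ from $U_F(X)\leq su$ handles the infinite case, and the zero case is trivial since then the inequality is vacuous. The $r=1$ case can be verified separately as a sanity check, as it reduces directly to Cauchy–Schwarz on the Stein identity, giving $\mathbb{E}[f(X)^2] \leq \mathbb{E}[\Delta(X)]$.
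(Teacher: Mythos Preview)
Your proof is correct and follows the standard argument from \cite{chatterjee2007stein}. Note, however, that the paper does not actually prove this statement: it is quoted verbatim as Theorem~1.5~(iii) of \cite{chatterjee2007stein} and invoked as a black box. The paper's only contribution after stating the result is to bound $\mathbb{E}[\Delta(X)^r]$ in terms of $\mathbb{E}[U_f^{(r)}(X)]$ and $\mathbb{E}[U_F^{(r)}(X)]$ via Jensen and Young, thereby completing the proof of the moment inequality \eqref{eq: chatterjee moment display} in Lemma~\ref{thm: chatterjee concentration}. So there is no ``paper's own proof'' to compare against; you have supplied one where the authors simply cite the source.
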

\noindent
The inequality
\begin{flalign*}
\mathbb{E}\left[\Delta\left(X\right)^{r}\right] & =\mathbb{E}\left[\mathbb{E}\left[\vert\left(f\left(X\right)-f\left(X^{\prime}\right)\right)F\left(X,X^{\prime}\right)\vert\mid X\right]^{r}\right]\\
 & \leq\mathbb{E}\left[\mathbb{E}\left[\vert\left(f\left(X\right)-f\left(X^{\prime}\right)\right)F\left(X,X^{\prime}\right)\vert^{r}\mid X\right]\right]\tag{Jensen}\\
 & =\mathbb{E}\left[\mathbb{E}\left[\left(\left(s^{-1}\left(f\left(X\right)-f\left(X^{\prime}\right)\right)^{2r}\right)\left(sF\left(X,X^{\prime}\right)^{2r}\right)\right)^{1/2}\mid X\right]\right]\\
 & \leq\mathbb{E}\left[\mathbb{E}\left[s^{-1}\left(f\left(X\right)-f\left(X^{\prime}\right)\right)^{2r}+sF\left(X,X^{\prime}\right)^{2r}\mid X\right]\right]\tag{Young}\\
 & =s^{-1}\mathbb{E}\left[{U}_{f}^{(r)}(X)\right]+s\mathbb{E}\left[{U}_{F}^{(r)}(X)\right]
\end{flalign*}
then completes the proof.\hfill$\qed$

\subsection{Proof of \cref{lem: variance bounds}\label{subsec: proof of variance bounds}}
We apply the following Lemma.
\begin{lemma}
\label{lem: sequence bound}Let $\left(X_{m}\right)_{m\geq0}$ be
a sequence of real-valued random variables. Suppose that the inequality
\begin{equation}
\mathbb{E}\left[X_{m}^{2^{c}}\right]\leq h_{m}^{2^{c}}\label{eq: h sequence bound}
\end{equation}
holds for each $m\geq0$ and positive integer $c$, where $\left(h_{m}\right)_{m\geq0}$
is a deterministic sequence of nonnegative real numbers. If there
exists a square integrable random variable $W$ such that 
\[
\left(\sum_{m=0}^{\infty}X_{m}\right)^{2^{c}}\leq W
\]
almost surely, then the inequality 
\[
\mathbb{E}\left[\left(\sum_{m=0}^{\infty}X_{m}\right)^{2^{c}}\right]\leq\left(\sum_{m=0}^{\infty}h_{m}\right)^{2^{c}}
\]
holds almost surely. 
\end{lemma}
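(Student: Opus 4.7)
The plan is to treat this as a direct consequence of Minkowski's inequality in $L^{2^c}(\mathbb{P})$, combined with Fatou's lemma to pass from finite partial sums to the infinite series. Since $2^c$ is even for every positive integer $c$, we have $X_m^{2^c} = |X_m|^{2^c}$, so the hypothesis $\mathbb{E}[X_m^{2^c}] \leq h_m^{2^c}$ is equivalent to the $L^{2^c}$-norm bound $\|X_m\|_{2^c} \leq h_m$. One may assume $\sum_{m=0}^\infty h_m < \infty$, as the conclusion is vacuous otherwise.

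First, I would apply Minkowski's inequality iteratively to the finite partial sum $S_N := \sum_{m=0}^N X_m$, obtaining
$$\|S_N\|_{2^c} \,\leq\, \sum_{m=0}^N \|X_m\|_{2^c} \,\leq\, \sum_{m=0}^N h_m \,\leq\, \sum_{m=0}^\infty h_m,$$
and raising to the $2^c$ power to get $\mathbb{E}[S_N^{2^c}] \leq \bigl(\sum_{m=0}^\infty h_m\bigr)^{2^c}$ uniformly in $N$. Next, the dominance assumption $S^{2^c} \leq W$ with $W$ integrable ensures that $S := \sum_{m=0}^\infty X_m$ is almost surely well-defined, so $S_N \to S$ almost surely and hence $S_N^{2^c} \to S^{2^c}$ almost surely. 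Because $2^c$ is even, each $S_N^{2^c}$ is nonnegative, so Fatou's lemma yields
$$\mathbb{E}[S^{2^c}] \,=\, \mathbb{E}\!\left[\liminf_{N\to\infty} S_N^{2^c}\right] \,\leq\, \liminf_{N\to\infty} \mathbb{E}[S_N^{2^c}] \,\leq\, \Bigl(\sum_{m=0}^\infty h_m\Bigr)^{2^c},$$
which is the desired bound.

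The only subtle point is justifying the passage from finite to infinite sums: because the $X_m$ need not have a definite sign, the series $\sum X_m$ may converge only conditionally, so one cannot blindly commute $L^{2^c}$-norms with infinite sums. This is precisely where the integrability of the dominating variable $W$ is used; it guarantees almost-sure convergence of the partial sums and integrability of the limit $S^{2^c}$, which in turn legitimizes the Fatou step. No control over moments beyond order $2^c$ is required, and the hypothesis that $\mathbb{E}[X_m^{2^c}] \leq h_m^{2^c}$ holds for all positive integers $c$ is only invoked for the specific $c$ appearing in the conclusion.
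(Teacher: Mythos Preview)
Your proof is correct, and it takes a genuinely different route from the paper's. The paper expands $\bigl(\sum_m X_m\bigr)^{2^c}$ as a multi-indexed sum $\sum_{i_1,\ldots,i_{2^c}} \mathbb{E}[X_{i_1}\cdots X_{i_{2^c}}]$ via dominated convergence, then bounds each mixed moment by $\prod_j h_{i_j}$ using an AM--GM type inequality (obtained by iterating Young's inequality to get $x_1\cdots x_{2^c}\le \frac{1}{2^c}\sum_i x_i^{2^c}$, applied after a rescaling by the $h_{i_j}$). Your approach is more direct: Minkowski in $L^{2^c}$ on the partial sums followed by Fatou. Your argument is cleaner and, in fact, makes weaker use of the domination hypothesis~$W$---Fatou requires no dominating function, so you only need that the series $\sum_m X_m$ converges almost surely, which is implicit in the statement. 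The paper's hands-on expansion, by contrast, invokes dominated convergence to justify interchanging the infinite sum and expectation, which is where the bound by~$W$ is meant to enter. One small point: you say the domination by $W$ ``guarantees almost-sure convergence of the partial sums,'' but really it presupposes it; you could simply say that the statement takes the almost-sure limit $S$ as given, and then your Fatou step goes through without any appeal to~$W$.
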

\noindent
Define the objects 
\begin{flalign*}
U_{f}^{(r)}\left(Z\right)=\frac{1}{2}\mathbb{E}\left[\left(f\left(Z\right)-f\left(Z^{\prime}\right)\right)^{2r}\mid Z\right] & \quad\text{and}\\
U_{F}^{(r)}\left(Z\right)=\frac{1}{2}\mathbb{E}\left[\left(\sum_{m=0}^{\infty}\mathbb{E}\left[f\left(Z_{m}\right)-f\left(Z_{m}^{\prime}\right)\mid Z_{0}=Z,Z_{0}^{\prime}=Z^{\prime}\right]\right)^{2r}\mid Z\right].
\end{flalign*}
By the (\ref{eq: abstract integrability}), we have that
\[
\mathbb{E}\left[\sum_{m=0}^{\infty}f(Z_{m})-f(Z_{m}^{\prime})\mid Z_{0}=Z,Z_{0}^{\prime}=Z^{\prime}\right]^{2^{c}}\leq W.
\]
Thus, (\ref{eq: c_i general bound}) guarantees that
the conditions of Lemma \ref{lem: sequence bound} are satisfied,
and we have that 
\[
\mathbb{E}\left[U_{F}^{(2^{c-1})}\left(Z\right)\mid\bm{\pi}\right]\leq\frac{1}{2}\left(\sum_{i=0}^{\infty}h_{i}\right)^{2^{c}}\quad\text{and}\quad\mathbb{E}\left[U_{f}^{(2^{c-1})}\left(Z\right)\mid\bm{\pi}\right]\leq\frac{1}{2}h_{0}^{2^{c}}.
\]
By Markov's inequality, we obtain
\[
P\left\{ U_{F}^{(2^{c-1})}\left(Z\right)\geq\frac{2}{\delta}\mathbb{E}\left[U_{F}^{(2^{c-1})}\left(Z\right)\mid\bm{\pi}\right]\quad\text{or}\quad U_{f}^{(2^{c-1})}\left(Z\right)\geq\frac{2}{\delta}\mathbb{E}\left[U_{f}^{(2^{c-1})}\left(Z\right)\mid\bm{\pi}\right]\mid\bm{\pi}\right\} \leq\delta.
\]
Hence, by Lemma \ref{lem: sequence bound} and DeMorgan's law, the
probability that both
\begin{flalign*}
U_{F}^{(2^{c-1})}\left(Z\right) & \leq\frac{2}{\delta}\mathbb{E}\left[U_{F}^{(2^{c-1})}\left(Z\right)\mid\bm{\pi}\right]\leq\frac{1}{\delta}\left(\sum_{m=0}^{\infty}h_{m}\right)^{2^{c}}\quad\text{and}\\
U_{f}^{(2^{c-1})}\left(Z\right) & \leq\frac{2}{\delta}\mathbb{E}\left[U_{f}^{(2^{c-1})}\left(Z\right)\mid\bm{\pi}\right]\leq\frac{h_{0}^{2^{c}}}{\delta}
\end{flalign*}
hold is greater than $1-\delta$.\hfill$\qed$

\subsection{Proof of \cref{lem: deterministic bound}\label{subsec: proof of deterministic bound}}
Recall the definition of the collections $\left(\bm{\pi}_{m},\bm{\pi}_{m}^{\prime}\right)_{m\geq0}$
given in \eqref{eq: kernel coupling state-1}. Fix $Z_{0}=Z$ and $Z_{0}^{\prime}=Z^{\prime}$
throughout. For any $i$ in $\left[n\right]$, let $\mathsf{s}_{m,\ell}\left(i\right)$
denote the element of the collection 
\[
\mathsf{r}(\pi_{m,\ell})=\left(\mathsf{s}_{1}(\pi_{m,\ell}),\ldots,\mathsf{s}_{g}(\pi_{m,\ell})\right)
\]
that contains the index $i$ and set $\mathsf{s}_{m,\ell}(i)$ equal
to $\varnothing$ if no element of $\mathsf{r}(\pi_{m,\ell})$ contains
$i$. 

We begin by defining three events that will determine the structure
of our argument. By construction, the collections $\bm{\pi}_{m}$
and $\bm{\pi}_{m}^{\prime}$ are either identical or differ in exactly
two indices in their $L$th element. Let $\mathcal{E}_{m}$ denote
the event that $\bm{\pi}_{m}$ and $\bm{\pi}_{m}^{\prime}$ differ.
On the event $\mathcal{E}_{m}$, let $i_{1,m}$ and $i_{2,m}$ denote
the two indices in which the $L$th elements of $\bm{\pi}_{m}$ and
$\bm{\pi}_{m}^{\prime}$ differ. Define the random variables $B_{m}$
and $C_{m}$ such that, conditional on $\mathcal{E}_{m}$, each is
uniformly distributed on $\left\{ i_{1,m},i_{2,m}\right\} $ such
that $B_{m}\neq C_{m}$. On the complement of $\mathcal{E}_{m}$,
set these indices uniformly at random. Let $\mathcal{F}_{m}$ denote
the event that 
\[
\mathsf{s}_{m,L}\left(B_{m}\right)\neq\mathsf{s}_{m,L}\left(C_{m}\right),
\]
i.e., the event that the indices that differ are not in the same element
of the collection $\mathsf{r}(\pi_{m,\ell})$. Finally, let $\mathcal{G}_{m}$
denote the event that 
\[
\mathsf{s}_{m,L}\left(B_{m}\right)\neq\varnothing\quad\text{and}\quad\mathsf{s}_{m,L}\left(C_{m}\right)\neq\varnothing,
\]
i.e., the event that the indices that differ are both in the collection
$\mathsf{r}(\pi_{m,\ell})$. 

By \cref{assu: invariance}, the event $\mathcal{H}_{m}=\mathcal{E}_{m}\cap\mathcal{F}_{m}$
is a necessary condition for $a\left(Z_{m}\right)-a\left(Z_{m}^{\prime}\right)\neq0$. Thus, we can compute 
\begin{flalign*}
P\left\{ \mathcal{H}_{m} \mid \mathcal{E}_0 \right\}  & =P\left\{ \mathcal{F}_{m}\mid\mathcal{E}_{m},\mathcal{E}_{0}\right\} P\left\{ \mathcal{E}_{m}\mid\mathcal{E}_{0}\right\} \\
 & =\Big(P\left\{ \mathsf{s}_{m,L}\left(B_{m}\right)\neq\mathsf{s}_{m,L}\left(C_{m}\right)\mid\mathsf{s}_{m,L}\left(B_{m}\right)\neq\varnothing,\mathcal{E}_{m}\right\} P\left\{ \mathsf{s}_{m,L}\left(B_{m}\right)\neq\varnothing\mid\mathcal{E}_{m}\right\} \\
 & \quad+P\left\{ \mathsf{s}_{m,L}\left(B_{m}\right)\neq\mathsf{s}_{m,L}\left(C_{m}\right)\mid\mathsf{s}_{m,L}\left(B_{m}\right)=\varnothing,\mathcal{E}_{m}\right\} P\left\{ \mathsf{s}_{m,L}\left(B_{m}\right)=\varnothing\mid\mathcal{E}_{m}\right\} \Big)\\
 & \quad\quad\cdot P\left\{ \mathcal{E}_{m}\mid\mathcal{E}_{0}\right\} \\
 & =\left(\frac{n-b}{n-1}\frac{kb}{n}+\frac{kb}{n-1}\frac{n-kb}{b}\right)\left(1-\frac{2}{gn^{2}}\right)^{m}\\
 & =\left(\frac{kb\left(2n-kb-b\right)}{n(n-1)}\right)\left(1-\frac{2}{gn^{2}}\right)^{m}
\end{flalign*}
for all $m\geq0$ by the law of total probability.  Moreover, we have that
\[
P\left\{ \mathcal{H}_{m} \mid Z,Z^{\prime}\right\} \leq P\left\{ \mathcal{H}_{m} \mid \mathcal{E}_0 \right\}
\]
almost surely. Consequently, we find that 
\begin{flalign}
 & \mathbb{E}\left[\mathbb{E}\left[a\left(Z_{m}\right)-a\left(Z_{m}^{\prime}\right)\mid Z,Z^{\prime}\right]^{2r}\mid\bm{\pi}\right]\nonumber \\
 & = \mathbb{E}\left[\left(P\left\{ \mathcal{H}_{m} \mid Z,Z^{\prime}\right\}\mathbb{E}\left[a\left(Z_{m}\right)-a\left(Z_{m}^{\prime}\right)\mid Z,Z^{\prime},\mathcal{H}_{m}\right]\right)^{2r}\mid\bm{\pi}\right]\nonumber \\
 & \leq \left(P\left\{ \mathcal{H}_{m} \mid \mathcal{E}_0 \right\}\right)^{2r}
  \mathbb{E}\left[\mathbb{E}\left[\left(a\left(Z_{m}\right)-a\left(Z_{m}^{\prime}\right)\right)^{2r} \mid Z,Z^{\prime},\mathcal{H}_{m}\right]\mid\bm{\pi}\right]\nonumber\tag{Jensen}\\
 & =\left(1-\frac{2}{gn^{2}}\right)^{2mr}\left(\frac{kb\left(2n-kb-b\right)}{n(n-1)}\right)^{2r}\mathbb{E}\left[\mathbb{E}\left[\left(a\left(Z_{m}\right)-a\left(Z_{m}^{\prime}\right)\right)^{2r}\mid\mathcal{H}_{m},\bm{\pi}\right]\mid\bm{\pi}\right]\nonumber\\
 & \leq\left(1-\frac{2}{gn^{2}}\right)^{2mr}\left(\frac{2kb\left(2n-kb-b\right)}{n^2}\right)^{2r}
\mathbb{E}\left[\left(a\left(Z_{m}\right)-a\left(Z_{m}^{\prime}\right)\right)^{2r}\mid\mathcal{H}_{m}\right]~,\label{eq: unconditional difference}
\end{flalign}
where the final inequality follows from the fact that $\bm{\pi}$ is uniformly distributed independently of $D$ and the elementary inequality $n/(n-1)\leq2$. Thus, it remains to bound the expectation in (\ref{eq: unconditional difference}). 

To ease notation, we now drop the dependence on $m$ and $L$. Observe that 
\begin{flalign*}
P\left\{ \mathcal{G}\mid\mathcal{H}\right\}  & =P\left\{ \mathsf{s}\left(B\right)\neq\varnothing\mid\mathsf{s}\left(C\right)\neq\varnothing\right\} =\frac{kb-b}{n-b}
\end{flalign*}
and that therefore
\begin{flalign}
\mathbb{E}\left[\left(a\left(Z\right)-a\left(Z^{\prime}\right)\right)^{2r}\mid\mathcal{H},\bm{\pi}\right] & =\left(\frac{kb-b}{n-b}\right)\mathbb{E}\left[\left(a\left(Z\right)-a\left(Z^{\prime}\right)\right)^{2r}\mid\mathcal{\mathcal{G}}\right]\nonumber \\
 & +\left(\frac{n-kb}{n-b}\right)\mathbb{E}\left[\left(a\left(Z\right)-a\left(Z^{\prime}\right)\right)^{2r}\mid\mathcal{\mathcal{H}}\setminus\mathcal{G}\right].\label{eq: break up square}
\end{flalign}
Define the sets 
\[
\hat{\mathsf{s}}_{i}=\mathsf{s}(i)\setminus i\quad\text{and}\quad\bar{\mathsf{s}}=\tilde{\mathsf{s}}(B)\cap\tilde{\mathsf{s}}(C).
\]
With an abuse of notation, we let 
\begin{flalign*}
\psi(D_{i},\hat{\eta}(D_{j},D_{\hat{\mathsf{s}}_{k}})) & =\psi(D_{i},\hat{\eta}(D_{j}\cap D_{\hat{\mathsf{s}}_{k}}\cap D_{\bar{\mathsf{s}}}))\quad\text{and}\\
h(D_{i},D_{j},D_{\hat{\mathsf{s}}_{k}}) & =\psi(D_{i},\hat{\eta}(D_{j},D_{\hat{\mathsf{s}}_{k}}))-\psi(D_{j},\hat{\eta}(D_{i},D_{\hat{\mathsf{s}}_{k}})).
\end{flalign*}
Observe that 
\begin{flalign}
 & \mathbb{E}\left[\left(a\left(Z\right)-a\left(Z^{\prime}\right)\right)^{2r}\mid\mathcal{\mathcal{H}}\setminus\mathcal{G}\right]\nonumber \\
 & =\mathbb{E}\left[\left(a\left(Z\right)-a\left(Z^{\prime}\right)\right)^{2r}\mid\mathcal{\mathcal{H}}\setminus\mathcal{G}\right]=\left(\frac{1}{gkb}\right)^{2r}\mathbb{E}\left[h(D_{B},D_{C},D_{\tilde{s}_{C}})^{2r}\right]~.\label{eq: missing one}
\end{flalign}
On the other hand, we can decompose
\begin{flalign}
 & \mathbb{E}\left[\left(a\left(Z\right)-a\left(Z^{\prime}\right)\right)^{2}\mid\mathcal{G}\right]\nonumber \\
 & =\left(\frac{1}{gkb}\right)^{2r}\mathbb{E}\left[\left(h(D_{B},D_{C},D_{\hat{\mathsf{s}}_{C}})+h(D_{C},D_{B},D_{\hat{\mathsf{s}}_{B}})\right)^{2r}\right]\nonumber \\
 & =\left(\frac{1}{gkb}\right)^{2r}\mathbb{E}\left[\left(h(D_{B},D_{C},D_{\hat{\mathsf{s}}_{C}})-h(D_{B},D_{B},D_{\hat{\mathsf{s}}_{B}})\right)^{2r}\right]~.\label{eq:  none missing}
\end{flalign}
Consequently, by (\ref{eq: break up square}), (\ref{eq: missing one}),
and (\ref{eq:  none missing}), it suffices to express suitable bounds
for the expectations 
\begin{flalign}
 & \mathbb{E}\left[\left(\psi(D_{B},\hat{\eta}(D_{C},D_{\hat{\mathsf{s}}_{C}}))-\psi(D_{C},\hat{\eta}(D_{B},D_{\hat{\mathsf{s}}_{C}}))\right)^{2r}\right]\quad\text{and}\nonumber \\
 & \mathbb{E}\Bigg[\bigg(\left(\psi(D_{B},\hat{\eta}(D_{C},D_{\hat{\mathsf{s}}_{C}}))-\psi(D_{C},\hat{\eta}(D_{B},D_{\hat{\mathsf{s}}_{C}}))\right)\label{eq: double diff}\\
 & \quad\quad\quad-\left(\psi(D_{B},\hat{\eta}(D_{C},D_{\hat{\mathsf{s}}_{B}}))-\psi(D_{C},\hat{\eta}(D_{B},D_{\hat{\mathsf{s}}_{B}}))\right)\bigg)^{2r}\Bigg]\nonumber 
\end{flalign}
respectively. To this end, recall that $\tilde{D}_{i}$ are independent
copies of $D_{i}$ for each $i$. Observe that 
\begin{flalign}
 & \mathbb{E}\left[\left(\psi(D_{B},\hat{\eta}(D_{C},D_{\hat{\mathsf{s}}_{C}}))-\psi(D_{C},\hat{\eta}(D_{B},D_{\hat{\mathsf{s}}_{C}}))\right)^{2r}\right]\nonumber \\
 & =\mathbb{E}\left[\left(\psi(D_{B},\hat{\eta}(D_{C},D_{\hat{\mathsf{s}}_{C}}))-\psi(\tilde{D}_{B},\hat{\eta}(\tilde{D}_{C},D_{\hat{\mathsf{s}}_{C}}))+\psi(\tilde{D}_{B},\hat{\eta}(\tilde{D}_{C},D_{\hat{\mathsf{s}}_{C}}))-\psi(D_{C},\hat{\eta}(D_{B},D_{\hat{\mathsf{s}}_{C}}))\right)^{2r}\right]\nonumber \\
 & =\sum_{q=0}^{2r}{2r \choose q}\mathbb{E}\bigg[\left(\psi(D_{B},\hat{\eta}(D_{C},D_{\hat{\mathsf{s}}_{C}}))-\psi(\tilde{D}_{B},\hat{\eta}(\tilde{D}_{C},D_{\hat{\mathsf{s}}_{C}}))\right)^{2r-q}\nonumber \\
 & \quad\quad\quad\quad\quad\quad\left(\psi(\tilde{D}_{B},\hat{\eta}(\tilde{D}_{C},D_{\hat{\mathsf{s}}_{C}}))-\psi(D_{C},\hat{\eta}(D_{B},D_{\hat{\mathsf{s}}_{C}}))\right)^{q}\bigg]\tag{Binomial Theorem}\nonumber \\
 & \leq2^{2r}\mathbb{E}\left[\left(\psi(D_{B},\hat{\eta}(D_{C},D_{\hat{\mathsf{s}}_{C}}))-\psi(\tilde{D}_{B},\hat{\eta}(\tilde{D}_{C},D_{\hat{\mathsf{s}}_{C}}))\right)^{2r}\right]\label{eq: move to two ind swaps}
\end{flalign}
where the inequality follows from the fact that $B$ and $C$ are
exchangeable and the Hölder inequality. Similarly, we have that 
\begin{flalign}
 & \mathbb{E}\left[\left(\psi(D_{B},\hat{\eta}(D_{C},D_{\hat{\mathsf{s}}_{C}}))-\psi(\tilde{D}_{B},\hat{\eta}(\tilde{D}_{C},D_{\hat{\mathsf{s}}_{C}}))\right)^{2r-q}\right]\nonumber \\
 & =\mathbb{E}\left[\left(\psi(D_{B},\hat{\eta}(D_{C},D_{\hat{\mathsf{s}}_{C}}))-\psi(D_{B},\hat{\eta}(\tilde{D}_{C},D_{\hat{\mathsf{s}}_{C}}))+\psi(D_{B},\hat{\eta}(\tilde{D}_{C},D_{\hat{\mathsf{s}}_{C}}))-\psi(\tilde{D}_{B},\hat{\eta}(\tilde{D}_{C},D_{\hat{\mathsf{s}}_{C}}))\right)^{2r}\right]\nonumber \\
 & =\sum_{q=0}^{2r}{2r \choose q}\mathbb{E}\bigg[\left(\psi(D_{B},\hat{\eta}(D_{C},D_{\hat{\mathsf{s}}_{C}}))-\psi(D_{B},\hat{\eta}(\tilde{D}_{C},D_{\hat{\mathsf{s}}_{C}}))\right)^{2r-q}\nonumber \\
 & \quad\quad\quad\quad\quad\quad\quad\quad\left(\psi(D_{B},\hat{\eta}(\tilde{D}_{C},D_{\hat{\mathsf{s}}_{C}}))-\psi(\tilde{D}_{B},\hat{\eta}(\tilde{D}_{C},D_{\hat{\mathsf{s}}_{C}}))\right)^{q}\bigg]\tag{Binomial Theorem}\nonumber \\
 & \leq\sum_{q=0}^{2r}{2r \choose q}\left(\sigma_{\mathsf{valid}}^{(2r)}\right)^{\frac{2r-q}{2r}}\left(\sigma_{\mathsf{train}}^{(2r,1)}\right)^{\frac{q}{2r}}\tag{Hölder}\nonumber \\
 & \leq2^{2r}\sigma_{\mathsf{max}}^{(2r)},\label{eq: in terms of stabilities}
\end{flalign}
where the last inequality follows by the definitions of $\sigma_{\mathsf{valid}}^{(2r)}$
and $\sigma_{\mathsf{train}}^{(2r,1)}$. Thus, we have that 
\begin{equation}
\label{eq: contibute sigma}
\mathbb{E}\left[\left(\psi(D_{B},\hat{\eta}(D_{C},D_{\hat{\mathsf{s}}_{C}}))-\psi(D_{C},\hat{\eta}(D_{B},D_{\hat{\mathsf{s}}_{C}}))\right)^{2r}\right]\leq2^{4r}\sigma_{\mathsf{max}}^{(2r)}
\end{equation}
by (\ref{eq: move to two ind swaps}) and (\ref{eq: in terms of stabilities}). 

Next, we consider the double difference term (\ref{eq: double diff}).
In this case, we have that 
\begin{flalign}
 & \mathbb{E}\Bigg[\bigg(\left(\psi(D_{B},\hat{\eta}(D_{C},D_{\hat{\mathsf{s}}_{C}}))-\psi(D_{C},\hat{\eta}(D_{B},D_{\hat{\mathsf{s}}_{C}}))\right)\nonumber\\
 & \quad\quad\quad-\left(\psi(D_{B},\hat{\eta}(D_{C},D_{\hat{\mathsf{s}}_{B}}))-\psi(D_{C},\hat{\eta}(D_{B},D_{\hat{\mathsf{s}}_{B}}))\right)\bigg)^{2r}\Bigg]\nonumber\\
 & =\mathbb{E}\Bigg[\bigg(\left(\psi(D_{B},\hat{\eta}(D_{C},D_{\hat{\mathsf{s}}_{C}}))-\psi(D_{B},\hat{\eta}(D_{C},D_{\hat{\mathsf{s}}_{B}}))\right)\nonumber\\
 & \quad\quad\quad-\left(\psi(D_{C},\hat{\eta}(D_{B},D_{\hat{\mathsf{s}}_{C}}))-\psi(D_{C},\hat{\eta}(D_{B},D_{\hat{\mathsf{s}}_{B}}))\right)\bigg)^{2r}\Bigg]\nonumber\\
 & =\sum_{q=0}^{2r}{2r \choose q}\mathbb{E}\Bigg[\bigg(\left(\psi(D_{B},\hat{\eta}(D_{C},D_{\hat{\mathsf{s}}_{C}}))-\psi(D_{B},\hat{\eta}(D_{C},D_{\hat{\mathsf{s}}_{B}}))\right)\nonumber\\
 & \quad\quad\quad\quad\quad\quad\cdot\left(\psi(D_{C},\hat{\eta}(D_{B},D_{\hat{\mathsf{s}}_{C}}))-\psi(D_{C},\hat{\eta}(D_{B},D_{\hat{\mathsf{s}}_{B}}))\right)\bigg)^{2r}\Bigg]\tag{Binomial Theorem}\nonumber\\
 & \leq2^{2r}\mathbb{E}\left[\left(\psi(D_{B},\hat{\eta}(D_{C},D_{\hat{\mathsf{s}}_{C}}))-\psi(D_{B},\hat{\eta}(D_{C},D_{\hat{\mathsf{s}}_{B}}))\right)^{2r}\right]=2^{2r}\sigma_{\mathsf{train}}^{(2r,b-1)}\label{eq: contribute sigma train}
\end{flalign}
where the final inequality follows from the fact that $B$ and $C$
are exchangeable and the Hölder inequality. Putting the pieces together,
we have that 
\begin{flalign*}
 & \mathbb{E}\left[\mathbb{E}\left[a\left(Z_{m}\right)-a\left(Z_{m}^{\prime}\right)\mid Z,Z^{\prime}\right]^{2r}\mid\bm{\pi}\right]\\
 & \leq2^{4r}\left(1-\frac{2}{gn^{2}}\right)^{2mr}\left(\frac{2n-kb-b}{gn^{2}}\right)^{2r}
 \left( 
 \left(\frac{n-kb}{n-b}\right) 2^{2r} \sigma_{\mathsf{max}}^{(2r)}
 + 
\left(\frac{kb-b}{n-b}\right) \sigma_{\mathsf{train}}^{(2r,b-1)}
 \right)
\end{flalign*}
as required.\hfill$\qed$

\subsection{Proof of \cref{thm: variance estimator concentration}}
Throughout, we us the short hand $\hat{v}_{g,k}\left(Z\right)$ to denote $\hat{v}(\mathsf{R}_{g,k}, D).$ 
We begin by decomposing the estimator $\hat{v}_{g,k}\left(Z\right)$ into
two parts that will each be easier to handle when considered in isolation.
To this end, define the statistics 
\begin{flalign*}
\tilde{v}_{g,k}\left(Z\right) & =\frac{1}{g^{2}k^{2}}\sum_{l=1}^{g}\sum_{i,i^{\prime}=1}^{k}\left(T(\mathsf{s}_{l,i},Y)-\bar{a}\left(D\right)\right)\left(T(\mathsf{s}_{l,i^{\prime}},D)-\bar{a}\left(D\right)\right)\quad\text{and}\\
\check{v}_{g,k}\left(Z\right) & =\left(a\left(Z\right)-\bar{a}\left(D\right)\right)^{2}.
\end{flalign*}
Observe that both $\tilde{v}_{g,k}\left(Z\right)$ and $\check{v}_{g,k}\left(Z\right)$
are unbiased for $v\left(Z\right)$. Moreover, we can write
\begin{flalign}
\hat{v}_{g,k}\left(Z\right) 
& =\frac{1}{g\left(g-1\right)k^{2}}\sum_{l=1}^{g}\sum_{i,i^{\prime}=1}^{k}\left(T(\mathsf{s}_{l,i},D)-a\left(Z\right)\right)\left(T(\mathsf{s}_{l,i^{\prime}},D)-a\left(Z\right)\right)\nonumber \\
 & =\frac{1}{g\left(g-1\right)k^{2}}\sum_{l=1}^{g}\sum_{i,i^{\prime}=1}^{k}\left(T(\mathsf{s}_{l,i},D)-\bar{a}\left(D\right)\right)\left(T(\mathsf{s}_{l,i^{\prime}},D)-a\left(Z\right)\right)\nonumber \\
 & \quad+\frac{1}{g\left(g-1\right)k}\sum_{l=1}^{g}\sum_{i=1}^{g}\left(\bar{a}\left(D\right)-a\left(Z\right)\right)\left(T(\mathsf{s}_{l,i},D)-a\left(Z\right)\right)\nonumber \\
 & =\frac{1}{g\left(g-1\right)k^{2}}\sum_{l=1}^{g}\sum_{i,i^{\prime}=1}^{k}\left(T(\mathsf{s}_{l,i},D)-\bar{a}\left(D\right)\right)\left(T(\mathsf{s}_{l,i^{\prime}},D)-a\left(Z\right)\right)\nonumber \\
 & =\frac{g}{g-1}\tilde{v}\left(Z\right)+\frac{1}{g\left(g-1\right)k}\sum_{l=1}^{g}\sum_{i=1}^{k}\left(T(\mathsf{s}_{l,i},D)-\bar{a}\left(D\right)\right)\left(\bar{a}\left(D\right)-a\left(Z\right)\right)\nonumber \\
 & =\frac{g}{g-1}\tilde{v}_{g,k}\left(Z\right)-\frac{1}{g-1}\check{v}_{g,k}\left(Z\right).\label{eq: break up v hat}
\end{flalign}
Hence, it will suffice to characterize the exponential rates of conditional
concentration for the statistics $\tilde{v}_{g,k}\left(Z\right)$ and $\check{v}_{g,k}\left(Z\right)$.
These are established through the application of the following Lemma,
which follows from an argument very similar to the proof of Theorem
\ref{thm: cross split concentration}.
\begin{lemma}
\label{lem: a(Z) sq concentration}Suppose that \cref{assu: invariance,assu: linearity} hold and that the data $D$ are independently and identically distributed. If the eighth-order split-stability $\zeta^{(8)}$ is finite,
then:

\noindent \textbf{(i)} The conditional concentration inequality 
\begin{flalign}
 & \log \frac{1}{2}P\left\{ \vert\tilde{v}\left(Z\right)-v\left(D\right)\vert\geq t\mid D\right\} 
 \lesssim - 
 \frac{\delta}{(2 - \varphi k - \varphi)^4} \frac{g^3 t ^2}{\Gamma_{k,\varphi,b}^{(2)}}~,\label{eq: var tilde concentration}
\end{flalign}
holds for all $t>0$ with probability greater than $1-\delta$.

\noindent \textbf{(ii)} The conditional concentration inequality 
\begin{flalign}
 & \log \frac{1}{2}P\left\{ \vert\check{v}\left(Z\right)-v\left(D\right)\vert\geq t\mid D\right\} 
  \lesssim - 
 \frac{\delta}{(2 - \varphi k - \varphi)^4} \frac{g^2 t ^2}{\Gamma_{k,\varphi,b}^{(2)}}~,\label{eq: a(Z) sq concentration}
\end{flalign}
holds for all $t>0$ with probability greater than $1-\delta$.
\end{lemma}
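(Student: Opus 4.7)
The plan is to follow the exchangeable-pairs template developed in the proofs of \cref{thm: cross split concentration appendix,cor: Moment bound}, applied separately to the two centered statistics $\tilde{v}(Z) - v(D)$ and $\check{v}(Z) - v(D)$. Both statistics are symmetric in the relevant permutations and, by construction, unbiased for $v(D) = v_{g,k}(D)$ conditional on $D$, so we may reuse the exchangeable pair $(\bm{\pi}, \bm{\pi}^\prime)$ and coupled Markov chains $(\bm{\pi}_m, \bm{\pi}_m^\prime)_{m \geq 0}$ built in \cref{sec: construction}. The corresponding Stein representers
\begin{align*}
\tilde{V}(\bm{\pi}, \bm{\pi}^\prime \mid D) &= \sum_{m=0}^\infty \mathbb{E}\left[\tilde{v}(Z_m) - \tilde{v}(Z_m^\prime) \mid \bm{\pi}_0 = \bm{\pi},\, \bm{\pi}_0^\prime = \bm{\pi}^\prime,\, D\right], \\
\check{V}(\bm{\pi}, \bm{\pi}^\prime \mid D) &= \sum_{m=0}^\infty \mathbb{E}\left[\check{v}(Z_m) - \check{v}(Z_m^\prime) \mid \bm{\pi}_0 = \bm{\pi},\, \bm{\pi}_0^\prime = \bm{\pi}^\prime,\, D\right]
\end{align*}
will satisfy the identity \eqref{eq: conditional  representer def} by the argument of \cref{lem: representer construction applied}, once finiteness is verified. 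Finiteness will follow from a variant of \cref{lem: finiteness} in which the fourth-order split stability $\zeta^{(4)}$ is replaced by $\zeta^{(8)}$, to accommodate the quadratic structure of $\tilde{v}$ and $\check{v}$.

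Equipped with these representers, the bulk of the work is to obtain the analogue of \cref{lem: deterministic bound}. Since only the $L$th coordinate of $(\bm{\pi}_m, \bm{\pi}_m^\prime)$ is ever updated under the coupling, the per-step differences factor as
\begin{align*}
\check{v}(Z_m) - \check{v}(Z_m^\prime) &= \bigl(a(Z_m) - a(Z_m^\prime)\bigr)\bigl(a(Z_m) + a(Z_m^\prime) - 2\bar{a}(D)\bigr), \\
\tilde{v}(Z_m) - \tilde{v}(Z_m^\prime) &= \tfrac{1}{g^2}\bigl(\bar{a}(\mathsf{r}_{m,L}, D) - \bar{a}(\mathsf{r}_{m,L}^\prime, D)\bigr)\bigl(\bar{a}(\mathsf{r}_{m,L}, D) + \bar{a}(\mathsf{r}_{m,L}^\prime, D) - 2\bar{a}(D)\bigr),
\end{align*}
where $\mathsf{r}_{m,L}$ and $\mathsf{r}_{m,L}^\prime$ denote the $L$th cross-splits induced by $\bm{\pi}_m$ and $\bm{\pi}_m^\prime$. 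Applying Cauchy--Schwarz inside the conditional expectation splits the resulting fourth-moment bound into two pieces. The first factor---a difference---is handled by \cref{lem: deterministic bound} with $r=1$ and produces the $\Gamma^{(1)}_{k,\varphi,b}$-type decay in $m$ familiar from the proof of \cref{thm: cross split concentration appendix}. The second factor---a sum of deviations rather than a difference---is controlled by \cref{cor: Moment bound} with $r=2$; this yields the $(\Gamma^{(2)}_{k,\varphi,b})^{1/2}$ factor in the final bound and is precisely where the hypothesis $\zeta^{(8)} < \infty$ is consumed. Feeding these bounds through \cref{lem: variance bounds} and then \cref{thm: chatterjee concentration}, with the free parameter $s$ chosen to balance the two contributions in the resulting Young-type inequality, delivers each claimed tail bound.

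The only substantive difference between parts~(i) and~(ii) is the prefactor $1/g^2$ that appears in the per-step difference for $\tilde{v}$ but not for $\check{v}$; tracing this factor through the computation of $U_F^{(1)}$ and $U_f^{(1)}$ is exactly what delivers the extra factor of $g$ in the exponent for part~(i) relative to part~(ii). The hard part will be the quadratic structure of the two statistics, which requires simultaneous control of a sample-stability--driven difference and a typical-deviation sum; Cauchy--Schwarz combined with the Burkholder--Davis--Gundy bounds of \cref{cor: Moment bound} handles this cleanly, but at the cost of upgrading the stability requirement from $\zeta^{(4)}$ to $\zeta^{(8)}$.
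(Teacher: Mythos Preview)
Your proposal is correct and matches the paper's proof essentially step for step: the paper also builds Stein representers $\tilde{V}$ and $\check{V}$ from the same exchangeable pair, factors each per-step increment as a difference of squares, splits via Cauchy--Schwarz into a ``difference'' factor and a ``sum-of-deviations'' factor, bounds the latter with \cref{cor: Moment bound} at $r=2$, and then feeds everything through \cref{lem: variance bounds} and \cref{thm: chatterjee concentration}. One small slip: after Cauchy--Schwarz on the second moment of the product you need the \emph{fourth} moment of the difference factor, so the relevant input from \cref{lem: deterministic bound} is the $r=2$ case (yielding $\Gamma^{(2)}_{k,\varphi,b}$), not $r=1$; this is exactly what the paper uses and is what produces the $\Gamma^{(2)}_{k,\varphi,b}$ appearing in the stated bounds.
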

\noindent
Putting the pieces together, by (\ref{eq: break up v hat}), we have that
\begin{flalign*}
 & P\left\{ \vert\hat{v}_{g,k}\left(Z\right)-v_{g,k}\left(D\right)\vert\leq t\mid D\right\} \\
 & = P\left\{ 
 \vert\frac{g}{g-1}\left(\tilde{v}_{g,k}\left(Z\right)-v_{g,k}\left(D\right)\right)
 -\frac{1}{g-1}\left(\check{v}_{g,k}\left(Z\right)-v_{g,k}\left(D\right)\right)\vert\leq t\mid D\right\} \\
 & \geq P\left\{ 
 \frac{g}{g-1}\vert\tilde{v}_{g,k}\left(Z\right)
 -v_{g,k}\left(D\right)\vert
 +\frac{1}{g-1}\vert\check{v}_{g,k}\left(Z\right)
 -v_{g,k}\left(D\right)\vert\leq t\mid D\right\} \\
 & \geq P\left\{ \frac{g}{g-1}\vert\tilde{v}_{g,k}\left(Z\right)
 -v\left(D\right)\vert\leq\frac{\sqrt{g}}{1+\sqrt{g}}t,
 \frac{1}{g-1}\vert\check{v}\left(Z\right)
 -v\left(D\right)\vert\leq\frac{1}{1+\sqrt{g}}t\mid D\right\} \\
 & \geq 
 P\left\{ \vert\tilde{v}_{g,k}\left(Z\right)
 -v\left(D\right)\vert\leq\frac{1}{\sqrt{g}}\frac{g-1}{1+\sqrt{g}}t\mid D\right\} 
 +P\left\{ \vert\check{v}_{g,k}\left(Z\right)-
 v\left(D\right)\vert\leq\frac{g-1}{1+\sqrt{g}}t\mid D\right\} -1\\
 & \geq 1- 
 4 \exp\left(- \frac{\delta}{C  (2 - \varphi k - \varphi)^4} \frac{g^3 t ^2}{\Gamma_{k,\varphi,b}^{(2)}}\right)
\end{flalign*}
with probability greater than $1-\delta$ for some universal constant $C$, where the last inequality
follows by Lemma \ref{lem: a(Z) sq concentration} and the facts
that $4g\geq\left(1+\sqrt{g}\right)^{2}$ for all $g\geq1$ and $\left(1/4\right)g^{2}\leq\left(g-1\right)^{2}$
for all $g\geq2$.\hfill\qed

\subsection{Proof of \cref{lem: reproducability decomposition}}
Define the event 
\[
\mathcal{W}_{\lambda}\left(D\right)=\mathcal{U}_{k,\lambda}\left(D\right)\cap\mathcal{Q}_{k,\lambda}(D),
\]
the quantity 
\[
W\left(D\right)=\left(U(\mathsf{R}_{\hat{g},k},D)-U(\mathsf{R}_{\hat{g}^{\prime},k}^{\prime},D\right)+\left(Q(\mathsf{R}_{\hat{g},k},D)-Q(\mathsf{R}_{\hat{g}^{\prime},k}^{\prime},D)\right).
\]
By the decomposition \eqref{eq: sequential decomposition}, we have that
\begin{flalign*}
 & \bigg\vert P\left\{ a(\mathsf{R}_{\hat{g},k},D)-a(\mathsf{R}_{\hat{g}^{\prime},k}^{\prime},D)\leq\xi\mid D\right\} -\left(1-\beta/2\right)\bigg\vert\\
 & =\bigg\vert P\left\{ \frac{a(\mathsf{R}_{\hat{g},k},D)-a(\mathsf{R}_{\hat{g}^{\prime},k}^{\prime},D)}{\sqrt{2v_{g^{\star},k}\left(D\right)}}\leq\frac{\xi}{\sqrt{2v_{g^{\star},k}\left(D\right)}}\mid D\right\} -\left(1-\beta/2\right)\bigg\vert\\
 & =\bigg\vert P\left\{ \frac{a(\mathsf{R}_{g^{\star},k},D)-a(\mathsf{R}_{g^{\star},k}^{\prime},D)}{\sqrt{2v_{g^{\star},k}\left(D\right)}}\leq\frac{\xi}{\sqrt{2v_{g^{\star},k}\left(D\right)}}-\frac{W\left(D\right)}{\sqrt{2v_{g^{\star},k}\left(D\right)}}\mid D\right\} -\left(1-\beta/2\right)\bigg\vert\\
 & \leq\sup_{q\in\left[-2\lambda,2\lambda\right]}\bigg\vert P\left\{ \frac{a(\mathsf{R}_{g^{\star},k},D)-a(\mathsf{R}_{g^{\star},k}^{\prime},D)}{\sqrt{2v_{g^{\star},k}\left(D\right)}}\leq z_{1-\beta/2}+q\right\} -\left(1-\beta/2\right)\bigg\vert\\
 &+\left(1-P\left\{ \mathcal{W}_{\lambda}\left(D\right)\mid D\right\} \right)\\
 & \leq \sup_{z \in \mathbb{R}}
\bigg\vert 
P\left\{ \frac{a(\mathsf{R}_{g^{\star},k},D)-a(\mathsf{R}_{g^{\star},k}^{\prime},D)}{\sqrt{2v_{g^{\star},k}\left(D\right)}} \leq z \mid D\right\}
-
\Phi(z)  \bigg\vert 
+2\lambda+\left(1-P\left\{ \mathcal{W}_{\lambda}\left(D\right)\mid D\right\} \right),
\end{flalign*}
as required.\hfill\qed

\subsection{Proof of \cref{lem: three components}}
\subsubsection{Part (i)}

Observe that we can write 
\[
a(\mathsf{R}_{g^{\star},k},D)-a(\mathsf{R}_{g^{\star},k}^{\prime},D)=\frac{1}{g^{\star}}\sum_{l=1}^{g}\left(\bar{a}(\mathsf{r}_{l},D)-\bar{a}(\mathsf{r}_{l}^{\prime},D)\right),
\]
where $\bar{a}\left(\cdot,D\right)$ is defined in \eqref{eq: centered}. We have that
\begin{flalign*}
\mathbb{E}\left[\vert\bar{a}(\mathsf{r}_{l},D)-\bar{a}(\mathsf{r}_{l}^{\prime},D)\vert^{3}\right] & \leq\left(\mathbb{E}\left[\left(\bar{a}(\mathsf{r}_{l},D)-\bar{a}(\mathsf{r}_{l}^{\prime},D)\right)^{4}\right]\right)^{3/4}\tag{Hölder}\\
 & \leq2^{6}\left(\mathbb{E}\left[\left(\bar{a}(\mathsf{r}_{l},D)\right)^{4}\right]\right)^{3/4}\\
 & \leq2^6 3^2 (2 - \varphi k - \varphi)^3 (\Gamma^{(2)}_{k,\varphi,b})^{3/4}~,
\end{flalign*}
where the second inequality follows from Hölder's inequality, the
binomial theorem, and the fact that $\mathsf{r}_{l}$ and $\mathsf{r}_{l}^{\prime}$
are exchangeable. The final inequality follows from \cref{cor: Moment bound}. Consequently,
we find that 
\begin{flalign*}
\frac{\mathbb{E}\left[\vert\bar{a}(\mathsf{r}_{l},D)-\bar{a}(\mathsf{r}_{l}^{\prime},D)\vert^{3} \mid D\right]}{\left(g^{\star}\right)^{1/2}\left(2v_{1,k}\left(D\right)\right)^{3/2}}
& \lesssim \frac{(2 - \varphi k - \varphi)^3 (\Gamma^{(2)}_{k,\varphi,b})^{3/4}}{\delta \left(g^{\star}\right)^{1/2}\left(v_{1,k}\left(D\right)\right)^{3/2}}\tag{Markov}\\
& \lesssim \frac{\xi}{z_{1-\beta/2}} \frac{(2 - \varphi k - \varphi)^3 (\Gamma^{(2)}_{k,\varphi,b})^{3/4}}{\delta \left(v_{1,k}\left(D\right)\right)^{2}}~,
\end{flalign*}
holds with probability $1-\delta$. The Lemma then follows by the Berry-Esseen inequality.\hfill\qed

\subsubsection{Part (ii)}

Our bound is based on the following two conditional concentration
inequalities. Both arguments are based on a Chernoff-type maximal
inequality, due to \citet{steiger1970bernstein}. 
\begin{lemma}
\label{lem: star hat diff}Suppose that \cref{assu: invariance,assu: linearity} hold,
 that the data $D$ are independent and identically distributed,
and that the eighth-order split $\zeta^{8}$ stability is finite. 

\noindent \textbf{(i)} For all $t>0$ and $c>0$, the condition concentration
inequality
\begin{flalign}
  & \log \frac{1}{2} P\left\{ \vert U(\mathsf{R}_{\hat{g},k},D)\vert\geq t\sqrt{v_{g^{\star},k}\left(D\right)},\vert\hat{g}/g^{\star}-1\vert\leq c\mid D\right\} \\
 & \quad \quad \quad \leq - \frac{\delta v_{1,k}(D)}{2^4 (2 - \varphi k - \varphi)^2 \Gamma^{(1)}_{k,\varphi,b}} \frac{t^{2}}{c} \label{eq: sum control}\
\end{flalign}
holds with probability greater than $1-\delta$ as $D$ varies.

\noindent \textbf{(ii)} If $g^{\star}c\geq2$ and $1/2>c>0$, then
the conditional concentration inequality 
\begin{flalign}
\log \frac{1}{8} P\left\{ \vert\hat{g}-g^{\star}\vert>cg^{\star}\mid D\right\}  
 \lesssim - \frac{\delta (v_{1,k}(D))^3}{(2 - \varphi k - \varphi)^4} \frac{z^2_{1-\beta/2}}{\Gamma_{k,\varphi,b}^{(2)}} \frac{c^2}{\xi^2} \label{eq: star hat diff}
\end{flalign}
holds with probability greater than $1-\delta$ as $D$ varies.
\end{lemma}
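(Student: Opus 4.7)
The plan is to establish both parts by exploiting the fact that the cross-splits $\{\mathsf{r}_i\}_{i\geq 1}$ are i.i.d.\ conditional on the data $D$, which lets me represent the objects of interest as partial sums of conditionally i.i.d.\ summands and then apply a Chernoff-type maximal inequality of \cite{steiger1970bernstein}.

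For Part (i), I will begin with the algebraic identity
\begin{equation*}
U(\mathsf{R}_{g,k},D) = \frac{\text{sgn}(g-g^\star)}{g^\star}\sum_{i=\min(g,g^\star)+1}^{\max(g,g^\star)} \bar{a}(\mathsf{r}_i,D),
\end{equation*}
which represents $U$ as a signed partial sum of the conditionally i.i.d., mean-zero variables $\bar{a}(\mathsf{r}_i,D)$, rescaled by $g^\star$. On the event $\{|\hat{g}/g^\star - 1|\leq c\}$ this sum involves at most $\lfloor cg^\star\rfloor$ terms. Individual Chernoff control of each summand follows from \cref{thm: cross split concentration appendix} applied with $g=1$, giving an effective sub-Gaussian parameter proportional to $(2-\varphi k-\varphi)^2\Gamma^{(1)}_{k,\varphi,b}$ with probability $1-\delta$. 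Steiger's maximal inequality then produces a Chernoff-type bound for the maximum over the window that loses only a factor proportional to the window length $cg^\star$ in the denominator of the exponent, reproducing the stated rate $-\delta v_{1,k}(D)t^2/[2^4(2-\varphi k-\varphi)^2\Gamma^{(1)}_{k,\varphi,b} c]$.

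For Part (ii), I will split $\{|\hat{g}-g^\star|>cg^\star\}$ into the one-sided events $\{\hat{g}>(1+c)g^\star\}$ and $\{\hat{g}<(1-c)g^\star\}$. For the upper tail, the definition of $\hat{g}$ implies that at $\tilde{g}=\lceil(1+c)g^\star\rceil$ the stopping criterion fails, hence $\hat{v}(\mathsf{R}_{\tilde{g},k},D) > \mathsf{cv}(\xi,\beta)$. Combining with the sandwich $v_{g^\star,k}(D)\leq\mathsf{cv}(\xi,\beta)<v_{g^\star-1,k}(D)$ and $v_{g,k}(D)=v_{1,k}(D)/g$, this translates into a multiplicative deviation $|\hat{v}(\mathsf{R}_{\tilde{g},k},D)/v_{\tilde{g},k}(D)-1| \gtrsim c$, to which \cref{thm: variance estimator concentration} applies (pointwise in $\tilde{g}$) with $t\asymp c$. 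Substituting the characterization $g^\star\asymp v_{1,k}(D)z^2_{1-\beta/2}/\xi^2$ converts the exponent $\tilde g c^2/\Gamma^{(2)}_{k,\varphi,b}$ into the stated form $v_{1,k}(D) z^2_{1-\beta/2}c^2/(\xi^2\Gamma^{(2)}_{k,\varphi,b})$.

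The lower tail is the main obstacle: it requires a \emph{maximal} version of the variance concentration in \cref{thm: variance estimator concentration}, since $\hat{g}\leq(1-c)g^\star$ only asserts that $\hat{v}(\mathsf{R}_{g,k},D)\leq\mathsf{cv}(\xi,\beta)$ for \emph{some} $g$ in the window $[g_{\mathsf{init}},(1-c)g^\star]$. To resolve this, I will use the decomposition \eqref{eq: break up v hat}, which writes $\hat{v}_{g,k}(Z)$ as $\frac{g}{g-1}\tilde{v}_{g,k}(Z) - \frac{1}{g-1}\check{v}_{g,k}(Z)$. The leading term $\tilde{v}_{g,k}(Z)$ is a normalized partial sum of the conditionally i.i.d.\ random variables $\{k^{-2}\sum_{i,i'}(T(\mathsf{s}_{l,i},D)-\bar{a}(D))(T(\mathsf{s}_{l,i'},D)-\bar{a}(D))\}_{l\geq 1}$, so Steiger's maximal inequality applied together with the Chernoff control of a single block (obtained exactly as in the proof of \cref{lem: a(Z) sq concentration}) delivers a maximal analog of inequality \eqref{eq: var tilde concentration} over the window. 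The remainder term $\check{v}_{g,k}(Z)=(a(Z)-\bar{a}(D))^2$ is of a strictly lower order in $g$ and can be controlled uniformly by a separate application of \cref{thm: cross split concentration appendix} (via \eqref{eq: a(Z) sq concentration}) together with a monotone-in-$g$ maximal argument. Combining these two pieces with the relation between $\mathsf{cv}(\xi,\beta)$ and $v_{g^\star,k}(D)$ yields the same exponential rate as in the upper tail, completing the bound.
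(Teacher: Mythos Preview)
Your overall strategy matches the paper's: Steiger's maximal inequality for Part~(i), and for Part~(ii) a split into tails with \cref{thm: variance estimator concentration} handling the upper tail and a maximal argument via the $\tilde v/\check v$ decomposition \eqref{eq: break up v hat} for the lower tail. Part~(i) and the upper tail of Part~(ii) are essentially identical to the paper's arguments.

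The lower tail of Part~(ii) is where your proposal is incomplete. Two issues. First, the event $\{\hat g<(1-c)g^\star\}$ is $\min_{g_{\mathsf{init}}\le g'\le(1-c)g^\star}\hat v_{g',k}(Z)\le\mathsf{cv}(\xi,\beta)$, with a $g'$-dependent normalization inside $\hat v_{g',k}$; Steiger applies to partial sums, not to $S_{g'}/(g')^2$, so you cannot invoke it directly. The paper first performs an algebraic reduction---multiplying through by $(g'-1)(g')^2$ and then using $\hat g\le(1-c)g^\star$ on the event---to convert this into $\max_{g'}(B_{g'}-g'A_{g'})$ against a \emph{fixed} normalization, after which Steiger applies. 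Your sentence ``Steiger's maximal inequality\ldots delivers a maximal analog'' glosses over this nontrivial step.

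Second, and more substantively, your treatment of $\check v_{g,k}$ appeals to a ``monotone-in-$g$ maximal argument,'' but $\check v_{g,k}(Z)=(a(\mathsf R_{g,k},D)-\bar a(D))^2$ is not monotone in $g$, so it is unclear what you intend. The paper's resolution is that the properly scaled and centered object $B_g=g^2(\check v_{g,k}(Z)-v_{g,k}(D))$ is itself a martingale in $g$ (the cross terms $2\sum_{l'<l}(\cdot)(\cdot)$ are martingale increments), so Steiger applies to $B_g$ just as it does to $A_g$. That martingale structure is the missing idea; once you have it, the lower tail goes through by the same Chernoff-plus-Steiger mechanism used for $A_g$, with MGF control supplied by \cref{lem: a(Z) sq concentration}.
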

\noindent Observe that 
\begin{flalign}
 & P\left\{ \vert U(\mathsf{R}_{\hat{g},k},D)-U(\mathsf{R}_{\hat{g}^{\prime},k}^{\prime},D)\vert\geq\lambda\sqrt{2v_{g^{\star},k}\left(D\right)}\right\} \nonumber \\
 & \lesssim P\left\{ \vert U(\mathsf{R}_{\hat{g},k},D)\vert\geq\lambda\sqrt{2v_{g^{\star},k}\left(D\right)}\right\} \nonumber \\
 & \leq P\left\{ \vert U(\mathsf{R}_{\hat{g},k},D)\vert\geq\lambda\sqrt{2v_{g^{\star},k}\left(D\right)},\vert\hat{g}/g^{\star}-1\vert\leq c\mid D\right\} +P\left\{ \vert\hat{g}/g^{\star}-1\vert\geq c\mid D\right\} \nonumber \\
 & \lesssim 
 \exp\left(-\frac{\delta v_{1,k(D)}}{2^4(2 - \varphi k - \varphi)^2 \Gamma^{(1)}_{k,\varphi,b}} \frac{\lambda^{2}}{c} \right)
  +
  \exp\left(-C \frac{\delta (v_{1,k}(D))^3}{(2 - \varphi k - \varphi)^4} \frac{z^2_{1-\beta/2}}{\Gamma_{k,\varphi,b}^{(2)}} \frac{c^2}{\xi^2} \right)\label{eq: sum of two errors}
\end{flalign}
for some universal constant $C$. Hence, it remains to choose $c$ and $\lambda$ such that the
quantity (\ref{eq: sum of two errors}) is less than $\rho_{k,\varphi,b}\left(\xi,\beta\mid D\right)$.
First, we choose $c$ such that 
\begin{flalign*}
\frac{\delta (v_{1,k}(D))^3}{(2 - \varphi k - \varphi)^4} \frac{z^2_{1-\beta/2}}{\Gamma_{k,\varphi,b}^{(2)}} \frac{c^2}{\xi^2}
& \lesssim\log\left(\rho_{k,\varphi,b}\left(\xi,\beta\mid D\right)^{-1}\right)
\end{flalign*}
Choosing $c$ by 
\begin{equation}
c=\xi\left(\frac{\left(2-\varphi k-\varphi \right)^{2}(\Gamma_{k,\varphi,b}^{(2)})^{1/2}}{z_{1-\beta/2} \delta^{1/2} (v_{1,k}(D))^{3/2}}\right)
\log^{1/2}\left(\rho_{k,\varphi,b}\left(\xi,\beta\mid D\right)^{-1}\right)\label{eq: c to satisfy}
\end{equation}
suffices. Next, we choose $\lambda$ such that 
\[
\frac{\delta v_{1,k(D)}}{2^4(2 - \varphi k - \varphi)^2 \Gamma^{(1)}_{k,\varphi,b}} \frac{\lambda^{2}}{c}
\lesssim\log\left(\rho_{k,\varphi,b}\left(\xi,\beta\mid D\right)^{-1}\right).
\]
We can rewrite this condition by plugging in our choice of $c$ through
\begin{flalign*}
 & \frac{\lambda^{2}}{\xi}
 \frac{z_{1-\beta/2} \delta^{3/2}(v_{1,k}(D))^{5/2}}{\left(2-\varphi k-\varphi\right)^{4} \Gamma^{(1)}_{k,\varphi,b} (\Gamma_{k,\varphi,b}^{(2)})^{1/2}}
 \lesssim \log^{3/2}\left(\rho_{k,\varphi,b}\left(\xi,\beta\mid D\right)^{-1}\right)
\end{flalign*}
Choosing $\lambda$ by 
\begin{flalign*}
\lambda & =
\left(
\frac{\xi}{z_{1-\beta/2}}
\frac{(2-\varphi k-\varphi)^{4}\Gamma^{(1)}_{k,\varphi,b} (\Gamma_{k,\varphi,b}^{(2)})^{1/2}}{\delta^{3/2}(v_{1,k}\left(D\right))^{5/2}}
\right)^{1/2}
\tilde{\lambda}_{k,\varphi,b}\left(D\right),\quad\text{where}\\
\tilde{\lambda}_{k,\varphi,b}\left(D\right) & =\log^{3/4}\left(\rho_{k,\varphi,b}\left(\xi,\beta\mid D\right)^{-1}\right)
\end{flalign*}
will then suffice, as required.\hfill\qed

\subsubsection{Part(iii)}

Observe that 
\begin{flalign}
 & P\left\{ \vert\left(1-\frac{\hat{g}}{g^{\star}}\right)\left(a(\mathsf{R}_{\hat{g},k},D)-\bar{a}\left(D\right)\right)\vert\geq\lambda\sqrt{2v_{g^{\star},k}\left(D\right)},\vert\begin{gathered}g^{\star}-\hat{g}\vert\leq cg^{\star}\end{gathered}
\mid D\right\} \nonumber \\
 & =P\left\{ \vert\left(a(\mathsf{R}_{\hat{g},k},D)-\bar{a}\left(D\right)\right)\vert\geq\frac{\lambda}{c}\sqrt{2v_{g^{\star},k}\left(D\right)},\vert\begin{gathered}g^{\star}-\hat{g}\vert\leq cg^{\star}\end{gathered}
\mid D\right\} \nonumber \\
 & \leq P\left\{ \max_{\vert g^{\star}-g \vert\leq cg^{\star}} 
 \vert\left(a(\mathsf{R}_{g,k},D)-\bar{a}\left(D\right)\right)\vert\geq\frac{\lambda}{c}\sqrt{2v_{g^{\star},k}\left(D\right)}\mid D\right\} \nonumber \\
 & \leq P\left\{ \max_{\vert g^{\star}-g \vert\leq cg^{\star}}
 \vert\left(a(\mathsf{R}_{g,k},D)-\bar{a}\left(D\right)\right)\vert\geq\frac{\lambda}{c}\left(\frac{\xi}{z_{1-\beta/2}}\right)\mid D\right\} \nonumber \\
 & \leq\sum_{\vert g^{\star}-g \vert\leq cg^{\star}}
 P\left\{ \vert\left(a(\mathsf{R}_{g,k},D)-\bar{a}\left(D\right)\right)\vert\geq\frac{\lambda}{c}\left(\frac{\xi}{z_{1-\beta/2}}\right)\right\}~. \label{eq: union bound}
\end{flalign}
By \cref{thm: cross split concentration appendix}, we have that 
\begin{flalign*}
 & P\left\{ \vert\left(a(\mathsf{R}_{g,k},D)-\bar{a}\left(D\right)\right)\vert\geq\frac{\lambda}{c}\left(\frac{\xi}{z_{1-\beta/2}}\right)\right\} \\
 & \leq 2 \exp\left(-\frac{g \delta}{2^4(2-\varphi k - \varphi)^2\Gamma_{k,\varphi,b}(1)} \frac{\lambda^{2}}{c^{2}}\left(\frac{\xi}{z_{1-\beta/2}}\right)^{2}\right)
\end{flalign*}
and so is (\ref{eq: union bound}) bounded from above by 
\begin{flalign*}
  &4cg^{\star}\exp\left(-\frac{g^\star (1-c) \delta}{4(2-\varphi k - \varphi)^2\Gamma_{k,\varphi,b}(1)} \frac{\lambda^{2}}{c^{2}}\left(\frac{\xi}{z_{1-\beta/2}}\right)^{2}\right) \\
  &\lesssim v_{1,k}\left(D\right)c\left(\frac{z_{1-\beta/2}}{\xi}\right)^{2}
 \exp\left(-\frac{\delta v_{1,k}(D)}{2^5(2-\varphi k - \varphi)^2\Gamma_{k,\varphi,b}(1)} \frac{\lambda^{2}}{c^{2}}\right)
\end{flalign*}
if $1-c\geq1/2$ by the definition of $g^{\star}$. Thus, we have
that 
\begin{flalign}
 & P\left\{ \vert Q(\mathsf{R}_{\hat{g},k},D)-Q(\mathsf{R}_{\hat{g}^{\prime},k}^{\prime},D)\vert\geq\lambda\sqrt{2v_{g^{\star},k}\left(D\right)}\mid D\right\} \nonumber \\
 & \lesssim P\left\{ \vert\left(1-\frac{\hat{g}}{g^{\star}}\right)\left(a(\mathsf{R}_{\hat{g},k},D)-\bar{a}\left(D\right)\right)\vert\geq\lambda\sqrt{2v_{g^{\star},k}\left(D\right)},\vert\begin{gathered}g^{\star}-\hat{g}\vert\leq cg^{\star}\end{gathered}
\mid D\right\} \nonumber \\
& +P\left\{ \vert\begin{gathered}g^{\star}-\hat{g}\vert\geq cg^{\star}\end{gathered}
\mid D\right\} \nonumber \\
 & \lesssim v_{1,k}\left(D\right)c\left(\frac{z_{1-\beta/2}}{\xi}\right)^{2}
 \exp\left(-\frac{\delta v_{1,k}(D)}{2^5(2-\varphi k - \varphi)^2\Gamma_{k,\varphi,b}^{(1)}} \frac{\lambda^{2}}{c^{2}}\right) \label{eq: union bound super poly}\\
 & \quad + \exp\left(-\frac{1}{C}\frac{\delta (v_{1,k}(D))^3}{(2 - \varphi k - \varphi)^4} \frac{z_{1-\beta/2}}{\Gamma_{k,\varphi,b}^{(2)}} \frac{c^2}{\xi^2}\right)\label{eq: Q c error}
\end{flalign}
for some universal constant $C$ by Lemma \ref{lem: star hat diff},
Part (ii). If $c$ is chosen to satisfy (\ref{eq: c to satisfy}),
the term (\ref{eq: Q c error}) will be bounded above by $\rho_{\varphi,k}\left(\xi,\beta\mid D\right)$ for all sufficiently small $\xi$. By plugging this value of $c$ and $\lambda=\lambda_{\varphi,k}\left(\xi,\beta\mid D\right)$
into (\ref{eq: union bound super poly}), we obtain the term
\begin{align}
&
\left(\frac{z_{1-\beta/2}^2}{\xi}\right)
\left(
\frac{\left(2-\varphi k-\varphi \right)^{2}(\Gamma_{k,\varphi,b}^{(2)})^{1/2}}{z_{1-\beta/2} \delta^{1/2} (v_{1,k}(D))^{1/2}}\right)
\log^{1/2}\left(\rho_{k,\varphi,b}\left(\xi,\beta\mid D\right)^{-1}
\right)\nonumber\\
& \exp
\left(
- \frac{1}{\xi} 
\frac{
\delta^{1/2} (v_{1,k}(D))^{3/2}}{(2-\varphi k - \varphi)^2(\Gamma_{k,\varphi,b}^{(2)})^{1/2}} 
\tilde{\lambda}_{k,\varphi,b}^{U}\left(D\right)
\log\left(\rho_{k,\varphi,b}\left(\xi,\beta\mid D\right)^{-1}\right) 
\right)\label{eq: sub polynomial}
\end{align}
as required. Observe that the term \eqref{eq: sub polynomial} is bounded above by  $\rho_{\varphi,k}\left(\xi,\beta\mid D\right)$, for all sufficiently small $\xi$, completing the proof. \hfill\qed

\subsection{Proofs for Supporting Lemmas\label{app: auxiliary}}

\subsubsection{Proof of \cref{lem: finiteness}}
Let $T_{i}\left(\pi_{m,l}\right)=T\left(\mathsf{s}_{i}\left(\pi_{l,m}\right),D\right)$
and define $T_{i}(\pi_{m,l}^{\prime})$ analogously. Define 
\[
R_{m}\left(l\right)=\left\{ i\in\left[n\right]:\pi_{m,l}\left(i\right)\neq\pi_{m,l}^{\prime}\left(i\right)\right\} 
\]
and let $\bar{R}_{m}\left(l\right)=\vert R_{m}\left(l\right)\vert$
denote the number of shared indices in the permutations in $\pi_{m,l}$
and $\pi_{m,l}^{\prime}$. Observe that
\begin{equation}
\frac{1}{k}\sum_{i=1}^{k}\left(T_{i}\left(\pi_{m,l}\right)-T_{i}\left(\pi_{m,l}^{\prime}\right)\right)=0\label{eq: zero condition}
\end{equation}
almost surely for all $m^{\prime}\geq m$ if and only if $\bar{R}_{m}\left(l\right)=n$.
Let $N\left(l\right)$ denote the values of the smallest index $m$
with $\bar{R}_{m}\left(l\right)=n$. Observe that 
\begin{flalign}
 & \sum_{m=0}^{\infty}\bigg\vert
 \mathbb{E}\left[a(\mathsf{R}_{g,k}(\bm{\pi}_m), D)-a(\mathsf{R}_{g,k}(\bm{\pi}_m^\prime), D)\mid 
 \bm{\pi}_{0}=\bm{\psi},\bm{\pi}_{0}^{\prime}=\bm{\psi}^{\prime},D\right]\bigg\vert\label{eq: kernel sum}\\
 & \leq\sum_{m=0}^{\infty}\frac{1}{g}\frac{1}{k}\sum_{l=1}^{g}\sum_{i=1}^{k}
 \mathbb{E}\left[\bigg\vert\left(T_{i}\left(\pi_{m,l}\right)-T_{i}\left(\pi_{m,l}^{\prime}\right)\right)\bigg\vert
 \mid  \bm{\pi}_{0}=\bm{\psi},\bm{\pi}_{0}^{\prime}=\bm{\psi}^{\prime},D\right]\nonumber \\
 & \leq\frac{1}{g}\frac{1}{k}\sum_{l=1}^{g}\sum_{i=1}^{k}
 \max_{\mathsf{s},\mathsf{s}^{\prime}\in\mathsf{S}_{n,b}}
 \big\vert T\left(\mathsf{s},D\right)-T\left(\mathsf{s}^{\prime},D\right)\big\vert\mathbb{E}\left[N\left(l\right)\right]\nonumber 
\end{flalign}
Hence, it suffices to bound the quantity $\mathbb{E}\left[N\left(l\right)\right]$.

To that end, let $N_{r}\left(l\right)$ be the value of the smallest
index $m$ with $\bar{R}_{m}\left(l\right)\geq r$. We proceed analogously
to standard analysis of the coupon collector's problem (see e.g.,
Section 2.2 of \citealp{levin2017markov}). We can evaluate
\begin{flalign*}
& P\left\{ \bar{R}_{m}\left(l\right)\geq r+1\mid\bar{R}_{m-1}\left(l\right)=r\right\}  \\
& =\frac{1}{g}P\left\{ \pi_{l}\left(I_{m}\right)\neq\pi_{l}^{\prime}\left(I_{m}\right),\pi_{l}^{-1}\left(J_{m}\right)\neq\pi_{l}^{\prime-1}\left(J_{m}\right)\right\} =\frac{\left(n-r\right)^{2}}{gn^{2}}
\end{flalign*}
and
\begin{flalign*}
P\left\{ \bar{R}_{m}\left(l\right)<r\mid\bar{R}_{m-1}\left(l\right)=r\right\}  & =0,
\end{flalign*}
and thereby obtain the bound
\begin{flalign}
\mathbb{E}\left[N\left(l\right)\right] & 
=\sum_{r=1}^{n}\mathbb{E}\left[N_{r}\left(l\right)-N_{r-1}\left(l\right)\right]\leq\sum_{r=1}^{n}\frac{gn^{2}}{\left(n-r+1\right)^{2}}=gn^{2}\sum_{r=1}^{n}\frac{1}{r^{2}}\leq2gn^{2}.\label{eq: expectation N(l) bound}
\end{flalign}
Hence, we find that (\ref{eq: kernel sum}) is upper bounded by 
\[
2gn^{2}\max_{\mathsf{s},\mathsf{s}^{\prime}\in\mathsf{S}_{n,b}}T\left(\mathsf{s},D\right)-T\left(\mathsf{s}^{\prime},D\right),
\]
as required.\hfill$\qed$

\subsubsection{Proof of \cref{lem: exp mv ineq}\label{subsec: proof of exp mv ineq}}

Observe that 
\begin{flalign*}
\frac{\text{d}}{\text{d}t}e^{tx}e^{(1-t)y} & =\left(x-y\right)e^{tx}e^{(1-t)y}.
\end{flalign*}
Thus, we find that 
\begin{flalign*}
c\left(e^{x}-e^{y}\right) & =c\int_{0}^{1}\left(\frac{\text{d}}{\text{d}t}e^{tx}e^{\left(1-t\right)y}\right)\text{dt}\tag{Fundamental Theorem of Calculus}\\
 & =c\left(x-y\right)\int_{0}^{1}e^{tx}e^{(1-t)y}\text{dt}\\
 & \leq c\left(x-y\right)\int_{0}^{1}\left(te^{x}+\left(1-t\right)e^{y}\right)\tag{Convexity}\\
 & =\frac{c}{2}\left(x-y\right)\left(e^{x}+e^{y}\right)\\
 & =\frac{1}{2}\left(\left(s^{-1}c^{2}\left(e^{x}+e^{y}\right)\right)\left(s\left(x-y\right)^{2}\left(e^{x}+e^{y}\right)\right)\right)^{1/2}\\
 & \leq\frac{1}{4}\left(\left(s^{-1}c^{2}\right)+s\left(x-y\right)^{2}\right)\left(e^{x}+e^{y}\right),\tag{AM-GM}
\end{flalign*}
as required. \hfill\qed

\subsubsection{Proof of \cref{lem: sequence bound}}

To begin, consider any collection of real numbers $x_{1},\ldots,x_{2^{c}}$,
for some positive integer $c$. For any integer $s>0$, we have
\begin{flalign*}
\left(x_{1}\cdots x_{2^{c}}\right)^{2s} & \leq\frac{1}{4}\left(\left(x_{i_{1}}\cdots x_{i_{2^{c-1}}}\right)^{2s}+\left(x_{i_{2^{c-1}+1}}\cdots x_{2^{c}}\right)^{2s}\right)^{2}\tag{Young's Inequality}\\
 & \leq\frac{1}{4}\left(\left(x_{i_{1}}\cdots x_{i_{2^{c-1}}}\right)^{2(s+1)}+2\left(x_{1}\cdots x_{2^{c}}\right)^{2s}+\left(x_{i_{2^{c-1}+1}}\cdots x_{2^{c}}\right)^{2(s+1)}\right)
\end{flalign*}
and so 
\begin{equation}
\left(x_{1}\cdots x_{2^{c}}\right)^{2s}\leq\frac{1}{2}\left(x_{i_{1}}\cdots x_{i_{2^{c-1}}}\right)^{2(s+1)}+\frac{1}{2}\left(x_{i_{2^{c-1}+1}}\cdots x_{2^{c}}\right)^{2(s+1)}.\label{eq: quadratic young}
\end{equation}
Consequently, we have that 
\begin{equation}
x_{1}\cdots x_{2^{c}}\leq\frac{1}{2^{c}}\sum_{i=1}^{2^{c}}x_{i}^{2c}\label{eq: generalized product sum ineq}
\end{equation}
through $2^{r}$ applications of (\ref{eq: quadratic young}). 

Now, to prove the Lemma, we may assume without loss that $h_{i}>0$
for all $i$ by continuity. Observe that
\begin{equation}
\mathbb{E}\left[\left(\sum_{m=0}^{\infty}X_{m}\right)^{2^{c}}\right]=\sum_{i_{1}=0}^{\infty}\cdots\sum_{i_{2^{c}}=0}^{\infty}\mathbb{E}\left[X_{i_{1}}\cdots X_{i_{2^{c}}}\right]\label{eq: apply dom conv}
\end{equation}
by dominated convergence. By writing 
\[
X_{1}\cdots X_{2^{c}}=\prod_{j=1}^{2^{c}}\left(\frac{\prod_{k\neq j}h_{k}}{h_{j}^{2^{c-1}}}\right)^{1/2^{c}}X_{j}
\]
we find that 
\[
\mathbb{E}\left[X_{1}\cdots X_{2^{c}}\right]\leq\frac{1}{2^{c}}\sum_{j=1}^{2^{c}}\frac{\prod_{k\neq j}h_{k}}{h_{j}^{2^{c-1}}}h_{j}^{2^{c}}=\prod_{j=1}^{2^{c}}h_{j}
\]
by (\ref{eq: generalized product sum ineq}). Hence, by (\ref{eq: apply dom conv}),
we have that 
\[
\mathbb{E}\left[\left(\sum_{m=0}^{\infty}X_{m}\right)^{2^{c}}\right]\leq\sum_{i_{1}=0}^{\infty}\cdots\sum_{i_{2^{c}}=0}^{\infty}\prod_{j=1}^{2^{c}}h_{i_{j}}=\left(\sum_{m=0}^{\infty}h_{j}\right)^{2^{c}},
\]
as required.\hfill\qed

\subsubsection{Proof of Lemma \ref{lem: a(Z) sq concentration}}

We begin by constructing Stein representers $\tilde{V}\left(Z,Z^{\prime}\right)$
and $\check{V}\left(Z,Z^{\prime}\right)$ for the statistic $\tilde{v}_{g,k}\left(Z\right)$
and $\check{v}_{g,k}\left(Z\right)$, respectively. We will use the same
exchangeable pair $\left(Z,Z^{\prime}\right)_{m\geq0}$ and Markov
chain $\left(Z_{m},Z_{m}^{\prime}\right)_{m\geq1}$ defined in Section
\ref{sec: construction}. The subsequent result follows from an
argument similar to Lemma \ref{lem: representer construction applied},
again based on an idea expressed by Lemma 4.1 of \citet{chatterjee2005concentration}. 
\begin{lemma}
\label{lem: var representer}$\text{ }$

\noindent \textbf{(i)} Let $\bm{\psi}$ and $\psi^{\prime}$ be two
sets, each containing $g$ elements of $\mathcal{P}_{n}$.
The inequalities
\begin{flalign*}
\sum_{m=0}^{\infty}\bigg\vert
\mathbb{E}\left[\left(\tilde{v}_{g,k}\left(Z_{m}\right)-\tilde{v}_{g,k}\left(Z_{m}^{\prime}\right)\right)\mid Z_{0}=Z,Z_{0}^{\prime}=Z^{\prime}\right]\bigg\vert & \leq n^{2}\max_{\mathsf{s},\mathsf{s}^{\prime}\in\mathsf{S}_{n,b}}\left(T(\mathsf{s},D)-T(\mathsf{s}^{\prime},D)\right)^{2}\quad\text{and}\\
\sum_{m=0}^{\infty}\bigg\vert\mathbb{E}\left[\left(\check{v}_{g,k}\left(Z_{m}\right)-\check{v}_{g,k}\left(Z_{m}^{\prime}\right)\right)\mid Z_{0}=Z,Z_{0}^{\prime}=Z^{\prime}\right]\bigg\vert 
& \leq2gn^{2}\max_{\mathsf{s}\in\mathsf{S}_{n,b}}\left(T(\mathsf{s},D)-T(\mathsf{s}^{\prime},D)\right)^{2}
\end{flalign*}
hold almost surely. 

\noindent \textbf{(ii)} The functions
\begin{flalign*}
\tilde{V}\left(Z,Z^{\prime}\right) & =\sum_{m=0}^{\infty}\mathbb{E}\left[\left(\tilde{v}_{g,k}\left(Z_{m}\right)-\tilde{v}_{g,k}\left(Z_{m}^{\prime}\right)\right)\mid Z_{0}=Z,Z_{0}^{\prime}=Z^{\prime}\right]\quad\text{and}\\
\check{V}\left(Z,Z^{\prime}\right) & =\sum_{m=0}^{\infty}\mathbb{E}\left[\left(\check{v}_{g,k}\left(Z_{m}\right)-\check{v}_{g,k}\left(Z_{m}^{\prime}\right)\right)\mid Z_{0}=Z,Z_{0}^{\prime}=Z^{\prime}\right]
\end{flalign*}
are finite and satisfy the equalities 
\begin{flalign*}
\mathbb{E}\left[\tilde{V}\left(Z,Z^{\prime}\right)\mid Z\right] & =\tilde{v}_{g,k}\left(Z\right)-v_{g,k}\left(D\right)\quad\text{and}\\
\mathbb{E}\left[\check{V}\left(Z,Z^{\prime}\right)\mid Z\right] & =\check{v}_{g,k}\left(Z\right)-v_{g,k}\left(D\right)
\end{flalign*}
almost surely. 
\end{lemma}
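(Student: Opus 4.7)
The plan is to mirror the proofs of \cref{lem: finiteness} and \cref{lem: representer construction applied}, adapting the coupling argument to the quadratic structure of $\tilde{v}_{g,k}(Z)$ and $\check{v}_{g,k}(Z)$. The key structural observations are (a) that $\tilde{v}_{g,k}(Z) = g^{-1}\sum_{l=1}^{g} (\frac{1}{k}\sum_{i}(T(\mathsf{s}_{l,i},D)-\bar{a}(D)))^{2}$ is block-separable across $l$, while $\check{v}_{g,k}(Z)=(a(Z)-\bar{a}(D))^{2}$ is globally quadratic in $a$, and (b) that the coupling in \cref{sec: construction} guarantees $\bm{\pi}_{m}$ and $\bm{\pi}_{m}^{\prime}$ differ only at position $L$ for every $m$.

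For part (i), I will handle the two bounds separately. For $\tilde{V}$, the block-separability implies only the $l=L$ term contributes to $\tilde{v}_{g,k}(Z_{m})-\tilde{v}_{g,k}(Z_{m}^{\prime})$. Factoring the difference of squares and bounding each summand in absolute value by the split diameter $M=\max_{\mathsf{s},\mathsf{s}^{\prime}}\vert T(\mathsf{s},D)-T(\mathsf{s}^{\prime},D)\vert$, the difference is bounded by $4M^{2}/g$ times the indicator that the $L$-th permutation block has not yet coupled. Applying the coupon-collector bound $\mathbb{E}[N(L)]\leq 2gn^{2}$ from the proof of \cref{lem: finiteness} yields $\sum_{m}\vert\mathbb{E}[\tilde{v}_{g,k}(Z_{m})-\tilde{v}_{g,k}(Z_{m}^{\prime})\mid Z_{0},Z_{0}^{\prime}]\vert\lesssim n^{2}M^{2}$, as claimed. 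For $\check{V}$, I will use the identity $\check{v}(Z_{m})-\check{v}(Z_{m}^{\prime})=(a(Z_{m})-a(Z_{m}^{\prime}))(a(Z_{m})+a(Z_{m}^{\prime})-2\bar{a}(D))$, bound the second factor uniformly by $2M$, and invoke \cref{lem: finiteness} directly on the first factor to obtain the series bound $2gn^{2}M^{2}$.

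For part (ii), I will proceed exactly as in the proof of \cref{lem: representer construction applied}. Let $K$ denote the Markov operator $Kg(\bm{\pi})=\mathbb{E}[g(\bm{\pi}^{\prime})\mid\bm{\pi}]$ induced by the exchangeable pair. For $f$ equal to either $\tilde{v}_{g,k}$ or $\check{v}_{g,k}$, both of which are unbiased for $v_{g,k}(D)$ conditional on $D$, the partial sums telescope as
\[
\sum_{m=0}^{m^{\prime}}\mathbb{E}[f(Z_{m})-f(Z_{m}^{\prime})\mid Z_{0}=Z,Z_{0}^{\prime}=Z^{\prime},D]=\bigl(f(Z)-v_{g,k}(D)\bigr)-K^{m^{\prime}+1}\bigl(f(Z)-v_{g,k}(D)\bigr).
\]
The series bounds from part (i) imply almost-sure convergence, so dominated convergence applies; the limit of $K^{m^{\prime}+1}(f(Z)-v_{g,k}(D))$ can then be shown to depend only on $D$ by the same collection-independence argument as in \cref{lem: representer construction applied}. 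Unconditional exchangeability forces this limit to satisfy $\mathbb{E}[\text{limit}\mid D]=0$, and hence the limit vanishes almost surely, delivering the representer identities.

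The main obstacle is getting the factor of $g$ right in the two bounds: the block-separable structure of $\tilde{v}$ gains a factor of $g^{-1}$ relative to $\check{v}$, and the argument must reflect this without loss, otherwise the subsequent concentration inequalities in \cref{lem: a(Z) sq concentration} (which treat $\tilde{v}$ and $\check{v}$ with different scalings in $g$) will not follow. A secondary technical issue is that the applications in the concentration proof require square-integrability of the representers, so the bounds in part (i) must be square-integrable — this is why the proof invokes the eighth-order split stability $\zeta^{(8)}$ rather than just $\zeta^{(4)}$.
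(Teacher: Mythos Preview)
Your approach is correct and tracks the paper's closely: both arguments bound the per-step increments via the split diameter and the coupling time $N(l)$, reuse the coupon-collector estimate $\mathbb{E}[N(l)]\le 2gn^{2}$ from the proof of \cref{lem: finiteness}, and dispatch part~(ii) by the telescoping/dominated-convergence argument of \cref{lem: representer construction applied}. The one substantive variation is your treatment of $\check{v}$: you factor $\check{v}(Z_{m})-\check{v}(Z_{m}')=(a(Z_{m})-a(Z_{m}'))(a(Z_{m})+a(Z_{m}')-2\bar{a}(D))$, bound the second factor uniformly by $2M$, and reduce the first to the intermediate bound $\sum_{m}\mathbb{E}[|a(Z_{m})-a(Z_{m}')|\mid\cdot]\le 2gn^{2}M$ established inside the proof of \cref{lem: finiteness}. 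The paper instead expands $\check{v}$ as a double sum over $l,l'$ and controls the coupling time via $\mathbb{E}[\max\{N(l),N(l')\}]\le \mathbb{E}[N(l)]+\mathbb{E}[N(l')]$. Your route is slightly cleaner and recycles more.

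One small inconsistency to patch: your $\tilde{v}$ bound exploits that $\bm{\pi}_{m}$ and $\bm{\pi}_{m}'$ differ only at position $L$, which is true for the exchangeable pair but not for arbitrary initial collections $\bm{\psi},\bm{\psi}'$ as the lemma is phrased. The ``limit depends only on $D$'' step you invoke in part~(ii), inherited from \cref{lem: representer construction applied}, requires $K^{m}\tilde{v}(\bm{\psi},\cdot)-K^{m}\tilde{v}(\bm{\psi}',\cdot)\to 0$ for \emph{arbitrary} $\bm{\psi},\bm{\psi}'$, so you need the general form of the part~(i) bound there. The fix is immediate---sum over all $l$ rather than restricting to $l=L$, exactly as the paper does---and the stated bound $n^{2}M^{2}$ survives.
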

\noindent
Next, we apply the concentration inequality stated in Theorem \ref{thm: chatterjee concentration},
due to \citet{chatterjee2005concentration,chatterjee2007stein}. That
is, to establish part (i) of the Lemma, it will suffice to characterize
constants $s$ and $u$ that satisfy 
\[
U_{\tilde{v}}\left(Z\right)\leq s^{-1}u\quad\text{and}\quad U_{\tilde{V}}\left(Z\right)\leq su,
\]
with probability $1-\delta$, where 
\[
U_{\tilde{v}}\left(Z\right)=\frac{1}{2}\mathbb{E}\left[\left(\tilde{v}\left(Z\right)-\tilde{v}\left(Z^{\prime}\right)\right)\mid Z\right]\quad\text{and}\quad U_{\tilde{V}}\left(Z\right)=\frac{1}{2}\mathbb{E}\left[\tilde{V}\left(Z,Z^{\prime}\right)^{2}\mid Z\right].
\]
The same statement holds for part (ii) of the Lemma, for the objects
$U_{\check{v}}\left(Z\right)$ and $U_{\check{V}}\left(Z\right)$
defined analogously. 

We obtain such a characterization through the application of Lemma
\ref{lem: variance bounds}. To this end, observe that, by Lemma \ref{lem: var representer},
part (i), the bound 
\begin{align}
& \left(\sum_{m=0}^{\infty}\mathbb{E}\left[\left(\tilde{v}_{g,k}\left(Z_{m}\right)-\tilde{v}_{g,k}\left(Z_{m}^{\prime}\right)\right)\mid Z_{0}=Z,Z_{0}^{\prime}=Z^{\prime}\right]\right)^{2} \\
& \leq n^{4}\max_{\mathsf{s},\mathsf{s}^{\prime}\in\mathsf{S}_{n,b}}\left(T(\mathsf{s},D)-T(\mathsf{s}^{\prime},D)\right)^{4}\label{eq: var integrability bound}
\end{align}
holds almost surely. The right hand side of (\ref{eq: var integrability bound})
is square integrable, as the eighth-order split-stability $\zeta^{(8)}$
is finite almost surely. An analogous statement holds for the statistic
$\check{v}_{g,k}\left(Z\right)$. Thus, deterministic bounds of the form
(\ref{eq: c_i general bound}) can be obtained through the application
of the following Lemma. 
\begin{lemma}
\label{lem: var deterministic bound}Suppose that \cref{assu: invariance,assu: linearity} hold
 and that the data $D$ are independent and identically distributed. If the eighth-order split stability $\zeta^{(8)}$ is finite,
then 

\noindent \textbf{(i)} The inequality 
\begin{flalign*}
 & \mathbb{E}\left[\mathbb{E}\left[\tilde{v}_{g,k}(Z_{m})-\tilde{v}_{g,k}(Z_{m}^{\prime})\mid Z,Z^{\prime}\right]^{2}\mid\bm{\pi}\right]
 \lesssim
\left(1-\frac{2}{gn^{2}}\right)^{2m}\left(\frac{\left(2-\varphi k-\varphi\right)^4}{n^2 g^4}\right) \Gamma_{k,\varphi,b}^{(2)}~,
\end{flalign*}
holds almost surely for all integers $m\geq0$, and 

\noindent \textbf{(ii)} The inequality 
\begin{flalign*}
 & \mathbb{E}\left[\mathbb{E}\left[\check{v}_{g,k}(Z_{m})-\check{v}_{g,k}(Z_{m}^{\prime})\mid Z,Z^{\prime}\right]^{2}\mid\bm{\pi}\right]
 \lesssim
\left(1-\frac{2}{gn^{2}}\right)^{2m}\left(\frac{\left(2-\varphi k-\varphi\right)^4}{n^2 g^3}\right) \Gamma_{k,\varphi,b}^{(2)}~,
\end{flalign*}
holds almost surely for all integers $m\geq0$.
\end{lemma}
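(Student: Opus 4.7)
\textbf{Proof plan for \cref{lem: var deterministic bound}.} The lemma is a direct analogue of \cref{lem: deterministic bound}, with the sample-split statistic $a(\cdot)$ replaced by the variance estimators $\tilde{v}_{g,k}(\cdot)$ and $\check{v}_{g,k}(\cdot)$, and the overall argument will track that proof. The coupled Markov chain $(Z_{m}, Z_{m}^{\prime})_{m \geq 0}$ constructed in \cref{sec: construction} and the events $\mathcal{E}_{m}$, $\mathcal{F}_{m}$, and $\mathcal{G}_{m}$ introduced in the proof of \cref{lem: deterministic bound} remain in play. \cref{assu: invariance} guarantees that $\mathcal{H}_{m} = \mathcal{E}_{m} \cap \mathcal{F}_{m}$ is necessary for either $\tilde{v}_{g,k}(Z_{m}) - \tilde{v}_{g,k}(Z_{m}^{\prime})$ or $\check{v}_{g,k}(Z_{m}) - \check{v}_{g,k}(Z_{m}^{\prime})$ to be non-zero, and the proof of \cref{lem: deterministic bound} establishes $P\{\mathcal{H}_{m} \mid \mathcal{E}_{0}\} \lesssim (1 - 2/(gn^{2}))^{m}(2 - \varphi k - \varphi)$. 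Applying Jensen's inequality to move the square inside the inner conditional expectation and then the manipulation used to derive display~\eqref{eq: unconditional difference}, it suffices to produce bounds on the conditional second moments of $\tilde{v}_{g,k}(Z_{m}) - \tilde{v}_{g,k}(Z_{m}^{\prime})$ and $\check{v}_{g,k}(Z_{m}) - \check{v}_{g,k}(Z_{m}^{\prime})$ given $\mathcal{H}_{m}$.

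For part (i), a swap of two indices in the $L$th permutation alters only the $L$th summand of the outer sum in the definition of $\tilde{v}_{g,k}(\cdot)$, which gives a $1/g^{2}$ advantage relative to $a(\cdot)$. Writing that summand as a quadratic form in the vector $(T_{L,1}(\cdot) - \bar{a}(D), \ldots, T_{L,k}(\cdot) - \bar{a}(D))$ and applying the identity $ab - a^{\prime}b^{\prime} = (a - a^{\prime})b + a^{\prime}(b - b^{\prime})$, one can express $\tilde{v}_{g,k}(Z_{m}) - \tilde{v}_{g,k}(Z_{m}^{\prime})$ as a bilinear combination of fold-level $T$-perturbations, which under $\mathcal{H}_{m}$ vanish outside the two folds containing the swapped indices, against centered $T$-values $T_{L,i^{\prime}}(\cdot) - \bar{a}(D)$. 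A Cauchy-Schwarz inequality then factors the conditional second moment into a product of the fourth moment of a $T$-perturbation, which routes through the same Binomial-Theorem-and-H\"older reduction to the sample stabilities $\sigma_{\mathsf{max}}^{(4)}$ and $\sigma_{\mathsf{train}}^{(4,b-1)}$ that drove displays~\eqref{eq: contibute sigma} and~\eqref{eq: contribute sigma train}, and the fourth moment of a centered $T$-value, which is controlled by the split stability $\zeta^{(4)}$. Combining with the squared probability factor $P\{\mathcal{H}_{m} \mid \mathcal{E}_{0}\}^{2}$ then yields the claimed $(2-\varphi k - \varphi)^{4}\Gamma^{(2)}_{k,\varphi,b}/(n^{2}g^{4})$ rate.

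For part (ii), I would use the algebraic identity
\[
\check{v}_{g,k}(Z_{m}) - \check{v}_{g,k}(Z_{m}^{\prime}) = \left(a(Z_{m}) - a(Z_{m}^{\prime})\right)\left(a(Z_{m}) + a(Z_{m}^{\prime}) - 2\bar{a}(D)\right)
\]
and apply Cauchy-Schwarz to the product directly. The fourth moment of $a(Z_{m}) - a(Z_{m}^{\prime})$ conditional on $\mathcal{H}_{m}$ is controlled by the $r = 2$ case of \cref{lem: deterministic bound}, yielding a term of order $\Gamma^{(2)}_{k,\varphi,b}/(gn)^{4}$, while the fourth moment of $a(Z_{m}) + a(Z_{m}^{\prime}) - 2\bar{a}(D)$ is bounded by \cref{cor: Moment bound} with $r = 2$, at order $(2-\varphi k - \varphi)^{4}\Gamma^{(2)}_{k,\varphi,b}/g^{2}$ (where a small additional argument, analogous to the one used to verify that conditioning on $\mathcal{H}_{m}$ does not disturb the marginal distribution of a single collection, is needed to transfer the unconditional moment bound to the one conditional on $\mathcal{H}_{m}$). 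Taking the geometric mean and multiplying by $P\{\mathcal{H}_{m} \mid \mathcal{E}_{0}\}^{2}$ produces the stated $(2-\varphi k - \varphi)^{4}\Gamma^{(2)}_{k,\varphi,b}/(n^{2}g^{3})$ rate.

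The main obstacle will be the combinatorial bookkeeping in part (i). As in the proof of \cref{lem: deterministic bound}, contributions from the event $\mathcal{G}$, where both swapped indices lie in the selected partition, and from $\mathcal{H} \setminus \mathcal{G}$, where one index lies outside the partition, scale differently with $k$ and must be handled separately. The bilinear structure of $\tilde{v}_{g,k}$ also introduces cross-terms in the $T$-perturbations that, after Binomial expansion, each route through a different pair of sample-stability moments, and combining them without losing the $(2 - \varphi k - \varphi)$ dependence of $\Gamma^{(2)}_{k,\varphi,b}$ will require careful accounting through H\"older's inequality.
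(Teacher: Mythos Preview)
Your plan for part~(ii) is essentially the paper's proof: both use the difference-of-squares factorization $\check v_{g,k}(Z_m)-\check v_{g,k}(Z_m')=(a(Z_m)-a(Z_m'))(a(Z_m)+a(Z_m')-2\bar a(D))$, apply Cauchy--Schwarz, control the difference factor via the conditional fourth-moment bounds from the proof of \cref{lem: deterministic bound}, and control the sum factor via \cref{cor: Moment bound}.

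Part~(i), however, has a gap. Your bilinear decomposition, if carried out, collapses: since $\tilde v_{g,k}(Z)=\frac{1}{g^2}\sum_l \bar a(\mathsf r_l,D)^2$, the difference is exactly
\[
\tilde v_{g,k}(Z_m)-\tilde v_{g,k}(Z_m')=\frac{1}{g^2}\bigl(\bar a(\mathsf r(\pi_{m,L}),D)+\bar a(\mathsf r(\pi_{m,L}'),D)\bigr)\bigl(\bar a(\mathsf r(\pi_{m,L}),D)-\bar a(\mathsf r(\pi_{m,L}'),D)\bigr),
\]
i.e.\ the same difference-of-squares structure as in part~(ii). The paper exploits this directly. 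But your stated plan is to stay at the level of individual fold statistics $T_{L,i'}-\bar a(D)$ and bound their fourth moment by the split stability $\zeta^{(4)}$. That yields, after Cauchy--Schwarz and the $P\{\mathcal H_m\}^2$ prefactor, a bound of order
\[
(1-2/(gn^2))^{2m}\,\frac{(2-\varphi k-\varphi)^2}{n^2 g^4}\,(\Gamma_{k,\varphi,b}^{(2)})^{1/2}(\zeta^{(4)})^{1/2},
\]
which is \emph{not} $\lesssim (2-\varphi k-\varphi)^4\Gamma_{k,\varphi,b}^{(2)}/(n^2 g^4)$: the split stability $\zeta^{(4)}$ is merely finite and does not carry the $(2-\varphi k-\varphi)^2(\Gamma_{k,\varphi,b}^{(2)})^{1/2}$ decay you need. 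The missing ingredient is to recognize the collapsed sum factor as $\bar a(\mathsf r(\pi_{m,L}),D)+\bar a(\mathsf r(\pi_{m,L}'),D)$ and bound its fourth moment by \cref{cor: Moment bound} (applied with $g=1$), which supplies the extra $(2-\varphi k-\varphi)^2(\Gamma_{k,\varphi,b}^{(2)})^{1/2}$. With that change, part~(i) goes through exactly as part~(ii), and the $\mathcal G$ versus $\mathcal H\setminus\mathcal G$ bookkeeping you anticipate as the main obstacle is entirely absorbed into the cited displays \eqref{eq: break up square}--\eqref{eq: contribute sigma train}.
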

Thus, by Lemma \ref{lem: variance bounds}, part (i), the inequalities
\begin{flalign*}
U_{\tilde{V}}\left(Z\right) &  \lesssim
\frac{1}{\delta}
\left(\frac{gn^{2}}{2}\right)^{2} 
\left(\frac{\left(2-\varphi k-\varphi\right)^4}{ n^2 g^4}\right) \Gamma_{k,\varphi,b}^{(2)} \\
& \lesssim
\left(\frac{gn^{2}}{2}\right)        \left( \frac{1}{\delta} \frac{(2-\varphi k-\varphi)^4}{g^3} \right) \Gamma_{k,\varphi,b}^{(2)}
\end{flalign*}
and 
\begin{flalign*}
U_{\tilde{v}}\left(Z\right) &  \lesssim
\frac{1}{\delta} 
\left(\frac{\left(2-\varphi k-\varphi\right)^4}{ n^2 g^4}\right) \Gamma_{k,\varphi,b}^{(2)} \\
& \lesssim 
\left(\frac{2}{gn^{2}}\right) 
 \left(\frac{1}{\delta} \frac{\left(2-\varphi k-\varphi\right)^4}{g^3}\right) \Gamma_{k,\varphi,b}^{(2)}
\end{flalign*}
both hold with probability greater than $1-\delta$. Hence, we obtain
the bound (\ref{eq: var tilde concentration}) by applying Theorem
\ref{thm: chatterjee concentration} and choosing $s=gn^{2}/2$ and
\[
u=\frac{1}{\delta} \frac{\left(2-\varphi k-\varphi\right)^4}{g^3} \Gamma_{k,\varphi,b}^{(2)}~.
\]
Analogous inequalities for the objects $U_{\check{v}}\left(Z\right)$
and $U_{\check{V}}\left(Z\right)$ hold by Lemma \ref{lem: variance bounds},
part (ii), where in that case 
\[
u=\frac{1}{\delta} \frac{\left(2-\varphi k-\varphi\right)^4}{g^2} n^4\Gamma_{k,\varphi,b}^{(2)}~.
\]
which similarly implies the bound (\ref{eq: a(Z) sq concentration}) by Theorem \ref{thm: chatterjee concentration}.\hfill\qed

\subsubsection{Proof of Lemma \ref{lem: var representer}}
Reinstate the notation of the proof of Lemma \ref{lem: finiteness}. We begin by noting that 
\begin{flalign*}
\max_{\mathsf{s},\mathsf{s}^{\prime}\in\mathsf{S}_{n,b}}\left(T(\mathsf{s},D)-T(\mathsf{s}^{\prime},D)\right)^{2} & =\max_{\mathsf{s},\mathsf{s}^{\prime}\in\mathsf{S}_{n,b}}\left(T(\mathsf{s},D)-\bar{a}\left(D\right)\right)^{2}+\left(T(\mathsf{s}^{\prime},D)-\bar{a}\left(D\right)\right)^{2}\\
 & \quad\quad\quad\quad\quad+2\left(T(\mathsf{s},D)-\bar{a}\left(D\right)\right)\left(T(\mathsf{s}^{\prime},D)-\bar{a}\left(D\right)\right)\\
 & \geq4\max_{\mathsf{s}\in\mathsf{S}_{n,b}}\left(T(\mathsf{s},D)-\bar{a}\left(D\right)\right)^{2}.
\end{flalign*}
Observe that 
\begin{flalign*}
 & \sum_{i,i^{\prime}=1}^{k}\left(T_{i}\left(\pi_{m,l}\right)-\bar{a}\left(D\right)\right)\left(T_{i}\left(\pi_{m,l}\right)-\bar{a}\left(D\right)\right)-
 \sum_{i,i^{\prime}=1}^{k}\left(T_{i}\left(\pi_{m,l}^{\prime}\right)-\bar{a}\left(D\right)\right)\left(T_{i^{\prime}}\left(\pi_{m,l}^{\prime}\right)-\bar{a}\left(D\right)\right)=0
\end{flalign*}
for all for all $m^{\prime}\geq m$ if and only if $\bar{R}_{m}\left(l\right)=n$.
Thus, we have that 
\begin{flalign*}
 & \sum_{m=0}^{\infty}\bigg\vert\mathbb{E}\left[\tilde{v}_{g,k}(Z_{m})-\tilde{v}_{g,k}(Z_{m}^{\prime})\mid Z_{0}=\left(\mathsf{r}\left(\bm{\psi}\right),D\right),Z_{0}^{\prime}=\left(\mathsf{r}\left(\bm{\psi}^{\prime}\right),D\right)\right]\bigg\vert\\
 & \leq\sum_{m=0}^{\infty}\frac{1}{g^{2}k^{2}}\sum_{l=1}^{g}\sum_{i,i^{\prime}=1}^{k}\mathbb{E}\bigg[\bigg\vert\left(T_{i}\left(\pi_{m,l}\right)-\bar{a}\left(D\right)\right)\left(T_{i^{\prime}}\left(\pi_{m,l}\right)-\bar{a}\left(D\right)\right)\\
 & \quad\quad\quad\quad\quad-\left(T_{i}\left(\pi_{m,l}^{\prime}\right)-\bar{a}\left(D\right)\right)\left(T_{i^{\prime}}\left(\pi_{m,l}^{\prime}\right)-\bar{a}\left(D\right)\right)\bigg\vert\mid Z_{0}=\left(\mathsf{r}\left(\bm{\psi}\right),D\right),Z_{0}^{\prime}=\left(\mathsf{r}\left(\bm{\psi}^{\prime}\right),D\right)\bigg]\\
 & \leq\frac{2}{g^{2}k^{2}}\sum_{l=1}^{g}\sum_{i,i^{\prime}=1}^{k}\max_{\mathsf{s},\mathsf{s}^{\prime}\in\mathsf{S}_{n,b}}\big\vert\left(T\left(\mathsf{s},D\right)-\bar{a}\left(D\right)\right)\left(T\left(\mathsf{s}^{\prime},D\right)-\bar{a}\left(D\right)\right)\big\vert\mathbb{E}\left[N\left(l\right)\right].\\
 & \leq\frac{2}{g}\max_{\mathsf{s}\in\mathsf{S}_{n,b}}\left(T(\mathsf{s},D)-\bar{a}\left(D\right)\right)^{2}\mathbb{E}\left[N\left(l\right)\right].\\
 & \leq\frac{1}{2g}\max_{\mathsf{s},\mathsf{s}^{\prime}\in\mathsf{S}_{n,b}}\left(T(\mathsf{s},D)-T(\mathsf{s}^{\prime},D)\right)^{2}\mathbb{E}\left[N\left(l\right)\right].
\end{flalign*}
Similarly, we have that 
\begin{flalign*}
 & \sum_{m=0}^{\infty}\bigg\vert\mathbb{E}\left[\check{v}_{g,k}(Z_{m})-\check{v}_{g,k}(Z_{m}^{\prime})\mid Z_{0}=\left(\mathsf{r}\left(\bm{\psi}\right),D\right),Z_{0}^{\prime}=\left(\mathsf{r}\left(\bm{\psi}^{\prime}\right),D\right)\right]\bigg\vert\\
 & \leq\sum_{m=0}^{\infty}\frac{1}{g^{2}k^{2}}\sum_{l,l^{\prime}=1}^{g}\sum_{i,i^{\prime}=1}^{k}\mathbb{E}\bigg[\bigg\vert\left(T_{i}\left(\pi_{m,l}\right)-\bar{a}\left(D\right)\right)\left(T_{i^{\prime}}\left(\pi_{m,l^{\prime}}\right)-\bar{a}\left(D\right)\right)\\
 & \quad\quad\quad\quad\quad-\left(T_{i}\left(\pi_{m,l}^{\prime}\right)-\bar{a}\left(D\right)\right)\left(T_{i^{\prime}}\left(\pi_{m,l^{\prime}}^{\prime}\right)-\bar{a}\left(D\right)\right)\bigg\vert\mid Z_{0}=\left(\mathsf{r}\left(\bm{\psi}\right),D\right),Z_{0}^{\prime}=\left(\mathsf{r}\left(\bm{\psi}^{\prime}\right),D\right)\bigg]\\
 & \leq\frac{2}{g^{2}k^{2}}\sum_{l,l^{\prime}=1}^{g}\sum_{i,i^{\prime}=1}^{k}\max_{\mathsf{s},\mathsf{s}^{\prime}\in\mathsf{S}_{n,b}}\big\vert\left(T\left(\mathsf{s},D\right)-\bar{a}\left(D\right)\right)\left(T\left(\mathsf{s}^{\prime},D\right)-\bar{a}\left(D\right)\right)\big\vert\mathbb{E}\left[\max\left\{ N\left(l\right),N\left(l^{\prime}\right)\right\} \right]\\
 & \leq\frac{2}{g^{2}k^{2}}\sum_{l,l^{\prime}=1}^{g}\sum_{i,i^{\prime}=1}^{k}\max_{\mathsf{s},\mathsf{s}^{\prime}\in\mathsf{S}_{n,b}}\big\vert\left(T\left(\mathsf{s},D\right)-\bar{a}\left(D\right)\right)\left(T\left(\mathsf{s}^{\prime},D\right)-\bar{a}\left(D\right)\right)\big\vert\mathbb{E}\left[N\left(l\right)+N\left(l^{\prime}\right)\right]\\
 & \leq\frac{1}{2}\max_{\mathsf{s},\mathsf{s}^{\prime}\in\mathsf{S}_{n,b}}\left(T(\mathsf{s},D)-T(\mathsf{s}^{\prime},D)\right)^{2}\big\vert\mathbb{E}\left[N\left(l\right)+N\left(l^{\prime}\right)\right]
\end{flalign*}
Thus, part (i) of the Lemma follows by the bound (\ref{eq: expectation N(l) bound}).
Part (ii) of the Lemma then follows by an argument analogous to the
proof of Lemma \ref{lem: representer construction applied}. \hfill\qed

\subsubsection{Proof of Lemma \ref{lem: var deterministic bound}}
We reinstate the notation introduced in the proof of Lemma \ref{lem: deterministic bound}
from Section \ref{subsec: proof of deterministic bound}. First, observe
that 
\begin{flalign}
 & \mathbb{E}\left[\mathbb{E}\left[\tilde{v}_{g,k}(Z_{m})-\tilde{v}_{g,k}(Z_{m}^{\prime})\mid Z,Z^{\prime}\right]^{2}\mid\bm{\pi}\right]\nonumber \\
 & =\mathbb{E}\left[\left(P\left\{ \mathcal{H}_{m}\right\} \mathbb{E}\left[\tilde{v}_{g,k}(Z_{m})-\tilde{v}_{g,k}(Z_{m}^{\prime})\mid\mathcal{H}_{m},Z,Z^{\prime}\right]\right)^{2}\mid\bm{\pi}\right]\nonumber \\
 & \leq\left(1-\frac{2}{gn^{2}}\right)^{2m}\left(\frac{2kb\left(2n-bk-b\right)}{n^{2}}\right)^{2}
 \mathbb{E}\left[\left(\tilde{v}_{g,k}(Z_{m})-\tilde{v}_{g,k}(Z_{m}^{\prime})\right)^{2}\mid\mathcal{H}_{m},\bm{\pi}\right]\label{eq: var conc cond bound}
\end{flalign}
as before. Recall the notation
\[
\bar{a}(\mathsf{r}_l,D) = \frac{1}{k} \sum_{i=1}^k (T(\mathsf{s}_{l,i},D) - \bar{a}(D))
\]
and observe that
\begin{flalign}
 & \mathbb{E}\left[\left(\tilde{v}_{g,k}(Z_{m})-\tilde{v}_{g,k}(Z_{m}^{\prime})\right)^{2}
 \mid\mathcal{H}_{m},\bm{\pi}\right]\nonumber\\
 & = \frac{1}{g^4} \mathbb{E}\left[
\left(\bar{a}(\mathsf{r}(\pi_{m,L}),D)^2 - 
       \bar{a}(\mathsf{r}(\pi^\prime_{m,L}),D)^2 \right)^2 \mid \mathcal{H}_m \right] \nonumber\\
& = \frac{1}{g^4} \mathbb{E}\left[ 
\left(\bar{a}(\mathsf{r}(\pi_{m,L}),D) + \bar{a}(\mathsf{r}(\pi^\prime_{m,L}),D)\right)^2
\left(\bar{a}(\mathsf{r}(\pi_{m,L}),D) - \bar{a}(\mathsf{r}(\pi^\prime_{m,L}),D)\right)^2 \mid \mathcal{H}_m \right] \nonumber\\
& \leq \frac{1}{g^4} \left( \mathbb{E}\left[ \left(\bar{a}(\mathsf{r}(\pi_{m,L}),D) + \bar{a}(\mathsf{r}(\pi^\prime_{m,L}),D)\right)^4 \mid \mathcal{H}_m \right] \right)^{1/2} \nonumber\tag{Cauchy-Schwarz}\\
& \quad\quad\cdot \left( \mathbb{E}\left[ \left(\bar{a}(\mathsf{r}(\pi_{m,L}),D) - \bar{a}(\mathsf{r}(\pi^\prime_{m,L}),D)\right)^4 \mid \mathcal{H}_m \right] \right)^{1/2} \nonumber\\
& \leq \frac{2^4}{g^4} \left( \mathbb{E}\left[ \left(\bar{a}(\mathsf{r}(\pi_{m,L}),D) - \bar{a}(D)\right)^4 \right] \right)^{1/2} \nonumber\tag{Hölder}\\
& \quad\quad\cdot \left( \mathbb{E}\left[ \left(\bar{a}(\mathsf{r}(\pi_{m,L}),D) - \bar{a}(\mathsf{r}(\pi^\prime_{m,L}),D)\right)^4 \mid \mathcal{H}_m \right] \right)^{1/2}~.\label{eq: diffof squares}
\end{flalign}
Observe that
\[
\mathbb{E}\left[ \left(\bar{a}(\mathsf{r}(\pi_{m,L}),D) - \bar{a}(D)\right)^4 \right]
\leq 3^2 \left(2^4 (2-\varphi k - \varphi)^2\right)^2 \Gamma_{k,\varphi,b}^{(2)}
\]
by \cref{cor: Moment bound}, as Assumption \ref{assu: invariance} is maintained and the eighth-order sample-split stability $\zeta^{(8)}$ is finite. In turn, we have that
\begin{equation}
\label{eq: v tilde no H bound}
\mathbb{E}\left[ \left(\bar{a}(\mathsf{r}(\pi_{m,L}),D) - \bar{a}(\mathsf{r}(\pi^\prime_{m,L}),D)\right)^4 \mid \mathcal{H}_m \right]
\leq \frac{4}{k^4b^4} \Gamma_{k,\varphi,b}^{(2)}
\end{equation}
by \eqref{eq: break up square}, \eqref{eq: contibute sigma}, and \eqref{eq: contribute sigma train}. Hence, we have that
\begin{equation}
\mathbb{E}\left[\left(\tilde{v}_{g,k}(Z_{m})-\tilde{v}_{g,k}(Z_{m}^{\prime})\right)^{2}
 \mid\mathcal{H}_{m},\bm{\pi}\right]
 \leq
 \frac	{3\cdot 2^9}{k^2 b^2g^4} (2-\varphi k - \varphi)^2 \Gamma_{k,\varphi,b}^{(2)}~. \label{eq: v tilde H bound}
\end{equation}
Combining \eqref{eq: var conc cond bound},  \eqref{eq: v tilde no H bound}, \eqref{eq: v tilde H bound}, we find that
\[
\mathbb{E}\left[\mathbb{E}\left[\tilde{v}_{g,k}(Z_{m})-\tilde{v}_{g,k}(Z_{m}^{\prime})\mid Z,Z^{\prime}\right]^{2}\mid\bm{\pi}\right]
\leq 
\left(1-\frac{2}{gn^{2}}\right)^{2m}\left(\frac{\left(3\cdot 2^{11}\right)\left(2-\varphi k-\varphi\right)^4}{n^2 g^4}\right) \Gamma_{k,\varphi,b}^{(2)}~,
\]
which completes the proof of the first part of the Lemma. 

Second, following the same argument, we again have that
\begin{flalign}
 & \mathbb{E}\left[\mathbb{E}\left[\check{v}_{g,k}(Z_{m})-\check{v}_{g,k}(Z_{m}^{\prime})\mid Z,Z^{\prime}\right]^{2}\mid\bm{\pi}\right]\nonumber \\
 & \leq\left(1-\frac{2}{gn^{2}}\right)^{2m}\left(\frac{2kb\left(2n-bk-b\right)}{n^{2}}\right)^{2}\mathbb{E}\left[\left(\check{v}_{g,k}(Z_{m})-\check{v}_{g,k}(Z_{m}^{\prime})\right)^{2}\mid\mathcal{H}_{m},\bm{\pi}\right]\label{eq: a2 conc cond bound check}
\end{flalign}
In this case, we can compute
\begin{flalign}
 & \mathbb{E}\left[\left(\check{v}_{g,k}(Z_{m})-\check{v}_{g,k}(Z_{m}^{\prime})\right)^{2}
 \mid\mathcal{H}_{m},\bm{\pi}\right]\nonumber\\
 & = \mathbb{E}\left[
\left(\left(a(Z_m) - \bar{a}(D) \right)^2 - 
       \left(a(Z^\prime_m) - \bar{a}(D) \right)^2\right)^2 \mid \mathcal{H}_m \right] \nonumber\\
& = \mathbb{E}\left[ 
\left(a(Z_m) - \bar{a}(D) + a(Z^\prime_m) - \bar{a}(D)\right)^2
\left(a(Z_m) - a(Z^\prime_m)\right)^2 \mid \mathcal{H}_m \right] \nonumber\\
& \leq \left( \mathbb{E}\left[ \left(a(Z_m) - \bar{a}(D) + a(Z^\prime_m) - \bar{a}(D)\right)^4 \mid \mathcal{H}_m \right] \right)^{1/2} \nonumber\tag{Cauchy-Schwarz}\\
& \quad\quad\cdot \left( \mathbb{E}\left[ \left(a(Z_m) - a(Z^\prime_m)\right)^4 \mid \mathcal{H}_m \right] \right)^{1/2} \nonumber\\
& \leq 2^4 \left( \mathbb{E}\left[ \left(a(Z_m) - \bar{a}(D)\right)^4 \right] \right)^{1/2} \left( \mathbb{E}\left[ \left(a(Z_m) - a(Z^\prime_m)\right)^4 \mid \mathcal{H}_m \right] \right)^{1/2}\label{eq: diffof squares check}~,
\end{flalign}
where the last inequality follows from Hölder's inequality. Again we have that
\begin{equation}
\label{eq: v tilde no H bound check}
\mathbb{E}\left[ \left(a(Z_m) - a(Z^\prime_m)\right)^4 \right]
\leq 3^2 \left(\frac{2^4 \varphi^2 (2-\varphi k - \varphi)^2}{g}\right)^2 \Gamma_{k,\varphi,b}^{(2)}
\end{equation}
by \cref{cor: Moment bound}, as Assumption \ref{assu: invariance} is maintained and the eighth-order sample-split stability $\zeta^{(8)}$ is finite. Similarly. we have that 
\begin{equation}
\label{eq: v tilde H bound check}
\mathbb{E}\left[ \left(a(Z_m) - \bar{a}(D)\right)^4 \mid \mathcal{H}_m \right]
\leq \frac{4}{g^4 k^4 b^4} \Gamma_{k,\varphi,b}^{(2)}
\end{equation}
by \eqref{eq: break up square}, \eqref{eq: contibute sigma}, and \eqref{eq: contribute sigma train}. Combining \eqref{eq: a2 conc cond bound check},  \eqref{eq: v tilde no H bound check}, \eqref{eq: v tilde H bound check}, we find that
\[
\mathbb{E}\left[\left(\check{v}_{g,k}(Z_{m})-\check{v}_{g,k}(Z_{m}^{\prime})\right)^{2}
 \mid\mathcal{H}_{m},\bm{\pi}\right]
\leq 
\left(1-\frac{2}{gn^{2}}\right)^{2m}\left(\frac{\left(3\cdot 2^9\right)\varphi^4 \left(2-\varphi k-\varphi\right)^4}{n^2 g^3}\right) \Gamma_{k,\varphi,b}^{(2)}~,
\]
which completes the proof of the second part of the Lemma. \hfill\qed
 
\subsubsection{Proof of Lemma \ref{lem: star hat diff}, Part (i)}

Observe that 
\begin{flalign}
 & P\left\{ \vert U(\mathsf{R}_{\hat{g},k},D)\vert\geq t\sqrt{v_{g^{\star},k}\left(D\right)},\vert\hat{g}-g^{\star}\vert\leq cg^{\star}\mid D\right\} \nonumber \\
 & =P\left\{ \sum_{i=1}^{g}\bar{a}\left(\mathsf{r}_{i,k},D\right)-\sum_{i=1}^{g^{\star}}\bar{a}\left(\mathsf{r}_{i,k},D\right)\geq tg^{\star}\sqrt{v_{g^{\star},k}\left(D\right)},\vert\hat{g}-g^{\star}\vert\leq cg^{\star}\mid D\right\} \nonumber \\
 & \leq P\left\{ \max_{g^{\star}(1-c)\leq g\leq g^{\star}}\vert\sum_{i=1}^{g}\bar{a}\left(\mathsf{r}_{i,k},D\right)-\sum_{i=1}^{g^{\star}}\bar{a}\left(\mathsf{r}_{i,k},D\right)\vert\geq tg^{\star}\sqrt{v_{g^{\star},k}\left(D\right)}\mid D\right\} \nonumber \\
 & +P\left\{ \max_{g^{\star}\leq g\leq g^{\star}(1+c)}\vert\sum_{i=1}^{g}\bar{a}\left(\mathsf{r}_{i,k},D\right)-\sum_{i=1}^{g^{\star}}\bar{a}\left(\mathsf{r}_{i,k},D\right)\vert\geq tg^{\star}\sqrt{v_{g^{\star},k}\left(D\right)}\mid D\right\} \nonumber \\
 & =P\left\{ \max_{g^{\star}(1-c)\leq g\leq g^{\star}}\vert\sum_{i=1}^{g^{\star}-g}\bar{a}\left(\mathsf{r}_{i,k},D\right)\vert\geq tg^{\star}\sqrt{v_{g^{\star},k}\left(D\right)}\mid D\right\} \nonumber \\
 & +P\left\{ \max_{g^{\star}\leq g\leq g^{\star}(1+c)}\vert\sum_{i=1}^{g-g^{\star}}\bar{a}\left(\mathsf{r}_{i,k},D\right)\vert\geq tg^{\star}\sqrt{v_{g^{\star},k}\left(D\right)}\mid D\right\} .\label{eq: maximal probabilities}
\end{flalign}
To bound the two probabilities (\ref{eq: maximal probabilities}),
we apply the following Chernoff-type variation to the Kolmogorov maximal
inequality, due to \citet{steiger1970bernstein}. 
\begin{theorem}[\citealp{steiger1970bernstein}]
\label{thm: Steiger}Let $S_{i}$, $i=1,2,\ldots$, be a real-valued
martingale sequence. If the moment generating function
\[
m_{n}\left(\theta\right)=\mathbb{E}\left[\exp\left(\theta S_{n}\right)\right]
\]
is finite for all positive $\theta$, then the inequality
\[
\log\text{ }P\left\{ \max_{1\leq n^{\prime}\leq n}S_{n^{\prime}}>t\right\} \leq\inf_{\theta>0}\left(\log m_{n}\left(\theta\right)-\theta t\right),
\]
holds. 
\end{theorem}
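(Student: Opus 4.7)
The plan is to obtain the Steiger-type bound by combining the convexity of the exponential with Doob's maximal inequality, which is the standard recipe for upgrading a Chernoff bound to a maximal bound for martingales. First, I would fix an arbitrary $\theta>0$ and observe that, since $x\mapsto e^{\theta x}$ is convex, Jensen's inequality applied to the martingale condition gives
\[
\mathbb{E}\left[e^{\theta S_{n+1}}\mid \mathcal{F}_n\right]\geq e^{\theta \mathbb{E}\left[S_{n+1}\mid\mathcal{F}_n\right]} = e^{\theta S_n},
\]
so that $\{e^{\theta S_n}\}_{n\geq 1}$ is a nonnegative submartingale with respect to the natural filtration $\mathcal{F}_n=\sigma(S_1,\ldots,S_n)$; the hypothesis that $m_n(\theta)<\infty$ for all $\theta>0$ guarantees integrability at each step.

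Next, I would apply Doob's submartingale maximal inequality (see e.g., Theorem 5.4.2 of \cite{durrett2019probability}) to the submartingale $\{e^{\theta S_{n'}}\}_{n'\leq n}$. Since $x\mapsto e^{\theta x}$ is strictly increasing for $\theta>0$, we have
\[
\left\{\max_{1\leq n'\leq n} S_{n'} > t \right\} = \left\{\max_{1\leq n'\leq n} e^{\theta S_{n'}} > e^{\theta t}\right\},
\]
and hence
\[
P\left\{\max_{1\leq n'\leq n} S_{n'} > t\right\} \leq e^{-\theta t}\,\mathbb{E}\left[e^{\theta S_n}\right] = \exp\left(\log m_n(\theta) - \theta t\right).
\]
Taking logarithms and then the infimum over $\theta>0$ yields the claimed inequality.

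There is no substantive obstacle here; the only minor care needed is to verify that the submartingale property holds under the bare hypothesis that $\{S_n\}$ is a martingale with finite exponential moments, which follows directly from conditional Jensen's inequality, and to ensure that the event identification and the use of Doob's inequality apply for each fixed $\theta>0$ before optimizing. The bound is then obtained by taking the infimum over $\theta$, which is permissible since the left-hand side does not depend on $\theta$.
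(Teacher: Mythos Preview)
Your proof is correct and is the standard argument for this result. The paper does not actually prove this theorem; it is stated there as a cited result from \cite{steiger1970bernstein} and invoked as a black box in the proofs of Lemma~\ref{lem: star hat diff}, so there is no paper-side proof to compare against.
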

Conditional on $D$, the random variables 
\[
\bar{a}\left(\mathsf{r}_{i,k},D\right),\quad i=1,2,\ldots,
\]
are mean-zero, independent, and identically distributed. Thus, the
partial sums 
\[
S_{m}=\sum_{i=1}^{m}\bar{a}\left(\mathsf{r}_{i,k},D\right)
\]
are a martingale sequence. Moreover, though inspection of the proof
of \cref{thm: chatterjee concentration}, we find that Theorem
\cref{thm: cross split concentration} implies that 
\begin{flalign*}
 & \inf_{\theta>0}\left(\log\mathbb{E}\left[\exp\left(\theta S_{\lfloor cg^{\star}\rfloor}\right)\mid D\right]-\theta\tau\right)\\
 & =\inf_{\theta>0}\left(\log\mathbb{E}\left[\exp\left(\frac{\theta}{\lfloor cg^{\star}\rfloor}S_{\lfloor cg^{\star}\rfloor}\right)\mid D\right]-\theta\frac{\tau}{\lfloor cg^{\star}\rfloor}\right)\\
 & \leq - \frac{\lfloor cg^{\star}\rfloor}{2^4 (2-\varphi k - \varphi)^2 \Gamma^{(1)}_{k,\varphi,b} } \frac{\delta \tau^2}{\left(\lfloor cg^{\star}\rfloor\right)^2}\\
 & \leq - \frac{\delta}{2^4 (2-\varphi k - \varphi)^2 \Gamma^{(1)}_{k,\varphi,b}} \frac{\tau^2}{ cg^{\star}}
 \end{flalign*}
with probability greater than $1-\delta$, as \cref{assu: invariance}
holds and the fourth-order split stability $\zeta^{(4)}$ is finite.
Thus, by setting 
\[
\tau=tg^{\star}\sqrt{v_{g^{\star},k}\left(D\right)}=\sqrt{g^{\star}}t\left(v_{1,k}(D)\right)^{1/2}
\]
we find that 
\begin{flalign*}
 & \log \frac{1}{2} P\left\{ \vert U(\mathsf{R}_{\hat{g},k},D)\vert\geq t\sqrt{v_{g^{\star},k}\left(D\right)},\vert\hat{g}-g^{\star}\vert\leq cg^{\star}\mid D\right\} \\
 & \leq - \frac{\delta v_{1,k}(D)}{2^4 (2-\varphi k - \varphi)^2 \Gamma^{(1)}_{k,\varphi,b}} \frac{t^2}{c}
\end{flalign*}
with probability greater than $1-\delta$, by \cref{thm: Steiger}
and the inequality \eqref{eq: maximal probabilities}.

\subsubsection{Proof of \cref{lem: star hat diff}, Part (ii)}

Observe that 
\begin{equation}
P\left\{ \vert\hat{g}-g^{\star}\vert>cg^{\star}\mid D\right\} =P\left\{ g^{\star}\left(1+c\right)<\hat{g}\mid D\right\} +P\left\{ \hat{g}<g^{\star}\left(1-c\right)\mid D\right\} .\label{eq: break up k}
\end{equation}
We begin by handling the first term. We have that
\begin{flalign*}
& P\left\{ g^{\star}\left(1+c\right)<\hat{g}\mid D\right\}  \\
& \leq P\left\{ \hat{v}_{g^{\star}\left(1+c\right),k}\left(Z\right)-v_{g^{\star}\left(1+c\right),k}\left(D\right)+v_{g^{\star}\left(1+c\right),k}\left(D\right)>\frac{1}{2}\left(\frac{\xi}{z_{1-\beta/2}}\right)^{2}\mid D\right\} \\
 & \leq P\left\{ \hat{v}_{g^{\star}\left(1+c\right),k}\left(Z\right)-v_{g^{\star}\left(1+c\right),k}\left(D\right)>\frac{v_{1,k}(D)}{g^{\star}}-\frac{v_{1,k}(D)}{g^{\star}\left(1+c\right)}\mid D\right\} \\
 & =P\left\{ \hat{v}_{g^{\star}\left(1+c\right),k}\left(Z\right)-v_{g^{\star}\left(1+c\right),k}\left(D\right)>\frac{v_{1,k}(D)}{g^{\star}}\frac{c}{1+c}\mid D\right\} ,
\end{flalign*}
where the first inequality follows the definition of $g^{\star}$.
Thus, as \cref{assu: invariance} holds and the eighth-order
split stability $\zeta^{(8)}$ is finite, we have that 
\begin{flalign*}
\log \frac{1}{4} P\left\{ g^{\star}\left(1+c\right)<\hat{g}\mid D\right\} 
  &  \lesssim -\frac{\delta (v_{1,k}(D))^2 }{(2-\varphi k - \varphi)^4} \frac{g^\star(1+c) c^2}{\Gamma_{k,\varphi,b}^{(2)}} \\
 & \lesssim -\frac{\delta (v_{1,k}(D))^3 }{(2-\varphi k - \varphi)^4} \frac{z^2_{1-\beta/2} c^2}{\xi^2 \Gamma_{k,\varphi,b}^{(2)}}
\end{flalign*}
with probability greater than $1-\delta$, by \cref{thm: variance estimator concentration},
the definition of $g^{\star}$ and the assumption that $0<c<1/2$. 

Next, we bound the second term in \eqref{eq: break up k}. Define
the partial sums
\begin{flalign*}
A_{g} & =\sum_{l=1}^{g}\left(\frac{1}{k^{2}}\sum_{i,i^{\prime}=1}^{k}\left(T(\mathsf{s}_{l,i},D)-\bar{a}\left(D\right)\right)\left(T(\mathsf{s}_{l,i^{\prime}},D)-\bar{a}\left(D\right)\right)-v_{1,k}(D)\right)\quad\text{and}\\
B_{g} & =\sum_{l=1}^{g}\Bigg(\frac{1}{k^{2}}\sum_{i,i^{\prime}=1}^{k}\left(T(\mathsf{s}_{l,i},D)-\bar{a}\left(D\right)\right)\left(T(\mathsf{s}_{l,i^{\prime}},D)-\bar{a}\left(D\right)\right)\\
 & \quad\quad\quad\quad\quad+2\sum_{l^{\prime}=1}^{l-1}\left(T(\mathsf{s}_{l,i},D)-\bar{a}\left(D\right)\right)\left(T(\mathsf{s}_{l^{\prime},i^{\prime}},D)-\bar{a}\left(D\right)\right)-v_{1,k}(D)\Bigg).
\end{flalign*}
Observe that the equality 
\[
\hat{v}_{g,k}\left(Z\right)-v_{g,k}\left(D\right)=\frac{g}{g-1}\left(\tilde{v}_{g,k}\left(Z\right)-v_{g,k}\left(D\right)\right)-\frac{1}{g-1}\left(\check{v}_{g,k}\left(Z\right)-v_{g,k}\left(D\right)\right),
\]
from the proof of \cref{thm: variance estimator concentration},
implies that
\begin{flalign*}
\hat{v}_{g,k}\left(Z\right)-v_{g,k}\left(D\right) & =\frac{1}{g\left(g-1\right)}\sum_{l=1}^{g}\frac{1}{k^{2}}\sum_{i,i^{\prime}=1}^{k}\left(T(\mathsf{s}_{l,i},D)-a\left(Z\right)\right)\left(T(\mathsf{s}_{l,i^{\prime}},D)-a\left(Z\right)\right)-\frac{v_{1,k}(D)}{g}\\
 & =\frac{1}{g-1}\frac{1}{g}A_{g}-\frac{1}{g-1}\frac{1}{g^{2}}B_{g}
\end{flalign*}
for all positive $g$. Thus, we can write
\begin{flalign*}
 & P\left\{ \hat{g}<g^{\star}\left(1-c\right)\mid D\right\} \\
 & =P\left\{ \min_{2\leq g^{\prime}\leq g^{\star}\left(1-c\right)}\hat{v}_{g^{\prime},k}\left(Z\right)\leq\left(\frac{\xi}{z_{1-\beta/2}}\right)^{2}\mid D\right\} \\
 & =P\left\{ \min_{2\leq g^{\prime}\leq g^{\star}\left(1-c\right)}\frac{\left(g^{\prime}A_{g^{\prime}}-B_{g^{\prime}}\right)}{\left(g^{\prime}-1\right)\left(g^{\prime}\right)^{2}}+\frac{1}{g^{\prime}}v_{1,k}(D)\leq\left(\frac{\xi}{z_{1-\beta/2}}\right)^{2}\mid D\right\} ,
\end{flalign*}
where the first equality follows from the definition of $\hat{g}$.
Now, in the event that 
\begin{equation}
\min_{2\leq g^{\prime}\leq g^{\star}\left(1-c\right)}\frac{\left(g^{\prime}A_{g^{\prime}}-B_{g^{\prime}}\right)}{\left(g^{\prime}-1\right)\left(g^{\prime}\right)^{2}}+\frac{1}{g^{\prime}}v_{1,k}(D)\leq\left(\frac{\xi}{z_{1-\beta/2}}\right)^{2},\label{eq: k star k hat as max}
\end{equation}
it must also be the case that
\begin{flalign}
\min_{2\leq g^{\prime}\leq g^{\star}\left(1-c\right)}\left(g^{\prime}-1\right)\left(g^{\prime}\right)^{2}\left(\frac{\left(g^{\prime}A_{g^{\prime}}-B_{g^{\prime}}\right)}{\left(g^{\prime}-1\right)\left(g^{\prime}\right)^{2}}+\frac{1}{g^{\prime}}v_{1,k}(D)\right)\leq & (\hat{g}-1)(\hat{g})^{2}\left(\frac{\xi}{z_{1-\beta/2}}\right)^{2}\label{eq: hat k fact}
\end{flalign}
as $\hat{g}\leq g^{\star}\left(1-c\right)$ necessarily. But then,
similarly, \eqref{eq: hat k fact} implies that
\begin{flalign*}
\min_{2\leq g^{\prime}\leq g^{\star}\left(1-c\right)}\left(g^{\prime}A_{g^{\prime}}-B_{g^{\prime}}\right)+\frac{(\hat{g}-1)(\hat{g})^{2}}{g^{\prime}}v_{1,k}(D) & \leq(\hat{g}-1)(\hat{g})^{2}\left(\frac{\xi}{z_{1-\beta/2}}\right)^{2},
\end{flalign*}
and in turn
\begin{flalign*}
 & \min_{2\leq g^{\prime}\leq g^{\star}\left(1-c\right)}\frac{\left(g^{\prime}A_{g^{\prime}}-B_{g^{\prime}}\right)}{\left(g^{\star}\left(1-c\right)-1\right)\left(g^{\star}\left(1-c\right)\right)^{2}}\\
 & \leq\frac{(\hat{g}-1)(\hat{g})^{2}}{\left(g^{\star}\left(1-c\right)-1\right)\left(g^{\star}\left(1-c\right)\right)^{2}}\left(\frac{1}{g^{\star}-1}-\frac{1}{g^{\star}\left(1-c\right)}\right)v_{1,k}(D)\\
 & =\frac{(\hat{g}-1)(\hat{g})^{2}}{\left(g^{\star}\left(1-c\right)-1\right)\left(g^{\star}\left(1-c\right)\right)^{2}}\left(\frac{1-g^{\star}c}{g^{\star}\left(g^{\star}-1\right)\left(1-c\right)}\right)v_{1,k}(D),
\end{flalign*}
are then also true. Finally, again as \eqref{eq: k star k hat as max}
is equivalent to $\hat{g}<g^{\star}\left(1-c\right)$, we have that
\begin{flalign*}
 & P\left\{ \min_{2\leq g^{\prime}\leq g^{\star}\left(1-c\right)}\frac{\left(g^{\prime}A_{g^{\prime}}-B_{g^{\prime}}\right)}{\left(g^{\prime}-1\right)\left(g^{\prime}\right)^{2}}+\frac{1}{g^{\prime}}v_{1,k}(D)\leq\left(\frac{\xi}{z_{1-\beta/2}}\right)^{2}\mid D\right\} \\
 & \leq P\left\{ \min_{2\leq g^{\prime}\leq g^{\star}\left(1-c\right)}\frac{g^{\prime}A_{g^{\prime}}-B_{g^{\prime}}}{\left(g^{\star}\left(1-c\right)-1\right)\left(g^{\star}\left(1-c\right)\right)^{2}}\leq\left(\frac{1-g^{\star}c}{g^{\star}\left(g^{\star}-1\right)\left(1-c\right)}\right)v_{1,k}(D)\mid D\right\} .
\end{flalign*}
Now, observe that we can write
\begin{flalign}
 & P\left\{ \min_{2\leq g^{\prime}\leq g^{\star}\left(1-c\right)}\frac{g^{\prime}A_{g^{\prime}}-B_{g^{\prime}}}{\left(g^{\star}\left(1-c\right)-1\right)\left(g^{\star}\left(1-c\right)\right)^{2}}\leq\left(\frac{1-g^{\star}c}{g^{\star}\left(g^{\star}-1\right)\left(1-c\right)}\right)v_{1,k}(D)\mid D\right\} \nonumber \\
 & = P\left\{ \max_{2\leq g^{\prime}\leq g^{\star}\left(1-c\right)}\frac{B_{g^{\prime}}-g^{\prime}A_{g^{\prime}}}{\left(g^{\star}\left(1-c\right)-1\right)\left(g^{\star}\left(1-c\right)\right)^{2}}\geq\left(\frac{g^{\star}c-1}{g^{\star}\left(g^{\star}-1\right)\left(1-c\right)}\right)v_{1,k}(D)\mid D\right\}~.\nonumber
\end{flalign}
We bound this term by combining the argument used
to establish \cref{thm: variance estimator concentration}
with an application of \cref{thm: Steiger}. To this end, observe
that 
\begin{flalign}
 & P\left\{ \max_{2\leq g^{\prime}\leq g}\frac{B_{g^{\prime}}-g^{\prime}A_{g^{\prime}}}{\left(g-1\right)g^{2}}\leq t\mid D\right\} \nonumber \\
 & \geq P\left\{ \max_{2\leq g^{\prime}\leq g}\frac{-1}{g-1}\frac{g^{\prime}}{g^{2}}A_{g^{\prime}}\leq\frac{1}{1+\sqrt{g}}t\mid D\right\} +P\left\{ \max_{2\leq g^{\prime}\leq g}\frac{1}{g-1}\frac{1}{g^{2}}B_{g^{\prime}}\leq\frac{\sqrt{g}}{1+\sqrt{g}}t\mid D\right\} -1\nonumber \\
 & \geq P\left\{ \max_{2\leq g^{\prime}\leq g}\frac{g}{g-1}\frac{-1}{g^{2}}A_{g^{\prime}}\leq\frac{\sqrt{g}}{1+\sqrt{g}}t\mid D\right\} +P\left\{ \max_{2\leq g^{\prime}\leq g}\frac{1}{g-1}\frac{1}{g^{2}}B_{g^{\prime}}\leq\frac{1}{1+\sqrt{g}}t\mid D\right\} -1\nonumber \\
 & \geq P\left\{ \max_{2\leq g^{\prime}\leq g}-A_{g^{\prime}}\leq g^{2}\frac{1}{\sqrt{g}}\frac{g-1}{1+\sqrt{g}}t\mid D\right\} +P\left\{ \max_{2\leq g^{\prime}\leq g}\vert B_{g^{\prime}}\vert\leq g^{2}\frac{g-1}{1+\sqrt{g}}t\mid D\right\} -1\label{eq: break into A and B pieces}
\end{flalign}
for any $t>0$ and any positive integer $D$. Observe that the partial
sums $A_{g}$ and $B_{g}$ are both martingale sequences. As \cref{assu: invariance} holds and the eighth-order split stability
$\zeta^{(8)}$ is finite, though inspection of the proof of \cref{thm: chatterjee concentration}, \cref{lem: a(Z) sq concentration} implies that
implies that 
\begin{flalign}
 & \inf_{\theta>0}\left(\log\mathbb{E}\left[\exp\left(\theta A_{g}\right)\mid D\right]-\theta g^{2}\frac{1}{\sqrt{g}}\frac{g-1}{1+\sqrt{g}}t\right)\nonumber \\
 & =\inf_{\theta>0}\left(\log\mathbb{E}\left[\exp\left(\frac{\theta}{g^{2}}A_{g}\right)\mid D\right]-\theta\frac{1}{\sqrt{g}}\frac{g-1}{1+\sqrt{g}}t\right)\nonumber \\
 & \lesssim - \frac{\delta}{(2 - \varphi k - \varphi)^4} \frac{g^3 t^2}{\Gamma_{k,\varphi,b}^{(2)}}\label{eq: A_k bound}
\end{flalign}
and 
\begin{flalign}
 & \inf_{\theta>0}\left(\log\mathbb{E}\left[\exp\left(\theta B_{g}\right)\mid D\right]-\theta g^{2}\frac{g-1}{1+\sqrt{g}}t\right)\nonumber \\
 & =\inf_{\theta>0}\left(\log\mathbb{E}\left[\exp\left(\frac{\theta}{g^{2}}B_{g}\right)\mid D\right]-\theta\frac{g-1}{1+\sqrt{g}}t\right)\nonumber \\
 & \lesssim - \frac{\delta}{(2 - \varphi k - \varphi)^4} \frac{g^3 t^2}{\Gamma_{k,\varphi,b}^{(2)}}\label{eq: B_k bound}
\end{flalign}
each with probability greater than $1-\delta$, where we have used
the facts that $4g\leq\left(1+\sqrt{g}\right)^{2}$ for all $g\geq1$
and $\left(1/4\right)g^{2}\geq\left(g-1\right)^{2}$ for all $g\geq2$.
Hence, by \cref{thm: Steiger}, and plugging \eqref{eq: A_k bound}
and \eqref{eq: B_k bound} into \eqref{eq: break into A and B pieces},
we find that 
\begin{flalign*}
\log \frac{1}{4} P\left\{ \max_{2\leq g^{\prime}\leq g}\frac{B_{g^{\prime}}-g^{\prime}A_{g^{\prime}}}{\left(g^{\prime}-1\right)\left(g^{\prime}\right)^{2}}\geq t\mid D\right\}  
& \lesssim - \frac{\delta}{(2 - \varphi k - \varphi)^4} \frac{g^3 t^2}{\Gamma_{k,\varphi,b}^{(2)}}
\end{flalign*}
with probability greater than $1-\delta$. Consequently, by the definition of $g^{\star}$, we find that 
\begin{flalign*}
 & \log \frac{1}{4} P\left\{ g^{\star}-\hat{g}\geq c\mid D\right\} \\
 & =\log \frac{1}{4}  P\left\{ \max_{2\leq g^{\prime}\leq g^{\star}\left(1-c\right)}\frac{B_{g^{\prime}}-g^{\prime}A_{g^{\prime}}}{\left(g^{\star}\left(1-c\right)-1\right)\left(g^{\star}\left(1-c\right)\right)^{2}}\geq\left(\frac{g^{\star}c-1}{g^{\star}\left(g^{\star}-1\right)\left(1-c\right)}\right)v_{1,k}(D)\mid D\right\} \\
 & \lesssim - \frac{\delta (v_{1,k}(D))^2}{(2 - \varphi k - \varphi)^4} \frac{g^\star(1-c) (g^\star c - 1)^2}{(g^{\star}-1)^2\Gamma_{k,\varphi,b}^{(2)}}\\
 & \lesssim - \frac{\delta (v_{1,k}(D))^2}{(2 - \varphi k - \varphi)^4} \frac{g^\star c^2 }{\Gamma_{k,\varphi,b}^{(2)}}\\
 & \lesssim - \frac{\delta (v_{1,k}(D))^3}{(2 - \varphi k - \varphi)^4} \frac{z^2_{1-\beta/2} c^2 }{\xi^2\Gamma_{k,\varphi,b}^{(2)}}
\end{flalign*}
where in the second to last inequality we have used the facts that
$\frac{1}{2}x\le x-1$ for $x\geq2$ and $g^{\star}c\geq2$. Thus,
putting the pieces together, we find that 
\begin{flalign*}
 & \log \frac{1}{8} P\left\{ \vert\hat{g}-g^{\star}\vert>cg^{\star}\mid D\right\}  
 \lesssim - \frac{\delta (v_{1,k}(D))^3}{(2 - \varphi k - \varphi)^4} \frac{z^2_{1-\beta/2} c^2 }{\xi^2\Gamma_{k,\varphi,b}^{(2)}}
\end{flalign*}
with probability greater than $1-\delta$, as required.\hfill\qed

\end{spacing}
\end{appendix}
\end{document}